\newcommand{\numberstates}[1]{\mathsf{nSt}(#1)}
\newcommand{\numbertransitions}[1]{\mathsf{nTr}(#1)}
\newcommand{\subsetlayers}{\mathcal{S}}
\newcommand{\subsetlayersstar}{\subsetlayers^{\circledast}}
\newcommand{\subsetlayersdefiningset}[1]{\subsetlayers^{\circ #1}}
\newcommand{\automatonnumberstates}{n}
\newcommand{\emptystring}{\lambda}
\newcommand{\orderalphabet}{<}
\newcommand{\orderstrings}{\prec}
\newcommand{\porderfrontierName}[1]{\chi_{#1}}
\newcommand{\porderfrontier}[2]{\porderfrontierName{#1}(#2)}
\newcommand{\permlayer}[3]{\ensuremath{\langle{#2#1#3}\rangle}}
\newcommand{\nstates}{\mathsf{nSt}}
\newcommand{\ntransitions}{\mathsf{nTr}}
\newcommand{\initialflag}{\iota}
\newcommand{\finalflag}{\phi}
\newcommand{\leftfrontier}{\ell}
\newcommand{\rightfrontier}{r}
\newcommand{\transduction}{\mathfrak{t}}
\newcommand{\atransduction}{\transduction}
\newcommand{\btransduction}{\transduction'}
\newcommand{\composition}{\circ}
\newcommand{\mergingleft}{\mu}
\newcommand{\mergingright}{\nu}
\newcommand{\collapsing}[2]{\zeta[#1,#2]}
\newcommand{\oddcollapsingname}[2]{\bm{\zeta}[#1,#2]}
\newcommand{\oddcollapsing}[3]{\oddcollapsingname{#1}{#2}(#3)}
\newcommand{\reachabilityleftname}{\vartheta}
\newcommand{\reachabilityrightname}{\eta}
\newcommand{\reachabilityleft}[2][]{\reachabilityleftname_{#1}(#2)}
\newcommand{\reachabilityright}[2][]{\reachabilityrightname_{#1}(#2)}
\newcommand{\removingname}[2]{\xi[#1,#2]}
\newcommand{\removing}[3]{\removingname{#1}{#2}(#3)}
\newcommand{\oddremovingname}[2]{\bm{\xi}[#1,#2]}
\newcommand{\oddremoving}[3]{\oddremovingname{#1}{#2}(#3)}
\newcommand{\multimaptransductionname}{\mathfrak{mm}}
\newcommand{\multimaptransduction}[1]{\multimaptransductionname[#1]}
\providecommand{\keywords}[1]
{\noindent{\small	
\textbf{\textit{Keywords:}} #1
}}
\newtheorem{theorem}{Theorem}
\newtheorem{lemma}[theorem]{Lemma}
\newtheorem{definition}[theorem]{Definition}
\newtheorem{corollary}[theorem]{Corollary}
\newtheorem{proposition}[theorem]{\textbf{Proposition}}
\newtheorem{observation}[theorem]{Observation}
\newcommand{\powerset}{\mathcal{P}}
\newcommand{\symmetricgroup}[1]{\mathbb{S}_{#1}}
\newcommand{\canonizationfunction}{\mathcal{C}}
\newtheoremstyle{named}{}{}{\itshape}{}{\bfseries}{.}{.5em}{Restatement of #1 \thmnote{#3}}
\theoremstyle{named}
\newcommand{\alphabet}{\ensuremath{\Sigma}}
\newcommand{\aalphabet}{\alphabet_1}
\newcommand{\balphabet}{\alphabet_2}
\newcommand{\calphabet}{\alphabet_3}
\newcommand{\alayer}{\ensuremath{B}}
\newcommand{\blayer}{\ensuremath{\alayer^{\prime}}}
\newcommand{\layer}{\alayer}
\newcommand{\odd}{\ensuremath{D}}
\newcommand{\aodd}{\ensuremath{\odd}}
\newcommand{\bodd}{\ensuremath{\odd^{\prime}}}
\newcommand{\layertransitions}{\ensuremath{T}}
\newcommand{\layerrightfrontier}{\ensuremath{r}}
\newcommand{\layerleftfrontier}{\ensuremath{\ell}}
\newcommand{\layerinitialstates}{\ensuremath{I}}
\newcommand{\layerfinalstates}{\ensuremath{F}}
\newcommand{\layerinitialflag}{\ensuremath{\iota}}
\newcommand{\layerfinalflag}{\ensuremath{\phi}}
\newcommand{\lang}{\ensuremath{L}}
\newcommand{\oddlang}[1]{\ensuremath{\mathcal{L}(#1)}}
\newcommand{\oddlanghigher}[2]{\ensuremath{\mathcal{L}_{#2}(#1)}}
\newcommand{\automatonlang}[1]{\ensuremath{{\mathcal{L}}(#1)}}
\newcommand{\automatonlanghigher}[2]{\ensuremath{{\mathcal{L}}_{#2}(#1)}}
\newcommand{\finiteautomaton}{\ensuremath{\mathcal{F}}}
\newcommand{\automatonstates}{\ensuremath{Q}}
\newcommand{\automatontransitions}{\ensuremath{T}}
\newcommand{\automatoninitialstates}{\ensuremath{I}}
\newcommand{\automatonfinalstates}{\ensuremath{F}}
\newcommand{\N}{\ensuremath{\mathbb{N}}}
\newcommand{\pN}{\ensuremath{\N_+}}
\newcommand{\relation}{\ensuremath{R}}
\newcommand{\arelation}{\relation}
\newcommand{\brelation}{\ensuremath{\relation^{\prime}}}
\newcommand{\layeralphabet}[2]{\ensuremath{\mathcal{B}(#1, #2)}}
\newcommand{\cdlayeralphabet}[2]{\ensuremath{\widehat{\mathcal{B}}(#1, #2)}}
\newcommand{\isomorphismsequence}{\overline{\pi}}
\newcommand{\width}{\ensuremath{w}}
\newcommand{\arity}{\ensuremath{\mathfrak{a}}}
\newcommand{\amap}{\ensuremath{g}}
\newcommand{\permutation}{\ensuremath{\pi}}
\DeclarePairedDelimiter{\abs}{\lvert}{\rvert}
\DeclarePairedDelimiter{\tuple}{(}{)}
\DeclarePairedDelimiter{\set}{\lbrace}{\rbrace}
\DeclarePairedDelimiter{\sequence}{\langle}{\rangle}
\DeclarePairedDelimiter{\bbset}{[}{]}
\DeclarePairedDelimiter{\dbset}{\llbracket}{\rrbracket}
\newcommand{\bset}[1]{\ensuremath{\bbset{#1}}}
\newcommand{\setst}{\xspace\colon\xspace}
\newcommand{\BigOh}{\ensuremath{\mathcal{O}}}
\newcommand{\astring}{\ensuremath{s}}
\newcommand{\bstring}{\ensuremath{u}}
\newcommand{\cstring}{\ensuremath{v}}
\newcommand{\cartesianproduct}{\times}
\newcommand{\tensorproduct}{\otimes}
\newcommand{\fwidth}{\ensuremath{\mathsf{\width}}}
\newcommand{\astate}{\ensuremath{q}}
\newcommand{\bstate}{\ensuremath{\ensuremath{q}^{\prime}}}
\newcommand{\cstate}{\ensuremath{p}}
\newcommand{\layerleftstate}{\ensuremath{\mathfrak{p}}}
\newcommand{\layerrightstate}{\ensuremath{\mathfrak{q}}}
\newcommand{\xset}{X}
\newcommand{\anumber}{\ensuremath{c}}
\newcommand{\oddlength}{\ensuremath{k}}
\newcommand{\length}{\ensuremath{k}}
\newcommand{\stringlength}{\ensuremath{k}}
\newcommand{\sset}{\ensuremath{X}}
\newcommand{\yset}{\ensuremath{Y}}
\newcommand{\asymbol}{\ensuremath{\sigma}}
\newcommand{\bsymbol}{\ensuremath{\tau}}
\newcommand{\pbsymbol}{\ensuremath{\bsymbol^{\prime}}}
\newcommand{\ie}{{i.e.}\xspace}
\newcommand{\true}{1}
\newcommand{\false}{0}
\newcommand{\defeq}{\doteq}
\newcommand{\pwset}[1]{\ensuremath{\mathcal{P}(#1)}}
\newcommand{\pwbijection}{\ensuremath{\Omega}}
\newcommand{\layerreachablestates}[3]{\ensuremath{\mathrm{\bf N}(#1,#2, #3)}}
\newcommand{\lpwmappingname}{\ensuremath{\mathsf{pw}}}
\newcommand{\lpwmapping}[1]{\ensuremath{\lpwmappingname(#1)}}
\newcommand{\compTransduction}[1]{\mathfrak{cp}[#1]} 
\newcommand{\lex}{\mathsf{lex}}
\newcommand{\mergingalphabet}[2]{\mathcal{M}(#1,#2)}
\newcommand{\reachabilityalphabet}[2]{\mathcal{R}(#1,#2)}
\newcommand{\agrowthconstant}{\alpha}
\newcommand{\bgrowthconstant}{\beta}
\newcommand{\cgrowthconstant}{\gamma}
\newcommand{\transductionlang}{\mathcal{L}}
\newcommand{\odddefiningset}[3]{\ensuremath{\layeralphabet{#1}{#2}^{\protect\circ{#3}}}}
\newcommand{\cdodddefiningset}[3]{\ensuremath{\cdlayeralphabet{#1}{#2}^{\protect\circ{#3}}}}
\newcommand{\odddefiningsetstar}[2]{\ensuremath{\layeralphabet{#1}{#2}^{\protect\circledast}}}
\newcommand{\cdodddefiningsetstar}[2]{\ensuremath{\cdlayeralphabet{#1}{#2}^{\protect\circledast}}}
\newcommand{\isomorphism}{\pi}
\newcommand{\secondlanguage}{\mathcal{S}}
\newcommand{\vertexset}{\ensuremath{V}}
\newcommand{\edgeset}{\ensuremath{E}}
\newcommand{\domain}{\mathsf{Dom}}
\newcommand{\image}{\mathsf{Im}}
\newcommand{\restr}[1]{\ensuremath{|_{#1}}}
\newcommand{\canonizationtransductionname}{\mathfrak{can}}
\newcommand{\canonizationtransduction}[2]{\canonizationtransductionname[#1,#2]}
\newcommand{\cdcanonizationtransductionname}{\widehat{\mathfrak{can}}}
\newcommand{\cdcanonizationtransduction}[2]{\cdcanonizationtransductionname[#1,#2]}
\newcommand{\determinizationtransductionname}{\mathfrak{det}}
\newcommand{\reachabilitytransductionname}{\mathfrak{rea}}
\newcommand{\reachabilitytransduction}[2]{\reachabilitytransductionname[#1,#2]}
\newcommand{\determinizationtransduction}[2]{\determinizationtransductionname[#1,#2]}
\newcommand{\mergingtransductionname}{\mathfrak{mer}}
\newcommand{\mergingtransduction}[2]{\mergingtransductionname[#1,#2]}
\newcommand{\normalizationtransductionname}{\mathfrak{nor}}
\newcommand{\normalizationtransduction}[2]{\normalizationtransductionname[#1,#2]}
\newcommand{\lengthfunction}[1]{\mathsf{len}(#1)}
\newcommand{\normalizingrelation}[2]{\mathrm{NR}[#1,#2]}
\newcommand{\normalizingcompatibility}[2]{\mathrm{NC}[#1,#2]}
\newcommand{\mergingrelationname}{\mathrm{MR}}
\newcommand{\mergingrelation}[2]{\mergingrelationname[#1,#2]}
\newcommand{\mergingcompatiblename}{\mathrm{MC}}
\newcommand{\mergingcompatible}[2]{\mergingcompatiblename[#1,#2]}
\newcommand{\reachabilityrelationname}{\mathrm{RR}}
\newcommand{\reachabilityrelation}[2]{\reachabilityrelationname[#1,#2]}
\newcommand{\reachabilitycompatibilityname}{\mathrm{RC}}
\newcommand{\reachabilitycompatibility}[2]{\reachabilitycompatibilityname[#1,#2]}
\newcommand{\duplicate}[1]{\mathfrak{d}(#1)}
\newcommand{\functionfromODD}{f}
\newcommand{\asymptoticfunction}{h}
\newcommand{\newhypercubegraph}{H}
\newcommand{\hypercubelanguage}{\hat{H}}
\newcommand{\hypercubeautomaton}{\mathcal{H}}
\newcommand{\evenautomaton}{\ensuremath{\finiteautomaton}}
\newcommand{\evenlanguage}{\ensuremath{\mathsf{Even}}}
\newcommand{\alldetlanguages}[2]{\mathsf{Det}(#1,#2)}
\newcommand{\boundedcomplement}[2]{\overline{#1}^{#2}}
\newcommand{\firstcap}[2]{\ensuremath{\mathsf{intersec}_{1}(#1,#2)}}
\newcommand{\firstcup}[2]{\ensuremath{\mathsf{union}_{1}(#1,#2)}}
\newcommand{\firstcomp}[1]{\ensuremath{\mathsf{compl}_{1}(#1)}}
\newcommand{\secondcap}[2]{\ensuremath{\mathsf{intersec}_{2}(#1,#2)}}
\newcommand{\secondcup}[2]{\ensuremath{\mathsf{union}_{2}(#1,#2)}}
\newcommand{\seconddiff}[2]{\ensuremath{\mathsf{diff}_{2}(#1,#2)}}
\newcommand{\secondcomp}[2]{\ensuremath{\mathsf{compl}_{2}(#1,#2)}}
\newcommand{\lstateoddlang}[3]{\ensuremath{\mathcal{L}(#1,#2,#3)}}
\newcommand{\slayerrightstate}[3]{\layerrightstate_{[#1,#2,#3]}}
\newcommand{\layerequivclass}[3]{[#1,#2,#3]}
\newcommand{\layerssymmericgroup}[2]{\mathcal{S}[\alphabet,\width]}
\newcommand{\relabelingmap}[2]{\eta[#1,#2]}
\begin{document}

\title{Second-Order Finite Automata\footnote{An extended abstract of this work corresponding to an invited talk at CSR 2020 appeared at \cite{MeloOliveira2020CSR}.}}
%
%
\author{
	Alexsander Andrade de Melo$^\dagger$ \hspace{1cm} Mateus de Oliveira Oliveira$^\ddagger$\\
\\
$^\dagger$Federal University of Rio de Janeiro, Rio de Janeiro, Brazil \\
\texttt{aamelo@cos.ufrj.br}
\\
$^\ddagger$University of Bergen, Bergen, Norway \\ 
\texttt{mateus.oliveira@uib.no}\\
}

\maketitle              



\begin{abstract} Traditionally, finite automata theory has been used as a
framework for the representation of possibly infinite sets of strings. In this
work, we introduce the notion of second-order finite automata, a formalism that
combines finite automata with ordered decision diagrams, with the aim of
representing possibly infinite {\em sets of sets} of strings. Our main result
states that second-order finite automata can be canonized with 
respect to the second-order languages they represent. Using this canonization 
result, we show
that sets of sets of strings represented by second-order finite automata are
closed under the usual Boolean operations, such as union, intersection,
difference and even under a suitable notion of complementation. Additionally,
emptiness of intersection and inclusion are decidable. 

We provide two algorithmic applications for second-order automata. First, we
show that several width/size minimization
problems for deterministic and nondeterministic ODDs are solvable in
fixed-parameter tractable time when parameterized by the width of the input
ODD. In particular, our results imply FPT algorithms for corresponding
width/size minimization problems for ordered binary decision diagrams (OBDDs) with a
fixed variable ordering. Previously, only algorithms that take exponential time
in the size of the input OBDD were known for width minimization,
even for OBDDs of constant width. Second, we show that for each $\oddlength$ and $\width$ one can count the
number of distinct functions computable by ODDs of width at most $\width$ and
length $\oddlength$ in time $\asymptoticfunction(|\alphabet|,\width)\cdot
\oddlength^{O(1)}$, for a suitable $\asymptoticfunction:\N\times \N\rightarrow \N$. 
This improves exponentially on the time necessary to explicitly enumerate all such functions, 
which is exponential in both the width parameter $\width$ and in the length
$\oddlength$ of the ODDs. \\

\end{abstract}

\keywords{Second-Order Finite Automata, Ordered Decision Diagrams, Fixed-Parameter Tractability}

\section{Introduction}
\label{section:Introduction}

In its most traditional setting, automata theory has been used as a framework for the representation and manipulation of (possibly infinite) sets of strings. This framework has been generalized in many ways to allow the representation of sets of more elaborate combinatorial objects, such as trees~\cite{courcelle1989recognizable}, partial orders \cite{thomas1997automata}, 
graphs \cite{bozapalidis2008graph}, pictures \cite{giammarresi1992recognizable}, etc. 
Such notions of automata have encountered innumerous applications in fields such as formal verification \cite{godefroid1990using,bouajjani2006abstract}, finite model theory \cite{ebbinghaus1995finite}, concurrency theory \cite{priese1983automata}, parameterized complexity \cite{courcelle2010verifying,Courcelle1990MSO}, etc. Still, these generalized notions of automata share in common the fact that they are designed to represent (possibly infinite) sets of isolated objects. 

In this work, we combine traditional finite automata with ordered decision diagrams (ODDs) of bounded width to introduce a formalism that can be used to represent and manipulate {\em sets of sets} of strings,
or alternatively speaking, classes of languages. We call this combined formalism {\em second-order finite automata}.
We will show that the width of an ODD is a useful parameter when studying classes of languages from a complexity-theoretic point of view. Additionally, we will
use second-order finite automata to show that several computational problems involving ordered decision diagrams are fixed-parameter tractable 
when parameterized by width.

Given a finite alphabet $\alphabet$ and a number $\width\in \pN$, a $(\alphabet,\width)$-ODD is a sequence $\aodd = \alayer_1\alayer_2\dots \alayer_{\oddlength}$ of $(\alphabet,\width)$-layers. 
Each such a layer $\alayer_i$ has a set of left-states (a subset of $\{1,\dots,\width\}$), a set of right-states (also a subset of $\{1,\dots,\width\}$), and a set of transitions, labeled with letters in $\alphabet$, connecting left states to right states.
We require that for each $i\in \{1,\dots,\oddlength-1\}$, the set of right-states of the layer $\alayer_i$ is equal to the set of left states of the layer $\alayer_{i+1}$. 
The language of an ODD $\aodd$ is the set of strings labelling paths from 
its set of initial states (a subset of the left states of $\alayer_1$) to its final states (a subset of the right states of $\alayer_{\oddlength}$). Since the number of distinct $(\alphabet,\width)$-layers is finite, the set $\layeralphabet{\alphabet}{\width}$ of all $(\alphabet,\width)$-layers can itself be regarded as an alphabet. 
A finite automaton $\finiteautomaton$ over the  alphabet $\layeralphabet{\alphabet}{\width}$ is said to be a second-order finite automaton 
if each string $\aodd = \alayer_1\dots \alayer_{\oddlength}$ in the language $\automatonlang{\finiteautomaton}$ accepted by
$\finiteautomaton$ is a valid ODD. In this case, the second language of $\finiteautomaton$
is defined as the class $\automatonlanghigher{\finiteautomaton}{2} = \{\oddlang{\aodd}\;:\; \aodd\in \automatonlang{\finiteautomaton}\}$
of languages accepted by ODDs in $\automatonlang{\finiteautomaton}$. 
We say that a class of languages $\mathcal{X}$ is {\em regular-decisional} if there is some 
second-order finite automaton $\finiteautomaton$ such that 
$\automatonlanghigher{\finiteautomaton}{2} = \mathcal{X}$. 

\paragraph{Canonical Forms for Second Order Finite Automata.}
Our main result (Theorem \ref{theorem:CanonizationSecondOrder}) states that second-order finite automata can be effectively
canonized with respect to their second languages. More specifically, there is an algorithm that maps each second-order finite automaton
$\finiteautomaton$ to a second-order finite automaton $\canonizationfunction_2(\finiteautomaton)$,
called the second canonical form of $\finiteautomaton$, 
in such a way that the following three properties are satisfied. First, $\canonizationfunction_2(\finiteautomaton)$ and $\finiteautomaton$ have the 
same second language. That is to say, $\automatonlanghigher{\canonizationfunction_2(\finiteautomaton)}{2}= \automatonlanghigher{\finiteautomaton}{2}$. 
Second, any two second-order finite automata $\finiteautomaton$ and $\finiteautomaton'$ with 
identical second languages are mapped to the same canonical form. 
More formally, $\automatonlanghigher{\finiteautomaton}{2} = \automatonlanghigher{\finiteautomaton'}{2}\Rightarrow  \canonizationfunction_2(\finiteautomaton)  = \canonizationfunction_2(\finiteautomaton')$.
Third, $\automatonlang{\canonizationfunction_2(\finiteautomaton)} = \{\canonizationfunction(\aodd)\;:\; \aodd\in \automatonlang{\finiteautomaton}\}$.
Here, $\canonizationfunction(\aodd)$ is the unique deterministic, complete, normalized\footnote{By {\em normalized} we mean that the states of the ODD are numbered according to their lexicographical order. In this way
$\canonizationfunction(\aodd)$ is {\em syntactically} unique and not only unique up to isomorphism.} ODD with minimum number of states with the same language as $\aodd$. 
Intuitively, the language of $\canonizationfunction_2(\finiteautomaton)$ consists precisely of the set of canonical forms of ODDs in the language of $\finiteautomaton$. 
For this reason, we say that Theorem \ref{theorem:CanonizationSecondOrder} is a {\em canonical form of canonical forms theorem}. 
From a complexity-theoretic point of view, $\canonizationfunction_2(\finiteautomaton)$ can be constructed in time
$2^{\numberstates{\finiteautomaton}\cdot 2^{O(|\alphabet|\cdot \width \cdot 2^{\width})}}$, where $\numberstates{\finiteautomaton}$ is the number of states of $\finiteautomaton$. 
Additionally this construction can be sped up to time $2^{\numberstates{\finiteautomaton}\cdot 2^{O(|\alphabet|\cdot \width \cdot \log \width)}}$ if all ODDs
in $\automatonlang{\finiteautomaton}$ are deterministic and complete (Observation \ref{observation:BetterConstruction}). 

We note that canonizing a second-order finite automaton $\finiteautomaton$ with respect to its
second language $\automatonlanghigher{\finiteautomaton}{2}$ is {\em not} equivalent to canonizing 
$\finiteautomaton$ with respect to its language $\automatonlang{\finiteautomaton}$. 
For instance, let $\aodd$ and $\aodd'$ be distinct ODDs such that $\oddlang{\aodd} = \oddlang{\aodd'}$. 
Let $\finiteautomaton$ and $\finiteautomaton'$ be second-order finite automata with 
$\automatonlang{\finiteautomaton} = \{\aodd\}$ and $\automatonlang{\finiteautomaton'} = \{\aodd'\}$. 
Then the languages of $\finiteautomaton$ and $\finiteautomaton'$ are distinct
($\automatonlang{\finiteautomaton} \neq \automatonlang{\finiteautomaton'}$)  even though their 
second languages are equal ($\automatonlanghigher{\finiteautomaton}{2} = \{\oddlang{\aodd}\} = \{\oddlang{\aodd'}\} =  \automatonlanghigher{\finiteautomaton'}{2}$). 

At a high level, what our canonization algorithm does is to eliminate ambiguity in the language
of a given second-order finite automaton. More specifically, any two ODDs $\aodd$ and $\aodd'$ with
$\oddlang{\aodd}=\oddlang{\aodd'}$ in the language of a second-order finite automaton $\finiteautomaton$ 
correspond to a single ODD $\canonizationfunction(\aodd)=\canonizationfunction(\aodd')$ in the language of
$\canonizationfunction_2(\finiteautomaton)$. This implies almost immediately that the 
collection of regular-decisional classes of languages is closed under 
union, intersection, set difference, and even under a suitable notion of complementation. 
Furthermore, emptiness of intersection and inclusion for the second languages of second-order
finite automata are decidable (Theorem \ref{theorem:ClosureProperties}). 
It is interesting to note that non-emptiness of intersection 
for the second languages of second-order finite automata 
can be tested in fixed-parameter tractable time, where the parameter is the maximum width of an ODD 
accepted by one of the input automata (Observation \ref{observation:FPTIntersection}). 
Finally, closure under several operations that are specific to classes of 
languages, such as {\em pointwise union}, {\em pointwise intersection} and {\em pointwise negation}, 
among others can also be obtained as a direct corollary (Corollary \ref{corollary:PointwiseOperations})
of a technical lemma from \cite{deOliveiraOliveira2020symbolic}. 

\paragraph{Main Technical Tool.} 
Let $\odddefiningsetstar{\alphabet}{\width}$ be the set of all $(\alphabet,\width)$-ODDs and $\cdodddefiningsetstar{\alphabet}{\width}$ be the set of all deterministic, complete $(\alphabet,\width)$-ODDs. The main technical tool of this work
(Theorem \ref{theorem:CanonizationTransductionTheorem}) states that the transduction 
$\canonizationtransduction{\alphabet}{\width} = 
\{(\aodd,\canonizationfunction(\aodd))\;:\;\aodd\in \odddefiningsetstar{\alphabet}{\width}\}$
is $2^{O(|\alphabet|\cdot \width \cdot 2^{\width})}$-regular. In other words, there is an NFA
with $2^{O(|\alphabet|\cdot \width \cdot 2^{\width})}$ states accepting the language 
$\{\aodd\otimes \canonizationfunction(\aodd)\;:\; \aodd\in\odddefiningsetstar{\alphabet}{\width}\}$.
Additionally, the transduction $\cdcanonizationtransduction{\alphabet}{\width} = \{(\aodd,\canonizationfunction(\aodd))\;:\; \aodd\in \cdodddefiningsetstar{\alphabet}{\width}\}$, whose domain is restricted to deterministic, complete ODDs, is 
$2^{O(|\alphabet|\cdot \width \cdot \log \width)}$-regular. 

Most results of our work follow as a consequence of Theorem \ref{theorem:CanonizationTransductionTheorem}. 
If we do not take complexity theoretic issues into account, then some of 
our {\em decidability} results also follow by employing other notions of canonizing relations
(see Section \ref{section:Conclusion} for further discussion on this topic). Nevertheless, the transductions 
$\canonizationtransduction{\alphabet}{\width}$ and $\cdcanonizationtransduction{\alphabet}{\width}$ enjoy
special properties that make them attractive from a complexity theoretic point of view. In particular, as we
will see next, these transductions have applications in the fixed-parameter tractability theory of computational 
problems related to ordered decision diagrams ODDs. It is worth noting that ODDs 
comprise the well studied notion of ordered binary decision diagrams (OBDDs) with fixed variable ordering as a special case.
And indeed, the width parameter has relevance in several contexts, such as learning theory \cite{ergun1995learning}, 
the theory of pseudo-random generators \cite{forbes2018pseudorandom}, the theory of symbolic 
algorithms \cite{deOliveiraOliveira2020symbolic}, and structural graph theory \cite{deOliveiraOliveira2019width}.
Additionally, Theorem \ref{theorem:CanonizationTransductionTheorem} implies
that the set $\{\canonizationfunction(\aodd)\;:\;\aodd\in \odddefiningsetstar{\alphabet}{\width}\}$ of all 
minimized, deterministic, complete ODDs accepting the language of some ODD in $\odddefiningsetstar{\alphabet}{\width}$
is regular (Corollary \ref{corollary:RegularCanonicalForms}), and therefore, can be accepted by some deterministic finite 
automaton $\finiteautomaton$. This result may be of independent interest 
since the fact that the canonical form $\canonizationfunction(\aodd)$ has minimum number of states among all 
deterministic, complete ODDs with the same language as $\aodd$ is a relevant 
complexity theoretic information about the language $\oddlang{\aodd}$. One 
interesting consequence of this result is that there is a bijection $b$ from the 
set of accepting paths of $\finiteautomaton$ and the class of languages accepted by 
ODDs in $\odddefiningsetstar{\alphabet}{\width}$. Additionally, the ODD 
corresponding to each such a path $\mathfrak{p}$ has minimum 
number of states among all deterministic, complete ODDs accepting the language $b(\mathfrak{p})$. 

\paragraph{Algorithmic Applications.}
Although ODDs of constant width constitute a simple computational model, they
can already be used to represent many interesting functions. It is worth noting
that for each width $w\geq 3$, the class of functions that can be represented
by ODDs of constant width is at least as difficult to learn in the PAC-learning
model as the problem of learning DNFs \cite{ergun1995learning}. Additionally,
the study of ODDs of constant width is still very active in the theory of
pseudo-random generators \cite{forbes2018pseudorandom}. Our main results can be
used to show that several width/size minimization problems for nondeterministic and
deterministic ODDs can be solved in fixed parameter tractable time when
parameterized by width. For instance, we show that given an ODD $\aodd$ of length $\oddlength$
and width $\width$ over an alphabet $\alphabet$, one can compute in time
$2^{O(|\alphabet|\cdot \width \cdot 2^{\width})}\cdot \oddlength^{O(1)}$ an ODD $\aodd'$ of minimum width 
such that $\oddlang{\aodd'}=\oddlang{\aodd}$. A more efficient algorithm, running in time 
$2^{O(|\alphabet|\cdot \width \cdot \log \width)}\cdot \oddlength^{O(1)}$ can be obtained 
if the input ODD is deterministic (Theorem \ref{theorem:Minimization}).
Our algorithm is in fact more general and can be used to minimize other complexity measures, such as number of
states and number of transitions among all ODDs belonging to the language of a given second-order finite automaton $\finiteautomaton$ (Theorem \ref{theorem:MinimizationODDAutomaton}).

Our algorithm for width minimization of ODDs parameterized by width naturally can be used to minimize the width 
of ordered binary decision diagrams (OBDDs), since OBDDs {\em with a fixed variable ordering} correspond to 
ODDs over a binary alphabet. Width minimization problems for OBDDs have been considered before in the literature
\cite{bollig2014width,Bollig16}, but previously known algorithms are exponential on the size of the OBDD even for OBDDs of 
constant width, and even in the case of when one is not allowed to vary the order of the input variables. Our 
FPT result shows that width minimization for OBDDs of constant width with a fixed variable ordering can be achieved in polynomial time. 

As a second application of our main results, we show that the problem of counting the number of distinct functions computable 
by ODDs of a given width $\width$ and a given length $\oddlength$ can be solved in time 
$2^{2^{O(|\alphabet|\cdot \width \cdot 2^{\width})}}\cdot \oddlength^{O(1)}$. This running time 
can be improved to $2^{2^{O(|\alphabet|\cdot \width \cdot \log \width)}}\cdot \oddlength^{O(1)}$ 
if we are interested in counting the number of functions computable by deterministic, complete ODDs of width $\width$ and 
length $\oddlength$ (Corollary \ref{corollary:CountingFunctions}). 
We note that this restricted case is relevant because ordered binary decision diagrams (OBDDs) 
defined in the literature are usually deterministic and complete. Our results imply that counting the number of functions computable by 
OBDDs of width $\width$ with a fixed variable ordering can be solved in time polynomial in the number
of variables. This improves exponentially on the approach of explicit enumeration without
repetitions, which takes time exponential in $\oddlength$. This result is obtained as a consequence of a more general 
theorem analyzing the complexity of the problem counting functions represented by ODDs of a given length in the language 
of a given second-order finite automaton $\finiteautomaton$ (Theorem \ref{theorem:CountingFunctionsAutomaton}). 

The reminder of this paper is organized as follows. Next, in Section \ref{section:Preliminaries},
we define some basic concepts and state well-known results concerning finite automata and ordered
decision diagrams. Subsequently, in Section \ref{section:SecondOrderFiniteAutomata}, we formally
define the notion of second-order finite automata and state our main
results (Theorem \ref{theorem:CanonizationTransductionTheorem} and Theorem \ref{theorem:CanonizationSecondOrder}).
In Section \ref{section:ClosureProperties}, we state several closure properties for second-order finite automata.
In Section \ref{section:Applications}, we discuss several algorithmic applications of our main results.
In Section \ref{section:CanonizationTransductionTheorem} we prove Theorem \ref{theorem:CanonizationTransductionTheorem}. 
Finally, in Section \ref{section:Conclusion} we draw some concluding remarks and establish 
connections with related work.

\section{Preliminaries} 
\label{section:Preliminaries}

\subsection{Basics}

We denote by $\N \defeq \set{0, 1, \ldots}$ the set of natural numbers (including zero), and by $\pN \defeq \N\setminus \set{0}$ the set of positive natural numbers. 
For each $\anumber \in \pN$, we let $\bset{\anumber} \defeq \set{1, 2, \ldots, \anumber}$ and $\dbset{\anumber} \defeq \set{0, 1, \ldots, \anumber-1}$. 
For each finite set $\sset$, we let $\pwset{\sset} \defeq \set{\sset' \setst \sset' \subseteq \sset}$ denote the \emph{power set} of $\sset$. 
For each two sets $\sset$ and $\yset$, each function $f\colon \sset \rightarrow \yset$ and each subset $\sset'\subseteq \sset$, we let $f\restr{\sset'}$ denote the \emph{restriction} of $f$ to $\sset'$, \ie the function $f\restr{\sset'} \colon \sset' \rightarrow \yset$ such that $f\restr{\sset'}(x) = f(x)$ for each $x \in \sset'$.

\subsubsection*{Alphabets and Strings.}
An {\em alphabet} is any finite, non-empty set $\alphabet$. 
A \emph{string} over an alphabet $\alphabet$ is any finite sequence of symbols from $\alphabet$. 
The {\em empty string}, denoted by $\emptystring$, is the unique string of length zero. 
We denote by $\alphabet^*$ the set of all strings over $\alphabet$, including the empty string $\emptystring$, and by $\alphabet^+  \defeq \alphabet^*\setminus \set{\emptystring}$ the set of all non-empty strings over $\alphabet$. A \emph{language} over $\alphabet$ is any subset 
$\lang$ of $\alphabet^{*}$. 
In particular, for each $\stringlength \in \N$, we let $\alphabet^{\stringlength}$ be the language of all strings of length $\stringlength$ over $\alphabet$.
We say that an alphabet $\alphabet$ is ordered if it is endowed with a total order $\orderalphabet_{\alphabet}:\alphabet\times \alphabet$. 
Such an order $\orderalphabet_{\alphabet}$ is extended naturally to a lexicographical order $\orderstrings_{\alphabet}\subseteq \alphabet^*\times \alphabet^*$ on the set $\alphabet^*$. 
Unless stated otherwise, we assume that each alphabet considered in this paper is endowed with a fixed total order.

\subsubsection*{Finite Automata.}
A \emph{finite automaton} (FA) over an alphabet $\alphabet$ is a tuple 
$\finiteautomaton = \tuple{\alphabet,\automatonstates,\automatoninitialstates,\automatonfinalstates,\automatontransitions}$, 
where $\automatonstates$ is a finite set of \emph{states}, $\automatoninitialstates\subseteq \automatonstates$ is a set of \emph{initial states}, $\automatonfinalstates\subseteq \automatonstates$ is a set of \emph{final states} and ${\automatontransitions\subseteq \automatonstates\times \alphabet\times \automatonstates}$ is a set of \emph{transitions}.
The \emph{size} of $\finiteautomaton$ is defined as $\abs{\finiteautomaton} \defeq \abs{\automatonstates} + \abs{\automatontransitions}\cdot\log{\abs{\alphabet}}$.
We denote the number of states of $\finiteautomaton$ by $\nstates(\finiteautomaton)\defeq \abs{\automatonstates}$, and the number of transitions of $\finiteautomaton$ by $\ntransitions(\finiteautomaton) \defeq \abs{\automatontransitions}$. 

Let $\astring\in \alphabet^*$, and $\astate,\bstate\in \automatonstates$. We say that $\astring$ reaches $\bstate$ from 
$\astate$ if either $\astring  = \emptystring$ and $\astate=\bstate$, or if $\astring = \asymbol_1\dots\asymbol_{\stringlength}$ for some 
$k\in \pN$ and there is a sequence $$\sequence{\tuple{\astate_{0},\asymbol_{1},\astate_{1}},\tuple{\astate_{1},\asymbol_{2},\astate_{2}},
\ldots,\tuple{\astate_{\stringlength-1},\asymbol_\stringlength,\astate_\stringlength}}\text{,}$$
of transitions such that $\astate_0 = \astate$, $\astate_{\stringlength} = \bstate$ and $\tuple{\astate_{i},\asymbol_{i+1},\astate_{i+1}} \in \automatontransitions$ for each $i\in\bset{\stringlength-1}$. 
We say that $\finiteautomaton$ \emph{accepts} $\astring$ if there exist states $\astate\in \automatoninitialstates$ and $\bstate\in \automatonfinalstates$ such that 
$\astring$ reaches $\bstate$ from $\astate$. 
The \emph{language} of $\finiteautomaton$ is defined as the set $$\automatonlang{\finiteautomaton} \defeq \set*{\astring \in \alphabet^{*} \setst \astring \text{ is accepted by } \finiteautomaton}$$ of all finite 
strings over $\alphabet$ accepted by $\finiteautomaton$. 
For $\agrowthconstant\in \N$, we say that a language $\lang\subseteq \alphabet^*$ is 
{\em $\agrowthconstant$-regular} if there exists a
finite automaton with at most $\agrowthconstant$ states such that $\automatonlang{\finiteautomaton} = \lang$.

We say that $\finiteautomaton$ is \emph{deterministic} if 
$\finiteautomaton$ contains exactly one initial state, \ie $\abs{\automatoninitialstates} = 1$, and 
for each  $\astate \in \automatonstates$ and each $\asymbol\in\alphabet$, there exists at most one state 
$\bstate \in \automatonstates$ such that $\tuple{\astate,\asymbol,\bstate}$ is a transition in $\automatontransitions$. 
We say that $\finiteautomaton$ is \emph{complete} if it has at least one initial state,
and for each $\astate \in \automatonstates$ and each $\asymbol\in\alphabet$,
there exists at least one state $\bstate \in \automatonstates$ such that $\tuple{\astate,\asymbol,\bstate}$ is a transition in $\automatontransitions$.
We say that $\finiteautomaton$ is \emph{reachable} if for each state $\astate\in \automatonstates$, there is a sequence of 
transitions from some initial state of $\finiteautomaton$ to $\astate$. If $\finiteautomaton$ is a reachable finite automaton, 
then for each state $\astate\in \automatonstates$, we let $\lex(\astate)$ denote the lexicographically first string 
that reaches $\astate$ from some initial state, according to the order $\orderstrings_{\alphabet}$. 
We say that $\finiteautomaton$ is \emph{normalized} if $\automatonstates = \dbset{\automatonnumberstates}$
for some $\automatonnumberstates\in \pN$, and $\astate<\bstate$ if and only if $\lex(\astate) \orderstrings_{\alphabet} \lex(\bstate)$ for each $\astate,\bstate \in \automatonstates$. 

In what follows, we may write $\automatonstates(\finiteautomaton)$, $\automatontransitions(\finiteautomaton)$, $\automatoninitialstates(\finiteautomaton)$ and $\automatonfinalstates(\finiteautomaton)$ to refer to the sets $\automatonstates$, $\automatontransitions$, $\automatoninitialstates$ and $\automatonfinalstates$, respectively. 

The following theorem, stating the existence of canonical forms for finite automata, is one of the most fundamental results in automata theory. 

\begin{theorem}\label{theorem:MinimizationAutomaton}
For each finite automaton $\finiteautomaton$, there exists a unique finite automaton $\canonizationfunction(\finiteautomaton)$ with minimum number of states such that $\canonizationfunction(\finiteautomaton)$ is deterministic, complete, normalized, and satisfies $\automatonlang{\canonizationfunction(\finiteautomaton)} = \automatonlang{\finiteautomaton}$. 
\end{theorem}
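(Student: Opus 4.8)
The plan is to follow the classical route to the minimal deterministic automaton via the subset construction and the Myhill--Nerode equivalence, and then to use the $\lex$-based normalization to upgrade uniqueness-up-to-isomorphism into syntactic uniqueness. First I would reduce to the deterministic complete case: starting from $\finiteautomaton$, apply the subset construction to obtain a deterministic complete automaton $\mathcal{D}$ with $\automatonlang{\mathcal{D}} = \automatonlang{\finiteautomaton}$, then discard every state unreachable from the initial state so that $\mathcal{D}$ is reachable. Completeness is ensured by keeping the empty subset as a genuine sink state on which every letter loops, and reachability is preserved by trimming since a reachable subset only transitions to reachable subsets.

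Next I would establish minimality. Writing $\lang = \automatonlang{\finiteautomaton}$, I would associate to each state the residual language $\astring^{-1}\lang = \set{\cstring : \astring\cstring \in \lang}$ it accepts, and consider the equivalence on the states of $\mathcal{D}$ that identifies two states exactly when their residuals coincide. Collapsing each class to a single state yields a deterministic complete automaton $\mathcal{M}$ recognizing $\lang$ whose states are in bijection with the distinct residuals of $\lang$. The Myhill--Nerode argument then shows that in \emph{any} deterministic complete automaton recognizing $\lang$ the map sending a reachable state to its residual is well defined and surjective onto the set of residuals; hence such an automaton has at least one state per distinct residual, and $\mathcal{M}$ attains this minimum.

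I would then prove uniqueness up to isomorphism. Since every state of a reachable minimal automaton is uniquely tagged by its residual language, the initial state, the transition function, and the set of final states are all forced by $\lang$ alone; consequently any two reachable minimal deterministic complete automata for $\lang$ admit a unique structure-preserving bijection between their state sets. Finally I would remove this remaining symmetry by normalizing: because $\mathcal{M}$ is reachable, $\lex(\astate)$ is defined for each state $\astate$, and I would rename the states as $\dbset{\automatonnumberstates}$ in increasing order of $\lex(\astate)$ under $\orderstrings_{\alphabet}$, producing $\canonizationfunction(\finiteautomaton)$, which by construction is deterministic, complete, normalized, and satisfies $\automatonlang{\canonizationfunction(\finiteautomaton)} = \automatonlang{\finiteautomaton}$.

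The main obstacle is this last step: I must argue that normalization pins down a single concrete tuple rather than merely an isomorphism class. The key observation is that any isomorphism between two reachable deterministic automata preserves, for every state, the whole set of strings reaching it from the initial state, and in particular preserves $\lex(\astate)$; therefore it must send the state of $\lex$-rank $i$ in one automaton to the state of $\lex$-rank $i$ in the other. Once both automata are renamed to $\dbset{\automatonnumberstates}$ by $\lex$-rank, this unique isomorphism becomes the identity, so the normalized minimal automaton is unique as a concrete object. Care is needed only in the degenerate cases $\lang = \emptyset$ and $\lang = \alphabet^*$, where $\mathcal{M}$ has a single state reached by $\emptystring$, which the $\lex$-ordering handles uniformly.
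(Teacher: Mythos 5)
Your proposal is correct and follows essentially the same route the paper takes (the paper only sketches it after the theorem statement): power-set construction to get a deterministic complete automaton, minimization via Myhill--Nerode/state merging, and finally renaming states by $\lex$-rank to turn uniqueness up to isomorphism into syntactic uniqueness. Your treatment of the last step---observing that any isomorphism of reachable deterministic automata preserves $\lex(\astate)$ and hence becomes the identity after normalization---is exactly the point the paper leaves implicit, and you fill it in correctly.
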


We note that given a (possibly non-deterministic) finite automaton $\finiteautomaton$, the canonical form $\canonizationfunction(\finiteautomaton)$ of $\finiteautomaton$ can be obtained by the following process. 
First, one applies Rabin's power-set construction to $\finiteautomaton$ in order to obtain a {\em deterministic, complete} finite automaton $\finiteautomaton'$ that accepts the same language as $\finiteautomaton$. Subsequently, by using Hopcroft's algorithm~\cite{hopcroft1971n} for instance, one minimizes $\finiteautomaton'$ in order to obtain a deterministic finite automaton $\finiteautomaton''$ that accepts the same language as $\finiteautomaton$ and has the minimum number of states. 
At this point, the finite automaton $\finiteautomaton''$ is unique {\em up to renaming of states}. 
Thus, as a last step, one obtains the canonical form $\canonizationfunction(\finiteautomaton)$ by renaming the states of $\finiteautomaton''$ in such a way that the normalization property is satisfied. 
Note that the automaton $\canonizationfunction(\finiteautomaton)$ is finally syntactically unique. 
In particular, for each two finite automata $\finiteautomaton$ and $\finiteautomaton'$, $\automatonlang{\finiteautomaton} = \automatonlang{\finiteautomaton'}$ if and only if $\canonizationfunction(\finiteautomaton) = \canonizationfunction(\finiteautomaton')$.

\subsection{Ordered Decision Diagrams}

\subsubsection*{Layers.} Let $\alphabet$ be an alphabet and $\width\in\pN$. 
A \emph{$\tuple{\alphabet,\width}$-layer} is a tuple $\alayer \defeq \tuple{\layerleftfrontier,\layerrightfrontier, \layertransitions, \layerinitialstates, \layerfinalstates, \layerinitialflag, \layerfinalflag}$, where $\layerleftfrontier \subseteq \dbset{\width}$ is a set of \emph{left states}, $\layerrightfrontier \subseteq \dbset{\width}$  is a set of \emph{right states}, $\layertransitions \subseteq \layerleftfrontier\cartesianproduct\alphabet \cartesianproduct \layerrightfrontier$ is a set of \emph{transitions}, $\layerinitialstates \subseteq \layerleftfrontier$ is a set of \emph{initial states}, $\layerfinalstates \subseteq \layerrightfrontier$ is a set of \emph{final states} and $\layerinitialflag, \layerfinalflag \in \set{\false, \true}$ are Boolean flags satisfying the two following conditions: 
\begin{enumerate}
\item if $\layerinitialflag = \false$, then $\layerinitialstates =\emptyset$; 
\item if $\layerfinalflag = \false$, then $\layerfinalstates = \emptyset$.
\end{enumerate}

In what follows, we may write $\layerleftfrontier(\alayer)$, $\layerrightfrontier(\alayer)$, $\layertransitions(\alayer)$, $\layerinitialstates(\alayer)$, $\layerfinalstates(\alayer)$, $\layerinitialflag(\alayer)$ and $\layerfinalflag(\alayer)$ to refer to the sets $\layerleftfrontier$, $\layerrightfrontier$, $\layertransitions$,
$\layerinitialstates$ and $\layerfinalstates$ and to the Boolean flags $\layerinitialflag$ and $\layerfinalflag$, respectively.

We let $\layeralphabet{\alphabet}{\width}$ denote the set of all $\tuple{\alphabet,\width}$-layers.  
Note that $\layeralphabet{\alphabet}{\width}$ is non-empty and has at most $2^{\BigOh(\abs{\alphabet} \cdot \width^2)}$ elements. 
Therefore, $\layeralphabet{\alphabet}{\width}$ may be regarded as an alphabet.

\subsubsection*{Ordered Decision Diagrams.}
Let $\alphabet$ be an alphabet and $\width,\oddlength\in\pN$.  
A \emph{$(\alphabet,\width)$-ordered decision diagram} (or simply, $(\alphabet,\width)$-\emph{ODD}) 
of \emph{length} $\oddlength$ is a string $\aodd \defeq \alayer_1 \cdots \alayer_\oddlength \in \layeralphabet{\alphabet}{\width}^{\oddlength}$ of length $\oddlength$ over the alphabet $\layeralphabet{\alphabet}{\width}$ satisfying the following
conditions:

\begin{enumerate} 
\item\label{condition:OneODD} for each $i\in\bset{\oddlength-1}$, $\layerleftfrontier(\alayer_{i+1}) = \layerrightfrontier(\alayer_i)$; 
\item\label{condition:TwoODD} $\layerinitialflag(\alayer_{1}) = \true$ and, for each $i \in \set{2,\ldots,\oddlength}$, $\layerinitialflag(\alayer_{i}) = \false$; 
\item\label{condition:ThreeODD} $\layerfinalflag(\alayer_{\oddlength}) = \true$ and, for each $i \in \bset{\oddlength-1}$, $\layerfinalflag(\alayer_{i}) = \false$.  
\end{enumerate}

Intuitively, Condition~\ref{condition:OneODD} expresses that for each $i\in\bset{\oddlength-1}$, the set of right states of $\alayer_i$ can be identified with the set of left states of $\alayer_{i+1}$. 
Condition \ref{condition:TwoODD} guarantees that only the first layer of an ODD is allowed to have initial states. 
Analogously, Condition \ref{condition:ThreeODD} guarantees that only the last layer of an ODD is allowed to have final states. 

Let $\aodd = \alayer_{1} \cdots \alayer_{\oddlength}$ be a $(\alphabet,\width)$-ODD of length $\oddlength$, for some $\oddlength\in\pN$. 
We let $\lengthfunction{\aodd}\defeq\oddlength$ denote the length of $\aodd$, 
$\numberstates{\aodd}\defeq\abs{\layerleftfrontier(\alayer_{1})} + \sum_{i\in \bset{\oddlength}} \abs{\layerrightfrontier(\alayer_{i})}$ denote the \emph{number of states} of $\aodd$, $\numbertransitions{\aodd} \defeq \abs{\layertransitions(\alayer_{1})} + \sum_{i\in \bset{\oddlength}} \abs{\layertransitions(\alayer_{i})}$ denote the \emph{number of transitions} of $\aodd$,
$$\fwidth(\aodd) \defeq \max\set*{\abs{\layerleftfrontier(\alayer_{1})}, \ldots, \abs{\layerleftfrontier(\alayer_{\oddlength})}, \abs{\layerrightfrontier(\alayer_{\oddlength})}}$$ denote the \emph{width} of $\aodd$. 
We remark that $\fwidth(\aodd) \leq \width$. 

For each subset $\subsetlayers\subseteq\layeralphabet{\alphabet}{\width}$ and each positive integer $\oddlength \in \pN$, we denote by $\subsetlayersdefiningset{\oddlength}$ the set of all
$(\alphabet,\width)$-ODDs of length $\oddlength$ whose layers belong to the set $\subsetlayers$. 
Additionally, for each subset $\subsetlayers\subseteq\layeralphabet{\alphabet}{\width}$, we denote by $\subsetlayersstar \defeq \bigcup_{\oddlength\in \pN} \subsetlayersdefiningset{\oddlength}$ the set of all $(\alphabet,\width)$-ODDs whose layers belong to the set $\subsetlayers$. 
In particular, we denote by $\odddefiningset{\alphabet}{\width}{\oddlength}$ the set of all $(\alphabet,\width)$-ODDs of length $\oddlength$, and we denote by $\odddefiningsetstar{\alphabet}{\width}$ the set of all $(\alphabet,\width)$-ODDs.

\subsubsection*{Length Typed Subsets of $\alphabet^{k}$.} 
Let $\alphabet$ be an alphabet and $\stringlength\in \pN$.
In this work, it is convenient to assume that subsets of $\alphabet^{\stringlength}$ are typed with their length. 
This can be achieved by viewing each subset $\lang \subseteq \alphabet^{\stringlength}$ as a pair of the form $(\stringlength,\lang)$. 
We let $\powerset_{\stringlength}(\alphabet) = \set{\tuple{\stringlength,\lang} \setst \lang\subseteq \alphabet^{\stringlength}}$
be the set of all length typed subsets of $\alphabet^{\stringlength}$. Given length typed sets 
$(\stringlength,\lang_1)$ and $(\stringlength,\lang_2)$, we define
$(\stringlength,\lang_1)\cup (\stringlength,\lang_2) \defeq (\stringlength,\lang_1\cup \lang_2)$, 
$(\stringlength,\lang_1)\cap (\stringlength,\lang_2) \defeq (\stringlength,\lang_1\cap \lang_2)$, 
$(\stringlength,\lang_1)\backslash (\stringlength,\lang_2) \defeq (\stringlength,\lang_1\backslash \lang_2)$,
$(\stringlength,\lang_1)\otimes (\stringlength,\lang_2) \defeq (\stringlength,\lang_1\otimes \lang_2)$, 
and for maps $g:\alphabet\rightarrow \alphabet'$ and $h:\alphabet'\rightarrow \alphabet$, 
we let $g(\stringlength,\lang) \defeq (\stringlength,g(\lang))$ and $h^{-1}(\stringlength,\lang) \defeq (\stringlength,h^{-1}(\lang))$.

\subsubsection*{Language Accepted by an ODD.} 
Let $\alphabet$ be an alphabet, $\width,\oddlength\in\pN$, $\odd = \layer_1 \cdots \layer_{\oddlength}$ be an ODD in $\odddefiningset{\alphabet}{\width}{\oddlength}$ and $\astring = \asymbol_{1}
\cdots \asymbol_{\oddlength}$ be a string in $\alphabet^{\oddlength}$. 
A \emph{valid sequence} for $\astring$ in $\odd$ is a sequence of transitions $$\sequence{\tuple{\layerleftstate_1,\asymbol_{1},\layerrightstate_1}, \ldots, \tuple{\layerleftstate_{\oddlength},\asymbol_{\oddlength},\layerrightstate_{\oddlength}}}$$ such that $\layerleftstate_{i+1} = \layerrightstate_{i}$ for each $i\in\bset{\oddlength-1}$, and $\tuple{\layerleftstate_{i},\asymbol_{i},\layerrightstate_{i}}\in\layertransitions(\alayer_{i})$ for each $i\in\bset{\oddlength}$. 
Such a valid sequence is called \emph{accepting} for $\astring$ if, additionally, $\layerleftstate_{1}$ is an initial state in $\layerinitialstates(\alayer_1)$ and $\layerrightstate_{\oddlength}$ is a final state in $\layerfinalstates(\alayer_{\oddlength})$.  
We say that $\odd$ \emph{accepts} $\astring$ if there exists an accepting sequence for $\astring$ in $\aodd$. 
The \emph{language} of $\aodd$ is defined as the (length-typed) set $$\oddlang{\aodd} \defeq \tuple{\oddlength,\set*{\astring \in \alphabet^{\oddlength} \setst \astring \text{ is accepted by } \aodd}}$$ of all strings accepted by $\aodd$. Note that every string accepted by $\aodd$ has length $\oddlength$. 

In Figure~\ref{fig:evenODD}, we depict an ODD $\odd\in\odddefiningset{\set{0,1}}{2}{5}$ whose language is the length-typed set $\oddlang{\aodd}=\tuple{5,\set{\astring=\asymbol_{1}\cdots\asymbol_{5}\in\set{0,1}^{5} \setst \asymbol_{1}+\cdots+\asymbol_{5} \equiv 0 \pmod 2}}$ of all binary strings of length $5$ with an even number of occurrences of the symbol `$1$'. 
For instance, 
$$\sequence{\tuple{0,0,0},\tuple{0,1,1},\tuple{1,0,0},\tuple{0,1,0},\tuple{0,0,0}}$$ is an accepting sequence in $\odd$ for the string $01010$, which has two occurrences of the symbol `$1$'. 

\begin{figure}[ht]\centering
	\includegraphics[scale=1.4]{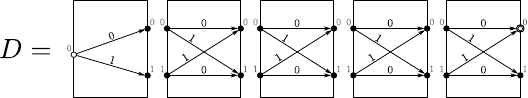}
	\caption{Example of ODD $\odd\in\odddefiningset{\set{0,1}}{2}{5}$ whose language consists of all binary strings of length $5$ with an even number of occurrences of the symbol `$1$'.}\label{fig:evenODD}
\end{figure}

\subsubsection*{Deterministic and Complete ODDs.}
Let $\alphabet$ be an alphabet and $\width\in\pN$. 
A $\tuple{\alphabet,\width}$-layer $\alayer$ is called \emph{deterministic} if the following conditions are satisfied:
\begin{enumerate}
  \item if $\layerinitialflag(\layer) = 1$, then $\layerinitialstates(\layer) = \layerleftfrontier(\layer)$ and $\abs{\layerleftfrontier(\layer)} = 1$;
  \item for each $\layerleftstate \in \layerleftfrontier(\layer)$ and each $\asymbol \in \alphabet$, there exists at most one right state $\layerrightstate \in \layerrightfrontier(\layer)$ such that $\tuple{\layerleftstate, \asymbol, \layerrightstate} \in \layertransitions(\layer)$.
\end{enumerate}
A $\tuple{\alphabet,\width}$-layer $\alayer$ is called \emph{complete} if the following conditions are satisfied: 
\begin{enumerate}
  \item if $\layerinitialflag(\layer) = 1$, then $\layerinitialstates(\layer) \neq \emptyset$;
  \item for each $\layerleftstate \in \layerleftfrontier(\layer)$ and each $\asymbol \in \alphabet$, there exists at least one right state $\layerrightstate \in \layerrightfrontier(\layer)$ such that $\tuple{\layerleftstate, \asymbol, \layerrightstate} \in \layertransitions(\layer)$.  
\end{enumerate}

We let $\cdlayeralphabet{\alphabet}{\width}$ be the subset of $\layeralphabet{\alphabet}{\width}$
comprising all deterministic, complete $\tuple{\alphabet,\width}$-layers.

\begin{observation}
\label{observation:NumberLayers}
Let $\alphabet$ be an alphabet, and $\width \in \pN$.
\begin{enumerate}
	\item The alphabet $\cdlayeralphabet{\alphabet}{\width}$ has $2^{O(|\alphabet|\width \log \width)}$ layers. 
	\item The alphabet $\layeralphabet{\alphabet}{\width}$ has $2^{O(|\alphabet|\width^2)}$ layers. 
\end{enumerate}
\end{observation}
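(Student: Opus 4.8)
The plan is to prove both items by a direct product count over the seven components of a layer $\tuple{\layerleftfrontier, \layerrightfrontier, \layertransitions, \layerinitialstates, \layerfinalstates, \layerinitialflag, \layerfinalflag}$, observing that the only component whose cardinality count differs substantially between the two cases is the transition set $\layertransitions$. In both cases the state sets $\layerleftfrontier, \layerrightfrontier \subseteq \dbset{\width}$ each admit at most $2^{\width}$ choices, the sets $\layerinitialstates \subseteq \layerleftfrontier$ and $\layerfinalstates \subseteq \layerrightfrontier$ each admit at most $2^{\width}$ choices, and the two flags $\layerinitialflag, \layerfinalflag \in \set{\false,\true}$ contribute only a constant factor of $4$; the entire difference is carried by how many transition sets are admissible.

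For the second item I would simply bound the number of transition sets crudely. Since $\layertransitions \subseteq \layerleftfrontier \cartesianproduct \alphabet \cartesianproduct \layerrightfrontier$ and $\abs{\layerleftfrontier \cartesianproduct \alphabet \cartesianproduct \layerrightfrontier} \leq \abs{\alphabet}\width^2$, there are at most $2^{\abs{\alphabet}\width^2}$ possibilities for $\layertransitions$. Multiplying this by the $2^{\BigOh(\width)}$ possibilities for the remaining components yields a total of $4\cdot 2^{4\width}\cdot 2^{\abs{\alphabet}\width^2} = 2^{\BigOh(\abs{\alphabet}\width^2)}$, since the transition term dominates. This settles the bound for $\layeralphabet{\alphabet}{\width}$.

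For the first item the key observation — and essentially the only nonroutine point — is that determinism together with completeness collapses the transition relation into a total function. Indeed, the second determinism condition forces at most one right state $\layerrightstate$ with $\tuple{\layerleftstate,\asymbol,\layerrightstate}\in\layertransitions$ for each pair $\tuple{\layerleftstate,\asymbol}\in\layerleftfrontier\cartesianproduct\alphabet$, while the second completeness condition forces at least one; hence there is exactly one, so that fixing $\layerleftfrontier$ and $\layerrightfrontier$, the admissible sets $\layertransitions$ are in bijection with the functions $\layerleftfrontier\cartesianproduct\alphabet\to\layerrightfrontier$. The number of such functions is at most $\abs{\layerrightfrontier}^{\abs{\layerleftfrontier}\cdot\abs{\alphabet}} \leq \width^{\width\abs{\alphabet}} = 2^{\abs{\alphabet}\width\log\width}$. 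The state sets and flags again contribute at most $2^{\BigOh(\width)}$ (the determinism and completeness constraints on $\layerinitialstates$ can only reduce this count, so the crude bound still applies), giving a total of $2^{\BigOh(\abs{\alphabet}\width\log\width)}$ for $\cdlayeralphabet{\alphabet}{\width}$. No step presents a genuine obstacle; the only care needed is the asymptotic bookkeeping for small width, where the additive $\BigOh(\width)$ from the state sets is dominated by $\BigOh(\abs{\alphabet}\width\log\width)$ once $\width\geq 2$ and is otherwise absorbed into the constant hidden by the $\BigOh$-notation.
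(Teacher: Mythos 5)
Your proof is correct and follows essentially the same route as the paper's: a product count over the components of a layer, with the key point in the deterministic, complete case being that the transition relation becomes a total function $\layerleftfrontier\cartesianproduct\alphabet\rightarrow\layerrightfrontier$, contributing at most $\width^{\abs{\alphabet}\width}=2^{\abs{\alphabet}\width\log\width}$ choices. The paper merely organizes the same count slightly differently (conditioning on the frontier sizes $x,y$ and using binomial coefficients and the sharper $(x+1)$ count for initial states), which changes nothing asymptotically.
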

\begin{proof}
$ $
\begin{enumerate}
\item Let $\alphabet$ be an alphabet, and $x,y\in \{0,1,\dots,\width\}$. We note that there are 
at most $$d(\alphabet,x,y)=\binom{\width}{x}\binom{\width}{y}(x+1)(1+2^{y})y^{|\alphabet|x} =  w^{O(|\alphabet|w)} = 2^{O(|\alphabet|\width \log \width)}$$
deterministic complete layers with $x$ left states, $y$ right states and transitions labeled by symbols in $\alphabet$. Indeed, there 
are $\binom{\width}{x}$ ways of choosing $x$ left states, out of the set $\{1,\dots,\width\}$, $\binom{\width}{y}$ ways
of choosing $y$ right states out of the set $\{1,\dots,\width\}$, $(\width+1)$ ways of choosing the initial set of states 
$\layerinitialstates(\alayer)$ together with the initial flag $\initialflag(\alayer)$ 
(because $\layerinitialstates(\alayer)=\emptyset$ if $\initialflag(\alayer)=0$ and $|\layerinitialstates(\alayer)|=1$ if $\initialflag(\alayer)=1$, due to determinism), 
$1+2^{y}$ ways of choosing the subset of final states $\layerfinalstates(\alayer)$ together with the final flag $\finalflag(\alayer)$
(because $\layerfinalstates(\alayer)=0$ if $\finalflag(\alayer)=0$ and $\layerfinalstates(\alayer)$ is an arbitrary subset of $\rightfrontier(\alayer)$ if $\finalflag(\alayer)=1$), 
and $y^{|\alphabet|x}$ ways of choosing the transition relation $\layertransitions(\alayer)$ (because there are $x$ left states, and for each such state $\astate$ and 
each symbol $a\in \alphabet$ there are $y$ ways of choosing the unique transition with label $a$ leaving $\astate$). Therefore, we have that 
$|\cdlayeralphabet{\alphabet}{\width}|\leq \sum_{x,y=0}^{\width} d(\alphabet,x,y) = (w+1)^2\cdot 2^{O(|\alphabet|\width\log w)} = 2^{O(|\alphabet|\width \log\width)}$.
\item By a similar analysis we can conclude that for each alphabet $\alphabet$, and each $x,y\in \{0,1,\dots,\width\}$ there are at most  
at most $$n(\alphabet,x,y)=\binom{\width}{x}\binom{\width}{y}(1+2^{x})(1+2^{y})2^{|\alphabet|xy} = 2^{O(|\alphabet|\width^2)}$$
(possibly nondeterministic) layers with $x$ left state, $y$ right states, and transitions labeled with symbols from $\alphabet$. The essential
differences are that in the nondeterministic case, there are $(1+2^{x})$ ways of choosing the set of initial states together with the initial flag
(because $\layerinitialstates(\alayer)=\emptyset$ if $\initialflag(\alayer)=0$, and $\layerinitialstates(\alayer)$ may be an arbitrary subset of
$\layerleftfrontier(\alayer)$ if $\initialflag(\alayer)=1$), and that there are $2^{|\alphabet|xy}$ ways of choosing the transition relation
$\layertransitions(\alayer)$ (because there are $x$ left states, and for each such a state $\astate$ and each symbol $a\in \alphabet$ there are $2^{y}$ 
ways of choosing the set of transitions with label $a$ leaving $\astate$). Therefore, we have that 
$|\layeralphabet{\alphabet}{\width}|\leq \sum_{x,y=0}^{\width} n(\alphabet,x,y) = (w+1)^2\cdot 2^{O(|\alphabet|\width^2)} = 2^{O(|\alphabet|\width^2)}$.
\end{enumerate}
\end{proof}

Let $\oddlength\in\pN$ and $\odd = \alayer_1\cdots\alayer_{\oddlength} \in \odddefiningset{\alphabet}{\width}{\oddlength}$. 
We say that $\odd$ is \emph{deterministic} (\emph{complete}, resp.) if for each $i \in \bset{\oddlength}$, $\alayer_i$ is a deterministic (complete, resp.) layer. 
We remark that if $\odd$ is deterministic, then there exists at most one valid sequence in $\odd$ for each string in $\alphabet^{\oddlength}$.
On the other hand, if $\odd$ is complete, then there exists at least one valid sequence in $\odd$ for each string in $\alphabet^{\oddlength}$.

For each $\oddlength \in \pN$, we denote by $\cdodddefiningset{\alphabet}{\width}{\oddlength}$ the subset of
$\odddefiningset{\alphabet}{\width}{\oddlength}$ comprising all deterministic, complete $(\alphabet,\width)$-ODDs of length $\oddlength$. 
We denote by $\cdodddefiningsetstar{\alphabet}{\width}$ the subset of $\odddefiningsetstar{\alphabet}{\width}$ comprising all deterministic, 
complete $(\alphabet,\width)$-ODDs.

\subsubsection*{Isomorphism of ODDs.} 
Let $\alphabet$ be an alphabet, $\width,\oddlength\in\pN$, and let $\aodd = \alayer_1\cdots\alayer_{\oddlength}$ and $\bodd = \blayer_1\cdots\blayer_{\oddlength}$ be two ODDs in $\odddefiningset{\alphabet}{\width}{\oddlength}$. 
An \emph{isomorphism from $\aodd$ to $\bodd$} is a sequence $\isomorphismsequence \defeq \sequence{\isomorphism_{0},\ldots,\isomorphism_{\oddlength}}$ 
of functions that satisfy the following conditions: 
\begin{enumerate}
	\item $\isomorphism_{0}\colon\layerleftfrontier(\alayer_{0})\rightarrow \layerleftfrontier(\blayer_{0})$ is a bijection from $\layerleftfrontier(\alayer_{0})$ to $\layerleftfrontier(\blayer_{0})$; 
	\item $\isomorphism_{0}\restr{\layerinitialstates(\alayer_{0})}$ is a bijection from $\layerinitialstates(\alayer_{0})$ to $\layerinitialstates(\blayer_{0})$; 
	\item for each $i \in \bset{\oddlength}$, $\isomorphism_{i}\colon\layerrightfrontier(\alayer_{i})\rightarrow \layerrightfrontier(\blayer_{i})$ is a bijection from $\layerrightfrontier(\alayer_{i})$ to $\layerrightfrontier(\blayer_{i})$;
	\item $\isomorphism_{\oddlength}\restr{\layerfinalstates(\alayer_{\oddlength})}$ is a bijection from $\layerfinalstates(\alayer_{\oddlength})$ to $\layerfinalstates(\blayer_{\oddlength})$;
	\item for each $i \in \bset{\oddlength}$, each left state $\layerleftstate\in\layerleftfrontier(\alayer_{i})$, each symbol $\asymbol\in\alphabet$ and each right state $\layerrightstate\in\layerrightfrontier(\alayer_{i})$, $\tuple{\layerleftstate,\asymbol,\layerrightstate}\in \layertransitions(\alayer_{i})$ if and only if $\tuple{\isomorphism_{i-1}(\layerleftstate),\asymbol,\isomorphism_{i}(\layerrightstate)}\in \layertransitions(\blayer_{i})$.
\end{enumerate}

We remark that if $\isomorphismsequence = \sequence{\isomorphism_{0},\ldots,\isomorphism_{\oddlength}}$ is an isomorphism from $\aodd$ to $\bodd$, then the sequence $\isomorphismsequence^{-1} \defeq \sequence{\isomorphism_{0}^{-1},\ldots,\isomorphism_{\oddlength}^{-1}}$ is an isomorphism from $\bodd$ to $\aodd$, where $\isomorphism_{i}^{-1}$ denotes the inverse function of $\isomorphism_{i}$ for each $i \in \dbset{\oddlength+1}$. 
We say that $\aodd$ and $\bodd$ are \emph{isomorphic} if there exists an isomorphism $\isomorphismsequence$ between $\aodd$ and $\bodd$. 
The following proposition is immediate.

\begin{proposition}\label{proposition:IsomorphicEquivalenceLanguages}
Let $\alphabet$ be an alphabet, $\width \in \pN$, and let $\aodd$ and $\bodd$ be two $(\alphabet,\width)$-ODDs. 
If $\aodd$ and $\bodd$ are isomorphic, then $\oddlang{\aodd} = \oddlang{\bodd}$. 
\end{proposition}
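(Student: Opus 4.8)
The plan is to show that the isomorphism $\isomorphismsequence = \sequence{\isomorphism_0,\ldots,\isomorphism_\oddlength}$ transports accepting sequences of $\aodd$ into accepting sequences of $\bodd$ in a label-preserving way, and conversely. Since isomorphic ODDs have, by definition, the same length $\oddlength$, both $\oddlang{\aodd}$ and $\oddlang{\bodd}$ are length-typed sets of the form $\tuple{\oddlength,\cdot}$; hence it suffices to prove that an arbitrary string $\astring = \asymbol_1\cdots\asymbol_\oddlength \in \alphabet^{\oddlength}$ is accepted by $\aodd$ if and only if it is accepted by $\bodd$. Moreover, because $\isomorphismsequence^{-1} = \sequence{\isomorphism_0^{-1},\ldots,\isomorphism_\oddlength^{-1}}$ is itself an isomorphism from $\bodd$ to $\aodd$ (as already remarked before the statement), it is enough to establish the forward implication only; the converse then follows by symmetry.

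For the forward implication, I would fix an accepting sequence $\sequence{\tuple{\layerleftstate_1,\asymbol_1,\layerrightstate_1},\ldots,\tuple{\layerleftstate_\oddlength,\asymbol_\oddlength,\layerrightstate_\oddlength}}$ for $\astring$ in $\aodd$. Setting $s_0 \defeq \layerleftstate_1$ and $s_i \defeq \layerrightstate_i$ for $i\in\bset{\oddlength}$, the continuity condition $\layerleftstate_{i+1}=\layerrightstate_i$ lets me regard this sequence as a single state path $s_0,s_1,\ldots,s_\oddlength$ running through the shared frontiers, with $\tuple{s_{i-1},\asymbol_i,s_i}\in\layertransitions(\alayer_i)$ for every $i\in\bset{\oddlength}$, with $s_0$ an initial state of $\aodd$ and $s_\oddlength$ a final state of $\aodd$. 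I then define the image path by $t_0\defeq\isomorphism_0(s_0)$ and $t_i\defeq\isomorphism_i(s_i)$ for $i\in\bset{\oddlength}$; this is well defined since $s_0$ lies in the leftmost frontier, the domain of $\isomorphism_0$, and each $s_i$ lies in $\layerrightfrontier(\alayer_i)$, the domain of $\isomorphism_i$.

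The key step is to verify that $\sequence{\tuple{t_0,\asymbol_1,t_1},\ldots,\tuple{t_{\oddlength-1},\asymbol_\oddlength,t_\oddlength}}$ is an accepting sequence for $\astring$ in $\bodd$. That each triple is a genuine transition of the corresponding layer, i.e.\ $\tuple{\isomorphism_{i-1}(s_{i-1}),\asymbol_i,\isomorphism_i(s_i)}\in\layertransitions(\blayer_i)$, is exactly the transition-preservation condition of the isomorphism, where for $i\geq 2$ one uses Condition~\ref{condition:OneODD} of the ODD definition, namely that $\layerrightfrontier(\alayer_{i-1})$ coincides with $\layerleftfrontier(\alayer_i)$, so that $\isomorphism_{i-1}$ is indeed applied to the left state of layer $i$. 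Continuity in $\bodd$ is automatic, since the constructed path reuses each $t_i$ as both the right state of step $i$ and the left state of step $i+1$. Finally, the acceptance endpoints are handled by the remaining isomorphism conditions: as $\isomorphism_0$ restricts to a bijection from the initial states of $\aodd$ onto those of $\bodd$, and $s_0$ is initial, we get that $t_0$ is initial in $\bodd$; symmetrically, $\isomorphism_\oddlength$ restricts to a bijection from the final states of $\aodd$ onto those of $\bodd$, so $t_\oddlength$ is final. Hence $\astring$ is accepted by $\bodd$, which completes the forward direction.

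I do not anticipate a genuine obstacle: the statement is essentially a bookkeeping argument. The only point requiring care is tracking the identification of adjacent frontiers, so that the index of the isomorphism function applied at each step matches the state it is applied to; getting this indexing right is what makes the transition-preservation condition apply cleanly at every position, including the two boundary cases $i=1$ and $i=\oddlength$.
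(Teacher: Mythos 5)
Your proof is correct, and it is exactly the routine argument the paper has in mind: the paper itself offers no proof at all, simply declaring the proposition ``immediate'' after the definition of isomorphism. Your write-up — transporting an accepting sequence through the $\isomorphism_i$'s using the transition-preservation condition and the initial/final-state conditions, then getting the converse from the inverse isomorphism — is the standard bookkeeping the authors left implicit, with the frontier-identification indexing handled correctly.
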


\subsubsection*{Normalized ODDs.}
Let $\alphabet$ be an alphabet, $\width\in\pN$, and let $\layer$ be a $(\alphabet,\width)$-layer.  
We say that $\alayer$ is {\em reachable} if for each right state $\layerrightstate\in\layerrightfrontier(\alayer)$, there exist a symbol $\asymbol \in \alphabet$ and
a left state $\layerleftstate\in\layerleftfrontier(\alayer)$ such that $\tuple{\layerleftstate\,\asymbol,\layerrightstate}$ is a transition in $\layertransitions(\alayer)$. 
If $\alayer$ is reachable, then we let $\porderfrontierName{\alayer} \colon \layerrightfrontier(\alayer) \rightarrow \layerleftfrontier(\alayer) \cartesianproduct \alphabet$ be the function such that for each right state $\layerrightstate \in \layerrightfrontier(\layer)$, $$\porderfrontier{\layer}{\layerrightstate} \defeq \min\set{\tuple{\layerleftstate,\asymbol} \setst \tuple{\layerleftstate,\asymbol,\layerrightstate}\in
\layertransitions(\layer)}\text{,}$$ where the minimum is taken lexicographically, \ie, for each two left states $\layerleftstate, \layerleftstate' \in \layerleftfrontier(\layer)$ and each two symbols $\asymbol, \bsymbol \in \alphabet$, we have that $\tuple{\layerleftstate,\asymbol}<\tuple{\layerleftstate',\bsymbol}$ if and only if $\layerleftstate<\layerleftstate'$, or $\layerleftstate=\layerleftstate'$ and $\asymbol\orderalphabet_{\alphabet}\bsymbol$.
(Recall we are assuming that the alphabet $\alphabet$ is endowed with a fixed total order $\orderalphabet_{\alphabet}\subseteq \alphabet\times \alphabet$.) 
We say that $\layer$ is {\em well-ordered} if it is a reachable, deterministic layer such that for each two right states $\layerrightstate,\layerrightstate'\in \layerrightfrontier(\layer)$, we have that $\layerrightstate<\layerrightstate'$ if and only if $\porderfrontier{\layer}{\layerrightstate}<\porderfrontier{\layer}{\layerrightstate'}$. 
We say that $\alayer$ is {\em contiguous} if $\leftfrontier(\layer) = \dbset{\width_1}$ and $\rightfrontier(\layer) = \dbset{\width_2}$ for some $\width_1,\width_2\in\bset{\width}$. 
Then, we say that $\layer$ is {\em normalized} if it is both well-ordered and contiguous. 

Let $\oddlength\in\pN$ and $\odd = \alayer_1\cdots\alayer_{\oddlength}$ be an ODD in $\odddefiningset{\alphabet}{\width}{\oddlength}$. 
We say that $\aodd$ is \emph{reachable}/\emph{well-ordered}/{\em conti\-guous}/{\em nor\-ma\-li\-zed} if for each $i \in \bset{\oddlength}$, the layer $\alayer_i$ is reachable/well-ordered/contiguous/nor\-ma\-li\-zed. 
Note that $\odd$ is normalized if and only if it is both well-ordered and contiguous.

\subsubsection*{Minimized ODDs.}

Let $\alphabet$ be an alphabet, $\width,\oddlength\in\pN$, and let
$\aodd=\alayer_{1}\cdots\alayer_{\oddlength}$ be a deterministic, complete ODD
in $\cdodddefiningset{\alphabet}{\width}{\oddlength}$. We say that $\aodd$ is
\emph{minimized} if for each $\width'\in\pN$ and each
$\bodd=\blayer_{1}\cdots\blayer_{\oddlength}\in\cdodddefiningset{\alphabet}{\width'}{\oddlength}$,
with $\oddlang{\aodd} = \oddlang{\bodd}$, we have that $\numberstates{\aodd}
\leq \numberstates{\bodd}$.  In other words, $\aodd$ is minimized if no
deterministic, complete ODD with the same language as $\aodd$ has less states
than $\aodd$. The following theorem is the analog of Theorem~\ref{theorem:MinimizationAutomaton}
in the realm of the theory of ordered decision diagrams.

\begin{theorem}
\label{theorem:MinimizationODDs}
Let $\alphabet$ be an alphabet, $\width,\oddlength\in\pN$, and let $\aodd$ be an ODD in $\odddefiningset{\alphabet}{\width}{\oddlength}$. 
There exists a unique minimized ODD~ $\canonizationfunction(\aodd) \in \cdodddefiningset{\alphabet}{2^{\width}}{\oddlength}$ 
such that $\canonizationfunction(\aodd)$ is deterministic, complete, normalized and satisfies
$\oddlang{\canonizationfunction(\aodd)} = \oddlang{\aodd}$. Additionally, if $\aodd\in \cdodddefiningset{\alphabet}{\width}{\oddlength}$ then 
$\canonizationfunction(\aodd)\in \cdodddefiningset{\alphabet}{\width}{\oddlength}$.  
\end{theorem}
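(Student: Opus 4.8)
The plan is to establish existence, uniqueness, and the width bound separately, leaning on the standard Myhill--Nerode theory adapted to the layered setting. The key observation is that the language $\oddlang{\aodd}$ is a length-typed subset of $\alphabet^{\oddlength}$, so I can analyze it through a layer-by-layer right-congruence. For each position $i \in \dbset{\oddlength+1}$, I would consider the set of prefixes in $\alphabet^{i}$ and group two prefixes together when they have identical sets of accepting continuations of length $\oddlength - i$. This yields an equivalence relation on $\alphabet^{i}$ whose classes will become the states on the $i$-th frontier of the canonical ODD. The number of such classes at each frontier is at most $2^{\width}$, because the continuation-behaviour of a prefix is completely determined by the subset of the $\leftfrontier(\alayer_{i+1})$-states reachable from an initial state along that prefix, and $\aodd$ has width at most $\width$; this is precisely the source of the $2^{\width}$ bound in the codomain $\cdodddefiningset{\alphabet}{2^{\width}}{\oddlength}$.

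First I would construct a deterministic, complete ODD $\bodd$ realizing this congruence: place one state per equivalence class at each frontier, set the transition on symbol $\asymbol$ from the class of prefix $\astring$ to the class of $\astring\asymbol$, mark as initial the class of the empty prefix, and mark as final exactly those classes at frontier $\oddlength$ whose prefixes lie in $\oddlang{\aodd}$. By construction this ODD is deterministic and complete, has the same language as $\aodd$, and has width at most $2^{\width}$. Next I would argue minimality: any deterministic, complete ODD $\codd$ with $\oddlang{\codd}=\oddlang{\aodd}$ induces a refinement of the continuation-congruence at each frontier (two prefixes reaching the same state of $\codd$ must have the same accepting continuations), so $\codd$ has at least as many states at each frontier as $\bodd$, hence $\numberstates{\codd} \geq \numberstates{\bodd}$. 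This is the layered analogue of the classical proof that the minimal DFA has one state per Myhill--Nerode class.

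For uniqueness, I would first show that any two minimized deterministic complete ODDs with the same language are isomorphic, by verifying that minimality forces the frontier-states to correspond bijectively to the continuation-classes and that this correspondence respects transitions, initial and final markings (invoking Proposition~\ref{proposition:IsomorphicEquivalenceLanguages} for the language-preservation direction). I would then normalize: applying the well-ordering condition fixes the numbering of the right-states of each layer from the lexicographically least incoming transition, and contiguity forces the state sets to be initial segments $\dbset{\width_i}$; together these pin down a single syntactic representative among all isomorphic minimized ODDs, exactly as the normalization step does for finite automata in the discussion after Theorem~\ref{theorem:MinimizationAutomaton}. Defining $\canonizationfunction(\aodd)$ to be this normalized minimized ODD yields syntactic uniqueness.

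The last sentence of the statement, that $\canonizationfunction(\aodd)\in\cdodddefiningset{\alphabet}{\width}{\oddlength}$ whenever $\aodd$ is already deterministic and complete, would follow because in that case each frontier already has at most $\width$ continuation-classes: a deterministic complete ODD assigns each prefix a single reachable state rather than a subset, so the continuation-congruence has at most $\abs{\leftfrontier(\alayer_{i+1})}\leq\width$ classes, not $2^{\width}$. The main obstacle I anticipate is the bookkeeping required to show that the continuation-congruence is well-defined and compatible across frontiers for general (nondeterministic) ODDs—in particular verifying that the subset-of-reachable-states invariant is preserved by the transition structure so that the constructed $\bodd$ is genuinely a valid ODD satisfying Conditions~\ref{condition:OneODD}--\ref{condition:ThreeODD}, and that completeness can always be achieved by adding at most one sink class per frontier without affecting minimality among \emph{complete} ODDs. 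The rest is a faithful transcription of the Myhill--Nerode argument into the layered, length-typed vocabulary of this paper.
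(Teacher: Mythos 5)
Your proposal is correct, but it proves the theorem by a different route than the paper takes. The paper gives no self-contained proof of this statement at all: it defers to the standard canonization of OBDDs with a fixed variable ordering (the citation to Wegener's book on branching programs), and the actual ingredients appear only later, in Section~\ref{section:CanonizationTransductionTheorem}, as the algorithmic pipeline determinize (power-set construction, Lemma~\ref{lemma:DeterminizationTransduction}), trim unreachable states (Proposition~\ref{proposition:UniqueReachabilityODD}), merge equivalent states (Proposition~\ref{proposition:UniqueMergingODD}), and normalize lexicographically (Proposition~\ref{prop:UniqueNormalizedODD}), each realized as a functional regular transduction. You instead give the Myhill--Nerode quotient construction directly: states at frontier $i$ are continuation-classes of prefixes in $\alphabet^{i}$, minimality follows because any deterministic complete ODD refines this congruence, and uniqueness follows because minimality forces a bijection between states and classes, after which normalization pins down the syntactic representative. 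The two approaches are the classical dual pair (abstract quotient versus minimization algorithm); yours is more self-contained and yields existence, the $2^{\width}$ width bound, minimality, and uniqueness-up-to-isomorphism in one sweep, whereas the paper's algorithmic version is what it actually needs downstream, since each step must be simulated by a regular transduction to prove Theorem~\ref{theorem:CanonizationTransductionTheorem} and the FPT applications. Two small remarks: the obstacles you anticipate at the end are non-issues, since well-definedness of the congruence across frontiers is immediate ($u\sim_i v$ implies $u\asymbol\sim_{i+1}v\asymbol$ directly from the definition), and no sink states ever need to be added, because the class of prefixes with empty continuation set already plays that role and the quotient ODD is complete by construction; also, your appeal to Proposition~\ref{proposition:IsomorphicEquivalenceLanguages} points the wrong way (it gives isomorphism $\Rightarrow$ equal languages, while your uniqueness step needs the isomorphism you construct by hand from the class correspondence, which you do sketch correctly).
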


We call the ODD~ $\canonizationfunction(\aodd)$ of Theorem \ref{theorem:MinimizationODDs} 
the {\em canonical form } of $\aodd$. We note that $\canonizationfunction(\aodd)$ is unique not 
only up to isomorphism, but also unique up to equality. In particular, this implies that for 
each alphabet $\alphabet$, each $\width,\width',\oddlength\in \N$, and each two ODDs 
$\aodd\in \odddefiningset{\alphabet}{\width}{\oddlength}$ and 
$\bodd\in \odddefiningset{\alphabet}{\width'}{\oddlength}$ with 
$\oddlang{\aodd} = \oddlang{\bodd}$, we have that $\canonizationfunction(\aodd) = \canonizationfunction(\bodd)$.
The construction of $\canonizationfunction(\aodd)$ follows a similar process to the construction of 
canonical forms of OBDDs with a fixed variable, or equivalently, read-once oblivious
branching programs \cite{wegener2000branching}.

\subsection{Regular Transductions}
\label{subsection:Transductions}
Let $\aalphabet$ and $\balphabet$ be two alphabets. In this work, a 
\emph{$(\aalphabet,\balphabet)$-transduction} is a binary relation $\transduction\subseteq \aalphabet^{+}\cartesianproduct\balphabet^{+}$ where 
$|\astring|=|\bstring|$ for each $(\astring,\bstring)\in \transduction$. 
We let $$\image(\transduction) \defeq \set{\bstring\in \alphabet_2^{+} \setst \exists\, \astring\in \alphabet_{1}^{+}, \tuple{\astring,\bstring}\in \transduction}$$ be the \emph{image} of $\transduction$, and we let $$\domain(\transduction) \defeq \set{\astring\in \alphabet_1^{+} \setst \exists\, \bstring\in \alphabet_{2}^{+}, \tuple{\astring,\bstring}\in \transduction}$$ be the \emph{domain} of $\transduction$. 
We say that a $(\aalphabet,\balphabet)$-transduction $\transduction$ is \emph{functional}
if, for each string $\astring\in \aalphabet^{+}$, there exists at most one string
$\bstring\in \balphabet^{+}$ such that $\tuple{\astring,\bstring}\in \transduction$.

Let $\aalphabet$, $\balphabet$ and $\calphabet$ be three (not-necessarily distinct) alphabets. 
If $\atransduction$ is a $(\aalphabet,\balphabet)$-transduction and $\btransduction$ is a $(\balphabet,\calphabet)$-transduction, then the \emph{composition} of $\atransduction$ with $\btransduction$ is defined as the $(\aalphabet,\calphabet)$-transduction $$\atransduction\composition\btransduction \defeq \set{\tuple{\astring,\cstring}\in\aalphabet^{+}\cartesianproduct\calphabet^{+}\setst \exists\, \bstring \in \balphabet^{+}, \tuple{\astring,\bstring} \in \atransduction \text{ and } \tuple{\bstring,\cstring}\in\btransduction}\text{.}$$ 

For each language $\lang\subseteq \aalphabet^{+}$, we let $$\duplicate{\lang} \defeq \set{\tuple{\astring,\astring} \setst \astring\in \lang}$$ be the \emph{$(\aalphabet,\aalphabet)$-transduction derived from $\lang$}. 
Then, for each language $\lang\subseteq \aalphabet^{+}$ and each $(\aalphabet,\balphabet)$-transduction $\atransduction$, we let $$\atransduction(\lang) \defeq \image(\duplicate{\lang}\circ \atransduction) = \set{\bstring\in \balphabet^{+} \setst \exists\, \astring\in \lang,\, \tuple{\astring,\bstring}\in \transduction}$$ be the image of $\lang$ under $\atransduction$.

\subsubsection*{Tensor Product.}
Let $\alphabet_{1},\ldots,\alphabet_{\arity}$ be $\arity$ alphabets and $\stringlength \in \pN$. 
For each $i\in\bset{\arity}$, let $\astring_{i}=\asymbol_{i,1}\cdots \asymbol_{i,\stringlength}$ be a string of length $\stringlength$ over the alphabet $\alphabet_{i}$. 
The \emph{tensor product} of $\astring_1,\ldots,\astring_{\arity}$ is defined as the string $$\astring_1\tensorproduct\cdots\tensorproduct\astring_{\arity} \defeq (\asymbol_{1,1},\ldots,\asymbol_{\arity,1}) \cdots (\asymbol_{\stringlength,1},\ldots,\asymbol_{\stringlength,\arity})\text{}$$ of length $\stringlength$ over the alphabet $\alphabet_{1}\cartesianproduct\cdots\cartesianproduct\alphabet_{\arity}$.  
For each $i\in\bset{\arity}$, let $\lang_{i}\subseteq\alphabet_{i}^{+}$ be a language over $\alphabet_{i}$.
The \emph{tensor product} of $\lang_1, \ldots, \lang_{\arity}$ is defined as the language 
\begin{align*}
    \lang_1\tensorproduct \cdots \tensorproduct \lang_{\arity} \defeq \set{\astring_{1}\tensorproduct \cdots \tensorproduct \astring_{\arity} \setst \abs{\astring_1}=\cdots=\abs{\astring_\arity},\, \astring_{i}\in \lang_{i} \text{ for each } i\in\bset{\arity}}\text{.}
\end{align*}

\subsubsection*{Regular transductions.}
For $\alpha\in \pN$, we say that a $(\aalphabet,\balphabet)$-transduction $\atransduction$ is {\em $\alpha$-regular} if the language 
$$\transductionlang(\transduction) \defeq \set{\astring \otimes \bstring \setst \tuple{\astring,\bstring}\in \transduction} \subseteq (\alphabet_1\times \alphabet_2)^{+}$$ is $\alpha$-regular. The following proposition states some straightforward
quantitative properties of regular transductions.

\begin{proposition}\label{proposition:PropertiesTransductions}
Let $\aalphabet,\balphabet$ and $\calphabet$ be three alphabets, $\atransduction$ be an $\agrowthconstant$-regular $(\aalphabet,\balphabet)$-transduction, $\btransduction$ be a $\bgrowthconstant$-regular $(\balphabet,\calphabet)$-transduction, and let $\lang\subseteq \aalphabet^{+}$ be a $\cgrowthconstant$-regular language, for some $\agrowthconstant,\bgrowthconstant,\cgrowthconstant\in\pN$. The following statements hold.
\begin{enumerate}
    \item \label{item:automaton_image_domain} The languages $\image(\atransduction)$ and $\domain(\atransduction)$ are $\agrowthconstant$-regular. 
	\item \label{item:automaton_composition} The composition $\atransduction\circ \btransduction$ is $(\agrowthconstant\cdot \bgrowthconstant)$-regular.
	\item \label{item:automaton_duplicate} The transduction $\duplicate{\lang}$ is $\cgrowthconstant$-regular. 
	\item \label{item:automaton_lang_transduction} The language $\transduction(\lang)$ is $(\cgrowthconstant\cdot\agrowthconstant)$-regular. 
\end{enumerate}
\end{proposition}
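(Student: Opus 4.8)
The plan is to prove the four statements by direct finite-automaton constructions, establishing items~\ref{item:automaton_image_domain}, \ref{item:automaton_composition} and \ref{item:automaton_duplicate} from scratch and then deriving item~\ref{item:automaton_lang_transduction} by combining the first three. Throughout, the key elementary fact I would rely on is that applying a symbol-to-symbol relabeling (a length-preserving homomorphism on the transition alphabet) to a finite automaton neither changes its state set nor alters its accepting behaviour beyond the relabeling; hence projections and diagonal embeddings preserve the number of states, whereas products multiply it. This is precisely what yields the quantitative bounds demanded by the statement.

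For item~\ref{item:automaton_image_domain}, let $\finiteautomaton$ be a finite automaton with at most $\agrowthconstant$ states recognizing $\transductionlang(\atransduction) = \set{\astring \tensorproduct \bstring \setst (\astring,\bstring) \in \atransduction}$ over the alphabet $\aalphabet \cartesianproduct \balphabet$. I would relabel each transition carrying a symbol $(\asymbol,\bsymbol)$ by its second coordinate $\bsymbol$, obtaining an automaton over $\balphabet$ with the same state set. Since $\pi_2(\astring \tensorproduct \bstring) = \bstring$, this automaton recognizes exactly $\image(\atransduction)$, which is therefore $\agrowthconstant$-regular; relabeling by the first coordinate instead gives the analogous claim for $\domain(\atransduction)$. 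For item~\ref{item:automaton_duplicate}, starting from an automaton with at most $\cgrowthconstant$ states for $\lang$, I would relabel each transition on a symbol $\asymbol$ by the diagonal symbol $(\asymbol,\asymbol) \in \aalphabet \cartesianproduct \aalphabet$. The resulting automaton, with the same states, accepts precisely $\set{\astring \tensorproduct \astring \setst \astring \in \lang} = \transductionlang(\duplicate{\lang})$, since off-diagonal symbols become unreachable, so $\duplicate{\lang}$ is $\cgrowthconstant$-regular.

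The heart of the argument, and the step I expect to be the main obstacle, is item~\ref{item:automaton_composition}. Given automata $\finiteautomaton_1$ (at most $\agrowthconstant$ states, over $\aalphabet \cartesianproduct \balphabet$) and $\finiteautomaton_2$ (at most $\bgrowthconstant$ states, over $\balphabet \cartesianproduct \calphabet$) for $\transductionlang(\atransduction)$ and $\transductionlang(\btransduction)$, I would form the product automaton over $\aalphabet \cartesianproduct \balphabet \cartesianproduct \calphabet$ whose transition on $(\asymbol, \bsymbol, \rho)$ simulates $\finiteautomaton_1$ on $(\asymbol, \bsymbol)$ and $\finiteautomaton_2$ on $(\bsymbol, \rho)$ in lockstep, forcing the shared middle coordinate $\bsymbol$ to agree. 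This product has at most $\agrowthconstant \cdot \bgrowthconstant$ states and accepts $\set{\astring \tensorproduct \bstring \tensorproduct \cstring \setst (\astring,\bstring) \in \atransduction,\ (\bstring, \cstring) \in \btransduction}$. Projecting out the middle coordinate by the relabeling of item~\ref{item:automaton_image_domain} then yields an automaton with at most $\agrowthconstant \cdot \bgrowthconstant$ states recognizing $\transductionlang(\atransduction \composition \btransduction)$, establishing $(\agrowthconstant \cdot \bgrowthconstant)$-regularity. The delicate points are verifying that the existential quantifier over the intermediate string $\bstring$ in the definition of composition is captured exactly by the shared middle coordinate together with the subsequent projection, and that the equal-length and nonemptiness constraints built into the definition of a transduction are automatically respected by the tensor product.

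Finally, item~\ref{item:automaton_lang_transduction} follows immediately by chaining the previous three. Since $\atransduction(\lang) = \image(\duplicate{\lang} \composition \atransduction)$, item~\ref{item:automaton_duplicate} gives that $\duplicate{\lang}$ is $\cgrowthconstant$-regular, item~\ref{item:automaton_composition} gives that $\duplicate{\lang} \composition \atransduction$ is $(\cgrowthconstant \cdot \agrowthconstant)$-regular, and item~\ref{item:automaton_image_domain} gives that its image is $(\cgrowthconstant \cdot \agrowthconstant)$-regular, as required.
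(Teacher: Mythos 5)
Your proposal is correct and follows essentially the same route as the paper: projection relabelings for $\image$/$\domain$, a diagonal relabeling for $\duplicate{\lang}$, a product automaton of size $\agrowthconstant\cdot\bgrowthconstant$ for composition, and the chaining $\transduction(\lang)=\image(\duplicate{\lang}\composition\atransduction)$ for the last item. The only (cosmetic) difference is that you realize the composition in two steps---an explicit three-track product over $\aalphabet\cartesianproduct\balphabet\cartesianproduct\calphabet$ followed by projecting out the middle coordinate---whereas the paper builds the projected automaton directly, existentially quantifying the middle symbol inside the transition relation; the two constructions coincide state-for-state.
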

\begin{proof}
Let $\finiteautomaton_{\atransduction}$ be a finite automaton with $\agrowthconstant$ states and language $\automatonlang{\finiteautomaton_{\atransduction}}=\transductionlang(\atransduction)$, $\finiteautomaton_{\btransduction}$ be a finite automaton with $\bgrowthconstant$ states and language $\automatonlang{\finiteautomaton_{\btransduction}}=\transductionlang(\btransduction)$, and let $\finiteautomaton_{\lang}$ be a finite automaton with $\cgrowthconstant$ states and language $\automatonlang{\finiteautomaton_{\lang}}=\lang$. 
Note that, such automata $\finiteautomaton_{\atransduction}$, $\finiteautomaton_{\btransduction}$ and $\finiteautomaton_{\lang}$ exist, since by hypothesis $\atransduction$ is $\agrowthconstant$-regular, $\btransduction$ is $\bgrowthconstant$-regular and $\lang$ is $\cgrowthconstant$-regular, respectively.

\begin{enumerate}
    \item We let $\finiteautomaton_{\image(\atransduction)}$ and $\finiteautomaton_{\domain(\atransduction)}$ be the finite automata over the alphabets $\balphabet$ and $\aalphabet$, respectively, defined exactly as $\finiteautomaton_{\atransduction}$ except for their transition sets, which is defined as follows: $$\automatontransitions(\finiteautomaton_{\image(\atransduction)})= \set{\tuple{\astate,\bsymbol,\bstate} \setst \exists\, \asymbol\in\aalphabet, \tuple{\astate,\tuple{\asymbol,\bsymbol},\bstate}\in \automatontransitions(\finiteautomaton_{\atransduction})}\text{, and}$$
    $$\automatontransitions(\finiteautomaton_{\domain(\atransduction)})= \set{\tuple{\astate,\asymbol,\bstate} \setst \exists\, \bsymbol\in\balphabet, \tuple{\astate,\tuple{\asymbol,\bsymbol},\bstate}\in \automatontransitions(\finiteautomaton_{\atransduction})}\text{.}$$ 
    Clearly, $\finiteautomaton_{\image(\atransduction)}$ and $\finiteautomaton_{\domain(\atransduction)}$ have at most $\agrowthconstant$ states each. 
    Moreover, $\finiteautomaton_{\image(\atransduction)}$ accepts a string $\bstring \in \balphabet^{+}$ if and only if there exists a string $\astring\in\aalphabet^{+}$ such that $\astring\tensorproduct\bstring\in\transductionlang(\atransduction)$. 
    Analogously, one can verify that $\finiteautomaton_{\domain(\atransduction)}$ accepts a string $\astring \in \aalphabet^{+}$ if and only if there exists a string $\bstring\in\balphabet^{+}$ such that $\astring\tensorproduct\bstring\in\transductionlang(\atransduction)$. 
    Therefore, the language of $\finiteautomaton_{\image(\atransduction)}$ is $\automatonlang{\finiteautomaton_{\image(\atransduction)}}=\image(\atransduction)$, and the language of $\finiteautomaton_{\domain(\atransduction)}$ is $\automatonlang{\finiteautomaton_{\domain(\atransduction)}}=\domain(\atransduction)$. \vspace{1.0ex}

    \item We let $\finiteautomaton_{\atransduction\composition\btransduction}$ be the finite automaton over the alphabet $\aalphabet\cartesianproduct\calphabet$, with state set $\automatonstates(\finiteautomaton_{\atransduction\composition\btransduction})=\automatonstates(\finiteautomaton_{\atransduction})\cartesianproduct\automatonstates(\finiteautomaton_{\btransduction})$, initial state set $\automatoninitialstates(\finiteautomaton_{\atransduction\composition\btransduction})=\automatoninitialstates(\finiteautomaton_{\atransduction})\cartesianproduct\automatoninitialstates(\finiteautomaton_{\btransduction})$, final state set $\automatonfinalstates(\finiteautomaton_{\atransduction\composition\btransduction})=\automatonfinalstates(\finiteautomaton_{\atransduction})\cartesianproduct\automatonfinalstates(\finiteautomaton_{\btransduction})$ and transition set 
    $$\begin{multlined}[t][0.90\textwidth]
        \automatontransitions(\finiteautomaton_{\atransduction\composition\btransduction})=\set{\tuple{\tuple{\cstate,\cstate'},\tuple{\asymbol,\pbsymbol},\tuple{\astate,\bstate}} \setst \exists\,\bsymbol\in\balphabet,\\ \tuple{\cstate,\tuple{\asymbol,\bsymbol},\astate}\in\automatontransitions(\finiteautomaton_{\atransduction}),\tuple{\cstate',\tuple{\bsymbol,\pbsymbol},\bstate}\in\automatontransitions(\finiteautomaton_{\btransduction})}\text{.}
    \end{multlined}$$
    
    We remark $\finiteautomaton_{\atransduction\composition\btransduction}$ is a finite automaton with at most ($\agrowthconstant\cdot\bgrowthconstant$) states. 
    Moreover, $\finiteautomaton_{\atransduction\composition\btransduction}$ accepts a string $\astring\tensorproduct\cstring \in (\aalphabet\cartesianproduct\calphabet)^{+}$ if and only if there exists $\bstring\in\balphabet^{+}$ such that $\astring\tensorproduct\bstring\in\transductionlang(\atransduction)$ and $\bstring\tensorproduct\cstring\in\transductionlang(\btransduction)$. 
    Therefore, the language of $\finiteautomaton_{\atransduction\composition\btransduction}$ is $\automatonlang{\finiteautomaton_{\atransduction\composition\btransduction}}=\transductionlang(\atransduction\composition\btransduction)$.  \vspace{1.0ex}

    \item We let $\finiteautomaton_{\duplicate{\lang}}$ be the finite automata over the alphabet $\aalphabet\cartesianproduct\aalphabet$ defined exactly as $\finiteautomaton_{\lang}$ except for its transition set, which is defined as follows: $$\automatontransitions(\duplicate{\lang})= \set{\tuple{\astate,\tuple{\asymbol,\asymbol},\bstate} \setst \tuple{\astate,\asymbol,\bstate}\in \automatontransitions(\finiteautomaton_{\lang})}\text{.}$$
    Clearly, $\finiteautomaton_{\duplicate{\lang}}$ has at most $\cgrowthconstant$ states. 
    Moreover, $\finiteautomaton_{\duplicate{\lang}}$ accepts a string $\astring\tensorproduct\bstring \in (\aalphabet\cartesianproduct\aalphabet)^{+}$ if and only if $\bstring=\astring$ and $\astring\in\transductionlang(\atransduction)$. 
    Therefore, the language of $\finiteautomaton_{\duplicate{\lang}}$ is $\automatonlang{\finiteautomaton_{\duplicate{\lang}}}=\transductionlang(\duplicate{\lang})$.  \vspace{1.0ex}

    \item We let $\finiteautomaton_{\transduction(\lang)}$ be the finite automaton over the alphabet $\aalphabet\cartesianproduct\calphabet$ such that $\finiteautomaton_{\transduction(\lang)} = \finiteautomaton_{\image(\duplicate{\lang}\composition\atransduction)}$. 
    Based on (\ref{item:automaton_composition})--(\ref{item:automaton_lang_transduction}), $\finiteautomaton_{\transduction(\lang)}$ is a finite automaton with at most $(\cgrowthconstant\cdot\agrowthconstant)$ states and with language $\automatonlang{\finiteautomaton_{\transduction(\lang)}}=\transduction(\lang)=\image(\duplicate{\lang}\composition\atransduction)$.\qedhere 
\end{enumerate}
\end{proof}

\section{Second-Order Finite Automata}
\label{section:SecondOrderFiniteAutomata}

In this section, we formally define the main object of study of this work, namely,
the notion of second-order finite automata.

\begin{definition}[Second-Order Finite Automata]\label{definition:SecondOrderAutomaton} 
Let $\alphabet$ be an alphabet and $\width \in \pN$. 
A finite automaton $\finiteautomaton$ over the alphabet $\layeralphabet{\alphabet}{\width}$ is called 
	a \emph{$(\alphabet,\width)$-second-order finite automaton} (SOFA) if 
$\automatonlang{\finiteautomaton} \subseteq \odddefiningsetstar{\alphabet}{\width}$. 
\end{definition} 

In other words, a $(\alphabet,\width)$-second-order finite automaton $\finiteautomaton$ is a finite automaton over the 
alphabet $\layeralphabet{\alphabet}{\width}$ such that each string $\aodd = \alayer_1\cdots\alayer_{\oddlength}$ in 
$\automatonlang{\finiteautomaton}$ is a $(\alphabet,\width)$-ODD, for some $\oddlength\in\pN$. 

From now on, for every $(\alphabet,\width)$-second-order finite automaton $\finiteautomaton$, we may refer to $\automatonlang{\finiteautomaton}$ as the {\em first language} of $\finiteautomaton$. 
Since each string $\aodd\in \automatonlang{\finiteautomaton}$ is a $(\alphabet,\width)$-ODD, we can also associate with $\finiteautomaton$ a 
{\em second language}, denoted by $\automatonlanghigher{\finiteautomaton}{2}$, which 
consists of the set of languages accepted by ODDs in $\automatonlang{\finiteautomaton}$.
More precisely, the {\em second language} of a $(\alphabet,\width)$-second-order finite automaton $\finiteautomaton$ is defined as the
set $$\automatonlanghigher{\finiteautomaton}{2} \defeq \set{\oddlang{\aodd} \setst \aodd\in \automatonlang{\finiteautomaton}}\text{.}$$ 
Note that $\automatonlanghigher{\finiteautomaton}{2}$ is a possibly infinite 
subset of $\bigcup_{\oddlength\in \pN} \powerset_{\oddlength}(\alphabet)$. We say that a subset 
$\mathcal{X} \subseteq \bigcup_{\oddlength\in \pN} \powerset_{\oddlength}(\alphabet)$ is {\em regular-decisional} if 
there is a second-order finite automaton $\finiteautomaton$ such that $\mathcal{X} = \automatonlanghigher{\finiteautomaton}{2}$.

\begin{lemma}\label{lemma:LayerSubsetRegular}
	Let $\alphabet$ be an alphabet and $\width\in\pN$.
	For each $\subsetlayers\subseteq\layeralphabet{\alphabet}{\width}$, there exists a $(\alphabet,\width)$-second-order finite automaton $\finiteautomaton_{\subsetlayers}$ with $(\abs{\subsetlayers}+1)$ states such that $\oddlang{\finiteautomaton_{\subsetlayers}} = \subsetlayersstar$. 
\end{lemma}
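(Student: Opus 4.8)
The plan is to build $\finiteautomaton_{\subsetlayers}$ directly, using one state $q_{\alayer}$ per layer $\alayer\in\subsetlayers$ together with a single fresh initial state $q_0$; this yields exactly $\abs{\subsetlayers}+1$ states. The intended invariant is that a run arrives in $q_{\alayer}$ precisely after reading the layer $\alayer$, so the current state records the last layer consumed, which is all the information needed to test the local well-formedness conditions of an ODD. Accordingly, I would take $q_0$ as the unique initial state, declare $q_{\alayer}$ to be final iff $\layerfinalflag(\alayer)=1$, and, for layers $\alayer,\blayer\in\subsetlayers$, install the transition $\tuple{q_0,\alayer,q_{\alayer}}$ whenever $\layerinitialflag(\alayer)=1$ (the first layer read must carry the initial flag) and the transition $\tuple{q_{\alayer},\blayer,q_{\blayer}}$ whenever $\layerrightfrontier(\alayer)=\layerleftfrontier(\blayer)$, $\layerinitialflag(\blayer)=0$, and $\layerfinalflag(\alayer)=0$. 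The three guards on the internal transitions encode exactly Conditions~1--3 of the definition of an ODD: frontier compatibility between consecutive layers, the prohibition of an initial flag on non-first layers, and the prohibition of a final flag on non-last layers.

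The heart of the argument is the language identity $\automatonlang{\finiteautomaton_{\subsetlayers}}=\subsetlayersstar$, which I would establish by a two-way inclusion. For the forward inclusion, an accepting run $q_0,q_{\alayer_1},\dots,q_{\alayer_{\oddlength}}$ spells a string $\alayer_1\cdots\alayer_{\oddlength}$ whose layers all lie in $\subsetlayers$; the guard on the first step forces $\layerinitialflag(\alayer_1)=1$, the guards on the internal steps force frontier matching together with $\layerinitialflag(\alayer_i)=0$ and $\layerfinalflag(\alayer_i)=0$ for $i<\oddlength$, and acceptance at $q_{\alayer_{\oddlength}}$ forces $\layerfinalflag(\alayer_{\oddlength})=1$. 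These are precisely the defining conditions, so the string is a genuine $(\alphabet,\width)$-ODD in $\subsetlayersstar$. Conversely, any ODD $\alayer_1\cdots\alayer_{\oddlength}\in\subsetlayersstar$ induces the run through $q_0,q_{\alayer_1},\dots,q_{\alayer_{\oddlength}}$, all of whose transitions exist exactly because the ODD conditions hold, and which terminates in a final state because the last layer carries the final flag. In particular every accepted string is an ODD, so $\automatonlang{\finiteautomaton_{\subsetlayers}}\subseteq\odddefiningsetstar{\alphabet}{\width}$ and $\finiteautomaton_{\subsetlayers}$ is indeed a $(\alphabet,\width)$-second-order finite automaton.

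I do not expect a genuine obstacle; the only point demanding care is the asymmetric bookkeeping of the two Boolean flags. The initial flag is tested on the \emph{target} of each step (including the first step out of $q_0$), whereas the final flag is tested twice with different roles, namely as a continuation guard on the \emph{source} of each internal step and as the acceptance predicate on the terminal state, and one must confirm that, taken together, these choices reject exactly the strings violating Conditions~2 and~3 (for instance a non-first layer bearing an initial flag, or an internal layer bearing a final flag). A quick check of the boundary cases $\oddlength=1$, where a single-layer ODD must carry both flags, and $\subsetlayers=\emptyset$, where the automaton has the lone state $q_0$ and accepts the empty language, confirms that the construction behaves correctly at the extremes.
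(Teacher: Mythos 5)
Your construction is exactly the one used in the paper: a fresh initial state plus one state per layer of $\subsetlayers$, with the initial flag guarding the first transition, frontier-matching and flag conditions guarding internal transitions, and finality determined by $\layerfinalflag$. The two-way inclusion argument and boundary-case checks match the paper's (terser) proof, so the proposal is correct and essentially identical in approach.
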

\begin{proof}
	Let $\finiteautomaton_{\subsetlayers}$ be the $(\alphabet,\width)$-second-order finite automaton over the alphabet $\subsetlayers$, with state set $\automatonstates(\finiteautomaton_{\subsetlayers}) = \set*{\astate} \cup \set{\astate_{\layer} \setst \layer \in \subsetlayers}$, initial state set $\automatoninitialstates(\finiteautomaton_{\subsetlayers}) = \set{\astate}$, final state set $\automatonfinalstates(\finiteautomaton_{\subsetlayers}) = \set{\astate_{\layer} \in \automatonstates(\finiteautomaton_{\subsetlayers}) \setst \layerfinalflag(\layer) = \true}\text{}$ and transition set
$
\automatontransitions(\finiteautomaton_{\subsetlayers})\; = \; \set{\tuple{\astate, \layer, \astate_{\layer}} \setst \layer \in \subsetlayers, \layerinitialflag(\layer) = \true} \; \cup \;  \set{\tuple{\astate_{\alayer}, \alayer, \astate_{\blayer}} \setst \alayer, \blayer \in \subsetlayers,\, \layerleftfrontier(\blayer) = \layerrightfrontier(\alayer), \layerfinalflag(\alayer) = \false, \layerinitialflag(\blayer) = \false}\text{.}
$
Since each transition is labeled with some element from $\subsetlayers$, it should be clear that 
	$\oddlang{\finiteautomaton_{\subsetlayers}} \subseteq \subsetlayersstar$. Now, let $\oddlength\in \N$ and  $\aodd = \alayer_1\alayer_2\dots \alayer_k$ be 
an ODD in $\subsetlayersstar$. Then it should be clear that the sequence of transitions 
	$\tuple{\astate, \alayer_1, \astate_{\alayer_1}}  \tuple{\astate_{\alayer_1}, \alayer_2, \astate_{\alayer_2}} \dots 
	\tuple{\astate_{\alayer_{{\oddlength}-1}}, \alayer_{\oddlength}, \astate_{\alayer_{\oddlength}}}$ is an accepting sequence
	in $\finiteautomaton_{\subsetlayers}$. This implies that $\oddlang{\finiteautomaton_{\subsetlayers}} \supseteq \subsetlayersstar$. 
\end{proof}

The following Corollary is an immediate consequence of Lemma \ref{lemma:LayerSubsetRegular} and Observation \ref{observation:NumberLayers}. 

\begin{corollary}
\label{corollary:AllODDs}
Let $\alphabet$ be an alphabet, and $\width \in \pN$. 
\begin{enumerate}
\item The $(\alphabet,\width)$-SOFA $\finiteautomaton_{\layeralphabet{\alphabet}{\width}}$ has $2^{O(|\alphabet|\cdot \width^2)}$ states 
and $\automatonlang{\finiteautomaton_{\layeralphabet{\alphabet}{\width}}} = \odddefiningsetstar{\alphabet}{\width}$. 
\item The $(\alphabet,\width)$-SOFA $\finiteautomaton_{\cdlayeralphabet{\alphabet}{\width}}$ has $2^{O(|\alphabet|\cdot \width\log \width)}$ states 
and $\automatonlang{\finiteautomaton_{\cdlayeralphabet{\alphabet}{\width}}} = \cdodddefiningsetstar{\alphabet}{\width}$.  
\end{enumerate}
\end{corollary}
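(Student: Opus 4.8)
The plan is to instantiate the construction of Lemma~\ref{lemma:LayerSubsetRegular} at two specific layer sets and then read off the two state bounds directly from Observation~\ref{observation:NumberLayers}. There is essentially no hard part here: the whole content is a pair of substitutions $\subsetlayers \mapsto \layeralphabet{\alphabet}{\width}$ and $\subsetlayers \mapsto \cdlayeralphabet{\alphabet}{\width}$, together with one small definitional identity in the second case, so I would keep the argument short.

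For the first item, I would set $\subsetlayers = \layeralphabet{\alphabet}{\width}$ and apply Lemma~\ref{lemma:LayerSubsetRegular}, obtaining a $(\alphabet,\width)$-SOFA $\finiteautomaton_{\layeralphabet{\alphabet}{\width}}$ with $\abs{\layeralphabet{\alphabet}{\width}}+1$ states and first language $\automatonlang{\finiteautomaton_{\layeralphabet{\alphabet}{\width}}} = \subsetlayersstar = (\layeralphabet{\alphabet}{\width})^{\circledast}$. Since, by definition, every $(\alphabet,\width)$-ODD is a string over $\layeralphabet{\alphabet}{\width}$, \ie all of its layers already lie in $\layeralphabet{\alphabet}{\width}$, the set $(\layeralphabet{\alphabet}{\width})^{\circledast}$ of ODDs whose layers belong to $\layeralphabet{\alphabet}{\width}$ coincides with $\odddefiningsetstar{\alphabet}{\width}$. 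By Observation~\ref{observation:NumberLayers}(2) we have $\abs{\layeralphabet{\alphabet}{\width}} = 2^{O(|\alphabet|\cdot\width^2)}$, and since the additional $+1$ state is absorbed by the $O$-notation, the automaton has $2^{O(|\alphabet|\cdot\width^2)}$ states, as claimed.

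For the second item I would instead set $\subsetlayers = \cdlayeralphabet{\alphabet}{\width}$, so that Lemma~\ref{lemma:LayerSubsetRegular} yields $\finiteautomaton_{\cdlayeralphabet{\alphabet}{\width}}$ with $\abs{\cdlayeralphabet{\alphabet}{\width}}+1$ states and first language $(\cdlayeralphabet{\alphabet}{\width})^{\circledast}$. The one point that deserves an explicit sentence, and the closest thing to an obstacle, is the identity $(\cdlayeralphabet{\alphabet}{\width})^{\circledast} = \cdodddefiningsetstar{\alphabet}{\width}$: by the definition of deterministic and complete ODDs, an ODD is deterministic and complete precisely when each of its layers is a deterministic, complete layer, \ie precisely when each of its layers lies in $\cdlayeralphabet{\alphabet}{\width}$; hence the set of ODDs all of whose layers belong to $\cdlayeralphabet{\alphabet}{\width}$ is exactly $\cdodddefiningsetstar{\alphabet}{\width}$. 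Finally, Observation~\ref{observation:NumberLayers}(1) gives $\abs{\cdlayeralphabet{\alphabet}{\width}} = 2^{O(|\alphabet|\cdot\width\log\width)}$, so the automaton has $2^{O(|\alphabet|\cdot\width\log\width)}$ states, completing the argument. Aside from unfolding the definition of ``deterministic and complete'' from the level of ODDs down to the level of layers, both items are immediate consequences of the lemma and the observation.
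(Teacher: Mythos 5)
Your proposal is correct and follows exactly the route the paper intends: the paper gives no explicit proof and simply notes that the corollary is an immediate consequence of Lemma~\ref{lemma:LayerSubsetRegular} and Observation~\ref{observation:NumberLayers}, which is precisely your pair of instantiations $\subsetlayers = \layeralphabet{\alphabet}{\width}$ and $\subsetlayers = \cdlayeralphabet{\alphabet}{\width}$ combined with the two cardinality bounds. Your explicit justification of the identity $(\cdlayeralphabet{\alphabet}{\width})^{\circledast} = \cdodddefiningsetstar{\alphabet}{\width}$ via the layerwise definition of deterministic, complete ODDs is a welcome addition the paper leaves implicit.
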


\subsubsection*{\bf Example 1: The Even Language}
In Figure~\ref{fig:EvenAutomaton}, we depict a $(\set{0,1},2)$-second-order finite automaton $\evenautomaton$ whose second language consists of all (length-typed) sets $$\evenlanguage_{\oddlength} = \tuple{\oddlength,\set{\astring=\asymbol_{1}\cdots\asymbol_{\oddlength}\in\set{0,1}^{\oddlength}\setst \asymbol_{1}+\cdots+\asymbol_{\oddlength}\equiv 0 \pmod 2}}$$ of all binary strings of length $\oddlength$ with an even number of occurrences of the symbol `1', for each $\oddlength\in\pN$. 
Note that, for each $\oddlength\in\pN$, $\evenautomaton$ accepts a unique $(\set{0,1},2)$-ODD of length $\oddlength$, whose language is $\evenlanguage_{\oddlength}$. 
In particular, the language $\evenlanguage_{5}$ is represented by the ODD depicted in Figure~\ref{fig:evenODD}, which is accept by $\finiteautomaton$ upon following the sequence of states $\astate_{0},\astate_{1},\astate_{1},\astate_{1},\astate_{2}$. 

\begin{figure}[ht]\centering
    \includegraphics[scale=1.4]{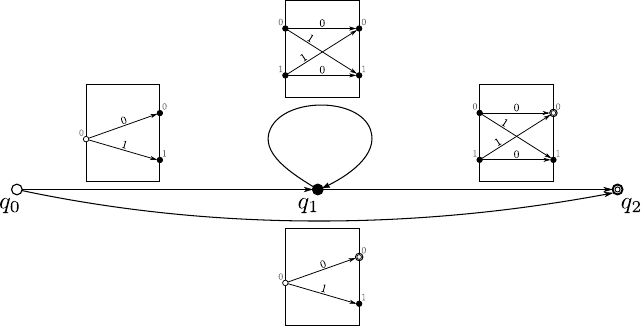}
    \caption{A $(\set{0,1},2)$-second-order finite automaton $\evenautomaton$ with second language $\automatonlanghigher{\evenautomaton}{2} = \set{\evenlanguage_{\oddlength}\setst \oddlength\in \pN}$. }\label{fig:EvenAutomaton}   
\end{figure}

\subsubsection*{\bf Example 2: The Hypercube Language.}
The \emph{hypercube of dimension $\oddlength$} can be defined as the graph $\newhypercubegraph_{\oddlength}$ with vertex set $\vertexset(\newhypercubegraph) = \set{0,1}^{\oddlength}$ and edge set $$\edgeset(\newhypercubegraph_{\oddlength}) = \set{\tuple{\astring,\astring'} \setst \astring,\astring'\in \set{0,1}^{\oddlength}, \exists!\, j \in \bset{\oddlength}\; \astring_{j}\neq \astring_{j}''}\text{.}$$

Intuitively, vertices of the hypercube $\newhypercubegraph_\oddlength$ are strings in $\set{0,1}^{\oddlength}$ and edges are pairs of strings from $\set{0,1}^{\oddlength}$ that differ in exactly one position. 
From a formal language standpoint, the edge set of the graph $\newhypercubegraph_{\oddlength}$ can be encoded by the language $$\hypercubelanguage_{\oddlength} = \tuple{\oddlength,\set{(\astring_{1},\astring_{1}')\cdots (\astring_{\oddlength},\astring_{\oddlength}') \setst \tuple{\astring,\astring'}\in \edgeset(\newhypercubegraph_{\oddlength})}}\text{,}$$ 
Note that, $\hypercubelanguage_{\oddlength}$ is a language over the alphabet $\set{0,1}^{\times 2}$. 

In Figure \ref{fig:edgeHypercubeAutomaton}, we depict a $(\set{0,1}^{\cartesianproduct 2},2)$-second-order finite automaton $\hypercubeautomaton$ whose second language is $\automatonlanghigher{\hypercubeautomaton}{2} = \set{\hypercubelanguage_{\oddlength} \setst \oddlength\in \pN}$. 
Similarly to the second-order finite automaton illustrated in the previous example, for each $\oddlength\in\pN$, $\hypercubeautomaton$ accepts a unique $(\set{0,1}^{\cartesianproduct 2},2)$-ODD $\odd_{\oddlength}$ of length $\oddlength$, whose language is $\hypercubelanguage_{\oddlength}$. 
In particular, the language $\hypercubelanguage_{5}$ is represented by the ODD $\odd_{5}$ depicted in Figure \ref{fig:edgeHypercubeTuple}, which is accept by $\hypercubeautomaton$ upon following the sequence of states $\astate_{0},\astate_{1},\astate_{1},\astate_{1},\astate_{2}$.  

\begin{figure}[ht]\centering
    \includegraphics[scale=1.4]{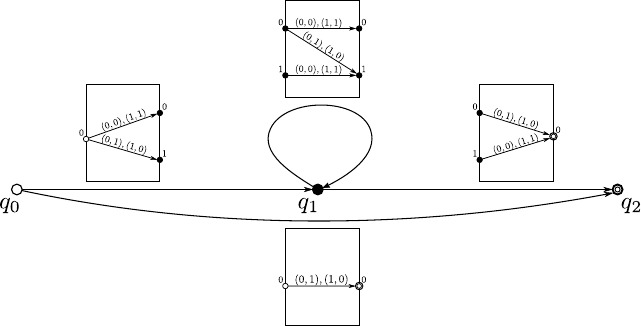}
    \caption{A $(\set{0,1}^{2},2)$-second-order finite automaton $\hypercubeautomaton$ with second language  $\automatonlanghigher{\hypercubeautomaton}{2} = \set{\hypercubelanguage_\oddlength \setst \oddlength\in \pN}$.}\label{fig:edgeHypercubeAutomaton}
\end{figure}
\begin{figure}[ht]\centering
    \includegraphics[scale=1.4]{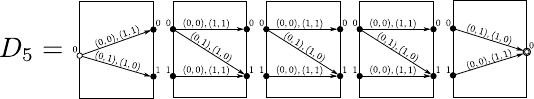}
    \caption{The $(\set{0,1}^{\cartesianproduct 2},2)$-ODD $\odd_{5}$, with language $\oddlang{\aodd_5} = \tuple{5,\hypercubelanguage_5}$, accepted by $\hypercubeautomaton$ upon following the sequence of states $\astate_{0},\astate_{1},\astate_{1},\astate_{1},\astate_{2}$. }\label{fig:edgeHypercubeTuple}   
\end{figure}

\subsubsection*{\bf Main Results.}
The main result of this work (Theorem \ref{theorem:CanonizationSecondOrder}) states that second order finite automata can be canonized with respect to their {\em second} languages. 
In other words, there is an algorithm that sends each SOFA $\finiteautomaton$ to a SOFA $\canonizationfunction_2(\finiteautomaton)$ with 
$\oddlanghigher{\finiteautomaton}{2} = \oddlanghigher{\canonizationfunction_2(\finiteautomaton)}{2}$ in such a way that $\canonizationfunction_2(\finiteautomaton)=\canonizationfunction_2(\finiteautomaton')$
for any SOFA $\finiteautomaton'$ with the same second language as $\finiteautomaton$. Indeed, $\canonizationfunction_2(\finiteautomaton)$
satisfies the following interesting property: $\automatonlang{\canonizationfunction_2(\finiteautomaton)} = \{\canonizationfunction(\aodd)\;:\; \aodd \in \automatonlang{\finiteautomaton}\}$.  
Here, for each ODD $\aodd$, $\canonizationfunction(\aodd)$ denotes the unique deterministic, complete, normalized and minimized ODD with the same language as $\aodd$, as specified in Theorem \ref{theorem:MinimizationODDs}. 
In other words, the first language of $\canonizationfunction_2(\finiteautomaton)$ is precisely the set of canonical forms of ODDs in the first language of $\finiteautomaton$. 

We note that even though $\finiteautomaton$ and $\canonizationfunction_2(\finiteautomaton)$ have 
the same {\em second} language, \ie $\oddlanghigher{\canonizationfunction_2(\finiteautomaton)}{2} = \oddlanghigher{\finiteautomaton}{2}$, 
the {\em first} languages of $\finiteautomaton$ and $\canonizationfunction_2(\finiteautomaton)$ may differ. In other words, it may be the case that 
$\automatonlang{\canonizationfunction_2(\finiteautomaton)} \neq \automatonlang{\finiteautomaton}$. 
As a simple example for this observation, let $\aodd$ be an ODD in $\odddefiningsetstar{\alphabet}{\width}$ for some alphabet $\alphabet$ and $\width\in \pN$. 
Let $\finiteautomaton_{\aodd}$ be the second order finite automaton such that 
$\automatonlang{\finiteautomaton_{\aodd}} = \{\aodd\}$. Then the language 
$\automatonlang{\canonizationfunction_2(\finiteautomaton_{\aodd})} =\{\canonizationfunction(\aodd)\}$ is distinct from $\automatonlang{\finiteautomaton_{\aodd}}$ 
whenever $\canonizationfunction(\aodd) \neq \aodd$. Therefore, canonization of a finite automaton $\finiteautomaton$ with respect to its second language $\automatonlanghigher{\finiteautomaton}{2}$
cannot be achieved by simply canonizing $\finiteautomaton$ with respect to its first language $\automatonlang{\finiteautomaton}$ according to 
Theorem \ref{theorem:MinimizationAutomaton}.

The proof of our main result is a direct consequence of the following theorem, stating that the traditional minimization and canonization algorithm for ODDs can be simulated 
in terms of functional regular transductions. 

\begin{theorem}[Canonization as Transduction Theorem]
\label{theorem:CanonizationTransductionTheorem}
Let $\alphabet$ be an alphabet and let $\width\in \pN$. 
\begin{enumerate}
\item The functional transduction $\canonizationtransduction{\alphabet}{\width} = \set{\tuple{\aodd,\canonizationfunction(\aodd)} \setst \aodd\in \odddefiningsetstar{\alphabet}{\width}}$
	is $2^{O(|\alphabet|\cdot \width \cdot 2^{\width})}$-regular. 
\item The functional transduction $\cdcanonizationtransduction{\alphabet}{\width} = \set{\tuple{\aodd,\canonizationfunction(\aodd)} \setst \aodd\in \cdodddefiningsetstar{\alphabet}{\width}}$
	is $2^{O(|\alphabet|\cdot \width\cdot \log \width)}$-regular.
\end{enumerate}
\end{theorem}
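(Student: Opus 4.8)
The plan is to realize the map $\canonizationfunction$ as a composition of a few elementary transductions, each mirroring one stage of the classical ODD canonization procedure described after Theorem~\ref{theorem:MinimizationODDs}, and then to invoke the composition bound of Proposition~\ref{proposition:PropertiesTransductions}(\ref{item:automaton_composition}). Concretely, I would factor
$$\canonizationtransduction{\alphabet}{\width} = \determinizationtransduction{\alphabet}{\width} \composition \mergingtransduction{\alphabet}{2^{\width}} \composition \reachabilitytransduction{\alphabet}{2^{\width}} \composition \normalizationtransduction{\alphabet}{2^{\width}},$$
where $\determinizationtransductionname$ sends an ODD to its deterministic, complete power-set expansion, $\mergingtransductionname$ collapses states with identical right-languages (minimization), $\reachabilitytransductionname$ discards states not reachable from the initial frontier so that the lexicographic order underlying normalization is well-defined, and $\normalizationtransductionname$ renames states according to $\lex$. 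Each factor is a functional transduction between layer alphabets, and by the uniqueness in Theorem~\ref{theorem:MinimizationODDs} their composite is precisely $\aodd \mapsto \canonizationfunction(\aodd)$ on $\odddefiningsetstar{\alphabet}{\width}$.

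The heart of the argument is to show that each elementary transduction is regular with a controlled state count. For each one I would build a finite automaton that scans the tensor product $\aodd \tensorproduct f(\aodd)$ one layer-pair at a time while storing only a bounded summary of the current frontier. For determinization this summary is the correspondence assigning to each state of the output frontier the subset of input-frontier states it encodes; upon reading a layer-pair $\tuple{\alayer_i, \alayer_i'}$ the automaton checks that the transitions of $\alayer_i'$ implement the subset-transition rule sending a subset $S$ and a symbol $\asymbol$ to $\set{\layerrightstate \setst \exists\, \layerleftstate\in S,\ \tuple{\layerleftstate,\asymbol,\layerrightstate}\in \layertransitions(\alayer_i)}$, and then updates the stored correspondence to the new frontier. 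For merging and normalization the stored summary is, respectively, the right-language-equivalence data and the lexicographic-rank data on the frontier, each again a bounded function on the frontier states. In every case, verification of the initial flags, final flags, and transition relation is a purely local check on the current layer-pair against the stored frontier summary.

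The complexity bookkeeping then follows. The only stage that enlarges the width is determinization, which maps width $\width$ to width at most $2^{\width}$; all subsequent stages operate on deterministic, complete ODDs of width at most $2^{\width}$, whose layers range over $\cdlayeralphabet{\alphabet}{2^{\width}}$. By Observation~\ref{observation:NumberLayers}, $\abs{\cdlayeralphabet{\alphabet}{2^{\width}}} = 2^{O(|\alphabet|\cdot 2^{\width}\cdot \log 2^{\width})} = 2^{O(|\alphabet|\cdot \width\cdot 2^{\width})}$, and the frontier summaries described above are bounded by the same quantity, so each elementary automaton has at most $2^{O(|\alphabet|\cdot \width\cdot 2^{\width})}$ states. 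Multiplying the factors via Proposition~\ref{proposition:PropertiesTransductions}(\ref{item:automaton_composition}) leaves the asymptotic bound unchanged, giving part~(1). For part~(2) the input already lies in $\cdodddefiningsetstar{\alphabet}{\width}$, so the determinization factor is dropped entirely and every remaining stage stays at width $\width$; here the relevant alphabet is $\cdlayeralphabet{\alphabet}{\width}$ with $\abs{\cdlayeralphabet{\alphabet}{\width}} = 2^{O(|\alphabet|\cdot \width\cdot \log \width)}$, yielding the sharper bound for $\cdcanonizationtransduction{\alphabet}{\width}$.

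I expect the main obstacle to be the determinization transduction. Unlike the other stages, it must simulate a global, exponential-width construction using only finite memory, so the crux is verifying that the frontier correspondence is genuinely a bounded-state invariant, i.e.\ that consistency of the entire power-set expansion can be certified by purely local layer-to-layer checks, and that the same automaton simultaneously enforces determinism, completeness, and the correct handling of the single initial subset and of the accepting subsets. A secondary delicate point is making the merging step compatible with determinization, so that right-language equivalence on the width-$2^{\width}$ frontier is itself finite-state testable and interacts correctly with the subsequent reachability and normalization stages.
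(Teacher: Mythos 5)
Your overall strategy coincides with the paper's: factor $\canonizationfunction$ into determinization, reachability-elimination, merging, and normalization, realize each factor as a regular transduction whose automaton carries only a bounded per-frontier annotation, and multiply state counts via Proposition~\ref{proposition:PropertiesTransductions}.(\ref{item:automaton_composition}). The genuine gap is the order in which you compose merging and reachability. You place merging immediately after determinization and justify the identification of the composite with $\aodd\mapsto\canonizationfunction(\aodd)$ by the uniqueness in Theorem~\ref{theorem:MinimizationODDs}; but that uniqueness applies only once you know the output has \emph{minimum number of states}, and collapsing right-language-equivalent states is \emph{not} a minimization procedure in the presence of unreachable states. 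This matters precisely here, because the output of determinization is the prototypical ODD with many unreachable states: the power-set expansion has a state for every subset of $\dbset{\width}$, almost none of which are reachable. The paper's minimality statement for the merging step (Proposition~\ref{proposition:UniqueMergingODD}) explicitly hypothesizes a reachable input, which is why the paper composes in the order $\duplicate{\cdodddefiningsetstar{\alphabet}{\width}}\circ\reachabilitytransductionname\circ\mergingtransductionname\circ\normalizationtransductionname$. Your order can be repaired, but it requires an argument you neither state nor prove: for deterministic, complete ODDs, a right-language class in the collapsed ODD is reachable if and only if it contains a reachable state (determinism forces the collapsed transition out of a class on a given symbol to land in the class of the corresponding successor), so discarding unreachable classes leaves exactly one state per right-language class of reachable states, i.e.\ the reachable part of the quotient is isomorphic to the quotient of the reachable part. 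Without this commutation argument (or without swapping the two stages, as the paper does), your appeal to Theorem~\ref{theorem:MinimizationODDs} does not go through.

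Two smaller points. First, your composite as written has domain larger than $\odddefiningsetstar{\alphabet}{\width}$, since the determinization map is defined on all layer strings; to obtain literally the set $\set{\tuple{\aodd,\canonizationfunction(\aodd)} \setst \aodd\in \odddefiningsetstar{\alphabet}{\width}}$ you must precompose with the diagonal transduction $\duplicate{\odddefiningsetstar{\alphabet}{\width}}$, which is regular with the right bound by Lemma~\ref{lemma:LayerSubsetRegular} and Corollary~\ref{corollary:AllODDs}. Second, you single out determinization as the expected crux, but it is in fact the easiest stage: once output states are canonically named via the bijection $\pwbijection$ between subsets of $\dbset{\width}$ and $\dbset{2^{\width}}$, the power-set layer is a function of the current input layer alone, so determinization is a purely layerwise ($2$-regular) multimap transduction needing no cross-layer memory. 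The stages that genuinely need your ``bounded frontier summary'' (the paper's annotation alphabets combined with compatibility transductions) are reachability, merging, and normalization.
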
 

Intuitively, the transduction $\canonizationtransduction{\alphabet}{\width}$ is obtained as a composition of regular transductions that simulate the application of the usual steps in the 
canonization of a single ODD: determinization, elimination of unreachable states, merging of equivalent states and normalization. The transduction $\cdcanonizationtransduction{\alphabet}{\width}$
is obtained by a similar process, except that one may skip the application of the determinization transduction, yielding in this way, 
a more efficient construction. Due to its 
technical nature, the proof of Theorem \ref{theorem:CanonizationTransductionTheorem} will be postponed to
Section \ref{section:CanonizationTransductionTheorem}. 
Next, we show how Theorem \ref{theorem:CanonizationTransductionTheorem} can be used to provide a canonization procedure for second order finite automata. Later, in Section \ref{section:Applications},
we will provide some algorithmic applications of this theorem in the realm of the theory of ODDs of bounded width.

\begin{theorem}[Canonical Form of Canonical Forms Theorem]
\label{theorem:CanonizationSecondOrder}
Let $\alphabet$ be an alphabet (endowed with a total order $\orderalphabet_{\alphabet}\; \subset \alphabet\times \alphabet$), 
$\width\in\pN$, and let $\finiteautomaton$ be a $(\alphabet,\width)$-SOFA. 
One can construct in time $2^{\nstates(\finiteautomaton)\cdot 2^{\BigOh(\abs{\alphabet}\cdot\width\cdot 2^{\width})}}$
a deterministic, complete, normalized $(\alphabet,2^{\width})$-SOFA $\canonizationfunction_2(\finiteautomaton)$ satisfying the following properties. 
\begin{enumerate}
	\item \label{FAcanonizationOne} $\automatonlang{\canonizationfunction_2(\finiteautomaton)} = \set{\canonizationfunction(\aodd) \setst \aodd \in \automatonlang{\finiteautomaton}}$;  
	\item \label{FAcanonizationTwo} $\oddlanghigher{\canonizationfunction_2(\finiteautomaton)}{2}= \automatonlanghigher{\finiteautomaton}{2}$; 
	\item \label{FAcanonizationThree} For each $\width' \in \pN$ and each $(\alphabet,\width')$-SOFA $\finiteautomaton'$, if 
$\automatonlanghigher{\finiteautomaton'}{2} = \automatonlanghigher{\finiteautomaton}{2}$, then $\canonizationfunction_2(\finiteautomaton')= \canonizationfunction_2(\finiteautomaton)$. 
\end{enumerate}
\end{theorem}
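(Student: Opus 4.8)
The plan is to realize $\canonizationfunction_2(\finiteautomaton)$ by composing three constructions that are already available: the canonization transduction of Theorem~\ref{theorem:CanonizationTransductionTheorem}, the image-of-a-language operation of Proposition~\ref{proposition:PropertiesTransductions}, and the classical minimization/normalization of finite automata of Theorem~\ref{theorem:MinimizationAutomaton}. First I would note that since $\finiteautomaton$ is a $(\alphabet,\width)$-SOFA, its first language $\automatonlang{\finiteautomaton}\subseteq\odddefiningsetstar{\alphabet}{\width}$ is $\nstates(\finiteautomaton)$-regular. By Theorem~\ref{theorem:CanonizationTransductionTheorem}, the functional transduction $\canonizationtransduction{\alphabet}{\width}$, which sends each $\aodd$ to $\canonizationfunction(\aodd)$, is $2^{\BigOh(\abs{\alphabet}\cdot\width\cdot 2^{\width})}$-regular. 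Applying Proposition~\ref{proposition:PropertiesTransductions}(\ref{item:automaton_lang_transduction}), the image $\canonizationtransduction{\alphabet}{\width}(\automatonlang{\finiteautomaton}) = \set{\canonizationfunction(\aodd)\setst \aodd\in\automatonlang{\finiteautomaton}}$ is $(\nstates(\finiteautomaton)\cdot 2^{\BigOh(\abs{\alphabet}\cdot\width\cdot 2^{\width})})$-regular; let $\mathcal{G}$ be a finite automaton witnessing this. By Theorem~\ref{theorem:MinimizationODDs} every $\canonizationfunction(\aodd)$ is a $(\alphabet,2^{\width})$-ODD, so $\mathcal{G}$ is in fact a $(\alphabet,2^{\width})$-SOFA. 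I would then define $\canonizationfunction_2(\finiteautomaton)\defeq\canonizationfunction(\mathcal{G})$ using Theorem~\ref{theorem:MinimizationAutomaton}; this is deterministic, complete, normalized, and has language $\automatonlang{\mathcal{G}}$, establishing property~\ref{FAcanonizationOne}.

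Property~\ref{FAcanonizationTwo} would then follow immediately from Theorem~\ref{theorem:MinimizationODDs}: for each $\aodd$ we have $\oddlang{\canonizationfunction(\aodd)}=\oddlang{\aodd}$, so $\oddlanghigher{\canonizationfunction_2(\finiteautomaton)}{2}=\set{\oddlang{\canonizationfunction(\aodd)}\setst\aodd\in\automatonlang{\finiteautomaton}}=\set{\oddlang{\aodd}\setst\aodd\in\automatonlang{\finiteautomaton}}=\automatonlanghigher{\finiteautomaton}{2}$.

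The heart of the argument, and where I expect the main obstacle, is property~\ref{FAcanonizationThree}. The key observation I would isolate as a lemma is that the \emph{first} language $\automatonlang{\canonizationfunction_2(\finiteautomaton)}=\set{\canonizationfunction(\aodd)\setst\aodd\in\automatonlang{\finiteautomaton}}$ is determined \emph{solely} by the \emph{second} language $\automatonlanghigher{\finiteautomaton}{2}$. This uses the uniqueness-up-to-equality clause of Theorem~\ref{theorem:MinimizationODDs}: $\canonizationfunction(\aodd)$ depends only on $\oddlang{\aodd}$, so if I write $\canonizationfunction_L$ for the canonical ODD of a language $L$, then $\set{\canonizationfunction(\aodd)\setst\aodd\in\automatonlang{\finiteautomaton}}=\set{\canonizationfunction_L\setst L\in\automatonlanghigher{\finiteautomaton}{2}}$, a set that manifestly depends only on $\automatonlanghigher{\finiteautomaton}{2}$. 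Consequently, if $\finiteautomaton'$ is any $(\alphabet,\width')$-SOFA with $\automatonlanghigher{\finiteautomaton'}{2}=\automatonlanghigher{\finiteautomaton}{2}$, then $\automatonlang{\canonizationfunction_2(\finiteautomaton')}=\automatonlang{\canonizationfunction_2(\finiteautomaton)}$, and the syntactic uniqueness of canonical finite automata (Theorem~\ref{theorem:MinimizationAutomaton}, equal languages force equal canonical forms) yields $\canonizationfunction_2(\finiteautomaton')=\canonizationfunction_2(\finiteautomaton)$. The delicate point I would have to address carefully is that $\canonizationfunction_2(\finiteautomaton)$ and $\canonizationfunction_2(\finiteautomaton')$ are a priori automata over the possibly distinct layer alphabets $\layeralphabet{\alphabet}{2^{\width}}$ and $\layeralphabet{\alphabet}{2^{\width'}}$, and completeness is taken over the full layer alphabet, so literal syntactic equality requires a common ambient alphabet. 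To resolve this I would invoke Theorem~\ref{theorem:MinimizationODDs} again: the minimized width of each $\canonizationfunction_L$ never exceeds the width of any ODD realizing $L$, hence is at most $2^{\min(\width,\width')}$; thus all canonical forms in both languages live in a common sub-alphabet, and the canonization of Theorem~\ref{theorem:MinimizationAutomaton} can be performed over a single fixed alphabet, making the equality genuine.

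Finally, for the running time I would observe that the dominant cost is the power-set determinization carried out inside Theorem~\ref{theorem:MinimizationAutomaton} when applied to the $(\nstates(\finiteautomaton)\cdot 2^{\BigOh(\abs{\alphabet}\cdot\width\cdot 2^{\width})})$-state automaton $\mathcal{G}$, which produces at most $2^{\nstates(\finiteautomaton)\cdot 2^{\BigOh(\abs{\alphabet}\cdot\width\cdot 2^{\width})}}$ states; the subsequent Hopcroft minimization and the normalization step are polynomial in that size, so the total construction time matches the claimed bound of $2^{\nstates(\finiteautomaton)\cdot 2^{\BigOh(\abs{\alphabet}\cdot\width\cdot 2^{\width})}}$.
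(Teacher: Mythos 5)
Your proposal follows essentially the same route as the paper's own proof: push $\automatonlang{\finiteautomaton}$ through the canonization transduction of Theorem~\ref{theorem:CanonizationTransductionTheorem} using Proposition~\ref{proposition:PropertiesTransductions}.(\ref{item:automaton_lang_transduction}), canonize the resulting automaton via Theorem~\ref{theorem:MinimizationAutomaton}, and derive property~\ref{FAcanonizationThree} from the fact that $\canonizationfunction(\aodd)$ depends only on $\oddlang{\aodd}$, with the same running-time accounting dominated by the power-set construction. The one point where you go beyond the paper is in flagging the ambient-alphabet mismatch between $\layeralphabet{\alphabet}{2^{\width}}$ and $\layeralphabet{\alphabet}{2^{\width'}}$ when $\width\neq\width'$ (which the paper's proof passes over silently); your resolution via the common sub-alphabet of minimized canonical forms is sound and, if anything, more careful than the original.
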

\begin{proof}
Let $\finiteautomaton$ be a $(\alphabet,\width)$-SOFA and $\canonizationtransduction{\alphabet}{\width}$ be the $(\layeralphabet{\alphabet}{\width},\cdlayeralphabet{\alphabet}{\width})$-transduction
specified in Theorem \ref{theorem:CanonizationTransductionTheorem}. Then, the image of $\automatonlang{\finiteautomaton}$ under the transduction $\canonizationtransduction{\alphabet}{\width}$ 
is the language $\canonizationtransduction{\alphabet}{\width}(\automatonlang{\finiteautomaton}) = \{\canonizationfunction(\aodd)\;:\; \aodd \in \automatonlang{\finiteautomaton}\}$. 
Here, for each ODD $\aodd\in \odddefiningsetstar{\alphabet}{\width}$, $\canonizationfunction(\aodd)\in \cdodddefiningsetstar{\alphabet}{2^{\width}}$ denotes the unique ODD with minimum 
number of states such that $\canonizationfunction(\aodd)$ is deterministic, complete, normalized and satisfies $\oddlang{\canonizationfunction(\aodd)} = \oddlang{\aodd}$, as specified in 
Theorem \ref{theorem:MinimizationODDs}. 
Since $\canonizationtransduction{\alphabet}{\width}$ is $2^{O(|\alphabet|\cdot \width \cdot 2^{\width})}$-regular, it follows from 
Proposition \ref{proposition:PropertiesTransductions}.(\ref{item:automaton_lang_transduction}), that one can construct a 
$(\alphabet,2^{\width})$-SOFA $\finiteautomaton^{\dagger}$ with $\nstates(\finiteautomaton)\cdot 2^{\BigOh(|\alphabet|\cdot \width\cdot2^{\width})}$ states such that 
$\automatonlang{\finiteautomaton^{\dagger}} = \canonizationtransduction{\alphabet}{\width}(\automatonlang{\finiteautomaton})$. 
Now, let $\canonizationfunction(\finiteautomaton^{\dagger})$ be the unique finite automaton with minimum number of states such that 
$\canonizationfunction(\finiteautomaton^{\dagger})$ is deterministic, complete, normalized and satisfies 
$\automatonlang{\canonizationfunction(\finiteautomaton^{\dagger})} = \automatonlang{\finiteautomaton^{\dagger}}$, as specified in Theorem \ref{theorem:MinimizationAutomaton}.
Then $\canonizationfunction(\finiteautomaton^{\dagger})$ can be constructed in time  $2^{\nstates\cdot 2^{\BigOh(|\alphabet|\cdot \width\cdot2^{\width})}}$ 
by the applying the standard power-set construction to $\finiteautomaton^{\dagger}$, followed by a DFA minimization algorithm, such as Hopcroft's algorithm.
Now, by defining $\canonizationfunction_2(\finiteautomaton)$ as $\canonizationfunction(\finiteautomaton^{\dagger})$, we have 
that $\automatonlang{\canonizationfunction_2(\finiteautomaton)} = \set{\canonizationfunction(\aodd) \setst \aodd \in \automatonlang{\finiteautomaton}}$, and therefore, 
Condition \ref{FAcanonizationOne} is satisfied. This immediately implies that 
$\oddlanghigher{\canonizationfunction_2(\finiteautomaton)}{2}= \automatonlanghigher{\finiteautomaton}{2}$, since each ODD $\aodd\in \automatonlang{\finiteautomaton}$
has the same language as its canonical form $\canonizationfunction(\aodd)$ in $\automatonlang{\canonizationfunction_2(\finiteautomaton)}$. Therefore, Condition \ref{FAcanonizationTwo} is also satisfied.
Finally,  $\canonizationfunction_2(\finiteautomaton) = \canonizationfunction_2(\finiteautomaton')$ for any $(\alphabet,\width')$-SOFA $\finiteautomaton'$ satisfying 
$\oddlanghigher{\finiteautomaton'}{2}= \oddlanghigher{\finiteautomaton}{2}$, since for any two ODDs $\aodd\in \automatonlang{\finiteautomaton}$ and $\aodd'\in \automatonlang{\finiteautomaton'}$, 
$\oddlang{\aodd} = \oddlang{\aodd'}$ if and only if $\canonizationfunction(\aodd) = \canonizationfunction(\aodd')$. Therefore, Condition \ref{FAcanonizationThree} is also satisfied. 
\end{proof}

Let $\finiteautomaton$ be a $(\alphabet,\width)$-SOFA. We call the $(\alphabet,2^{\width})$-SOFA
$\canonizationfunction_2(\finiteautomaton)$ specified in Theorem~\ref{theorem:CanonizationSecondOrder} 
the \emph{second canonical form} of $\finiteautomaton$. We note that if all ODDs in the language $\finiteautomaton$ 
are deterministic and complete, then $\canonizationfunction_2(\finiteautomaton)$ is actually a $(\alphabet,\width)$-SOFA, 
and a faster canonization algorithm can be obtained, since in this case, the transduction $\canonizationtransduction{\alphabet}{\width}$
used in the proof of Theorem \ref{theorem:CanonizationSecondOrder} can be replaced by the transduction 
$\cdcanonizationtransduction{\alphabet}{\width}$, which is $2^{O(|\alphabet|\cdot \width\cdot \log \width)}$-regular. 

\begin{observation}
\label{observation:BetterConstruction}
	If $\finiteautomaton$ is a $(\alphabet,\width)$-SOFA such that $\automatonlang{\finiteautomaton}\subseteq \cdodddefiningsetstar{\alphabet}{\width}$, 
then $\canonizationfunction_2(\finiteautomaton)$ is also a $(\alphabet,\width)$-SOFA and can be constructed in time 
$2^{\nstates(\finiteautomaton)\cdot 2^{\BigOh(\abs{\alphabet}\cdot\width\log\width)}}$. 
\end{observation}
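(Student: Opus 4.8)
The plan is to follow the proof of Theorem~\ref{theorem:CanonizationSecondOrder} almost verbatim, but to exploit the hypothesis $\automatonlang{\finiteautomaton}\subseteq \cdodddefiningsetstar{\alphabet}{\width}$ in two places: once to control the width of the output SOFA, and once to replace the expensive transduction $\canonizationtransduction{\alphabet}{\width}$ by the cheaper transduction $\cdcanonizationtransduction{\alphabet}{\width}$.

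First I would establish the width bound. Since every ODD $\aodd\in \automatonlang{\finiteautomaton}$ lies in $\cdodddefiningset{\alphabet}{\width}{\oddlength}$ for some $\oddlength$, the additional clause of Theorem~\ref{theorem:MinimizationODDs} guarantees that $\canonizationfunction(\aodd)\in \cdodddefiningset{\alphabet}{\width}{\oddlength}$ as well; that is, the canonical form of a deterministic, complete ODD of width $\width$ stays within width $\width$ rather than potentially blowing up to $2^{\width}$. Consequently the first language $\set{\canonizationfunction(\aodd)\setst \aodd\in\automatonlang{\finiteautomaton}}$ of $\canonizationfunction_2(\finiteautomaton)$ is contained in $\cdodddefiningsetstar{\alphabet}{\width}$, and therefore $\canonizationfunction_2(\finiteautomaton)$ is itself a $(\alphabet,\width)$-SOFA.

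Next, for the running time, I would observe that on the domain $\cdodddefiningsetstar{\alphabet}{\width}$ the restricted transduction $\cdcanonizationtransduction{\alphabet}{\width}$ agrees with $\canonizationtransduction{\alphabet}{\width}$, so $\cdcanonizationtransduction{\alphabet}{\width}(\automatonlang{\finiteautomaton}) = \set{\canonizationfunction(\aodd)\setst \aodd\in\automatonlang{\finiteautomaton}}$, which is exactly the first language demanded by Condition~\ref{FAcanonizationOne}. By Theorem~\ref{theorem:CanonizationTransductionTheorem}.(2) the transduction $\cdcanonizationtransduction{\alphabet}{\width}$ is $2^{O(\abs{\alphabet}\cdot\width\cdot\log\width)}$-regular, and $\automatonlang{\finiteautomaton}$ is $\nstates(\finiteautomaton)$-regular. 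Applying Proposition~\ref{proposition:PropertiesTransductions}.(\ref{item:automaton_lang_transduction}) then yields a SOFA $\finiteautomaton^{\dagger}$ with $\nstates(\finiteautomaton)\cdot 2^{O(\abs{\alphabet}\cdot\width\cdot\log\width)}$ states whose language is $\cdcanonizationtransduction{\alphabet}{\width}(\automatonlang{\finiteautomaton})$. I would then apply the power-set construction followed by a DFA-minimization algorithm such as Hopcroft's to $\finiteautomaton^{\dagger}$, obtaining the unique deterministic, complete, normalized automaton $\canonizationfunction(\finiteautomaton^{\dagger})$ of Theorem~\ref{theorem:MinimizationAutomaton}. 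The power-set construction dominates the cost, producing at most $2^{\nstates(\finiteautomaton)\cdot 2^{O(\abs{\alphabet}\cdot\width\cdot\log\width)}}$ states, which gives the claimed bound of $2^{\nstates(\finiteautomaton)\cdot 2^{\BigOh(\abs{\alphabet}\cdot\width\log\width)}}$.

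The one genuine subtlety — and the step I would be most careful about — is the claim that routing the computation through $\cdcanonizationtransduction{\alphabet}{\width}$ rather than $\canonizationtransduction{\alphabet}{\width}$ yields \emph{literally} the same automaton $\canonizationfunction_2(\finiteautomaton)$, and not merely a language-equivalent one. This rests on two facts: that the two transductions induce the same map $\aodd\mapsto\canonizationfunction(\aodd)$ on the domain $\cdodddefiningsetstar{\alphabet}{\width}$ that contains $\automatonlang{\finiteautomaton}$, so $\finiteautomaton^{\dagger}$ has the same language as the intermediate automaton built in the proof of Theorem~\ref{theorem:CanonizationSecondOrder}; and that $\canonizationfunction_2(\finiteautomaton)$ is defined as a syntactically unique canonical form, whence by the uniqueness in Theorem~\ref{theorem:MinimizationAutomaton} it depends only on the language $\set{\canonizationfunction(\aodd)\setst \aodd\in\automatonlang{\finiteautomaton}}$ and not on the route taken to compute it. Everything else is a direct transcription of the earlier argument.
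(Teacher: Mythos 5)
Your proposal is correct and follows essentially the same route as the paper: the paper justifies this observation precisely by noting that, when $\automatonlang{\finiteautomaton}\subseteq \cdodddefiningsetstar{\alphabet}{\width}$, the transduction $\canonizationtransduction{\alphabet}{\width}$ in the proof of Theorem~\ref{theorem:CanonizationSecondOrder} can be replaced by the $2^{\BigOh(\abs{\alphabet}\cdot\width\cdot\log\width)}$-regular transduction $\cdcanonizationtransduction{\alphabet}{\width}$ of Theorem~\ref{theorem:CanonizationTransductionTheorem}.(2), with the width claim resting on the final clause of Theorem~\ref{theorem:MinimizationODDs}. Your extra care about the two transductions agreeing on $\cdodddefiningsetstar{\alphabet}{\width}$ and about the syntactic uniqueness of the resulting canonical form makes explicit what the paper leaves implicit, but it is the same argument.
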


An immediate consequence of Theorem \ref{theorem:CanonizationTransductionTheorem} and of Proposition \ref{proposition:PropertiesTransductions}.(1) is that for each alphabet $\alphabet$, and each $\width\in \pN$, the set
of canonical forms of ODDs in $\odddefiningsetstar{\alphabet}{\width}$ is a regular set. The same holds for the set of canonical forms of ODDs in $\cdodddefiningsetstar{\alphabet}{\width}$. 

\begin{corollary}
\label{corollary:RegularCanonicalForms}
Let $\alphabet$ be an alphabet and $\width\in \pN$. 
\begin{enumerate}
\item The language $\image(\canonizationtransduction{\alphabet}{\width}) = \{\canonizationfunction(\aodd)\;:\; \aodd\in \odddefiningsetstar{\alphabet}{\width}\}$ is $2^{O(|\alphabet|\cdot \width\cdot 2^{\width})}$-regular. 
\item The language $\image(\cdcanonizationtransduction{\alphabet}{\width}) = \{\canonizationfunction(\aodd)\;:\; \aodd\in \cdodddefiningsetstar{\alphabet}{\width}\}$ is $2^{O(|\alphabet|\cdot \width\cdot \log \width)}$-regular.  
\end{enumerate}
\end{corollary}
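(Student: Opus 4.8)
The plan is to read off both statements as direct applications of the image-closure property of regular transductions, namely Proposition \ref{proposition:PropertiesTransductions}.(\ref{item:automaton_image_domain}), to the two functional transductions supplied by Theorem \ref{theorem:CanonizationTransductionTheorem}. No new automaton construction is required: all of the substance already sits inside Theorem \ref{theorem:CanonizationTransductionTheorem}, whose proof is deferred to Section \ref{section:CanonizationTransductionTheorem}. Consequently I do not expect any genuine obstacle in this corollary; the only things to verify are that the set displayed in each item coincides exactly with the image of the corresponding transduction, and that the claimed state bound is transported without inflation.

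For the first item, I would begin by recalling that, by definition, $\canonizationtransduction{\alphabet}{\width} = \set{\tuple{\aodd,\canonizationfunction(\aodd)} \setst \aodd \in \odddefiningsetstar{\alphabet}{\width}}$, so that unfolding the definition of $\image$ yields
\[
\image(\canonizationtransduction{\alphabet}{\width}) = \set{\canonizationfunction(\aodd) \setst \aodd \in \odddefiningsetstar{\alphabet}{\width}},
\]
which is precisely the set in the statement. Theorem \ref{theorem:CanonizationTransductionTheorem}.(1) asserts that $\canonizationtransduction{\alphabet}{\width}$ is $2^{O(|\alphabet|\cdot \width\cdot 2^{\width})}$-regular, i.e.\ the tensor-product language $\transductionlang(\canonizationtransduction{\alphabet}{\width})$ is accepted by some finite automaton with $2^{O(|\alphabet|\cdot \width\cdot 2^{\width})}$ states. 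Proposition \ref{proposition:PropertiesTransductions}.(\ref{item:automaton_image_domain}) then produces, by projecting away the first component of each input symbol, a finite automaton with the same number of states accepting $\image(\canonizationtransduction{\alphabet}{\width})$; since this projection introduces no new states, the bound $2^{O(|\alphabet|\cdot \width\cdot 2^{\width})}$ is preserved, which establishes item (1).

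The second item is obtained by the identical argument with $\canonizationtransduction{\alphabet}{\width}$ replaced by $\cdcanonizationtransduction{\alphabet}{\width}$ and $\odddefiningsetstar{\alphabet}{\width}$ replaced by $\cdodddefiningsetstar{\alphabet}{\width}$: here Theorem \ref{theorem:CanonizationTransductionTheorem}.(2) furnishes the sharper bound $2^{O(|\alphabet|\cdot \width\cdot \log \width)}$ on the number of states of an automaton for $\transductionlang(\cdcanonizationtransduction{\alphabet}{\width})$, and the same image construction transports this bound to $\image(\cdcanonizationtransduction{\alphabet}{\width}) = \set{\canonizationfunction(\aodd) \setst \aodd \in \cdodddefiningsetstar{\alphabet}{\width}}$. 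The one point worth double-checking is that the image automaton of Proposition \ref{proposition:PropertiesTransductions}.(\ref{item:automaton_image_domain}) genuinely does not inflate the state count, which is exactly what that proof guarantees; given this, both items follow immediately.
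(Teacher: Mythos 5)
Your proposal is correct and follows exactly the paper's route: the paper states this corollary as an immediate consequence of Theorem \ref{theorem:CanonizationTransductionTheorem} combined with Proposition \ref{proposition:PropertiesTransductions}.(\ref{item:automaton_image_domain}), which is precisely the projection argument you spell out. Your additional check that the image automaton does not inflate the state count is indeed guaranteed by the proof of that proposition, so nothing is missing.
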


\section{Closure Properties}
\label{section:ClosureProperties}

\subsection{Basic Closure Properties}

Theorem~\ref{theorem:CanonizationSecondOrder} implies that 
regular-decisional subsets of
$\bigcup_{\oddlength\in \pN} \powerset_{\oddlength}(\alphabet)$ are closed under Boolean operations such as union, intersection and even
a suitable notion of bounded width complementation. These closure properties are formally stated in Theorem~\ref{theorem:ClosureProperties} below. 
Let $\alphabet$ be an alphabet and $\width\in\pN$. 
We denote by $$\alldetlanguages{\alphabet}{\width} \defeq \{\oddlang{\aodd}\setst \aodd\in \cdodddefiningsetstar{\alphabet}{\width}\}$$ the set of
all sets of strings accepted by some deterministic, complete $(\alphabet,\width)$-ODD.
Moreover, given a subset $\secondlanguage \subseteq \bigcup_{\oddlength\in \pN} \powerset_{\oddlength}(\alphabet)$, we denote by $\boundedcomplement{\secondlanguage}{\width}\defeq \alldetlanguages{\alphabet}{\width} \backslash \secondlanguage$ the {\em width-$\width$ complement} of $\secondlanguage$.

\begin{theorem}\label{theorem:ClosureProperties}
Let $\alphabet$ be an alphabet, $\width\in \pN$, and let $\finiteautomaton$, $\finiteautomaton_1$ and $\finiteautomaton_2$ be $(\alphabet,\width)$-second-order finite automata. The following statements hold. 
\begin{enumerate}
	\item \label{item:SecondFAcap} There is a $(\alphabet,2^{\width})$-second-order finite automaton $\secondcap{\finiteautomaton_1}{\finiteautomaton_2}$ such that $$\automatonlanghigher{\secondcap{\finiteautomaton_1}{\finiteautomaton_2}}{2} = \automatonlanghigher{\finiteautomaton_1}{2} \cap \automatonlanghigher{\finiteautomaton_2}{2}\text{.}$$ 
	
	\item \label{item:SecondFAcup} There is a $(\alphabet,2^{\width})$-second-order finite automaton $\secondcup{\finiteautomaton_1}{\finiteautomaton_2}$ such that $$\automatonlanghigher{\secondcup{\finiteautomaton_1}{\finiteautomaton_2}}{2} = \automatonlanghigher{\finiteautomaton_1}{2} \cup \automatonlanghigher{\finiteautomaton_2}{2}\text{.}$$ 
	
	\item \label{item:SecondFAdiff} There is a $(\alphabet,2^{\width})$-second-order finite automaton $\seconddiff{\finiteautomaton_1}{\finiteautomaton_2}$ such that $$\automatonlanghigher{\seconddiff{\finiteautomaton_1}{\finiteautomaton_2}}{2} = \automatonlanghigher{\finiteautomaton_1}{2} \backslash \automatonlanghigher{\finiteautomaton_2}{2}\text{.}$$
	
	\item \label{item:SecondFAdet} There is a $(\alphabet,\width)$-second-order finite automaton $\finiteautomaton(\alphabet,\width)$ such that $$\automatonlanghigher{\finiteautomaton(\alphabet,\width)}{2} = \alldetlanguages{\alphabet}{\width}\text{.}$$
	
	\item \label{item:SecondFAcomp} For each $\width'\in \pN$, there is a $(\alphabet,2^{\max\set{\width,\width'}})$-second-order finite automaton $\secondcomp{\finiteautomaton}{\width'}$ such that 
		$$\automatonlanghigher{\secondcomp{\finiteautomaton}{\width'}}{2} = \boundedcomplement{\automatonlanghigher{\finiteautomaton}{2}}{\width'}\text{.}$$
	
	\item \label{item:SecondFAemptiness} It is decidable whether $\automatonlanghigher{\finiteautomaton_1}{2}\cap \automatonlanghigher{\finiteautomaton_2}{2} = \emptyset$.
	
	\item \label{item:SecondFAcontainement} It is decidable whether $\automatonlanghigher{\finiteautomaton_1}{2}\subseteq \automatonlanghigher{\finiteautomaton_2}{2}$. 
\end{enumerate}
\end{theorem}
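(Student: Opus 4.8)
The plan is to reduce every set-theoretic operation on the \emph{second} languages to the corresponding operation on the \emph{first} languages of the canonical forms supplied by Theorem~\ref{theorem:CanonizationSecondOrder}. The conceptual engine is that, by the uniqueness clause of Theorem~\ref{theorem:MinimizationODDs}, any two ODDs with the same language share the same canonical form; hence assigning to each language $L$ its unique canonical ODD is a well-defined bijection between the second language $\automatonlanghigher{\finiteautomaton}{2}$ and the first language $\automatonlang{\canonizationfunction_2(\finiteautomaton)}=\set{\canonizationfunction(\aodd)\setst \aodd\in\automatonlang{\finiteautomaton}}$ of its canonical form. The decisive point is that this assignment $L\mapsto\canonizationfunction(\aodd)$ (for any $\aodd$ with $\oddlang{\aodd}=L$) is \emph{global}: it does not depend on which SOFA $L$ is drawn from, so a language $L$ lies in $\automatonlanghigher{\finiteautomaton_i}{2}$ if and only if its single canonical ODD lies in $\automatonlang{\canonizationfunction_2(\finiteautomaton_i)}$. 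Being injective, this assignment commutes with $\cap$, $\cup$ and $\setminus$, which is exactly what lets membership questions about second languages be transported to membership questions about first languages without loss of information.

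For items \ref{item:SecondFAcap}, \ref{item:SecondFAcup} and \ref{item:SecondFAdiff} I would first canonize $\finiteautomaton_1$ and $\finiteautomaton_2$ to obtain $(\alphabet,2^{\width})$-SOFAs whose first languages are the two sets of canonical ODDs, both regular languages over the common alphabet $\cdlayeralphabet{\alphabet}{2^{\width}}$. I would then apply the standard product, union and difference constructions for finite automata to these first languages. Since regular languages are closed under $\cap$, $\cup$ and $\setminus$, each construction yields a finite automaton whose language is a subset of $\cdodddefiningsetstar{\alphabet}{2^{\width}}$, hence a $(\alphabet,2^{\width})$-SOFA. By the bijection above, the first language of the product automaton consists exactly of the canonical ODDs of the languages in $\automatonlanghigher{\finiteautomaton_1}{2} \cap \automatonlanghigher{\finiteautomaton_2}{2}$, so applying $\oddlang{\cdot}$ recovers the desired second language; the union and difference cases are identical.

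Item \ref{item:SecondFAdet} I would obtain directly from Corollary~\ref{corollary:AllODDs}: the SOFA $\finiteautomaton_{\cdlayeralphabet{\alphabet}{\width}}$ has first language $\cdodddefiningsetstar{\alphabet}{\width}$, so its second language is $\set{\oddlang{\aodd}\setst \aodd\in \cdodddefiningsetstar{\alphabet}{\width}}=\alldetlanguages{\alphabet}{\width}$, giving the required $(\alphabet,\width)$-SOFA. Item \ref{item:SecondFAcomp} then follows by writing $\boundedcomplement{\automatonlanghigher{\finiteautomaton}{2}}{\width'}=\alldetlanguages{\alphabet}{\width'}\setminus\automatonlanghigher{\finiteautomaton}{2}$ and reusing the difference construction, with the one technical point that the two operands live at different widths: after canonization the universe from item \ref{item:SecondFAdet} has canonical ODDs of width at most $\width'$ (minimization does not increase the width of an already deterministic, complete ODD, by the last clause of Theorem~\ref{theorem:MinimizationODDs}), whereas the canonical forms of $\finiteautomaton$ have width at most $2^{\width}$. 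I would embed both into the common width $2^{\max\set{\width,\width'}}$ — which is legitimate because any $(\alphabet,w)$-layer is also a $(\alphabet,w'')$-layer for $w''\ge w$ (the state sets are subsets of $\dbset{w}\subseteq\dbset{w''}$) — and then take the regular difference, producing a $(\alphabet,2^{\max\set{\width,\width'}})$-SOFA.

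Finally, items \ref{item:SecondFAemptiness} and \ref{item:SecondFAcontainement} reduce to testing whether a SOFA has empty second language, which is equivalent to testing whether its \emph{first} language is empty: a SOFA accepts no language exactly when it accepts no ODD, and emptiness of a regular language is decidable by reachability. For emptiness of intersection I would build $\secondcap{\finiteautomaton_1}{\finiteautomaton_2}$ via item \ref{item:SecondFAcap} and test its first language; for containment I would use $\automatonlanghigher{\finiteautomaton_1}{2}\subseteq\automatonlanghigher{\finiteautomaton_2}{2}$ iff $\automatonlanghigher{\finiteautomaton_1}{2}\setminus\automatonlanghigher{\finiteautomaton_2}{2}=\emptyset$, build $\seconddiff{\finiteautomaton_1}{\finiteautomaton_2}$ via item \ref{item:SecondFAdiff}, and test emptiness. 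The main obstacle is not the automata constructions, which are standard, but the careful verification that the canonization bijection commutes with the Boolean operations — that intersecting the canonical first languages yields precisely the canonical ODDs of the intersected second languages — together with the width bookkeeping in the complementation case; both rest squarely on the syntactic uniqueness of canonical ODDs guaranteed by Theorem~\ref{theorem:MinimizationODDs}.
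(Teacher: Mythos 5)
Your proposal is correct and follows essentially the same route as the paper's proof: canonize both automata via Theorem~\ref{theorem:CanonizationSecondOrder}, apply the standard Boolean closure constructions for regular languages to the resulting first languages (using the universe SOFA from Lemma~\ref{lemma:LayerSubsetRegular}/Corollary~\ref{corollary:AllODDs} for items 4 and 5), and justify correctness through the global, width-independent uniqueness of canonical forms from Theorem~\ref{theorem:MinimizationODDs}. Your explicit observations --- that the injective map $L\mapsto\canonizationfunction(\aodd)$ commutes with $\cap$, $\cup$, $\setminus$, and that the width bookkeeping in item 5 is resolved by embedding into width $2^{\max\set{\width,\width'}}$ --- are exactly the points the paper makes in its proof and in the remark immediately following it.
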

\begin{proof}
	Let $\finiteautomaton_1'=\canonizationfunction_2(\finiteautomaton_1)$ and $\finiteautomaton_2'=\canonizationfunction_2(\finiteautomaton_2)$ be the second canonical forms specified in Theorem~\ref{theorem:CanonizationSecondOrder} of the automata $\finiteautomaton_1$ and $\finiteautomaton_2$, respectively. 
	It is well-known that regular languages are closed under intersection, union and complementation~\cite{Hopcroft2007}.
	Consequently, there exist finite automata $\firstcap{\finiteautomaton_1'}{\finiteautomaton_2'}$, $\firstcup{\finiteautomaton_1'}{\finiteautomaton_2'}$ and $\firstcomp{\finiteautomaton_2'}$ over the alphabet $\cdlayeralphabet{\alphabet}{2^{\width}}$, such that $$\automatonlang{\firstcap{\finiteautomaton_1'}{\finiteautomaton_2'}} = \automatonlang{\finiteautomaton_1'} \cap \automatonlang{\finiteautomaton_2'},$$ $\automatonlang{\firstcup{\finiteautomaton_1'}{\finiteautomaton_2'}} = \automatonlang{\finiteautomaton_1'} \cup \automatonlang{\finiteautomaton_2'}$ and $\automatonlang{\firstcomp{\finiteautomaton_2'}} = \cdlayeralphabet{\alphabet}{2^{\width}}^{*} \setminus \automatonlang{\finiteautomaton_2'}$. 
	
	Clearly, $$\automatonlang{\firstcup{\finiteautomaton_1'}{\finiteautomaton_2'}} = \set{\canonizationfunction(\aodd) \setst \aodd \in \automatonlang{\finiteautomaton_{1}}} \cup \set{\canonizationfunction(\bodd) \setst \bodd \in \automatonlang{\finiteautomaton_{2}}}\text{.}$$ 
	Thus, $\secondcup{\finiteautomaton_1}{\finiteautomaton_2} = \firstcup{\finiteautomaton_1'}{\finiteautomaton_2'}$ is a $(\alphabet,2^{\width})$-second-order finite 
	automaton with second language 
	$$\automatonlanghigher{\secondcup{\finiteautomaton_1}{\finiteautomaton_2}}{2} = \automatonlanghigher{\finiteautomaton_1}{2} \cup \automatonlanghigher{\finiteautomaton_2}{2}.$$
	
	Moreover, owing to the fact that any two ODDs with the same language have the same canonical form, one can verify that 
	$$\automatonlang{\firstcap{\finiteautomaton_1'}{\finiteautomaton_2'}} = \set{\canonizationfunction(\aodd) \setst \aodd \in \automatonlang{\finiteautomaton_{1}}, \exists\, \bodd \in \automatonlang{\finiteautomaton_{2}}, \oddlang{\aodd}=\oddlang{\bodd}}\text{.}$$ 
	Thus, $\secondcap{\finiteautomaton_1}{\finiteautomaton_2} = \firstcap{\finiteautomaton_1'}{\finiteautomaton_2'}$ is a $(\alphabet,2^{\width})$-second-order finite automata with second language 
	$$\automatonlanghigher{\secondcap{\finiteautomaton_1}{\finiteautomaton_2}}{2} = \automatonlanghigher{\finiteautomaton_1}{2} \cap \automatonlanghigher{\finiteautomaton_2}{2}.$$ 

	Furthermore, we have that $\seconddiff{\finiteautomaton_1}{\finiteautomaton_2} = \secondcap{\finiteautomaton_1}{\firstcomp{\finiteautomaton_{2}'}}$ is a $(\alphabet,2^{\width})$-second-order finite automata with first language 
	$$\begin{array}{lcl}\automatonlang{\seconddiff{\finiteautomaton_1}{\finiteautomaton_2}} & = & \automatonlang{\finiteautomaton_1'} \cap \automatonlang{\firstcomp{\finiteautomaton_{2}'}} = \automatonlang{\finiteautomaton_1'} \cap \big(\cdlayeralphabet{\alphabet}{2^{\width}}^{*} \setminus \automatonlang{\finiteautomaton_2'}\big)\\ &= & \automatonlang{\finiteautomaton_1'} \setminus \automatonlang{\finiteautomaton_2'}\text{.}
	\end{array}$$
	Thus, since ODDs with the same language have the same canonical form, the second language of $\seconddiff{\finiteautomaton_1}{\finiteautomaton_2}$ is 
	$$\automatonlang{\seconddiff{\finiteautomaton_1}{\finiteautomaton_2}} = \automatonlanghigher{\finiteautomaton_1}{2} \setminus \automatonlanghigher{\finiteautomaton_2}{2}.$$ 
	
	Based on Lemma~\ref{lemma:LayerSubsetRegular}, we let $\finiteautomaton(\alphabet,\width) = \finiteautomaton_{\subsetlayers}$ be the $(\alphabet,\width)$-second-order finite automaton over the alphabet $\subsetlayers$, where $\subsetlayers=\cdlayeralphabet{\alphabet}{\width}$. 
	One can readily verify that $\automatonlanghigher{\finiteautomaton(\alphabet,\width)}{2} =\alldetlanguages{\alphabet}{\width}$.

	Now, let $\finiteautomaton'=\canonizationfunction_2(\finiteautomaton)$ be the second canonical form specified in Theorem~\ref{theorem:CanonizationSecondOrder} of the automaton $\finiteautomaton$. 
	For each $\width'\in \pN$, we let $\secondcomp{\finiteautomaton}{\width'} = \seconddiff{\finiteautomaton(\alphabet,\width')}{\finiteautomaton'}$. 
	It is straightforward that $\secondcomp{\finiteautomaton}{\width'}$ is a $(\alphabet,2^{\max\set{\width,\width'}})$-second-order finite automaton with second language 
	$$\automatonlanghigher{\secondcomp{\finiteautomaton}{\width'}}{2} = \boundedcomplement{\automatonlanghigher{\finiteautomaton}{2}}{\width'}.$$ 

	Finally, we note that deciding whether $\automatonlanghigher{\finiteautomaton_1}{2}\cap\automatonlanghigher{\finiteautomaton_2}{2}=\emptyset$ is equivalent to deciding whether $\automatonlang{\finiteautomaton_1'}\cap\automatonlang{\finiteautomaton_2'}=\emptyset$. 
	Similarly, we have that deciding whether $\automatonlanghigher{\finiteautomaton_1}{2}\subseteq\automatonlanghigher{\finiteautomaton_2}{2}$ is equivalent to deciding whether $\automatonlang{\finiteautomaton_1'}\subseteq\automatonlang{\finiteautomaton_2'}$, which in turn is equivalent to deciding whether $$\automatonlang{\finiteautomaton_1'}\cap(\cdlayeralphabet{\alphabet}{2^{\width}}^{*}\setminus\automatonlang{\finiteautomaton_2'})=\automatonlang{\finiteautomaton_1'}\cap\firstcomp{\finiteautomaton_2'}=\emptyset\text{.}$$ 
	Therefore, since disjointness of regular languages is a decidable problem~\cite{Hopcroft2007}, we obtain that the problems of verifying whether $\automatonlanghigher{\finiteautomaton_1}{2}\cap\automatonlanghigher{\finiteautomaton_2}{2}=\emptyset$ and verifying whether $\automatonlanghigher{\finiteautomaton_1}{2}\subseteq\automatonlanghigher{\finiteautomaton_2}{2}$ are both decidable. 
\end{proof}

We note that all binary operations described in Theorem~\ref{theorem:ClosureProperties} are also defined when $\finiteautomaton_{1}$ is a $(\alphabet,\width_{1})$-second-order finite automaton and $\finiteautomaton_{2}$ is a $(\alphabet,\width_{2})$-second-order finite automaton, for distinct positive integers $\width_{1}$ and $\width_{2}$. Indeed, it suffices to view both finite automata as $(\alphabet,\max\set{\width_{1},\width_{2}})$-second-order finite automata. We also note that the SOFAs $\secondcap{\finiteautomaton_1}{\finiteautomaton_2}$, 
$\secondcup{\finiteautomaton_1}{\finiteautomaton_2}$ and $\seconddiff{\finiteautomaton_1}{\finiteautomaton_2}$ are actually 
$(\alphabet,\width)$-SOFAs if all ODDs in the languages $\automatonlang{\finiteautomaton_1}$ and $\automatonlang{\finiteautomaton_2}$
are deterministic and complete, since in this case one can use the more efficient construction given in
Observation \ref{observation:BetterConstruction}. Finally, it is worth remarking that non-emptiness of intersection of the second languages
of SOFAs is not only decidable, but can be achieved in fixed-parameter tractable time (Observation \ref{observation:FPTIntersection}). 

\begin{observation}
\label{observation:FPTIntersection}
Let $\alphabet$ be an alphabet, and $\width\in \pN$ and $\finiteautomaton_1$ and $\finiteautomaton_2$ be $(\alphabet,\width)$-SOFAs. 
\begin{enumerate}
\item One can determine whether $\automatonlanghigher{\finiteautomaton_1}{2}\cap \automatonlanghigher{\finiteautomaton_2}{2}\neq \emptyset$ in time
$2^{O(|\alphabet|\cdot \width \cdot 2^{\width})}\cdot \nstates(\finiteautomaton_1)\cdot \nstates(\finiteautomaton_2)$. 
\item If all ODDs in $\automatonlang{\finiteautomaton_1}$ and $\automatonlang{\finiteautomaton_2}$ are deterministic and complete, then one
one can can determine whether $\automatonlanghigher{\finiteautomaton_1}{2}\cap \automatonlanghigher{\finiteautomaton_2}{2}\neq \emptyset$
in time $2^{O(|\alphabet|\cdot \width \cdot \log \width)}\cdot \nstates(\finiteautomaton_1)\cdot \nstates(\finiteautomaton_2)$. 
\end{enumerate}
\end{observation}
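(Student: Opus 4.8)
The plan is to reduce non-emptiness of the intersection of the two \emph{second} languages to non-emptiness of the intersection of two ordinary regular languages over the alphabet $\cdlayeralphabet{\alphabet}{2^{\width}}$, and then to decide the latter by the standard product-plus-reachability test for nondeterministic automata, carefully avoiding the power-set construction that is responsible for the doubly-exponential running time in the proof of Theorem~\ref{theorem:ClosureProperties}. The starting point is the equivalence $\automatonlanghigher{\finiteautomaton_1}{2}\cap \automatonlanghigher{\finiteautomaton_2}{2}\neq \emptyset$ if and only if there are ODDs $\aodd_1\in \automatonlang{\finiteautomaton_1}$ and $\aodd_2\in \automatonlang{\finiteautomaton_2}$ with $\oddlang{\aodd_1}=\oddlang{\aodd_2}$. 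By Theorem~\ref{theorem:MinimizationODDs}, and since $\oddlang{\canonizationfunction(\aodd)}=\oddlang{\aodd}$, two ODDs have the same language exactly when they have the same syntactically unique canonical form. Hence the second languages intersect if and only if $\canonizationfunction(\aodd_1)=\canonizationfunction(\aodd_2)$ for some such pair, which is precisely the condition $\canonizationtransduction{\alphabet}{\width}(\automatonlang{\finiteautomaton_1})\cap \canonizationtransduction{\alphabet}{\width}(\automatonlang{\finiteautomaton_2})\neq \emptyset$, because $\canonizationtransduction{\alphabet}{\width}(\automatonlang{\finiteautomaton_i})=\set{\canonizationfunction(\aodd)\setst \aodd\in\automatonlang{\finiteautomaton_i}}$.

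Next I would construct, for each $i\in\set{1,2}$, a nondeterministic finite automaton $A_i$ recognizing the image $\canonizationtransduction{\alphabet}{\width}(\automatonlang{\finiteautomaton_i})$. Since $\finiteautomaton_i$ witnesses that $\automatonlang{\finiteautomaton_i}$ is $\nstates(\finiteautomaton_i)$-regular and $\canonizationtransduction{\alphabet}{\width}$ is $2^{\BigOh(|\alphabet|\cdot\width\cdot 2^{\width})}$-regular by Theorem~\ref{theorem:CanonizationTransductionTheorem}, Proposition~\ref{proposition:PropertiesTransductions}.(\ref{item:automaton_lang_transduction}) yields such an $A_i$ with $\nstates(\finiteautomaton_i)\cdot 2^{\BigOh(|\alphabet|\cdot\width\cdot 2^{\width})}$ states, obtained by composing $\duplicate{\automatonlang{\finiteautomaton_i}}$ with the transduction automaton and then projecting onto the output coordinate. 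Crucially, $A_i$ is left \emph{nondeterministic}: in contrast with the construction of $\canonizationfunction_2(\finiteautomaton_i)$, no determinization or minimization is performed, so the blow-up remains single-exponential in the parameter rather than the doubly-exponential $2^{\nstates(\finiteautomaton_i)\cdot 2^{\BigOh(\cdots)}}$ incurred in Theorem~\ref{theorem:ClosureProperties}.

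Finally I would form the product automaton $A_1\times A_2$ over $\cdlayeralphabet{\alphabet}{2^{\width}}$, whose states are pairs of states of $A_1$ and $A_2$ and which accepts exactly $\canonizationtransduction{\alphabet}{\width}(\automatonlang{\finiteautomaton_1})\cap \canonizationtransduction{\alphabet}{\width}(\automatonlang{\finiteautomaton_2})$; the two exponential factors multiply to another factor of the same form, so the product has $\nstates(\finiteautomaton_1)\cdot\nstates(\finiteautomaton_2)\cdot 2^{\BigOh(|\alphabet|\cdot\width\cdot 2^{\width})}$ states. Deciding whether this NFA accepts some string is a pure graph-reachability question—whether a final state is reachable from an initial state—and requires no determinization. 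Using that the alphabet $\cdlayeralphabet{\alphabet}{2^{\width}}$ has $2^{\BigOh(|\alphabet|\cdot\width\cdot 2^{\width})}$ layers by Observation~\ref{observation:NumberLayers}, the whole test runs within $2^{\BigOh(|\alphabet|\cdot\width\cdot 2^{\width})}\cdot \nstates(\finiteautomaton_1)\cdot\nstates(\finiteautomaton_2)$, proving the first claim. For the second claim, under the hypothesis that all ODDs in both first languages are deterministic and complete, I would replace $\canonizationtransduction{\alphabet}{\width}$ by $\cdcanonizationtransduction{\alphabet}{\width}$, which is $2^{\BigOh(|\alphabet|\cdot\width\cdot \log\width)}$-regular by Theorem~\ref{theorem:CanonizationTransductionTheorem}, and every size bound above improves to the corresponding $2^{\BigOh(|\alphabet|\cdot\width\cdot \log\width)}$ form. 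I expect the only real subtlety to be the conceptual one already highlighted: recognizing that intersection non-emptiness never requires canonizing the SOFAs themselves, so that testing emptiness directly on the product of the two nondeterministic image automata keeps the parameter-dependent blow-up single-exponential while the genuine dependence on the inputs is confined to the factor $\nstates(\finiteautomaton_1)\cdot\nstates(\finiteautomaton_2)$.
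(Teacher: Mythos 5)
Your proposal is correct and follows essentially the same route as the paper's own proof: build, via Theorem~\ref{theorem:CanonizationTransductionTheorem} and Proposition~\ref{proposition:PropertiesTransductions}, a nondeterministic automaton for each image $\canonizationtransduction{\alphabet}{\width}(\automatonlang{\finiteautomaton_i})$ with $2^{\BigOh(|\alphabet|\cdot\width\cdot 2^{\width})}\cdot\nstates(\finiteautomaton_i)$ states, then test intersection non-emptiness of these two NFAs directly (product plus reachability, no determinization), and substitute $\cdcanonizationtransduction{\alphabet}{\width}$ for the deterministic, complete case. The paper's argument is exactly this, stated more tersely; your additional justification that equality of languages corresponds to equality of syntactically unique canonical forms is the same implicit step.
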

\begin{proof}
Since $\canonizationtransduction{\alphabet}{\width}$ is $2^{O(|\alphabet|\cdot \width \cdot 2^{\width})}$-regular, for each $i\in \{1,2\}$, 
one can construct from $\finiteautomaton_i$ a finite automaton $\finiteautomaton_i'$  with $2^{O(|\alphabet|\cdot \width \cdot 2^{\width})}\cdot \nstates(\finiteautomaton_i)$ states 
such that $\automatonlang{\finiteautomaton_i'} = \canonizationtransduction{\alphabet}{\width}(\automatonlang{\finiteautomaton_i}) = \{\canonizationfunction(\aodd)\;:\; \aodd\in \automatonlang{\finiteautomaton_i}\}$. 
Therefore, testing whether $\automatonlanghigher{\finiteautomaton_1}{2} \cap \automatonlanghigher{\finiteautomaton_2}{2}\neq \emptyset$ is equivalent to testing whether 
$\automatonlang{\finiteautomaton_1'}\cap \automatonlang{\finiteautomaton_2'}\neq \emptyset$, which can be done in time
$2^{O(|\alphabet|\cdot \width \cdot 2^{\width})}\cdot \nstates(\finiteautomaton_1)\cdot \nstates(\finiteautomaton_2)$. If the languages of the automata
$\finiteautomaton_1$ and $\finiteautomaton_2$ only contain deterministic, complete ODDs, then one can apply a similar argument using the 
transduction $\cdcanonizationtransduction{\alphabet}{\width}$ instead of 
$\canonizationtransduction{\alphabet}{\width}$ to infer that non-emptiness of intersection for the languages $\automatonlanghigher{\finiteautomaton_1}{2}$ and $\automatonlanghigher{\finiteautomaton_2}{2}$ 
can be tested in time $2^{O(|\alphabet|\cdot \width \cdot \log \width)}\cdot \nstates(\finiteautomaton_1)\cdot \nstates(\finiteautomaton_2)$. 
\end{proof}

\subsection{Closure Properties Specific for Language Classes} 
\label{subsection:FurtherClosureProperties}

In this subsection, we show that regular-decisional classes of languages are also closed 
under operations that are specific to language classes. 
Let $\alphabet_1$ and $\alphabet_2$ be alphabets, and $g:\alphabet_1\rightarrow \alphabet_2$ be a map from 
$\alphabet_1$ to $\alphabet_2$. 
Given languages $\lang\subseteq \alphabet_1^+$ and $L'\subseteq \alphabet_2^{+}$, 
we let 
$$g(L) = \{u\;:\; \exists w\in L,\;|u|=|w|,\; u_i=g(w_i) \mbox{ for each } i\in [|u|]\}$$
and
$$g^{-1}(L') = \{u\;:\; \exists w\in L',\; |u|=|w|,\; u_i\in g^{-1}(w_i) \mbox{ for each } i\in [|u|]\}.$$

The following lemma from \cite{deOliveiraOliveira2020symbolic} 
states that several operations that are effective for regular languages 
may be realized on ODDs using maps that act layerwisely. Below, 
for ODDs $\aodd=\alayer_1\alayer_2\dots\alayer_{\oddlength}$ and $\aodd' = \alayer_1'\alayer_2'\dots\alayer_{\oddlength}'$, 
we let $\aodd\otimes \aodd' = (\alayer_1,\alayer_1')(\alayer_2,\alayer_2')\dots (\alayer_{\oddlength},\alayer_{\oddlength}')$.

\begin{lemma}[Simulation Lemma (see Lemma 2 of \cite{deOliveiraOliveira2020symbolic})]
\label{lemma:SimulationLemma}
Let $\alphabet_1$ and $\alphabet_2$ be alphabets, $\width_1,\width_2\in \pN$,
and $g:\alphabet_1\rightarrow \alphabet_2$ be a map from $\alphabet_1$ to $\alphabet_2$. 
There exist maps
\begin{enumerate} 
\item $f_{\cup}:\layeralphabet{\alphabet_1}{\width_1}\times \layeralphabet{\alphabet_2}{\width_2}
	\rightarrow \layeralphabet{\alphabet_1\cup \alphabet_2}{\width_1+\width_2}$,
\item $f_{\cap}:\layeralphabet{\alphabet_1}{\width_1}\times \layeralphabet{\alphabet_2}{\width_2} 
	\rightarrow \layeralphabet{\alphabet_1\cup \alphabet_2}{\width_1\cdot \width_2}$, 
\item $f_{\otimes}:\layeralphabet{\alphabet_1}{\width_1}\times \layeralphabet{\alphabet_2}{\width_2}
	\rightarrow \layeralphabet{\alphabet_1\times \alphabet_2}{\width_1\cdot \width_2}$,
\item $f_{g}:\layeralphabet{\alphabet_1}{\width_1}\rightarrow \layeralphabet{\alphabet_2}{\width_1}$,
\item $f_{g^{-1}}:\layeralphabet{\alphabet_2}{\width_2}\rightarrow \layeralphabet{\alphabet_1}{\width_2}$,
\item $f_{\neg}:\cdlayeralphabet{\alphabet_1}{\width_1}\rightarrow \cdlayeralphabet{\alphabet_1}{\width_1}$,
\end{enumerate}

such that for each $(\alphabet_1,\width_1)$-ODD
$\aodd=\alayer_1\alayer_2\dots \alayer_{\oddlength}$,
each $(\alphabet_2,\width_2)$-ODD
$\aodd' = \alayer_1'\alayer_2'\dots\alayer_{\oddlength}'$, 
and each deterministic, complete $(\alphabet_1,\width_1)$-ODD 
$\aodd''=\alayer_1''\alayer_2''\dots\alayer_{\oddlength}''$,
the following hold. 

\begin{enumerate}
\item $f_{\cup}(\aodd \otimes \aodd') \defeq f_{\cup}(\alayer_1,\alayer_1')f_{\cup}(\alayer_2,\alayer_2')\dots f_{\cup}(\alayer_{\oddlength},\alayer_{\oddlength}')$ is a $(\alphabet_1\cup \alphabet_2,\width_1 + \width_2)$-ODD such that
$$\oddlang{f_{\cup}(\aodd \otimes \aodd')} = \oddlang{\aodd}\cup \oddlang{\aodd'}.$$ 
\item $f_{\cap}(\aodd \otimes \aodd') \defeq f_{\cap}(\alayer_1,\alayer_1')f_{\cap}(\alayer_2,\alayer_2')\dots f_{\cap}(\alayer_{\oddlength},\alayer_{\oddlength}')$ is a $(\alphabet_1\cup \alphabet_2,\width_1\cdot \width_2)$-ODD such that
$$\oddlang{f_{\cap}(\aodd \otimes \aodd')} = \oddlang{\aodd} \cap \oddlang{\aodd'}.$$ 
\item $f_{\otimes}(\aodd \otimes \aodd') \defeq f_{\otimes}(\alayer_1,\alayer_1')f_{\otimes}(\alayer_2,\alayer_2')\dots f_{\otimes}(\alayer_{\oddlength},\alayer_{\oddlength}')$ is a $(\alphabet_1\times \alphabet_2,\width_1\cdot \width_2)$-ODD such that
$$\oddlang{f_{\otimes}(\aodd \otimes \aodd')} = \oddlang{\aodd}\otimes \oddlang{\aodd'}.$$
\item $f_{g}(\aodd) \defeq f_{g}(\alayer_1)f_{g}(\alayer_2)\dots f_{g}(\alayer_{\oddlength})$ is a 
$(\alphabet_2,\width_1)$-ODD such that
$$\oddlang{f_{g}(\aodd)} = g(\oddlang{\aodd}).$$
\item $f_{g^{-1}}(\aodd') \defeq f_{g^{-1}}(\alayer_1')f_{g^{-1}}(\alayer_2')\dots f_{g^{-1}}(\alayer_{\oddlength}')$ is a 
$(\alphabet_1,\width_2)$-ODD such that
$$\oddlang{f_{g^{-1}}(\aodd')} = g^{-1}(\oddlang{\aodd}).$$
\item $f_{\neg}(\aodd) \defeq f_{\neg}(\alayer_1'')f_{\neg}(\alayer_2'')\dots f_{\neg}(\alayer_{\oddlength}'')$ is 
a deterministic, complete $(\alphabet_1,\width)$-ODD such that
$$\oddlang{f_{\neg}(\aodd)} = \alphabet^{k} \backslash \oddlang{\aodd}.$$ 
\end{enumerate}
\end{lemma}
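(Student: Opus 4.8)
The plan is to realize each of the six maps as a \emph{layer-local} construction: a function whose value on a single layer (or on an aligned pair of layers $(\alayer_i,\alayer_i')$) depends only on the frontiers, transitions, flags, and initial/final sets of its argument(s). Layer-locality is exactly what makes applying the map coordinatewise along an ODD well defined. The binary maps $f_{\cup}$, $f_{\cap}$ and $f_{\otimes}$ are instances of the classical product and disjoint-union constructions carried out on a single slice: for $f_{\cap}$ and $f_{\otimes}$ the new frontiers are the Cartesian products of the old frontiers, whereas for $f_{\cup}$ they are disjoint unions. The unary maps $f_{g}$ and $f_{g^{-1}}$ are pure relabelings that leave the frontiers and state structure untouched and only rewrite transition labels (replacing $\asymbol$ by $g(\asymbol)$, resp.\ each $\bsymbol$ by every $\asymbol\in g^{-1}(\bsymbol)$), which is why their widths stay $\width_1$ and $\width_2$. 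Finally $f_{\neg}$ acts as the identity on frontiers and transitions and merely complements the final-state set in the last layer.

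First I would pin down, once and for all, fixed injective encodings $\dbset{\width_1}\cartesianproduct\dbset{\width_2}\hookrightarrow\dbset{\width_1\cdot\width_2}$ (say $(a,b)\mapsto a\cdot\width_2+b$) and $\dbset{\width_1}\sqcup\dbset{\width_2}\hookrightarrow\dbset{\width_1+\width_2}$, and use them to turn product and disjoint-union frontiers into genuine subsets of $\dbset{\cdot}$. Using a \emph{single} global encoding is the crux of the structural argument: since the right frontier of $\alayer_i$ equals the left frontier of $\alayer_{i+1}$ by Condition~\ref{condition:OneODD}, the encoded right frontier of the $i$-th output layer coincides with the encoded left frontier of the $(i+1)$-th output layer, so Condition~\ref{condition:OneODD} is preserved. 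Conditions~\ref{condition:TwoODD} and~\ref{condition:ThreeODD} are preserved because each map copies the flags $\layerinitialflag,\layerfinalflag$ verbatim from its argument, and an input ODD carries the initial flag only on its first layer and the final flag only on its last. Hence each coordinatewise application yields a syntactically valid ODD of the claimed alphabet and width.

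Then I would verify the six language identities through the standard correspondence between accepting sequences. For $f_{\otimes}$ an accepting run on $(\asymbol_1,\bsymbol_1)\cdots(\asymbol_{\oddlength},\bsymbol_{\oddlength})$ is precisely a pair consisting of an accepting run of $\aodd$ on $\asymbol_1\cdots\asymbol_{\oddlength}$ and an accepting run of $\aodd'$ on $\bsymbol_1\cdots\bsymbol_{\oddlength}$, because a product transition exists iff both component transitions do and the product initial/final sets are $\layerinitialstates\cartesianproduct\layerinitialstates'$ and $\layerfinalstates\cartesianproduct\layerfinalstates'$; for $f_{\cap}$ the same holds but a product transition on $\asymbol$ requires $\asymbol$ to label transitions in both layers, so only common symbols survive, yielding $\oddlang{\aodd}\cap\oddlang{\aodd'}$. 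For $f_{\cup}$ every accepting run lives entirely inside one of the two disjoint copies. For $f_{g}$ and $f_{g^{-1}}$ the relabeling puts the valid sequences of the output in bijection with those of the input up to renaming symbols along $g$, giving $g(\oddlang{\aodd})$ and $g^{-1}(\oddlang{\aodd'})$ respectively.

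The step I expect to be delicate is $f_{\neg}$, and it is exactly here that the hypothesis $\aodd''\in\cdodddefiningsetstar{\alphabet_1}{\width_1}$ is indispensable. Because a deterministic, complete ODD has \emph{exactly one} valid sequence for every string in $\alphabet_1^{\oddlength}$, replacing $\layerfinalstates(\alayer_{\oddlength}'')$ by its complement inside $\layerrightfrontier(\alayer_{\oddlength}'')$ flips the acceptance status of every string simultaneously, producing exactly $\alphabet_1^{\oddlength}\backslash\oddlang{\aodd''}$; both determinism and completeness are needed, since missing or branching runs would make the flip over- or under-count accepted strings. I would also check that complementing only the last layer's final set preserves determinism and completeness, so that $f_{\neg}$ indeed maps $\cdlayeralphabet{\alphabet_1}{\width_1}$ into itself, closing the last case.
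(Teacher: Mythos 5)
The paper itself contains no proof of this lemma: it is imported verbatim as Lemma~2 of \cite{deOliveiraOliveira2020symbolic}, so there is no in-paper argument to compare against. Your proposal is correct and is precisely the standard construction that the cited source relies on — layerwise product (for $f_{\cap}$, $f_{\otimes}$) and disjoint-union (for $f_{\cup}$) constructions made layer-local by fixing one global injective encoding of product/disjoint-union state sets into $\dbset{\width_1\cdot\width_2}$ resp.\ $\dbset{\width_1+\width_2}$, pure label rewriting for $f_{g}$ and $f_{g^{-1}}$, and complementation of the final-state set guarded by the final flag for $f_{\neg}$ — and the two points you single out as delicate (frontier compatibility across consecutive output layers, and the fact that determinism plus completeness make final-state complementation flip acceptance exactly) are indeed the only places where anything needs to be checked.
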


Lemma \ref{lemma:SimulationLemma} immediately implies implies that the collection of regular-decisional
classes of languages is effectively closed under several {\em pointwise} operations, as stated in the
next corollary.

\begin{corollary}
\label{corollary:PointwiseOperations}
Let $\alphabet_1$ and $\alphabet_2$ be alphabets, $\width_1,\width_2\in \pN$,
$g:\alphabet_1\rightarrow \alphabet_2$ be a map from $\alphabet_1$ to $\alphabet_2$, 
$\finiteautomaton$ be a $(\alphabet_1,\width_1)$-SOFA, and $\finiteautomaton'$ be a 
$(\alphabet_2,\width_2)$-SOFA. 
\begin{enumerate}
\item {\bf Pointwise union.} There is a SOFA 
$\finiteautomaton\dot{\cup}\finiteautomaton'$ such that 
$$\automatonlanghigher{\finiteautomaton\dot{\cup}\finiteautomaton'}{2} =
\{\oddlang{\aodd}\cup \oddlang{\aodd'}\;:\; \aodd \in \automatonlang{\finiteautomaton},\;\odd'\in \automatonlang{\finiteautomaton'},
\;\lengthfunction{\aodd}=\lengthfunction{\aodd'}\}.$$
\item {\bf Pointwise intersection.} There is a SOFA $\finiteautomaton\dot{\cap}\finiteautomaton'$, 
$$\automatonlanghigher{\finiteautomaton\dot{\cap}\finiteautomaton'}{2} = 
\{\oddlang{\aodd}\cap \oddlang{\aodd'} \;:\; \aodd\in \automatonlang{\finiteautomaton},\; \aodd'\in \automatonlang{\finiteautomaton'}, 
\lengthfunction{\aodd}=\lengthfunction{\aodd'}\}.$$
\item {\bf Pointwise tensor product.} There is a SOFA $\finiteautomaton\dot{\otimes}\finiteautomaton'$, 
$$\automatonlanghigher{\finiteautomaton\dot{\otimes}\finiteautomaton'}{2} =
\{\oddlang{\aodd}\otimes \oddlang{\aodd'}\;:\; \aodd \in \automatonlang{\finiteautomaton},\; 
\aodd' \in \automatonlang{\finiteautomaton'}, \lengthfunction{\aodd}=\lengthfunction{\aodd'}\}.$$
\item {\bf Pointwise map.} There is a SOFA $\dot{g}(\finiteautomaton)$ such that 
$$\automatonlanghigher{\dot{g}(\finiteautomaton)}{2} = \{g(\oddlang{\aodd})\;:\; \aodd\in \automatonlang{\finiteautomaton}\}.$$ 
\item {\bf Pointwise inverse map:} There is a SOFA $\dot{g}^{-1}(\finiteautomaton')$ such that 
$$\automatonlanghigher{\dot{g}^{-1}(\finiteautomaton')}{2} = \{g^{-1}(\oddlang{\aodd})\;:\; \aodd\in \automatonlang{\finiteautomaton'}\}.$$ 
\item {\bf Pointwise negation:} There is a SOFA $\dot{\neg} \finiteautomaton$ such that 
$$\automatonlanghigher{\dot{\neg}\finiteautomaton}{2} = \{ (\oddlength,\alphabet^{\oddlength})\backslash \oddlang{\aodd}\;:\;
\aodd\in \automatonlang{\finiteautomaton}, \lengthfunction{\aodd}=\oddlength\}.$$ 
\end{enumerate}
\end{corollary}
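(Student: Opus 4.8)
The plan is to derive every item of the corollary by combining two ingredients: the layerwise maps $f_{\cup},f_{\cap},f_{\otimes},f_{g},f_{g^{-1}},f_{\neg}$ furnished by Lemma~\ref{lemma:SimulationLemma}, and a standard product construction at the level of finite automata that synchronizes two input SOFAs by length. The key observation driving the whole argument is that each of these maps acts on a single layer (or on a pair of layers), so applying it symbol-by-symbol along the transitions of a finite automaton is a letter-to-letter relabeling and therefore preserves regularity while commuting with the reading of any string. Thus each pointwise operation will be implemented by relabeling the transitions of a suitable finite automaton, and the correctness of the resulting second language will follow immediately from the corresponding clause of Lemma~\ref{lemma:SimulationLemma}.

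I would first treat the binary operations, taking pointwise union as the representative case. Given $\finiteautomaton$ over $\layeralphabet{\alphabet_1}{\width_1}$ and $\finiteautomaton'$ over $\layeralphabet{\alphabet_2}{\width_2}$, I would build the product automaton over the product alphabet $\layeralphabet{\alphabet_1}{\width_1}\cartesianproduct\layeralphabet{\alphabet_2}{\width_2}$, with state set $\automatonstates(\finiteautomaton)\cartesianproduct\automatonstates(\finiteautomaton')$ and a transition $\tuple{\tuple{\astate,\astate'},\tuple{\alayer,\blayer},\tuple{\bstate,\bstate'}}$ precisely when $\tuple{\astate,\alayer,\bstate}\in\automatontransitions(\finiteautomaton)$ and $\tuple{\astate',\blayer,\bstate'}\in\automatontransitions(\finiteautomaton')$. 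Since a single string over this product alphabet forces both coordinates to share the same length, this automaton accepts exactly the tensor products $\aodd\tensorproduct\aodd'$ with $\aodd\in\automatonlang{\finiteautomaton}$, $\aodd'\in\automatonlang{\finiteautomaton'}$ and $\lengthfunction{\aodd}=\lengthfunction{\aodd'}$, so the length-synchronization in the statement is automatic. I would then relabel each transition $\tuple{\tuple{\astate,\astate'},\tuple{\alayer,\blayer},\tuple{\bstate,\bstate'}}$ to $\tuple{\tuple{\astate,\astate'},f_{\cup}(\alayer,\blayer),\tuple{\bstate,\bstate'}}$, producing a finite automaton $\finiteautomaton\dot{\cup}\finiteautomaton'$ over $\layeralphabet{\alphabet_1\cup\alphabet_2}{\width_1+\width_2}$ whose language is $\set{f_{\cup}(\aodd\tensorproduct\aodd')\setst \aodd\in\automatonlang{\finiteautomaton},\,\aodd'\in\automatonlang{\finiteautomaton'},\,\lengthfunction{\aodd}=\lengthfunction{\aodd'}}$. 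By Lemma~\ref{lemma:SimulationLemma}, each such string is a valid $(\alphabet_1\cup\alphabet_2,\width_1+\width_2)$-ODD with language $\oddlang{\aodd}\cup\oddlang{\aodd'}$, so $\finiteautomaton\dot{\cup}\finiteautomaton'$ is a SOFA realizing the desired second language. Pointwise intersection and pointwise tensor product are obtained identically, substituting $f_{\cap}$ (target width $\width_1\cdot\width_2$) and $f_{\otimes}$ (target alphabet $\alphabet_1\cartesianproduct\alphabet_2$, target width $\width_1\cdot\width_2$), respectively.

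For the unary operations no product is required: I would relabel each transition $\tuple{\astate,\alayer,\bstate}$ of the single input SOFA through the appropriate layer map. Replacing $\alayer$ by $f_{g}(\alayer)$ yields $\dot{g}(\finiteautomaton)$ with first language $\set{f_{g}(\aodd)\setst \aodd\in\automatonlang{\finiteautomaton}}$, whence by Lemma~\ref{lemma:SimulationLemma} its second language is $\set{g(\oddlang{\aodd})\setst \aodd\in\automatonlang{\finiteautomaton}}$; pointwise inverse map $\dot{g}^{-1}(\finiteautomaton')$ is symmetric via $f_{g^{-1}}$. Pointwise negation is the only case needing preprocessing, because $f_{\neg}$ is guaranteed to compute the complement only on deterministic, complete ODDs. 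Here I would first replace $\finiteautomaton$ by its second canonical form $\canonizationfunction_2(\finiteautomaton)$ from Theorem~\ref{theorem:CanonizationSecondOrder}, whose first language $\set{\canonizationfunction(\aodd)\setst \aodd\in\automatonlang{\finiteautomaton}}$ consists entirely of deterministic, complete ODDs with $\oddlang{\canonizationfunction(\aodd)}=\oddlang{\aodd}$, and only then relabel layerwise by $f_{\neg}$; the resulting SOFA $\dot{\neg}\finiteautomaton$ has second language $\set{(\oddlength,\alphabet^{\oddlength})\backslash\oddlang{\aodd}\setst \aodd\in\automatonlang{\finiteautomaton},\,\lengthfunction{\aodd}=\oddlength}$.

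The remaining verifications are routine: each relabeled automaton accepts exactly the claimed morphic image of the original language, since relabeling transitions symbol-by-symbol commutes with reading a string, and translating from the layer level to the language level is precisely what the corresponding clause of Lemma~\ref{lemma:SimulationLemma} asserts along each accepted ODD. The only genuine subtlety I anticipate is the negation case, where one must ensure the input to $f_{\neg}$ is deterministic and complete; this is exactly why the canonization of Theorem~\ref{theorem:CanonizationSecondOrder} must be invoked as a preprocessing step rather than applying $f_{\neg}$ to an arbitrary SOFA.
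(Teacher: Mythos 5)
Your proposal is correct and follows essentially the same route as the paper: a product automaton over $\layeralphabet{\alphabet_1}{\width_1}\cartesianproduct\layeralphabet{\alphabet_2}{\width_2}$ for the binary operations, transition-by-transition relabeling through the maps $f_{\cup},f_{\cap},f_{\otimes},f_{g},f_{g^{-1}},f_{\neg}$ of Lemma~\ref{lemma:SimulationLemma}, and the observation that letter-to-letter relabeling commutes with acceptance so the second language is read off directly from the corresponding clause of the lemma.

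The one place where you genuinely diverge is pointwise negation, and there your treatment is actually more careful than the paper's. The paper relabels each transition $(\astate,\alayer,\astate')$ of $\finiteautomaton$ by $(\astate,\neg\alayer,\astate')$ without further ado, even though $f_{\neg}$ has domain $\cdlayeralphabet{\alphabet_1}{\width_1}$ and its complementation guarantee holds only for deterministic, complete ODDs; as written, this step is undefined (or incorrect) when $\automatonlang{\finiteautomaton}$ contains nondeterministic or incomplete ODDs. Your insertion of the second canonical form $\canonizationfunction_2(\finiteautomaton)$ of Theorem~\ref{theorem:CanonizationSecondOrder} as a preprocessing step closes exactly this gap, at the price of the (doubly exponential in $\width$) canonization cost; a cheaper alternative achieving the same effect would be to preprocess with the determinization transduction $\determinizationtransduction{\alphabet_1}{\width_1}$ of Lemma~\ref{lemma:DeterminizationTransduction}, which already makes every ODD in the language deterministic and complete while preserving its language. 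Either way, your argument is sound and, on this point, strictly tighter than the published one.
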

\begin{proof}
The proof follows directly from the fact that regular languages are closed under maps, together with
Lemma \ref{lemma:SimulationLemma}. 
The SOFAs $\dot{g}(\finiteautomaton)$, $\dot{g}^{-1}(\finiteautomaton)$, and $\dot{\neg}\finiteautomaton$
are obtained from $\finiteautomaton$ by replacing each transition $(\astate,\alayer,\astate')$ with 
the transitions $(\astate,g(\alayer),\astate')$, $(\astate,g^{-1}(\alayer),\astate')$, and 
$(\astate,\neg\alayer,\astate')$ respectively. For the binary operations, we first compute a
finite automaton $\finiteautomaton\otimes\finiteautomaton'$ over the alphabet 
$\layeralphabet{\alphabet_1}{\width_1}\times \layeralphabet{\alphabet_2}{\width_2}$ that accepts 
a string $\aodd\otimes \aodd'  = (\alayer_1,\alayer_1')(\alayer_2,\alayer_2')\dots (\alayer_{\oddlength},\alayer_{\oddlength}')$
if and only if $\aodd'=\alayer_1\alayer_2\dots\alayer_{\oddlength}$ is accepted by $\finiteautomaton$ and 
$\aodd' = \alayer_1'\alayer_2'\dots\alayer_{\oddlength}$ is accepted by $\finiteautomaton'$. Subsequently 
we define $\finiteautomaton\dot{\cup}\finiteautomaton'$, $\finiteautomaton\dot{\cap}\finiteautomaton'$ and 
$\finiteautomaton{\dot\otimes}\finiteautomaton'$ by replacing each transition
$(\astate,(\alayer,\alayer'),\astate')$ of $\finiteautomaton\otimes \finiteautomaton'$ with the 
transitions $(\astate,f_{\cup}(\alayer,\alayer'),\astate')$, $(\astate,f_{\cap}(\alayer,\alayer'),\astate')$, 
and $(\astate,f_{\otimes}(\alayer,\alayer'),\astate')$ respectively. 

We exemplify how Lemma \ref{lemma:SimulationLemma} can be used to complete the proof with the first item. The 
others follow an analogous argument. From the construction of $\finiteautomaton\dot{\cup}\finiteautomaton'$, 
we have that $\aodd\in \automatonlang{\finiteautomaton}$ and $\aodd'\in \automatonlang{\finiteautomaton'}$
are such that $\lengthfunction{\aodd}=\lengthfunction{\aodd'}$ if 
and only if $f_{\cup}(\aodd\otimes \aodd')$ belongs to $\automatonlang{\finiteautomaton\dot{\cup}\finiteautomaton'}$.
Since $\oddlang{f_{\cup}(\aodd\otimes \aodd')} = \oddlang{\aodd}\cup \oddlang{\aodd'}$, we have that 
$\automatonlanghigher{\finiteautomaton\dot{\cup}\finiteautomaton'}{2} = \{\oddlang{\aodd}\cup\oddlang{\aodd'}\;:\; \aodd\in 
\automatonlang{\finiteautomaton},\; \aodd'\in \automatonlang{\finiteautomaton'},\;\lengthfunction{\aodd}=\lengthfunction{\aodd'}\}$. 
\end{proof}

\section{Algorithmic Applications}\label{section:Applications}

In this section, we show that Theorems \ref{theorem:CanonizationTransductionTheorem} and Theorem \ref{theorem:CanonizationSecondOrder} can be used to 
provide novel algorithmic applications in the realm of the theory of ODDs of bounded width, and therefore also in the realm of the theory of ordered 
binary decision diagrams (OBDDs) of bounded width. In Subsection \ref{subsection:ApplicationMinimization} we will show that several 
minimization problems for deterministic and nondeterministic ODDs can be solved in fixed parameter tractable time when parameterized by width. Subsequently, in Subsection \ref{subsection:ApplicationCounting}
we will show that the problem of counting the number of distinct functions computable by some ODD of length $\oddlength$ and width $\width$ can be 
solved in time $h(|\alphabet|,\width)\cdot k^{O(1)}$ for a suitable $h:\N\times\N\rightarrow \N$. 

\subsection{Width and Size Minimization of Nondeterministic ODDs}
\label{subsection:ApplicationMinimization}

Models of computation comprised by ODDs of constant width have been studied in a variety of fields, such as 
symbolic computation, machine learning and property testing \cite{bollig2014width,Newman2002testing,goldreich2010testing,RonTsur2009testing}. 
In this section, we show that width minimization for ODDs is fixed-parameter tractable in the width parameter. Additionally,
the space of ODDs where the minimization will take place may be selected as the language $\automatonlang{\finiteautomaton}$ of a given 
second-order finite automaton $\finiteautomaton$. Furthermore, if such a minimum width ODD $\aodd'$ with $\oddlang{\aodd}=\oddlang{\aodd'}$ exists in $\automatonlang{\finiteautomaton}$, 
then one can furthermore impose that $\aodd'$ has minimum number of states or minimum number of transitions.

As important special cases, if we set $\finiteautomaton$ to be the finite automaton accepting the 
language $\odddefiningsetstar{\alphabet}{\width}$ the minimization occurs in the space of all (possibly nondeterministic) ODDs of width at most 
$\width$, while by setting $\finiteautomaton$ to be the finite automaton accepting the language 
$\cdodddefiningsetstar{\alphabet}{\width}$, the minimization takes place over the space of deterministic, 
complete ODDs of width at most $\width$.

\newcommand{\allodds}[2]{X(#1,#2)}

\begin{lemma}
\label{lemma:AllODDs}
Let $\alphabet$ be an alphabet, $\width\in \pN$, $\aodd$ be a $(\alphabet,\width)$-ODD, and $\finiteautomaton$ be a $(\alphabet,\width)$-SOFA.
One can construct in time $2^{O(|\alphabet|\cdot \width \cdot 2^{\width})}\cdot \nstates(\finiteautomaton)\cdot \oddlength$ 
a $(\alphabet,\width)$-SOFA $\allodds{\finiteautomaton}{\aodd}$ with 
$2^{O(|\alphabet|\cdot \width \cdot 2^{\width})}\cdot \nstates(\finiteautomaton)\cdot \oddlength$ states  such that
$\automatonlang{\allodds{\finiteautomaton}{\aodd}} =
\{\aodd'\in \automatonlang{\finiteautomaton}\;:\; \oddlang{\aodd} = \oddlang{\aodd'}\}$. 
\end{lemma}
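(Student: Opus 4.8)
The plan is to reduce the semantic condition $\oddlang{\aodd}=\oddlang{\aodd'}$ to the \emph{syntactic} condition $\canonizationfunction(\aodd)=\canonizationfunction(\aodd')$, exploiting the fact recorded after Theorem~\ref{theorem:MinimizationODDs} that two ODDs have the same language if and only if they have the same canonical form. Writing $\oddlength\defeq\lengthfunction{\aodd}$ and $c\defeq\canonizationfunction(\aodd)$, the target set becomes $\set{\aodd'\in\automatonlang{\finiteautomaton}\setst\canonizationfunction(\aodd')=c}$. The fixed string $c$ can be computed directly from $\aodd$ by running the ODD canonization procedure of Theorem~\ref{theorem:MinimizationODDs}, and it is a string of length $\oddlength$ over the alphabet $\cdlayeralphabet{\alphabet}{2^{\width}}$. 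Since $\canonizationfunction$ preserves length, every $\aodd'$ with $\canonizationfunction(\aodd')=c$ automatically has length $\oddlength$, so the matching-length requirement is handled for free and I need not impose it separately.

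First I would invoke Theorem~\ref{theorem:CanonizationTransductionTheorem}.(1) to obtain a finite automaton $A$ with $2^{\BigOh(|\alphabet|\cdot\width\cdot 2^{\width})}$ states recognizing $\transductionlang(\canonizationtransduction{\alphabet}{\width})=\set{\aodd'\tensorproduct\canonizationfunction(\aodd')\setst\aodd'\in\odddefiningsetstar{\alphabet}{\width}}$. Next I would build a finite automaton $B_c$ with $\BigOh(\oddlength)$ states over $\layeralphabet{\alphabet}{\width}\cartesianproduct\cdlayeralphabet{\alphabet}{2^{\width}}$ recognizing $\set{\bstring\tensorproduct c\setst\bstring\in\layeralphabet{\alphabet}{\width}^{\oddlength}}$, i.e. a simple path that fixes the image symbol at each of the $\oddlength$ positions to the corresponding symbol of $c$ while leaving the domain symbol free. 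Intersecting $A$ with $B_c$ gives an automaton for $\set{\aodd'\tensorproduct c\setst\canonizationfunction(\aodd')=c}$, and projecting onto the first (domain) coordinate, exactly as the automaton $\finiteautomaton_{\domain(\cdot)}$ is formed in Proposition~\ref{proposition:PropertiesTransductions}.(\ref{item:automaton_image_domain}), yields an automaton $A_c$ over $\layeralphabet{\alphabet}{\width}$ with $2^{\BigOh(|\alphabet|\cdot\width\cdot 2^{\width})}\cdot\oddlength$ states and language $\set{\aodd'\in\odddefiningsetstar{\alphabet}{\width}\setst\canonizationfunction(\aodd')=c}$.

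Finally I would set $\allodds{\finiteautomaton}{\aodd}$ to be the standard product automaton recognizing $\automatonlang{A_c}\cap\automatonlang{\finiteautomaton}$. By the equivalence above, its first language is precisely $\set{\aodd'\in\automatonlang{\finiteautomaton}\setst\oddlang{\aodd}=\oddlang{\aodd'}}$, and since this is a subset of $\automatonlang{\finiteautomaton}\subseteq\odddefiningsetstar{\alphabet}{\width}$, every accepted string is a genuine $(\alphabet,\width)$-ODD, so $\allodds{\finiteautomaton}{\aodd}$ is indeed a $(\alphabet,\width)$-SOFA. The state count multiplies to $2^{\BigOh(|\alphabet|\cdot\width\cdot 2^{\width})}\cdot\nstates(\finiteautomaton)\cdot\oddlength$, and the running time is dominated by forming these products. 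I do not expect a real obstacle: the argument is a routine composition of the canonization transduction of Theorem~\ref{theorem:CanonizationTransductionTheorem} with standard automata-product and projection operations. The only point needing slight care is the bookkeeping of the three multiplicative contributions (the automaton $\finiteautomaton$, the transduction automaton $A$, and the length-$\oddlength$ fiber selector $B_c$), together with the observation that the alphabet size $|\layeralphabet{\alphabet}{\width}|=2^{\BigOh(|\alphabet|\cdot\width^2)}$ is absorbed into the factor $2^{\BigOh(|\alphabet|\cdot\width\cdot 2^{\width})}$ since $\width^2\leq\width\cdot 2^{\width}$.
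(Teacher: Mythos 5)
Your proposal is correct and follows essentially the same route as the paper's proof: both reduce the semantic condition $\oddlang{\aodd'}=\oddlang{\aodd}$ to the syntactic condition $\canonizationfunction(\aodd')=\canonizationfunction(\aodd)$, invoke Theorem~\ref{theorem:CanonizationTransductionTheorem} for the canonization transduction, combine it with a length-$\oddlength$ path automaton pinned to $\canonizationfunction(\aodd)$, project to the domain coordinate, and intersect with $\finiteautomaton$. The only cosmetic difference is that the paper packages your ``intersect with $B_c$ then project'' step as composing $\canonizationtransduction{\alphabet}{\width}$ with the singleton transduction $\set{(\canonizationfunction(\aodd),\aodd)}$ and taking its domain via Proposition~\ref{proposition:PropertiesTransductions}, which yields the identical product construction.
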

\begin{proof}
Let $\finiteautomaton$ be a $(\alphabet,\width)$-SOFA, and $\aodd \in \odddefiningset{\alphabet}{\width}{\oddlength}$. 
Consider the $(\cdlayeralphabet{\alphabet}{2^{\width}},\layeralphabet{\alphabet}{\width})$-trans\-duc\-tion
$\atransduction_{\aodd} = \{(\canonizationfunction(\aodd),\aodd)\}$. Note that $\atransduction_{\aodd}$ is a singleton, and 
therefore, it is $(\oddlength+1)$-regular, since the language 
$\lang(\atransduction_{\aodd}) = \{\canonizationfunction(\aodd)\otimes \aodd\}$ is accepted by a finite automaton $\hat{\finiteautomaton}$
with $(\oddlength+1)$ states $\{\astate_0,\dots,\astate_{\oddlength}\}$. Here, $\astate_0$ is the unique initial state and 
$\astate_{\oddlength}$ is the unique final state. Indeed, let $\canonizationfunction(\aodd) = \alayer_1'\alayer_2'\dots\alayer_{\oddlength}'$. Note that 
this canonical form can be constructed in time $2^{O(\width)}\cdot |\alphabet|\cdot \oddlength$ by applying the standard minimization algorithm for
a single ODD (Theorem \ref{theorem:MinimizationODDs}). Then, for each $i\in \{0,\dots,\oddlength-1\}$, the automaton has a unique transition
leaving $\astate_i$, namely, the transition $(\astate_i, (\alayer_i',\alayer_i) ,\astate_{i+1})$.
It should be clear that $\canonizationfunction(\aodd)\otimes \aodd = (\alayer_1',\alayer_1)(\alayer_2',\alayer_2)\dots (\alayer_{\oddlength}',\alayer_{\oddlength}).$
is the only string accepted by $\hat{\finiteautomaton}$. 

Now consider the transduction $\canonizationtransduction{\alphabet}{\width}$. Since this transduction is 
$2^{O(|\alphabet|\cdot \width\cdot 2^{\width})}$-regular (Theorem \ref{theorem:CanonizationTransductionTheorem}),
it follows from Proposition \ref{proposition:PropertiesTransductions}.(\ref{item:automaton_composition}) that the transduction
$$\canonizationtransduction{\alphabet}{\width}\circ \atransduction_{\aodd} = \{(\aodd',\aodd)\;:\; 
\canonizationfunction(\aodd') = \canonizationfunction(\aodd) \} = \{(\aodd',\aodd)\;:\; \oddlang{\aodd'}=\oddlang{\aodd}\}$$
is $2^{O(|\alphabet|\cdot \width \cdot 2^{\width})}$-regular, and 
therefore, the language 
$\domain(\canonizationtransduction{\alphabet}{\width}\circ \atransduction_{\aodd}) = 
\{\aodd'\;:\; \oddlang{\aodd'} = \oddlang{\aodd}\}$ is 
$2^{O(|\alphabet|\cdot \width \cdot 2^{\width})}\cdot \oddlength$-regular. 
Additionally, an automaton $\finiteautomaton'$ accepting 
$\domain(\canonizationtransduction{\alphabet}{\width}\circ \atransduction_{\aodd})$ can be constructed 
in time $2^{O(|\alphabet|\cdot \width \cdot 2^{\width})}\cdot \oddlength$. 
This implies that one can construct in time
$2^{O(|\alphabet|\cdot \width \cdot 2^{\width})}\cdot \numberstates{\finiteautomaton}\cdot \oddlength$
a finite automaton $\allodds{\finiteautomaton}{\aodd}$ 
with $2^{O(|\alphabet|\cdot \width \cdot 2^{\width})}\cdot \numberstates{\finiteautomaton}\cdot \oddlength$ states
accepting the language 
$\domain(\canonizationtransduction{\alphabet}{\width}\circ \atransduction_{\aodd}) \cap \automatonlang{\finiteautomaton_{\aodd}}
= \{\aodd'\in \automatonlang{\finiteautomaton}\;:\; \oddlang{\aodd} = \oddlang{\aodd'}\}$.
\end{proof}

Let $\oplus:\dbset{a+1}\times \dbset{a+1} \rightarrow \dbset{a+1}$ be a binary operation for some $a\in \N$ and $\omega:\layeralphabet{\alphabet}{\width}\rightarrow \dbset{a+1}$ be a
weighting function. Then the weight of an ODD $\aodd\in \odddefiningset{\alphabet}{\width}{\oddlength}$ is defined as $\omega_{\oplus}(\aodd) = \bigoplus_{i=1}^{\oddlength}\omega(\alayer_i)$, where
the sum is performed from left to right.

\newcommand{\finiteautomatonJ}{\mathcal{J}}
\begin{proposition}
\label{proposition:Counting}
Let $\finiteautomatonJ$ be a $(\alphabet,\width)$-SOFA with non-empty language, $\omega:\layeralphabet{\alphabet}{\width}\rightarrow \dbset{a+1}$, and 
$\oplus:\dbset{a+1}\times \dbset{a+1}\rightarrow \dbset{a+1}$. Then one can construct in time $O(|\finiteautomatonJ|\cdot a\cdot \log a)$ 
an ODD $\aodd\in \automatonlang{\finiteautomatonJ}$ such that $\omega_{\oplus}(\aodd) = \min\{\omega_{\oplus}(\aodd')\;:\;\aodd'\in \automatonlang{\finiteautomatonJ}\}$. 
\end{proposition}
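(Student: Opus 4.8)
The plan is to reduce the minimum-weight ODD problem to a plain reachability computation in an auxiliary product graph whose vertices record \emph{both} a state of $\finiteautomatonJ$ and the weight accumulated so far. Concretely, I would build a graph $G$ on the vertex set $\automatonstates(\finiteautomatonJ)\cartesianproduct\dbset{a+1}$ together with a fresh source vertex $s$, with edges that mirror the left-to-right evaluation of $\omega_{\oplus}$. For every initial state $\astate\in\automatoninitialstates(\finiteautomatonJ)$ and every transition $\tuple{\astate,\alayer,\bstate}\in\automatontransitions(\finiteautomatonJ)$ I add an edge $s\to\tuple{\bstate,\omega(\alayer)}$ labelled $\alayer$, seeding the accumulator with the weight of the first layer; and for every transition $\tuple{\astate,\alayer,\bstate}\in\automatontransitions(\finiteautomatonJ)$ and every $c\in\dbset{a+1}$ I add an edge $\tuple{\astate,c}\to\tuple{\bstate,c\oplus\omega(\alayer)}$ labelled $\alayer$. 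A vertex $\tuple{\astate,c}$ is declared \emph{accepting} when $\astate\in\automatonfinalstates(\finiteautomatonJ)$. By construction, a directed path from $s$ to an accepting vertex $\tuple{\astate,c}$ spells exactly an ODD $\aodd\in\automatonlang{\finiteautomatonJ}$ with $\omega_{\oplus}(\aodd)=c$, and conversely every ODD in the first language arises this way.

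The crucial observation --- and the point where the apparent difficulty dissolves --- is that although $\oplus$ need not be associative and $\finiteautomatonJ$ may contain cycles (so that $\automatonlang{\finiteautomatonJ}$ is infinite and the optimum cannot be found by a Dijkstra- or Bellman--Ford-style relaxation on $\finiteautomatonJ$ alone), the accumulator always lives in the \emph{finite} set $\dbset{a+1}$. Hence the running value is already part of the vertex, $G$ is finite, and the set of weights realised by ODDs in $\automatonlang{\finiteautomatonJ}$ equals exactly $\set{c\in\dbset{a+1} \setst \tuple{\astate,c}\text{ is an accepting vertex reachable from }s}$. Thus I would run a single breadth-first search from $s$, collect all reachable accepting vertices, and select one, say $\tuple{\astate,c^\star}$, whose weight coordinate $c^\star$ is minimum; since $\automatonlang{\finiteautomatonJ}\neq\emptyset$ at least one accepting vertex is reachable, so the search succeeds, the minimum is $c^\star$, and it is attained. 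The ODD itself is recovered by following breadth-first-search parent pointers back to $s$ and reading off the edge labels in order; as the breadth-first tree yields a simple path, the resulting ODD has length at most $\nstates(\finiteautomatonJ)\cdot(a+1)$.

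For the running time, $G$ has $O(\nstates(\finiteautomatonJ)\cdot a)$ vertices and $O(\ntransitions(\finiteautomatonJ)\cdot a)$ edges, which I would generate incrementally during the search. Each edge is processed once, and processing it requires one evaluation of $\omega$ and one evaluation of $\oplus$ on $\Theta(\log a)$-bit arguments (whether $\oplus$ is given by a table or as an oracle), costing $O(\log a)$ apiece; the final minimisation over the $O(\nstates(\finiteautomatonJ)\cdot a)$ accepting vertices likewise compares $\log a$-bit weights. Summing yields $O\big((\nstates(\finiteautomatonJ)+\ntransitions(\finiteautomatonJ))\cdot a\cdot\log a\big)$, and since $\abs{\finiteautomatonJ}=\nstates(\finiteautomatonJ)+\ntransitions(\finiteautomatonJ)\cdot\log\abs{\layeralphabet{\alphabet}{\width}}\geq\nstates(\finiteautomatonJ)+\ntransitions(\finiteautomatonJ)$ (as $\abs{\layeralphabet{\alphabet}{\width}}\geq 2$), this falls within the claimed bound $O(\abs{\finiteautomatonJ}\cdot a\cdot\log a)$. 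I expect the only genuine subtlety to be the bookkeeping that keeps $\oplus$ correctly left-associated --- handled automatically by storing the running accumulator in the vertex and by seeding the out-edges of $s$ with $\omega$ of the first layer rather than with a (possibly non-existent) identity element of $\oplus$.
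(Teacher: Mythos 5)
Your proposal is correct and takes essentially the same approach as the paper: both form the product of the given SOFA with the accumulator domain $\dbset{a+1}$, so that left-to-right evaluation of $\omega_{\oplus}$ becomes plain reachability in a finite graph, from which a minimum-weight accepting path is then extracted. The one divergence is your seeding of the accumulator with $\omega$ of the first layer via a fresh source vertex, where the paper instead starts from pairs $(\astate,0)$; your variant is marginally more careful, since the paper's initialization tacitly assumes $0\oplus x = x$, which holds for the instantiations of $\oplus$ used in its applications ($\max$ and $+$) but is not among the stated hypotheses.
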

\begin{proof}
Let $\finiteautomatonJ'$ be the finite automaton with 
set of states $\automatonstates(\finiteautomatonJ') = \automatonstates(\finiteautomatonJ) \times \dbset{a+1}$, initial states 
$\automatoninitialstates(\finiteautomatonJ') = \automatoninitialstates(\finiteautomatonJ) \times \{0\}$, transition relation
$$\automatontransitions(\finiteautomatonJ') = \{[(\astate,u),\alayer,(\astate',u\oplus\omega(\alayer))]\;:\; 
(\astate,u)\in \automatonstates(\finiteautomatonJ'),\; (\astate,\alayer,\astate') \in \automatontransitions(\finiteautomatonJ)\},$$
and set of final states $\automatonfinalstates(\finiteautomatonJ') = \automatonfinalstates(\finiteautomatonJ) \times \{\alpha\}$,
where
$$\alpha = \min\{u\;:\; \exists \astate\in \automatonfinalstates(\finiteautomatonJ), (\astate,u) \mbox{ is reachable from an initial state of $\finiteautomatonJ'$}\}.$$
Then, we have that
$\sequence{[(0,\astate_1,u_1),\alayer_1,(1,\astate_2,u_2)],\ldots,
[(\oddlength-1,\astate_{\oddlength-1},u_{\oddlength-1}),\alayer_{\oddlength},(\oddlength,\astate_{\oddlength},u_{\oddlength})]}$ is an accepting 
sequence of transitions in $\finiteautomatonJ'$ if and only if $\aodd = \alayer_1\dots\alayer_{\oddlength}$ is an ODD in $\automatonlang{\finiteautomatonJ}$ 
of weight $\omega_{\oplus}(\aodd) = u_{\oddlength} = \alpha$, where $\alpha$ is the minimum weight of an ODD in $\automatonlang{\finiteautomatonJ'}$.

Clearly, one can construct the automaton $\finiteautomatonJ'$ in time $O(|\finiteautomatonJ|\cdot a\log a)$. Therefore, one 
can also obtain an ODD $\aodd'\in \automatonlang{\finiteautomatonJ'}$ in the same amount of time. 
\end{proof}

By combining Lemma \ref{lemma:AllODDs} with Proposition \ref{proposition:Counting} we obtain the following theorem. 

\begin{theorem}
\label{theorem:MinimizationODDAutomaton}
Let $\aodd$ be an ODD in $\odddefiningset{\alphabet}{\width}{\oddlength}$ and let $\finiteautomaton$ be a $(\alphabet,\width)$-SOFA.
One can determine in time $2^{O(|\alphabet|\cdot \width \cdot 2^{\width})}\cdot \nstates(\finiteautomaton)\cdot \oddlength$ whether
there is an ODD $\aodd'\in \automatonlang{\finiteautomaton}$ such that $\oddlang{\aodd'} = \oddlang{\aodd}$. Suppose such an ODD exists.
\begin{enumerate}
\item One can construct in time $2^{O(|\alphabet|\cdot \width \cdot 2^{\width})}\cdot \nstates(\finiteautomaton)\cdot \oddlength$
an ODD $\aodd'\in \automatonlang{\finiteautomaton}$ of minimum width such that $\oddlang{\aodd} = \oddlang{\aodd'}$. 
\item One can construct in time $2^{O(|\alphabet|\cdot \width \cdot 2^{\width})}\cdot \nstates(\finiteautomaton)\cdot \oddlength^2\cdot \log \oddlength$
an ODD $\aodd'\in \automatonlang{\finiteautomaton}$ with minimum number of states such that $\oddlang{\aodd} = \oddlang{\aodd'}$. 
\item One can construct in time $2^{O(|\alphabet|\cdot \width \cdot 2^{\width})}\cdot \nstates(\finiteautomaton)\cdot \oddlength^2\cdot \log \oddlength$
an ODD $\aodd'\in \automatonlang{\finiteautomaton}$ with minimum number of transitions such that $\oddlang{\aodd} = \oddlang{\aodd'}$. 
\end{enumerate}
\end{theorem}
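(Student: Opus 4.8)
The plan is to reduce each of the three minimization tasks to a single minimum-weight-ODD computation over a suitably restricted SOFA, by combining Lemma~\ref{lemma:AllODDs} with Proposition~\ref{proposition:Counting}. First I would invoke Lemma~\ref{lemma:AllODDs} to build, in time $2^{\BigOh(\abs{\alphabet}\cdot \width\cdot 2^{\width})}\cdot \nstates(\finiteautomaton)\cdot \oddlength$, the SOFA $\allodds{\finiteautomaton}{\aodd}$ whose language is exactly $\set{\aodd'\in \automatonlang{\finiteautomaton}\setst \oddlang{\aodd'}=\oddlang{\aodd}}$, the set of all ODDs of $\finiteautomaton$ computing the same language as $\aodd$. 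The decision part of the statement then amounts to testing whether $\automatonlang{\allodds{\finiteautomaton}{\aodd}}$ is non-empty, a reachability question solvable in time linear in the size of $\allodds{\finiteautomaton}{\aodd}$, hence within the claimed bound $2^{\BigOh(\abs{\alphabet}\cdot \width\cdot 2^{\width})}\cdot \nstates(\finiteautomaton)\cdot \oddlength$. If this language is empty there is no desired $\aodd'$; otherwise I proceed to the optimization steps, applying Proposition~\ref{proposition:Counting} to $\finiteautomatonJ = \allodds{\finiteautomaton}{\aodd}$.

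The core idea for each item is to choose a weighting function $\omega$ and an operation $\oplus$ so that the layerwise weight $\omega_{\oplus}(\aodd')$ equals the complexity measure being minimized; Proposition~\ref{proposition:Counting} then returns a minimum-weight member of $\automatonlang{\allodds{\finiteautomaton}{\aodd}}$, which is precisely a minimizer of that measure among ODDs in $\automatonlang{\finiteautomaton}$ having the same language as $\aodd$. For minimum width (item~1) I would take $\oplus = \max$ and $\omega(\alayer) = \max\set{\abs{\layerleftfrontier(\alayer)},\abs{\layerrightfrontier(\alayer)}}$; using the ODD identity $\layerleftfrontier(\alayer_{i+1}) = \layerrightfrontier(\alayer_i)$ one checks $\omega_{\oplus}(\aodd') = \fwidth(\aodd')$, with weights in $\dbset{a+1}$ for $a = \width$. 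For minimum number of states (item~2) I would take $\oplus$ to be integer addition and $\omega(\alayer) = \abs{\layerrightfrontier(\alayer)}$, augmented by $\abs{\layerleftfrontier(\alayer)}$ on the unique layer with $\layerinitialflag(\alayer)=\true$, so that $\omega_{\oplus}(\aodd') = \numberstates{\aodd'}$; here $a = \BigOh(\width\cdot \oddlength)$. For minimum number of transitions (item~3) I would again take $\oplus$ to be addition and $\omega(\alayer)=\abs{\layertransitions(\alayer)}$ (with the analogous correction on the initial layer to match the definition of $\numbertransitions{\cdot}$), so that $\omega_{\oplus}(\aodd') = \numbertransitions{\aodd'}$, with $a = \BigOh(\abs{\alphabet}\cdot\width^2\cdot\oddlength)$.

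The remaining work is the running-time bookkeeping. Since $\allodds{\finiteautomaton}{\aodd}$ is produced in time $2^{\BigOh(\abs{\alphabet}\cdot \width\cdot 2^{\width})}\cdot \nstates(\finiteautomaton)\cdot \oddlength$, its size $\abs{\allodds{\finiteautomaton}{\aodd}}$ is bounded by the same quantity, and Proposition~\ref{proposition:Counting} runs in time $\BigOh(\abs{\allodds{\finiteautomaton}{\aodd}}\cdot a\log a)$. For item~1, $a=\width$ gives an extra factor $\width\log\width$, which is absorbed into $2^{\BigOh(\abs{\alphabet}\cdot\width\cdot 2^{\width})}$, yielding total time $2^{\BigOh(\abs{\alphabet}\cdot \width\cdot 2^{\width})}\cdot \nstates(\finiteautomaton)\cdot \oddlength$. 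For items~2 and~3, $a\log a$ contributes a factor $\BigOh(\width\oddlength\log(\width\oddlength))$, respectively $\BigOh(\abs{\alphabet}\width^2\oddlength\log(\abs{\alphabet}\width^2\oddlength))$; in each case the part polynomial in $(\abs{\alphabet},\width)$ is absorbed into the exponential factor while one factor of $\oddlength$ and one factor of $\log\oddlength$ survive, so that multiplying by the factor $\oddlength$ already present in $\abs{\allodds{\finiteautomaton}{\aodd}}$ gives the claimed bound $2^{\BigOh(\abs{\alphabet}\cdot \width\cdot 2^{\width})}\cdot \nstates(\finiteautomaton)\cdot \oddlength^2\cdot\log\oddlength$. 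I expect the only genuinely delicate points to be the boundary corrections that make $\omega_{\oplus}$ reproduce $\numberstates{\cdot}$ and $\numbertransitions{\cdot}$ exactly (their definitions single out the first layer), and the verification that each value of $a$ is correct, so that no stray factor of $\oddlength$ or $\log\oddlength$ is lost or gained during the absorption.
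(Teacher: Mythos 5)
Your proposal is correct and follows essentially the same route as the paper: invoke Lemma~\ref{lemma:AllODDs} to build $\allodds{\finiteautomaton}{\aodd}$, decide existence by non-emptiness, and then apply Proposition~\ref{proposition:Counting} with $\oplus=\max$ and layer width for item~1, and integer addition with frontier-size and transition-count weights for items~2 and~3, with the same time bookkeeping. The only differences are immaterial: the paper realizes $\numberstates{\cdot}$ via $\omega(\alayer)=\abs{\layerleftfrontier(\alayer)}+\layerfinalflag(\alayer)\cdot\abs{\layerrightfrontier(\alayer)}$ (a final-layer correction) where you use the equivalent initial-layer correction $\abs{\layerrightfrontier(\alayer)}+\layerinitialflag(\alayer)\cdot\abs{\layerleftfrontier(\alayer)}$.
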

\begin{proof}
$ $
\begin{enumerate}
\item In Proposition \ref{proposition:Counting}, set $\finiteautomatonJ = \allodds{\finiteautomaton}{\aodd}$,
$a=\width$, $\oplus(x,y) = \max\{x,y\}$, and for each $\alayer\in \layeralphabet{\alphabet}{\width}$ set $\omega(\alayer)=\fwidth(\alayer)$.
Then for each $\aodd'\in \odddefiningset{\alphabet}{\width}{\oddlength}$, $\omega_{\oplus}(\aodd) = \fwidth(\aodd)$. Therefore, 
by Proposition \ref{proposition:Counting}, one can construct in time $2^{O(|\alphabet|\cdot \width \cdot 2^{\width})}\cdot \nstates(\finiteautomaton)\cdot \oddlength$ 
an ODD $\aodd'\in \automatonlang{\finiteautomaton}$ of minimum width such that $\oddlang{\aodd'} = \oddlang{\aodd}$. 
\item In Proposition \ref{proposition:Counting}, set $\finiteautomatonJ = \allodds{\finiteautomaton}{\aodd}$,
$a=\width\cdot (\oddlength+1)$, $\oplus(x,y) = x+y$, and for each $\alayer\in \layeralphabet{\alphabet}{\width}$ set $\omega(\alayer)=|\leftfrontier(\layer)| + \finalflag(\layer)\cdot |\rightfrontier(\layer)|$.
Then for each $\aodd'\in \odddefiningset{\alphabet}{\width}{\oddlength}$, $\omega_{\oplus}(\aodd) = \numberstates{\aodd}$. Therefore, 
by Proposition \ref{proposition:Counting}, one can construct in time $2^{O(|\alphabet|\cdot \width \cdot 2^{\width})}\cdot \nstates(\finiteautomaton)\cdot \oddlength^2\cdot \log \oddlength$ 
an ODD $\aodd'\in \automatonlang{\finiteautomaton}$ with minimum number of states such that $\oddlang{\aodd'} = \oddlang{\aodd}$.
\item In Proposition \ref{proposition:Counting}, set $\finiteautomatonJ = \allodds{\finiteautomaton}{\aodd}$,
$a=|\alphabet|\cdot \width^2\cdot \oddlength$, $\oplus(x,y) = x+y$, and for each $\alayer\in \layeralphabet{\alphabet}{\width}$ set $\omega(\alayer)=|\layertransitions(\layer)|$.
Then for each $\aodd'\in \odddefiningset{\alphabet}{\width}{\oddlength}$, $\omega_{\oplus}(\aodd) = \numbertransitions{\aodd}$. Therefore, 
by Proposition \ref{proposition:Counting}, one can construct in time $2^{O(|\alphabet|\cdot \width \cdot 2^{\width})}\cdot \nstates(\finiteautomaton)\cdot \oddlength^2\cdot \log \oddlength$ 
an ODD $\aodd'\in \automatonlang{\finiteautomaton}$ with minimum number of transitions such that $\oddlang{\aodd'} = \oddlang{\aodd}$.
\end{enumerate}
\end{proof}

Let $\subsetlayers\subseteq \layeralphabet{\alphabet}{\width}$. Then by plugging $\finiteautomaton_{\subsetlayers}$ in Theorem \ref{theorem:MinimizationODDAutomaton},
the following theorem, which can be used to address several minimization problems for ODDs over the space of ODDs in $\subsetlayers^{\circledast}$.

\begin{theorem}
\label{theorem:Minimization}
Let $\aodd$ be an ODD in $\odddefiningset{\alphabet}{\width}{\oddlength}$ and let $\subsetlayers\subseteq \layeralphabet{\alphabet}{\width}$. One 
can determine in time $2^{O(|\alphabet|\cdot \width \cdot 2^{\width})}\cdot \oddlength$ whether there is an ODD 
$\aodd'\in \subsetlayers^{\circ \oddlength}$ such that $\oddlang{\aodd'} = \oddlang{\aodd}$. Suppose such an ODD exists.
\begin{enumerate}
\item One can construct in time $2^{O(|\alphabet|\cdot \width \cdot 2^{\width})}\cdot \oddlength$
an ODD $\aodd'\in \subsetlayers^{\circledast}$ of minimum width such that $\oddlang{\aodd} = \oddlang{\aodd'}$. 
\item One can construct in time $2^{O(|\alphabet|\cdot \width \cdot 2^{\width})}\cdot \oddlength^2\log \oddlength$
an ODD $\aodd'\in \subsetlayers^{\circledast}$ with minimum number of states such that $\oddlang{\aodd} = \oddlang{\aodd'}$. 
\item One can construct in time $2^{O(|\alphabet|\cdot \width \cdot 2^{\width})}\cdot \oddlength^2\log \oddlength$ 
an ODD $\aodd'\in \subsetlayers^{\circledast}$ with minimum number of transitions such that $\oddlang{\aodd} = \oddlang{\aodd'}$. 
\end{enumerate}
\end{theorem}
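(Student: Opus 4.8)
The plan is to derive this statement as a direct instantiation of Theorem~\ref{theorem:MinimizationODDAutomaton}, taking the input SOFA to be the automaton $\finiteautomaton_{\subsetlayers}$ supplied by Lemma~\ref{lemma:LayerSubsetRegular}. First I would invoke Lemma~\ref{lemma:LayerSubsetRegular} to construct, from the given set of layers $\subsetlayers\subseteq\layeralphabet{\alphabet}{\width}$, a $(\alphabet,\width)$-SOFA $\finiteautomaton_{\subsetlayers}$ with $\abs{\subsetlayers}+1$ states whose first language is exactly $\automatonlang{\finiteautomaton_{\subsetlayers}} = \subsetlayersstar$, the set of all $(\alphabet,\width)$-ODDs whose layers all belong to $\subsetlayers$. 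Since $\oddlang{\aodd'}=\oddlang{\aodd}$ forces $\aodd'$ to have the same length as $\aodd$ (recall that the language of an ODD is length-typed), any ODD $\aodd'\in\subsetlayersstar$ satisfying $\oddlang{\aodd'}=\oddlang{\aodd}$ automatically lies in $\subsetlayersdefiningset{\oddlength}$. Hence the decision problem and the three minimization tasks stated here coincide exactly with those of Theorem~\ref{theorem:MinimizationODDAutomaton} applied to $\finiteautomaton=\finiteautomaton_{\subsetlayers}$, whose output ODDs lie in $\automatonlang{\finiteautomaton_{\subsetlayers}}=\subsetlayersstar$ precisely as required.

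The only quantitative point to verify is that the factor $\nstates(\finiteautomaton)$ appearing in the running times of Theorem~\ref{theorem:MinimizationODDAutomaton} is absorbed into the leading factor once we set $\finiteautomaton=\finiteautomaton_{\subsetlayers}$. By Observation~\ref{observation:NumberLayers} we have $\abs{\layeralphabet{\alphabet}{\width}} = 2^{O(|\alphabet|\cdot \width^2)}$, and therefore $\nstates(\finiteautomaton_{\subsetlayers}) = \abs{\subsetlayers}+1 \leq 2^{O(|\alphabet|\cdot \width^2)}$. Using the elementary inequality $\width \leq 2^{\width}$, valid for every $\width\in\pN$, we obtain $\width^2 \leq \width\cdot 2^{\width}$, so that $2^{O(|\alphabet|\cdot \width^2)}$ is dominated by $2^{O(|\alphabet|\cdot \width\cdot 2^{\width})}$. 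Consequently $2^{O(|\alphabet|\cdot \width\cdot 2^{\width})}\cdot\nstates(\finiteautomaton_{\subsetlayers}) = 2^{O(|\alphabet|\cdot \width\cdot 2^{\width})}$, and each of the three running times of Theorem~\ref{theorem:MinimizationODDAutomaton} collapses to the bound claimed here. The time to build $\finiteautomaton_{\subsetlayers}$ itself is likewise bounded by $2^{O(|\alphabet|\cdot \width^2)}$ (it suffices to range over pairs of layers in $\subsetlayers$ to lay down the transition set) and is absorbed in the same manner.

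There is essentially no substantive obstacle in this argument: once the right SOFA is supplied, the statement is an immediate corollary of Theorem~\ref{theorem:MinimizationODDAutomaton}. The only care needed lies in the bookkeeping of the two exponential factors, \ie confirming that the combinatorial bound $2^{O(|\alphabet|\cdot \width^2)}$ on the number of layers (and hence on $\nstates(\finiteautomaton_{\subsetlayers})$) is subsumed by the $2^{O(|\alphabet|\cdot \width\cdot 2^{\width})}$ factor originating from the canonization transduction of Theorem~\ref{theorem:CanonizationTransductionTheorem}, so that the dependence on $\abs{\subsetlayers}$ disappears entirely from the final running times.
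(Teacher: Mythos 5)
Your proposal is correct and follows exactly the paper's own route: the paper derives Theorem~\ref{theorem:Minimization} precisely by plugging the SOFA $\finiteautomaton_{\subsetlayers}$ of Lemma~\ref{lemma:LayerSubsetRegular} into Theorem~\ref{theorem:MinimizationODDAutomaton}. Your additional bookkeeping — bounding $\nstates(\finiteautomaton_{\subsetlayers})\leq 2^{O(|\alphabet|\cdot\width^2)}$ via Observation~\ref{observation:NumberLayers} and absorbing it into $2^{O(|\alphabet|\cdot\width\cdot 2^{\width})}$ using $\width^2\leq\width\cdot 2^{\width}$ — is exactly the step the paper leaves implicit, so nothing is missing.
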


\subsection{Counting Functions Computable by ODDs of a Given Width.}
\label{subsection:ApplicationCounting} 

Let $\alphabet$ be an alphabet and $\width,\oddlength\in \pN$. 
Each ODD $\aodd\in \odddefiningset{\alphabet}{\width}{\oddlength}$ can be regarded 
as a representation of a function $\functionfromODD_{\aodd}:\alphabet^{\oddlength}\rightarrow \set{0,1}$. 
More precisely, for each $\astring\in \alphabet^{\oddlength}$, $\functionfromODD_{\aodd}(\astring) = 1$ if 
and only if $\astring\in \oddlang{\aodd}$. We say that $\functionfromODD_{\aodd}$ is the function
computed by $\aodd$.  

In this subsection, we analyze the problem of counting the number of functions of type 
$\alphabet^{\oddlength}\rightarrow \{0,1\}$ that can be computed by some ODD of width $\width$ over the alphabet $\alphabet$.
We note that to solve this problem it is not enough to count the number of ODDs in $\odddefiningset{\alphabet}{\width}{\oddlength}$. 
The caveat is that several ODDs in $\odddefiningset{\alphabet}{\width}{\oddlength}$ may represent the same function. 
Fortunately, we can solve the issue of multiple representatives for a given function by resorting to our
canonical form of canonical forms theorem (Theorem \ref{theorem:CanonizationSecondOrder}).

\newcommand{\finiteautomatonA}{\mathcal{A}}
\newcommand{\Gammaalphabet}{\Gamma}

It is well known that the problem of counting the number of strings of length $\oddlength$ accepted by a given deterministic 
finite automaton $\finiteautomatonA$ can be solved in time polynomial in $\oddlength$ and in the number of states of $\finiteautomatonA$.
Below we state a more precise upper bound. 

\begin{proposition}
\label{proposition:CountingAcceptedWords}
Let $\finiteautomatonA$ be a deterministic finite automaton over an alphabet $\Gammaalphabet$. 
Then, for each $\oddlength\in \N$, one can count in time 
$O(\numberstates{\finiteautomatonA}\cdot \oddlength^2\cdot |\Gammaalphabet|\cdot\log|\Gammaalphabet|)$
the number of words of length $\oddlength$ accepted by $\finiteautomatonA$. 
\end{proposition}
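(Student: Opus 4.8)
The plan is to run a standard layer-by-layer dynamic program that counts runs, using determinism to ensure that counting runs of a given length is the same as counting accepted words of that length. Since $\finiteautomatonA$ is deterministic it has a single initial state $\astate_0$; for each state $\astate\in\automatonstates(\finiteautomatonA)$ and each $i\in\dbset{\oddlength+1}$ I would let $N(\astate,i)$ denote the number of words $\astring\in\Gammaalphabet^{i}$ that reach $\astate$ from $\astate_0$. Because each word of length $\oddlength$ admits at most one run, which is accepting exactly when it ends in a final state, the quantity to be returned is $\sum_{\astate\in\automatonfinalstates(\finiteautomatonA)}N(\astate,\oddlength)$.

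Next I would justify the recurrence used to populate the table. The base case is $N(\astate_0,0)=1$ and $N(\astate,0)=0$ for $\astate\neq\astate_0$, and for $i\in\dbset{\oddlength}$ I use
\[
N(\bstate,i+1)=\sum_{\tuple{\astate,\asymbol,\bstate}\in\automatontransitions(\finiteautomatonA)}N(\astate,i).
\]
Its correctness rests on determinism: a word $\astring\asymbol$ reaches $\bstate$ if and only if the prefix $\astring$ reaches a unique state $\astate$ with $\tuple{\astate,\asymbol,\bstate}\in\automatontransitions(\finiteautomatonA)$, and the pair $\tuple{\astate,\asymbol}$ is uniquely determined by $\astring\asymbol$; hence distinct summands count disjoint sets of words whose union is precisely the set of length-$(i+1)$ words reaching $\bstate$, so no word is counted twice.

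Finally I would analyse the running time, which is where the only real subtlety lies. Filling the table $\set{N(\astate,i)}$ in order of increasing $i$ touches each transition once per level, and determinism bounds $\ntransitions(\finiteautomatonA)\le\numberstates{\finiteautomatonA}\cdot|\Gammaalphabet|$, so the computation performs $\BigOh(\numberstates{\finiteautomatonA}\cdot\oddlength\cdot|\Gammaalphabet|)$ additions. The main obstacle is that these additions are on large integers: each $N(\astate,i)$ is bounded by $|\Gammaalphabet|^{\oddlength}$ and thus has $\BigOh(\oddlength\cdot\log|\Gammaalphabet|)$ bits, so a single addition costs $\BigOh(\oddlength\cdot\log|\Gammaalphabet|)$ time. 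Multiplying the number of additions by the per-addition cost yields the claimed bound $\BigOh(\numberstates{\finiteautomatonA}\cdot\oddlength^{2}\cdot|\Gammaalphabet|\cdot\log|\Gammaalphabet|)$.
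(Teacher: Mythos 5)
Your proposal is correct and follows essentially the same approach as the paper: a path-counting dynamic program over the states, with the running time dominated by additions on $\BigOh(\oddlength\cdot\log|\Gammaalphabet|)$-bit integers. The only difference is cosmetic — you count forward (words reaching each state from the initial state) while the paper counts backward (transition sequences from each state to a final state) — and both uses of determinism (disjointness of your summands, resp. the bijection between accepted words and accepting paths) are equivalent.
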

\begin{proof}
Let $\finiteautomatonA = (\Gammaalphabet,\automatonstates,\automatoninitialstates,\automatonfinalstates,\automatontransitions)$. Since 
$\finiteautomatonA$ is deterministic, $\automatoninitialstates = \{\astate_0\}$ for some state $\astate_0$.  Additionally, there is a 
bijection from the set words of length $\oddlength$ accepted by $\finiteautomatonA$ to the set accepting sequences 
of transitions connecting the initial state $\astate_0$ to some final state in $\automatonfinalstates$. 

We start by constructing a matrix $M:\dbset{\oddlength}\times \automatonstates\rightarrow \N$  such that for each $i\in \oddlength$ and each $\astate \in \automatonstates$, 
the entry $M(i,\astate)$ is equal to the number of valid sequences of transitions of length $\oddlength-i$ from $\astate$ to some final state in $\automatonfinalstates$. In particular, 
$M(0,\astate_0)$ is the number of valid sequences of transitions of length $\oddlength$ from $\astate_0$ to some final state in $\automatonfinalstates$. The matrix $M$ 
is constructed by induction on $\oddlength-i$. In the base case, $i=\oddlength$. In this case, we set $M(\oddlength,\astate) = 1$ if $\astate\in \automatonfinalstates$,
and set $M(\oddlength,\astate)=0$ otherwise. Now, let $i\in \dbset{\oddlength-1}$ and assume that the value $M(i+1,\astate)$ has been determined 
for every $\astate\in \automatonstates$. Then, for each $\astate\in \automatonstates$, we let $M(i,\astate) = \sum_{(\astate,\asymbol,\astate')\in \automatontransitions} M(i+1,\astate')$. 
In other words, $M(i,\astate)$ is defined as the sum of all $M(i,\astate')$ for which $(\astate,\asymbol,\astate')$ is a transition in $\finiteautomatonA$ for some $\asymbol\in \Gammaalphabet$. 

Since, there are at most $|\Gammaalphabet|^{\oddlength}$ words of length $\oddlength$, we have that each entry 
of $M$ can be represented using $\oddlength\cdot \log |\Gammaalphabet|$ bits. Additionally, the computation of each entry involves the summation of 
$|\Gammaalphabet|$ entries, which in overall can be performed in time $O(\oddlength\cdot |\Gammaalphabet|\cdot \log \Gammaalphabet)$. Since the matrix has 
$(\oddlength+1) \cdot \numberstates{\finiteautomatonA}$ entries, the whole matrix can be constructed in time $O(\numberstates{\finiteautomatonA}\cdot \oddlength^2\cdot |\Gammaalphabet|\cdot \log |\Gammaalphabet|)$. 
\end{proof}

\begin{theorem}
\label{theorem:CountingFunctionsAutomaton}
Let $\finiteautomaton$ be a $(\alphabet,\width)$-second order finite automaton. For each $\oddlength\in \N$, 
one can count in time $2^{\nstates(\finiteautomaton)\cdot 2^{\BigOh(\abs{\alphabet}\cdot\width\cdot 2^{\width})}}\cdot \oddlength^{2}$
the number of functions $f:\alphabet^{\oddlength}\rightarrow \{0,1\}$ computable by some ODD of length $\oddlength$ in $\automatonlang{\finiteautomaton}$. 
\end{theorem}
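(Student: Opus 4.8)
The plan is to reduce the problem of counting distinct functions to the problem of counting accepted words of a deterministic finite automaton, exploiting the syntactic uniqueness of canonical forms guaranteed by Theorem~\ref{theorem:MinimizationODDs}.

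First I would observe that, by Theorem~\ref{theorem:MinimizationODDs}, for any two ODDs $\aodd$ and $\aodd'$ of length $\oddlength$ in $\automatonlang{\finiteautomaton}$ one has $\functionfromODD_{\aodd}=\functionfromODD_{\aodd'}$ if and only if $\oddlang{\aodd}=\oddlang{\aodd'}$, which in turn holds if and only if $\canonizationfunction(\aodd)=\canonizationfunction(\aodd')$. Hence the number of distinct functions computable by length-$\oddlength$ ODDs in $\automatonlang{\finiteautomaton}$ equals the number of distinct canonical forms $\canonizationfunction(\aodd)$ arising from such ODDs. Because the canonization map preserves length, each such canonical form is itself a length-$\oddlength$ ODD.

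Next I would apply Theorem~\ref{theorem:CanonizationSecondOrder} to construct the second canonical form $\canonizationfunction_2(\finiteautomaton)$, a deterministic, complete, normalized $(\alphabet,2^{\width})$-SOFA whose first language is exactly $\{\canonizationfunction(\aodd)\;:\;\aodd\in\automatonlang{\finiteautomaton}\}$. Since every canonical form occurs as a single string and distinct canonical forms are distinct strings, the length-$\oddlength$ words accepted by $\canonizationfunction_2(\finiteautomaton)$ are in bijection with the distinct functions I wish to count: the sought count is precisely the number of words of length $\oddlength$ in $\automatonlang{\canonizationfunction_2(\finiteautomaton)}$.

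Finally, since $\canonizationfunction_2(\finiteautomaton)$ is deterministic, I would invoke Proposition~\ref{proposition:CountingAcceptedWords} over the alphabet $\cdlayeralphabet{\alphabet}{2^{\width}}$ to count its length-$\oddlength$ accepted words. For the running time, Theorem~\ref{theorem:CanonizationSecondOrder} builds $\canonizationfunction_2(\finiteautomaton)$ in time $2^{\nstates(\finiteautomaton)\cdot 2^{\BigOh(|\alphabet|\cdot\width\cdot 2^{\width})}}$, and this automaton has at most that many states; moreover, by Observation~\ref{observation:NumberLayers}, the alphabet $\cdlayeralphabet{\alphabet}{2^{\width}}$ has $2^{O(|\alphabet|\cdot\width\cdot 2^{\width})}$ symbols. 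Proposition~\ref{proposition:CountingAcceptedWords} then runs in time $O(\nstates(\canonizationfunction_2(\finiteautomaton))\cdot\oddlength^{2}\cdot|\cdlayeralphabet{\alphabet}{2^{\width}}|\cdot\log|\cdlayeralphabet{\alphabet}{2^{\width}}|)$, and absorbing the single-exponential alphabet factor into the double-exponential state bound yields the claimed bound $2^{\nstates(\finiteautomaton)\cdot 2^{\BigOh(|\alphabet|\cdot\width\cdot 2^{\width})}}\cdot\oddlength^{2}$. The main point requiring care is the bijection in the third paragraph: it relies crucially on each language having exactly one canonical representative, so that neither over- nor under-counting occurs; the remaining complexity bookkeeping is routine.
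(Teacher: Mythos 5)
Your proposal is correct and follows essentially the same route as the paper's own proof: build the second canonical form $\canonizationfunction_2(\finiteautomaton)$ via Theorem~\ref{theorem:CanonizationSecondOrder}, use the syntactic uniqueness of canonical forms (Theorem~\ref{theorem:MinimizationODDs}) to get a bijection between distinct functions and length-$\oddlength$ words accepted by $\canonizationfunction_2(\finiteautomaton)$, and then count those words with Proposition~\ref{proposition:CountingAcceptedWords}. Your only deviation is counting over the alphabet $\cdlayeralphabet{\alphabet}{2^{\width}}$ rather than $\layeralphabet{\alphabet}{2^{\width}}$, which is if anything slightly tidier (the canonical forms are deterministic and complete, so this restriction is legitimate) and yields the same bound.
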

\begin{proof}
By Theorem \ref{theorem:CanonizationSecondOrder}, one can construct in time $2^{\nstates(\finiteautomaton)\cdot 2^{\BigOh(\abs{\alphabet}\cdot\width\cdot 2^{\width})}}$ 
a deterministic second-order finite automaton $\canonizationfunction_2(\finiteautomaton)$ (with at most $2^{\nstates(\finiteautomaton)\cdot 2^{\BigOh(\abs{\alphabet}\cdot\width\cdot 2^{\width})}}$ states)
such that $\automatonlang{\canonizationfunction_2(\finiteautomaton)} = \{\canonizationfunction(\aodd)\;:\; \aodd\in \automatonlang{\finiteautomaton}\}$.
This implies that for each language $\lang \in \oddlanghigher{\finiteautomaton}{2}$, there is a unique ODD $\aodd\in \automatonlang{\canonizationfunction_2(\finiteautomaton)}$ such that 
$\oddlang{\aodd}=\lang$. Therefore, counting the number of functions of type $\alphabet^{\width}\rightarrow \{0,1\}$ computable by some ODD in $\automatonlang{\finiteautomaton}$ 
amounts to counting the number of ODDs of length $\oddlength$ accepted by $\canonizationfunction_2(\finiteautomaton)$.
By setting $\finiteautomatonA = \canonizationfunction_2(\finiteautomaton)$ and $\Gammaalphabet = \layeralphabet{\alphabet}{2^{\width}}$ in Proposition \ref{proposition:CountingAcceptedWords}, 
and by using the facts that $|\finiteautomatonA| = 2^{\nstates(\finiteautomaton)\cdot 2^{\BigOh(\abs{\alphabet}\cdot\width\cdot 2^{\width})}}$ and $|\Gammaalphabet| = 2^{O(|\alphabet|\cdot 2^{\width}\cdot \width)}$, 
we have that this counting problem can be solved in time 
$2^{\nstates(\finiteautomaton)\cdot 2^{\BigOh(\abs{\alphabet}\cdot\width\cdot 2^{\width})}} \cdot \oddlength^2$.
\end{proof}

If all ODDs in the language of $\finiteautomaton$ are deterministic and complete then one can adapt the proof of 
Theorem \ref{theorem:CountingFunctionsAutomaton} by using Observation \ref{observation:BetterConstruction} and by
setting $\Gamma = \cdlayeralphabet{\alphabet}{\width}$ in order to obtain a more efficient counting algorithm. 

\begin{observation}
\label{observation:CountingFunctionsAutomatonBetter}
Let $\finiteautomaton$ be a $(\alphabet,\width)$-second order finite automaton such that $\automatonlang{\finiteautomaton}\subseteq \cdlayeralphabet{\alphabet}{\width}$.
For each $\oddlength\in \N$, one can count in time $2^{\nstates(\finiteautomaton)\cdot 2^{\BigOh(\abs{\alphabet}\cdot\width\cdot \log\width)}}\cdot \oddlength^{2}$
the number of functions $f:\alphabet^{\oddlength}\rightarrow \{0,1\}$ computable by some ODD of length $\oddlength$ in $\automatonlang{\finiteautomaton}$. 
\end{observation}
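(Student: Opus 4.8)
The plan is to follow the proof of Theorem~\ref{theorem:CountingFunctionsAutomaton} essentially verbatim, replacing the single invocation of Theorem~\ref{theorem:CanonizationSecondOrder} by Observation~\ref{observation:BetterConstruction} and shrinking the layer alphabet accordingly. Since every ODD in $\automatonlang{\finiteautomaton}$ is deterministic and complete, the hypothesis of Observation~\ref{observation:BetterConstruction} is met, so it yields a deterministic, complete, normalized $(\alphabet,\width)$-SOFA $\canonizationfunction_2(\finiteautomaton)$ --- crucially a width-$\width$ SOFA, rather than the width-$2^{\width}$ SOFA of the general case --- constructible in time $2^{\nstates(\finiteautomaton)\cdot 2^{\BigOh(\abs{\alphabet}\cdot\width\log\width)}}$, and satisfying $\automatonlang{\canonizationfunction_2(\finiteautomaton)} = \set{\canonizationfunction(\aodd) \setst \aodd \in \automatonlang{\finiteautomaton}}$.

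Next I would reproduce the reduction from the earlier proof. Counting the functions $f:\alphabet^{\oddlength}\rightarrow \{0,1\}$ computed by some length-$\oddlength$ ODD in $\automatonlang{\finiteautomaton}$ is equivalent to counting the length-$\oddlength$ strings accepted by $\canonizationfunction_2(\finiteautomaton)$: by the uniqueness clause of Theorem~\ref{theorem:MinimizationODDs}, two ODDs compute the same function if and only if they have the same canonical form, so the canonical forms of length $\oddlength$ in $\automatonlang{\canonizationfunction_2(\finiteautomaton)}$ are in bijection with the length-$\oddlength$ members of $\oddlanghigher{\finiteautomaton}{2}$. The point where the determinism/completeness hypothesis is genuinely used --- again supplied by the ``Additionally'' clause of Theorem~\ref{theorem:MinimizationODDs} --- is that the canonical form of a deterministic, complete $(\alphabet,\width)$-ODD remains a deterministic, complete $(\alphabet,\width)$-ODD. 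Hence no width blow-up occurs and $\canonizationfunction_2(\finiteautomaton)$ reads its symbols from the smaller alphabet $\cdlayeralphabet{\alphabet}{\width}$.

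Finally I would apply Proposition~\ref{proposition:CountingAcceptedWords} with $\finiteautomatonA = \canonizationfunction_2(\finiteautomaton)$ and $\Gammaalphabet = \cdlayeralphabet{\alphabet}{\width}$. By Observation~\ref{observation:NumberLayers} we have $\abs{\cdlayeralphabet{\alphabet}{\width}} = 2^{O(\abs{\alphabet}\width\log\width)}$, and $\numberstates{\canonizationfunction_2(\finiteautomaton)} = 2^{\nstates(\finiteautomaton)\cdot 2^{\BigOh(\abs{\alphabet}\cdot\width\log\width)}}$. Substituting these into the $O(\numberstates{\finiteautomatonA}\cdot \oddlength^{2}\cdot \abs{\Gammaalphabet}\cdot \log\abs{\Gammaalphabet})$ bound of Proposition~\ref{proposition:CountingAcceptedWords}, the $\abs{\Gammaalphabet}\log\abs{\Gammaalphabet}$ factor (being singly exponential in $\abs{\alphabet}\width\log\width$) is absorbed into the doubly exponential state-count factor, leaving exactly the claimed running time $2^{\nstates(\finiteautomaton)\cdot 2^{\BigOh(\abs{\alphabet}\cdot\width\cdot \log\width)}}\cdot \oddlength^{2}$.

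The argument is routine bookkeeping, so I do not expect a genuine obstacle; the only substantive matter is confirming that restricting to deterministic, complete ODDs keeps both the canonization cost (through Observation~\ref{observation:BetterConstruction}) and the layer-alphabet size (through Observation~\ref{observation:NumberLayers}) at the cheaper $\width\log\width$ exponent, so that the factor $2^{\width}$ of the general bound never enters the estimate.
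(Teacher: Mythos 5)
Your proposal is correct and follows essentially the same route as the paper: the paper obtains this observation precisely by rerunning the proof of Theorem~\ref{theorem:CountingFunctionsAutomaton} with Observation~\ref{observation:BetterConstruction} in place of Theorem~\ref{theorem:CanonizationSecondOrder} and with $\Gammaalphabet = \cdlayeralphabet{\alphabet}{\width}$ in Proposition~\ref{proposition:CountingAcceptedWords}, which is exactly your plan. Your supporting remarks---that the ``Additionally'' clause of Theorem~\ref{theorem:MinimizationODDs} keeps canonical forms inside $\cdodddefiningsetstar{\alphabet}{\width}$, and that Observation~\ref{observation:NumberLayers} bounds $\abs{\cdlayeralphabet{\alphabet}{\width}}$ by $2^{O(\abs{\alphabet}\width\log\width)}$ so the alphabet factor is absorbed---are the same bookkeeping the paper relies on.
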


By combining Lemma \ref{lemma:LayerSubsetRegular} with Theorem \ref{theorem:CountingFunctionsAutomaton} and Observation \ref{observation:CountingFunctionsAutomatonBetter}, 
we obtain the following corollary. 

\begin{corollary}
\label{corollary:CountingFunctions}
Let $\alphabet$ be an alphabet, $\width \in \pN$, $\subsetlayers\subseteq \layeralphabet{\alphabet}{\width}$, and $\widehat{\subsetlayers}\subseteq \cdlayeralphabet{\alphabet}{\width}$. 
\begin{enumerate}
\item One can count in time $2^{2^{\BigOh(\abs{\alphabet}\cdot\width\cdot 2^{\width})}}\cdot \oddlength^{2}$ the number of 
functions $f:\alphabet^{\oddlength}\rightarrow \{0,1\}$ computable by some ODD in $\subsetlayers^{\circledast}$. 
\item One can count in time $2^{2^{\BigOh(\abs{\alphabet}\cdot\width\cdot \log\width)}}\cdot \oddlength^{2}$ 
the number of functions $f:\alphabet^{\oddlength}\rightarrow \{0,1\}$ computable by some ODD in $\widehat{\subsetlayers}^{\circledast}$. 
\end{enumerate}
\end{corollary}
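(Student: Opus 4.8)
The plan is to reduce both statements to the counting results already proved for second-order finite automata, namely Theorem~\ref{theorem:CountingFunctionsAutomaton} and Observation~\ref{observation:CountingFunctionsAutomatonBetter}, by feeding them the concrete SOFAs manufactured by Lemma~\ref{lemma:LayerSubsetRegular}. For the first item, I would invoke Lemma~\ref{lemma:LayerSubsetRegular} on the subset $\subsetlayers$ to obtain a $(\alphabet,\width)$-SOFA $\finiteautomaton_{\subsetlayers}$ with $\abs{\subsetlayers}+1$ states and first language $\automatonlang{\finiteautomaton_{\subsetlayers}} = \subsetlayersstar$. Since, by definition, the ODDs of length $\oddlength$ in $\subsetlayersstar$ are exactly those whose layers all belong to $\subsetlayers$, the set of functions $f\colon\alphabet^{\oddlength}\to\set{0,1}$ computable by some length-$\oddlength$ ODD in $\subsetlayersstar$ coincides with the set of functions computable by some length-$\oddlength$ ODD in $\automatonlang{\finiteautomaton_{\subsetlayers}}$. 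Applying Theorem~\ref{theorem:CountingFunctionsAutomaton} to $\finiteautomaton_{\subsetlayers}$ therefore counts precisely the desired quantity.

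It then remains to simplify the running time. Theorem~\ref{theorem:CountingFunctionsAutomaton} runs in time $2^{\nstates(\finiteautomaton)\cdot 2^{\BigOh(\abs{\alphabet}\cdot\width\cdot 2^{\width})}}\cdot\oddlength^{2}$, and by Observation~\ref{observation:NumberLayers} we have $\nstates(\finiteautomaton_{\subsetlayers}) = \abs{\subsetlayers}+1 \leq \abs{\layeralphabet{\alphabet}{\width}}+1 = 2^{\BigOh(\abs{\alphabet}\cdot\width^{2})}$. The product $\nstates(\finiteautomaton_{\subsetlayers})\cdot 2^{\BigOh(\abs{\alphabet}\cdot\width\cdot 2^{\width})}$ is then still $2^{\BigOh(\abs{\alphabet}\cdot\width\cdot 2^{\width})}$, because $\width^{2} = \BigOh(\width\cdot 2^{\width})$, so the overall running time collapses to $2^{2^{\BigOh(\abs{\alphabet}\cdot\width\cdot 2^{\width})}}\cdot\oddlength^{2}$, as required.

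For the second item I would repeat the argument starting from $\widehat{\subsetlayers}\subseteq\cdlayeralphabet{\alphabet}{\width}$. The key observation is that every layer in $\widehat{\subsetlayers}$ is deterministic and complete, hence every ODD in $\widehat{\subsetlayers}^{\circledast}$ is deterministic and complete; consequently the SOFA $\finiteautomaton_{\widehat{\subsetlayers}}$ produced by Lemma~\ref{lemma:LayerSubsetRegular} satisfies the hypothesis of Observation~\ref{observation:CountingFunctionsAutomatonBetter}. Applying that observation and bounding $\nstates(\finiteautomaton_{\widehat{\subsetlayers}}) = \abs{\widehat{\subsetlayers}}+1 \leq \abs{\cdlayeralphabet{\alphabet}{\width}}+1 = 2^{\BigOh(\abs{\alphabet}\cdot\width\cdot\log\width)}$ via Observation~\ref{observation:NumberLayers} yields, after the same absorption of the number-of-states factor, a running time of $2^{2^{\BigOh(\abs{\alphabet}\cdot\width\cdot\log\width)}}\cdot\oddlength^{2}$.

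The proof is essentially a bookkeeping exercise: the whole content lies in matching hypotheses and in verifying the exponent arithmetic. I expect the only genuine care to be needed in the running-time simplification --- one must check that the state-count factor coming from Observation~\ref{observation:NumberLayers} does not raise the order of the exponent in the doubly-exponential bound --- and in confirming the hypothesis of Observation~\ref{observation:CountingFunctionsAutomatonBetter}, i.e.\ that restricting the layer set to $\cdlayeralphabet{\alphabet}{\width}$ indeed forces every ODD in the first language of $\finiteautomaton_{\widehat{\subsetlayers}}$ to be deterministic and complete.
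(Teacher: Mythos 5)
Your proposal is correct and follows essentially the same route as the paper: both construct the SOFAs $\finiteautomaton_{\subsetlayers}$ and $\finiteautomaton_{\widehat{\subsetlayers}}$ via Lemma~\ref{lemma:LayerSubsetRegular}, apply Theorem~\ref{theorem:CountingFunctionsAutomaton} and Observation~\ref{observation:CountingFunctionsAutomatonBetter} respectively, and absorb the state-count bounds $2^{\BigOh(\abs{\alphabet}\cdot\width^{2})}$ and $2^{\BigOh(\abs{\alphabet}\cdot\width\log\width)}$ from Observation~\ref{observation:NumberLayers} into the doubly-exponential running times. Your extra care in verifying the determinism/completeness hypothesis for the second item and the exponent arithmetic is exactly the bookkeeping the paper leaves implicit.
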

\begin{proof}
By Lemma \ref{lemma:LayerSubsetRegular}, one can construct SOFAs  $\finiteautomaton_{\subsetlayers}$ and $\finiteautomaton_{\widehat{\subsetlayers}}$ 
with $(|\subsetlayers|+1)$ and $(|\widehat{\subsetlayers}|+1)$ states respectively such 
that $\automatonlang{\finiteautomaton_{\subsetlayers}} = \subsetlayers^{\circledast}$, and $\automatonlang{\finiteautomaton_{\widehat{\subsetlayers}}} = \widehat{\subsetlayers}^{\circledast}$. 
Since $|\subsetlayers| = 2^{O(|\alphabet|\width^2)}$, it follows from Theorem \ref{theorem:CountingFunctionsAutomaton} that one can count the number of
functions $f:\alphabet^{\oddlength}\rightarrow \{0,1\}$ computable by ODDs in $\subsetlayers^{\circledast}$ in time $2^{2^{\BigOh(\abs{\alphabet}\cdot\width\cdot 2^{\width})}}\cdot \oddlength^{2}$. Analogously, 
since $|\widehat{\subsetlayers}| = 2^{O(|\alphabet|\width \log \width)}$, it follows from Observation \ref{observation:CountingFunctionsAutomatonBetter}
that one can count the number of functions $f:\alphabet^{\oddlength}\rightarrow \{0,1\}$ computable by ODDs in $\widehat{\subsetlayers}^{\circledast}$ 
in time $2^{2^{O(|\alphabet|\cdot \width \cdot \log \width)}}$. 
\end{proof}

\section{Proof of the Canonization as Transduction Theorem}
\label{section:CanonizationTransductionTheorem}

In this section, we prove Theorem \ref{theorem:CanonizationTransductionTheorem}, which states that for each alphabet $\alphabet$, and 
each $\width\in \pN$ the following holds. 

\begin{enumerate}
\item The functional transduction $\cdcanonizationtransduction{\alphabet}{\width} = \set{\tuple{\aodd,\canonizationfunction(\aodd)} \setst \aodd\in \cdodddefiningsetstar{\alphabet}{\width}}$
	is $2^{O(|\alphabet|\cdot \width\cdot \log \width)}$-regular.
\item The functional transduction $\canonizationtransduction{\alphabet}{\width} = \set{\tuple{\aodd,\canonizationfunction(\aodd)} \setst \aodd\in \odddefiningsetstar{\alphabet}{\width}}$
	is $2^{O(|\alphabet|\cdot \width \cdot 2^{\width})}$-regular. 
\end{enumerate}

Although the complete proof of Theorem \ref{theorem:CanonizationTransductionTheorem} is quite technical, it is possible to give an intuitive overview of the main steps in the proof. 
More specifically, we will show that the transduction $\cdcanonizationtransduction{\alphabet}{\width}$ can be cast a composition  

\begin{equation}
\label{equation:TransductionSequence}
\cdcanonizationtransduction{\alphabet}{\width} \defeq \duplicate{\cdodddefiningsetstar{\alphabet}{\width}} \circ \reachabilitytransduction{\alphabet}{2^\width}\circ\mergingtransduction{\alphabet}{2^{\width}}\circ\normalizationtransduction{\alphabet}{2^{\width}}\text{,}
\end{equation}

of regular transductions satisfying the following properties. 

\begin{enumerate}
\item $\duplicate{\cdodddefiningsetstar{\alphabet}{\width}}$ is a functional $2^{O(|\alphabet|\cdot \width\log\width)}$-regular
$(\cdlayeralphabet{\alphabet}{\width},\cdlayeralphabet{\alphabet}{\width})$-transduction that sends each ODD 
$\aodd\in \cdodddefiningsetstar{\alphabet}{\width}$ to itself. This 
transduction is used to limit the domain of $\cdcanonizationtransduction{\alphabet}{\width}$ to deterministic, complete $(\alphabet,\width)$-ODDs.  
\item $\reachabilitytransduction{\alphabet}{\width}$ is a functional $2^{O(|\alphabet|\cdot \width \cdot \log \width)}$-regular
$(\cdlayeralphabet{\alphabet}{\width},\cdlayeralphabet{\alphabet}{\width})$-transduction that sends each
ODD $\aodd\in\cdodddefiningsetstar{\alphabet}{\width}$ to a reachable ODD $\bodd\in\cdodddefiningsetstar{\alphabet}{\width}$ 
with $\oddlang{\aodd}=\oddlang{\bodd}$. This transduction simulates the process of eliminating unreachable states from $\aodd$.
\item $\mergingtransduction{\alphabet}{\width}$ is a functional $2^{O(|\alphabet|\cdot \width \cdot \log \width)}$-regular
$(\cdlayeralphabet{\alphabet}{\width},\cdlayeralphabet{\alphabet}{\width})$-transduction 
that sends each reachable, deterministic, complete ODD $\aodd\in\cdodddefiningsetstar{\alphabet}{\width}$ to a minimized,
deterministic, complete ODD $\bodd\in\odddefiningsetstar{\alphabet}{\width}$ with $\oddlang{\aodd}=\oddlang{\bodd}$. This transduction 
simulates the process of merging equivalent states in a ODD. 
\item $\normalizationtransduction{\alphabet}{\width}$ is a functional $2^{O(|\alphabet|\cdot \width \cdot \log \width)}$-regular
$(\cdlayeralphabet{\alphabet}{\width},\cdlayeralphabet{\alphabet}{\width})$-transduction
that sends each deterministic, complete ODD $\aodd\in\odddefiningsetstar{\alphabet}{\width}$ to its normalized 
version $\bodd\in\odddefiningsetstar{\alphabet}{\width}$. This transduction simulates the process of numbering the states of an ODD according 
to their lexicographical order. This guarantees that the ODD is unique not only up to isomorphism, but also syntactically unique.  
\end{enumerate}

Intuitively, the regular transductions above simulate the steps used in the standard ODD minimization algorithm. 
By using Proposition \ref{proposition:PropertiesTransductions}.(2), we have that the transduction $\canonizationtransduction{\alphabet}{\width}$
is $2^{O(|\alphabet|\cdot \width \cdot 2^{\width})}$-regular. The fact that each of the five transductions above is functional implies that 
$\canonizationtransduction{\alphabet}{\width}$ is also functional. Additionally, it is straightforward to note that 
$\domain(\cdcanonizationtransduction{\alphabet}{\width}) = \cdodddefiningsetstar{\alphabet}{\width}$. Finally, a pair of ODDs $(\aodd,\aodd')$ belongs to 
$\cdcanonizationtransduction{\alphabet}{\width}$ if and only if $\aodd'$ is deterministic, complete, minimized, normalized and $\oddlang{\aodd} = \oddlang{\aodd'}$. 
In other words, if and only if $\aodd'$ is the canonical form $\canonizationfunction(\aodd)$ of Theorem \ref{theorem:MinimizationODDs}. 

Now, the transduction $\canonizationtransduction{\alphabet}{\width}$ can be obtained as the composition 

\begin{equation}
\label{equation:TransductionSequenceDeterministic}
\canonizationtransduction{\alphabet}{\width} \defeq \duplicate{\odddefiningsetstar{\alphabet}{\width}} \circ \determinizationtransduction{\alphabet}{\width}\circ
\cdcanonizationtransduction{\alphabet}{2^{\width}}. 
\end{equation}

Here, $\determinizationtransduction{\alphabet}{\width}$ is a functional  $2$-regular 
$(\layeralphabet{\alphabet}{\width},\cdlayeralphabet{\alphabet}{2^{\width}})$-transduction that sends each 
ODD $\aodd\in\odddefiningsetstar{\alphabet}{\width}$ to a deterministic, complete ODD $\bodd\in\odddefiningsetstar{\alphabet}{2^{\width}}$ with $\oddlang{\aodd}=\oddlang{\bodd}$. 
This transduction simulates the application of the standard power set construction to the states of a ODD, and blows the width of the original ODD 
at most exponentially. Since $\cdcanonizationtransduction{\alphabet}{\width}$ is $2^{O(|\alphabet|\cdot \width \cdot \log \width)}$-regular, we have that
$\cdcanonizationtransduction{\alphabet}{2^{\width}}$ is $2^{O(|\alphabet|\cdot \width \cdot 2^{\width})}$-regular. This implies that $\canonizationtransduction{\alphabet}{\width}$
is also $2^{O(|\alphabet|\cdot \width \cdot 2^{\width})}$-regular. 

Next, in Subsection \ref{subsection:BasicTransductions}, we will define two elementary types of regular transductions: the {\em multimap transductions} and 
the {\em compatibility transductions}. Subsequently we will define $\determinizationtransduction{\alphabet}{\width}$, $\reachabilitytransduction{\alphabet}{\width}$,
$\mergingtransduction{\alphabet}{\width}$ and $\normalizationtransduction{\alphabet}{\width}$ using these elementary transductions. 
The determinization transduction $\determinizationtransduction{\alphabet}{\width}$ will be defined in Subsection \ref{subsection:DeterminizationTransduction} and its properties
analyzed in Lemma \ref{lemma:DeterminizationTransduction}. The reachability transduction $\reachabilitytransduction{\alphabet}{\width}$ will be defined
in Subsection \ref{subsection:ReachabilityTransduction}, and its properties analyzed in Lemma \ref{lemma:ReachabilityTransduction}. The merging transduction 
$\mergingtransduction{\alphabet}{\width}$ will be defined in Subsection \ref{subsection:MergingTransduction}, and its properties analyzed in Lemma \ref{lemma:MergingTransduction}. 
The normalization transduction will be defined Subsection \ref{subsection:NormalizationTransduction} and its properties analyzed in Lemma \ref{lemma:NormalizationTransduction}. 
Finally, in Subsection \ref{subsection:FinalProof} we will combine Observation \ref{observation:CopyRegular} with these four lemmas to conclude the proof of Theorem \ref{theorem:CanonizationTransductionTheorem}. 

\subsection{Basic Transductions}
\label{subsection:BasicTransductions}

Let $\alphabet$ be an alphabet and $\arelation\subseteq \alphabet\times \alphabet$ be a binary 
relation over $\alphabet$. 
For each $\stringlength\in\pN$ and each string $\astring=\asymbol_{1}\cdots\asymbol_{\stringlength}\in\alphabet^{\stringlength}$, we say that $\astring$ is {\em $\arelation$-compatible} if $(\asymbol_{i},\asymbol_{i+1})\in \arelation$ for each $i \in \bbset{\stringlength-1}$. 
We let $$\compTransduction{\arelation} \defeq \set{\tuple{\astring,\astring}\in \alphabet^{+}\cartesianproduct \alphabet^{+} \setst \astring \text{ is $\arelation$-compatible}}$$
be the \emph{$\arelation$-compatibility transduction}, \ie the $(\alphabet,\alphabet)$-transduction that sends each $\arelation$-compatible string $\astring\in \alphabet^+$ to itself. 

Let $\aalphabet$ and $\balphabet$ be two alphabets and $\arelation\subseteq \aalphabet\cartesianproduct\balphabet$ be a relation.
We let 
$$\begin{multlined}[t][0.90\textwidth]
    \multimaptransduction{\arelation} \defeq \set{\tuple{\astring,\bstring} \setst \astring=\asymbol_{1}\cdots\asymbol_{\stringlength}\in\aalphabet^{\stringlength},\, \bstring=\bsymbol_{1}\cdots\bsymbol_{\stringlength}\in\balphabet^{\stringlength},  \tuple{\asymbol_{i},\bsymbol_{i}}\in\arelation \text{ for each } i\in \bbset{\stringlength},\, \stringlength\in\pN}
\end{multlined}$$
be the \emph{$\arelation$-multimap transduction}. 
If $\amap\colon\aalphabet\rightarrow \balphabet$ is a map, then we write $\multimaptransduction{\amap}$ to denote the transduction $\multimaptransduction{\arelation_{\amap}}$, where $\arelation_{\amap} \defeq \set{\tuple{\asymbol,\amap(\asymbol)} \setst \asymbol\in \aalphabet}$.

\begin{proposition}\label{proposition:SizeTransductions}
Let $\alphabet$, $\aalphabet$ and $\balphabet$ be three alphabets, and let $\arelation\subseteq \alphabet\times \alphabet$ and $\brelation\subseteq \aalphabet\times \balphabet$ be binary relations. The following statements hold. 
\begin{enumerate}
	\item \label{SizeTransductionsOne} The transduction $\compTransduction{\arelation}$ is $(\abs{\alphabet}+2)$-regular. 
	\item \label{SizeTransductionsTwo} The transduction $\multimaptransduction{\brelation}$ is $2$-regular. 
\end{enumerate}
\end{proposition}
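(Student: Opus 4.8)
The plan is to prove each statement by exhibiting an explicit finite automaton of the claimed size recognizing the tensor-product language $\transductionlang(\cdot)$ associated to the transduction, and then checking that it accepts exactly the intended set of words. In both cases the underlying language turns out to impose only a very local constraint on the symbols of $\alphabet\cartesianproduct\alphabet$ (resp. $\aalphabet\cartesianproduct\balphabet$), so a small automaton suffices.

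For statement~\ref{SizeTransductionsOne}, I would first unfold the definition: $\transductionlang(\compTransduction{\arelation})$ consists precisely of the \emph{diagonal} words $(\asymbol_1,\asymbol_1)(\asymbol_2,\asymbol_2)\cdots(\asymbol_{\stringlength},\asymbol_{\stringlength})$ over $\alphabet\cartesianproduct\alphabet$ such that $\tuple{\asymbol_i,\asymbol_{i+1}}\in\arelation$ for each $i\in\bbset{\stringlength-1}$. To recognize this set I would build a deterministic automaton whose state records the last symbol read: take state set $\set{\astate_0}\cup\set{\astate_{\asymbol}\setst \asymbol\in\alphabet}\cup\set{\astate_{\mathrm{dead}}}$, hence exactly $\abs{\alphabet}+2$ states, with $\astate_0$ initial and each $\astate_\asymbol$ final. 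From $\astate_0$ a diagonal letter $(\asymbol,\asymbol)$ moves to $\astate_\asymbol$; from $\astate_\asymbol$ a diagonal letter $(\bsymbol,\bsymbol)$ with $\tuple{\asymbol,\bsymbol}\in\arelation$ moves to $\astate_\bsymbol$; every other letter (off-diagonal, or diagonal but violating $\arelation$) is sent to the absorbing state $\astate_{\mathrm{dead}}$. A word is accepted iff every letter is diagonal and consecutive first coordinates are $\arelation$-related, which is exactly $\arelation$-compatibility of the common projection; so this automaton recognizes $\transductionlang(\compTransduction{\arelation})$ with $\abs{\alphabet}+2$ states.

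For statement~\ref{SizeTransductionsTwo}, I would observe that $\transductionlang(\multimaptransduction{\brelation})$ is simply the set of nonempty words $(\asymbol_1,\bsymbol_1)\cdots(\asymbol_{\stringlength},\bsymbol_{\stringlength})$ over $\aalphabet\cartesianproduct\balphabet$ all of whose letters lie in $\brelation$, with no constraint linking distinct positions. This memoryless language is recognized by a two-state automaton: states $\set{\astate_0,\astate_1}$ with $\astate_0$ initial and $\astate_1$ final, together with a transition $\tuple{\astate,\tuple{\asymbol,\bsymbol},\astate_1}$ for each $\astate\in\set{\astate_0,\astate_1}$ and each $\tuple{\asymbol,\bsymbol}\in\brelation$ (letters outside $\brelation$ having no outgoing transition). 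This accepts exactly the nonempty words over $\brelation$, yielding $2$-regularity.

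The constructions carry no genuine difficulty; the only points requiring care are bookkeeping ones. First, in the compatibility automaton I must simultaneously enforce the diagonal shape of $\astring\tensorproduct\astring$ \emph{and} the relational constraint on consecutive letters, which is why each non-dead state both remembers the previous symbol and rejects off-diagonal letters. Second, I should be explicit about the exclusion of the empty word (transductions live in $\alphabet^{+}$), which is the reason $\astate_0$ is not made final in either construction and the reason two states rather than one are needed in the second one. The $+2$ in the first bound accounts for the distinguished initial state (no previous letter has yet been read) together with the dead state kept to make the automaton complete.
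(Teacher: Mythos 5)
Your proposal is correct and takes essentially the same approach as the paper: your two-state automaton for $\multimaptransduction{\brelation}$ is identical to the paper's, and your $(\abs{\alphabet}+2)$-state automaton for $\compTransduction{\arelation}$ uses the same idea of dedicating one state per symbol of $\alphabet$ to remember the last letter read. If anything, your construction is slightly more careful than the paper's: by making each state $\astate_{\asymbol}$ final you correctly accept length-one (trivially compatible) strings, whereas the paper's automaton, whose unique final state $\bstate$ can only be reached after reading at least two letters, technically fails to accept them.
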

\begin{proof}
\begin{enumerate}
\item We let $\finiteautomaton_{\multimaptransduction{\brelation}}$ be the finite automaton with state set $\automatonstates(\finiteautomaton_{\multimaptransduction{\brelation}}) = \set{\astate,\bstate}$, initial state set $\automatoninitialstates(\finiteautomaton_{\multimaptransduction{\brelation}}) = \set{\astate}$, final state set $\automatonfinalstates_{\multimaptransduction{\brelation}}=\set{\bstate}$ and transition set $\automatontransitions_{\multimaptransduction{\brelation}} = \set{\tuple{\astate,\tuple{\asymbol,\bsymbol},\bstate}\setst\tuple{\asymbol,\bsymbol}\in\brelation}\cup\set{\tuple{\bstate,\tuple{\asymbol,\bsymbol},\bstate}\setst \tuple{\asymbol,\bsymbol}\in \brelation}$. 
Clearly, $\finiteautomaton$ has exactly two states, namely $\astate$ and $\bstate$. 
Moreover, for each two strings $\astring\in\aalphabet^{+}$ and $\bstring\in\balphabet^{+}$, $\finiteautomaton_{\multimaptransduction{\brelation}}$ accepts the string $\astring\tensorproduct\bstring\in(\aalphabet\cartesianproduct\balphabet)^{+}$ if and only if $\abs{\astring}=\abs{\bstring}$ and $\tuple{\asymbol_{i},\bsymbol_{i}}\in \brelation$ for each $i\in \bbset{\stringlength}$, where $\astring=\asymbol_{1}\cdots\asymbol_{\stringlength}$, $\bstring=\bsymbol_{1}\cdots\bsymbol_{\stringlength}$ and $\stringlength=\abs{\astring}$. \qedhere

\item We let $\finiteautomaton_{\compTransduction{\arelation}}$ be the finite automaton over the alphabet $\alphabet\cartesianproduct\alphabet$, with state set $\automatonstates(\finiteautomaton_{\compTransduction{\arelation}})=\set{\astate,\bstate} \cup \set{\astate_{\asymbol}\setst \asymbol\in \alphabet}$, initial state set $\automatoninitialstates(\finiteautomaton_{\compTransduction{\arelation}}) = \set{\astate}$, final state set $\automatonfinalstates(\finiteautomaton_{\compTransduction{\arelation}}) = \set{\bstate}$ and transition set 
$\automatontransitions(\finiteautomaton_{\compTransduction{\arelation}}) = \set{\tuple{\astate,\tuple{\asymbol,\asymbol},\astate_{\asymbol}} \setst \asymbol \in \alphabet}\cup 
\set{\tuple{\astate_{\asymbol},\tuple{\bsymbol,\bsymbol},\astate_{\bsymbol}} \setst \tuple{\asymbol,\bsymbol} \in \arelation}\cup\set{\tuple{\astate_{\asymbol},\tuple{\bsymbol,\bsymbol},\bstate}\setst\tuple{\asymbol,\bsymbol}\in\arelation}\text{.}$
Clearly, $\finiteautomaton_{\compTransduction{\arelation}}$ has at most $\abs{\alphabet}+2$ states. 
Moreover, it is not hard to check that, for each $\stringlength\in\pN$, $\finiteautomaton_{\compTransduction{\arelation}}$ accepts a string $\astring=\asymbol_{1}\cdots\asymbol_{\stringlength} \in \alphabet^{\stringlength}$ if and only if $\tuple{\asymbol_{i},\asymbol_{i+1}} \in \arelation$ for each $i\in \bset{\stringlength-1}$.  
Therefore, the language of $\finiteautomaton_{\compTransduction{\arelation}}$ is $\automatonlang{\finiteautomaton_{\compTransduction{\arelation}}} = \transductionlang(\compTransduction{\arelation})$.  \vspace{1.0ex}
\end{enumerate}
\end{proof}

The next observation is a direct consequence of Proposition \ref{proposition:PropertiesTransductions}.(3) and Corollary \ref{corollary:AllODDs}. 

\begin{observation}
\label{observation:CopyRegular}
Let $\alphabet$ be an alphabet and $\width\in \pN$. 
\begin{enumerate}
\item $\duplicate{\odddefiningsetstar{\alphabet}{\width}}$ is $2^{O(|\alphabet|\cdot \width \cdot 2^{\width})}$-regular. 
\item $\duplicate{\cdodddefiningsetstar{\alphabet}{\width}}$ is $2^{O(|\alphabet|\cdot \width \cdot \log \width)}$-regular.  
\end{enumerate}
\end{observation}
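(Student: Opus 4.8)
The plan is to read off both bounds directly from the two ingredients named in the statement: Proposition \ref{proposition:PropertiesTransductions}.(\ref{item:automaton_duplicate}) and Corollary \ref{corollary:AllODDs}. Recall that Proposition \ref{proposition:PropertiesTransductions}.(\ref{item:automaton_duplicate}) asserts that whenever a language $\lang$ is $\cgrowthconstant$-regular, the derived transduction $\duplicate{\lang}$ is again $\cgrowthconstant$-regular; the witnessing automaton is obtained from one for $\lang$ by replacing each transition label $\asymbol$ with the diagonal pair $\tuple{\asymbol,\asymbol}$, an operation that preserves the number of states. Hence both items reduce to exhibiting, for the relevant set of ODDs, a finite automaton whose state count meets the target bound, and the whole argument is a one-line instantiation in each case.

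First I would dispatch the second item. By Corollary \ref{corollary:AllODDs}.(2), the SOFA $\finiteautomaton_{\cdlayeralphabet{\alphabet}{\width}}$ is a finite automaton with $2^{O(|\alphabet|\cdot \width\log \width)}$ states whose language is exactly $\cdodddefiningsetstar{\alphabet}{\width}$, so this language is $2^{O(|\alphabet|\cdot \width\log \width)}$-regular. Applying Proposition \ref{proposition:PropertiesTransductions}.(\ref{item:automaton_duplicate}) with $\lang = \cdodddefiningsetstar{\alphabet}{\width}$ immediately yields that $\duplicate{\cdodddefiningsetstar{\alphabet}{\width}}$ is $2^{O(|\alphabet|\cdot \width\log \width)}$-regular, as required.

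For the first item the same template applies, modulo a small bookkeeping step. By Corollary \ref{corollary:AllODDs}.(1), the SOFA $\finiteautomaton_{\layeralphabet{\alphabet}{\width}}$ has $2^{O(|\alphabet|\cdot \width^2)}$ states and recognizes $\odddefiningsetstar{\alphabet}{\width}$, so that language is $2^{O(|\alphabet|\cdot \width^2)}$-regular and, by Proposition \ref{proposition:PropertiesTransductions}.(\ref{item:automaton_duplicate}), so is $\duplicate{\odddefiningsetstar{\alphabet}{\width}}$. The only point to reconcile is that the statement advertises the weaker bound $2^{O(|\alphabet|\cdot \width \cdot 2^{\width})}$ rather than $2^{O(|\alphabet|\cdot \width^2)}$. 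This is immediate from the elementary inequality $\width \le 2^{\width}$ (valid for every $\width\in \pN$), which gives $\width^2 \le \width\cdot 2^{\width}$ and hence $2^{O(|\alphabet|\cdot \width^2)} \subseteq 2^{O(|\alphabet|\cdot \width \cdot 2^{\width})}$. I would record the sharper $\width^2$ bound in passing and then relax it, since the uniform $2^{\width}$ form is precisely what the later composition in Equation (\ref{equation:TransductionSequenceDeterministic}) consumes.

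There is essentially no obstacle here: the content is entirely carried by the two cited results, and the single subtlety is the harmless loosening of the exponent in the first item via $\width \le 2^{\width}$, which is exactly why the claim is stated as an Observation rather than a Lemma. I would keep the final write-up to two or three sentences, citing Proposition \ref{proposition:PropertiesTransductions}.(\ref{item:automaton_duplicate}) together with each part of Corollary \ref{corollary:AllODDs} in turn.
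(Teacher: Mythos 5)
Your proposal is correct and is exactly the paper's own argument: the paper states this observation as a direct consequence of Proposition~\ref{proposition:PropertiesTransductions}.(\ref{item:automaton_duplicate}) applied to the automata of Corollary~\ref{corollary:AllODDs}, which is precisely what you do. Your additional remark that item (1) actually admits the sharper bound $2^{O(|\alphabet|\cdot \width^2)}$, relaxed to $2^{O(|\alphabet|\cdot \width \cdot 2^{\width})}$ only because $\width^2 \le \width\cdot 2^{\width}$, is a valid and accurate reading of why the stated exponent is the one consumed downstream.
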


\subsection{Determinization Transduction}
\label{subsection:DeterminizationTransduction}
In this subsection, we define the determinization transduction $\determinizationtransduction{\alphabet}{\width}$, 
which intuitively simulates the application of the well known power-set construction to the layers of a $(\alphabet,\width)$-ODD.

For each $\width\in \pN$, we let $\pwbijection \colon \pwset{\dbset{\width})} \rightarrow \dbset{2^{\width}}$ be the bijection that sends each subset $\sset \subseteq \dbset{\width}$ to the natural number $\pwbijection(\sset) \defeq \sum_{i \in \sset} 2^{i}$.
In particular, we remark that $\pwbijection(\emptyset) = 0$ and $\pwbijection(\set{i})=2^{i}$ for each $i \in \sset$. 

Let $\alphabet$ be an alphabet, $\width\in\pN$, $\alayer\in\layeralphabet{\alphabet}{\width}$, $\sset \subseteq \layerleftfrontier(\alayer)$ and $\alphabet' \subseteq \alphabet$. 
We let $\layerreachablestates{\alayer}{\sset}{\alphabet'}$ be the set of all right states of $\alayer$ that are reachable from some left state in $\sset$ by reading some symbol in $\alphabet'$. 
More formally, $$\layerreachablestates{\alayer}{\sset}{\alphabet'}\defeq\set{\layerrightstate \in \layerrightfrontier(\layer) \setst\exists\,\layerleftstate\in \sset, \exists\,\asymbol \in \alphabet',\tuple{\layerleftstate,\asymbol,\layerrightstate} \in \layertransitions(\alayer)}\text{.}$$

For each alphabet $\alphabet$ and each $\width\in \pN$, we let $\lpwmappingname[\alphabet,\width]\colon\layeralphabet{\alphabet}{\width}\rightarrow \cdlayeralphabet{\alphabet}{2^{\width}}$
be the map that sends each layer $\alayer \in \layeralphabet{\alphabet}{\width}$ to the deterministic, complete layer 
$\lpwmappingname(\alayer)\in\cdlayeralphabet{\alphabet}{2^{\width}}$ defined as follows: 
  \begin{itemize}
	  	\item $\layerleftfrontier(\lpwmapping{\alayer}) \defeq 
		\begin{cases}
		  \set{\pwbijection(\layerinitialstates(\alayer))} & \text{ if } \layerinitialflag(\alayer) = \true\\
		  \set{\pwbijection(\sset) \setst \sset \subseteq \layerleftfrontier(\alayer)} & \text{ otherwise;}
		\end{cases}$
		\vspace{1.0ex}

		\item $\layerrightfrontier(\lpwmapping{\alayer}) \defeq \set{\pwbijection(\sset) \setst \sset \subseteq \layerrightfrontier(\alayer)}$; 
		\vspace{1.0ex}
		\item $\layertransitions(\lpwmapping{\alayer}) \defeq 
			\begin{cases}
				\{\tuple*{\pwbijection(\layerinitialstates(\alayer)), \asymbol, \pwbijection(\layerreachablestates{\alayer}{\layerinitialstates(\alayer)}{\set{\asymbol}}}, \asymbol \in \alphabet\} & \text{ if } \layerinitialflag(\alayer) = \true \\
				\{\tuple*{\pwbijection(\sset), \asymbol, \pwbijection(\layerreachablestates{\alayer}{\sset}{\set{\asymbol}}} \setst \sset \subseteq \layerleftfrontier(\alayer), \asymbol\in\alphabet\} & \text{ otherwise; }
			\end{cases}$
		\vspace{0.5ex}
		\item $\layerinitialstates(\lpwmapping{\alayer}) \defeq
			\begin{cases}
			  \set{\pwbijection(\layerinitialstates(\alayer))} & \text{if } \layerinitialflag(\alayer) = \true\\
			  \emptyset & \text{otherwise;}
			\end{cases}$
		\vspace{1.0ex}
		\item $\layerfinalstates(\lpwmapping{\alayer}) \defeq \set{\pwbijection(\sset) \setst \sset \subseteq \layerrightfrontier(\alayer), \sset \cap \layerfinalstates(\alayer) \neq \emptyset}$;
		\vspace{1.0ex}
		\item $\layerinitialflag(\lpwmapping{\alayer}) \defeq \layerinitialflag(\alayer)$; 
		\vspace{1.0ex}
		\item $\layerfinalflag(\lpwmapping{\alayer}) \defeq \layerfinalflag(\alayer)$.
	\end{itemize}

	Let $\alphabet$ be an alphabet, $\width\in\pN$, and let $\alayer \in \layeralphabet{\alphabet}{\width}$. 
	Since $\pwbijection$ is a bijection, there exists precisely one right state $\layerrightstate \in \layerrightfrontier(\lpwmapping{\alayer})$, namely $\layerrightstate=\pwbijection(\layerreachablestates{\alayer}{\sset}{\set{\asymbol}})$, such that $\tuple{\pwbijection(\sset), \asymbol, \layerrightstate} \in \layertransitions(\lpwmapping{\alayer})$ for each subset $\sset \subseteq \dbset{\width}$ with $\pwbijection(\sset) \in \layerleftfrontier(\lpwmapping{\alayer})$ and each symbol $\asymbol \in \alphabet$. 
  	Furthermore, note that $\layerinitialflag(\lpwmapping{\alayer}) = \true$ implies $\layerinitialflag(\alayer)=\true$. 
  	Thus, if $\layerinitialflag(\lpwmapping{\alayer}) = \true$, then $\layerinitialstates(\lpwmapping{\alayer}) = \layerleftfrontier(\lpwmapping{\alayer}) = \set{\pwbijection(\layerinitialstates(\alayer))}$.
  	As a result, $\lpwmapping{\layer}$ is indeed a deterministic, complete layer in $\cdlayeralphabet{\alphabet}{2^{\width}}$.

Now, for each alphabet $\alphabet$ and each positive integer $\width\in \pN$, we define the 
$(\layeralphabet{\alphabet}{\width},\cdlayeralphabet{\alphabet}{\width})$-transduction 
$\determinizationtransduction{\alphabet}{\width} \defeq \multimaptransduction{\lpwmappingname[\alphabet,\width]}$.
The next lemma states that $\determinizationtransduction{\alphabet}{\width}$ sends 
each ODD $\aodd\in\odddefiningsetstar{\alphabet}{\width}$ to a deterministic, complete ODD
$\bodd\in\cdodddefiningsetstar{\alphabet}{\width}$ that has the same language as $\aodd$. 

\begin{lemma}[Determinization Transduction]\label{lemma:DeterminizationTransduction}
For each alphabet $\alphabet$ and each positive integer $\width\in\pN$, the following statements hold.
\begin{enumerate}
	\item\label{DeterminizationOne} $\determinizationtransduction{\alphabet}{\width}$ is functional. 
	\item\label{DeterminizationTwo} $\domain(\determinizationtransduction{\alphabet}{\width}) \supseteq \odddefiningsetstar{\alphabet}{\width}$.
	\item\label{DeterminizationThree} For each pair $(\aodd,\bodd)\in \determinizationtransduction{\alphabet}{\width}$, if $\aodd\in\odddefiningsetstar{\alphabet}{\width}$, then $\bodd\in\cdodddefiningsetstar{\alphabet}{2^\width}$ and $\oddlang{\bodd} = \oddlang{\aodd}$.
	\item\label{DeterminizationFour} $\determinizationtransduction{\alphabet}{\width}$ is $2$-regular. 
\end{enumerate}
\end{lemma}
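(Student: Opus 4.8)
The plan is to dispatch statements \ref{DeterminizationOne}, \ref{DeterminizationTwo} and \ref{DeterminizationFour} directly from the fact that $\determinizationtransduction{\alphabet}{\width}$ is the multimap transduction $\multimaptransduction{\lpwmappingname[\alphabet,\width]}$ of the \emph{total function} $\lpwmappingname[\alphabet,\width]\colon \layeralphabet{\alphabet}{\width}\rightarrow \cdlayeralphabet{\alphabet}{2^{\width}}$, and to concentrate the real work on statement \ref{DeterminizationThree}. For \ref{DeterminizationOne}, since $\lpwmappingname[\alphabet,\width]$ is a function its graph relates each layer to exactly one layer, and the multimap transduction applies it coordinatewise; hence each input string has exactly one image and the transduction is functional. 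For \ref{DeterminizationTwo}, totality of $\lpwmappingname[\alphabet,\width]$ on $\layeralphabet{\alphabet}{\width}$ ensures that the multimap transduction is defined on every nonempty string over $\layeralphabet{\alphabet}{\width}$, and in particular on every ODD in $\odddefiningsetstar{\alphabet}{\width}$, so $\domain(\determinizationtransduction{\alphabet}{\width})\supseteq \odddefiningsetstar{\alphabet}{\width}$. Finally, \ref{DeterminizationFour} is immediate from Proposition \ref{proposition:SizeTransductions}.(2), which asserts that every multimap transduction is $2$-regular.

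The substance is in \ref{DeterminizationThree}. Fix $\aodd = \alayer_1\cdots \alayer_{\oddlength}\in \odddefiningsetstar{\alphabet}{\width}$; by functionality its unique image is $\bodd = \lpwmapping{\alayer_1}\cdots \lpwmapping{\alayer_{\oddlength}}$. First I would verify that $\bodd$ is a well-formed $(\alphabet,2^{\width})$-ODD. The frontier-compatibility condition holds because, for each $i\in\bset{\oddlength-1}$, we have $\layerrightfrontier(\lpwmapping{\alayer_i}) = \set{\pwbijection(\sset)\setst \sset\subseteq \layerrightfrontier(\alayer_i)}$ and, using $\layerinitialflag(\alayer_{i+1}) = \false$ together with the ODD identity $\layerrightfrontier(\alayer_i) = \layerleftfrontier(\alayer_{i+1})$, also $\layerleftfrontier(\lpwmapping{\alayer_{i+1}}) = \set{\pwbijection(\sset)\setst \sset\subseteq \layerleftfrontier(\alayer_{i+1})}$, so the two sets coincide. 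The initial- and final-flag conditions transfer verbatim because $\lpwmappingname[\alphabet,\width]$ copies both flags. Membership of each $\lpwmapping{\alayer_i}$ in $\cdlayeralphabet{\alphabet}{2^{\width}}$, i.e. determinism and completeness, was already established in the remarks immediately following the definition of $\lpwmappingname[\alphabet,\width]$, whence $\bodd\in \cdodddefiningsetstar{\alphabet}{2^{\width}}$.

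The heart of \ref{DeterminizationThree} is the language identity $\oddlang{\bodd} = \oddlang{\aodd}$, which is exactly the correctness of the layerwise subset construction. Both languages are length-typed at $\oddlength$, so it suffices to compare acceptance of an arbitrary $\astring = \asymbol_1\cdots \asymbol_{\oddlength}\in \alphabet^{\oddlength}$. For each $i\in \bset{\oddlength}$ let $\sset_i\subseteq \layerrightfrontier(\alayer_i)$ be the set of right states of $\alayer_i$ reachable in $\aodd$ from an initial state along the prefix $\asymbol_1\cdots \asymbol_i$. I would prove by induction on $i$ that the unique state reached in the deterministic, complete ODD $\bodd$ after reading $\asymbol_1\cdots \asymbol_i$, starting from its single initial state $\pwbijection(\layerinitialstates(\alayer_1))$, equals $\pwbijection(\sset_i)$. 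The base case $i=1$ uses the $\layerinitialflag = \true$ branch of $\layertransitions(\lpwmapping{\alayer_1})$, and the inductive step uses the $\layerinitialflag = \false$ branch together with the set identity $\layerreachablestates{\alayer_{i+1}}{\sset_i}{\set{\asymbol_{i+1}}} = \sset_{i+1}$. Granting the invariant, $\bodd$ accepts $\astring$ iff $\pwbijection(\sset_{\oddlength})\in \layerfinalstates(\lpwmapping{\alayer_{\oddlength}})$, and by the defining clause for $\layerfinalstates(\lpwmapping{\alayer_{\oddlength}})$ this holds iff $\sset_{\oddlength}\cap \layerfinalstates(\alayer_{\oddlength})\neq \emptyset$, i.e. iff $\aodd$ admits an accepting sequence for $\astring$. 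Hence $\oddlang{\bodd} = \oddlang{\aodd}$.

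The main obstacle is the bookkeeping in this induction: phrasing the reachable-set invariant precisely, matching it against the single deterministic transition of $\bodd$ at each layer, and cleanly handling the asymmetry between the first layer---whose left frontier is the singleton $\set{\pwbijection(\layerinitialstates(\alayer_1))}$---and the subsequent layers, whose left frontiers carry the full image $\set{\pwbijection(\sset)\setst \sset\subseteq \layerleftfrontier(\alayer_i)}$. Everything else reduces to the preparatory observations already recorded after the definition of $\lpwmappingname[\alphabet,\width]$.
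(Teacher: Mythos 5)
Your proposal is correct and follows essentially the same route as the paper's proof: items (\ref{DeterminizationOne}), (\ref{DeterminizationTwo}) and (\ref{DeterminizationFour}) are dispatched exactly as in the paper from the totality and single-valuedness of the map $\lpwmappingname[\alphabet,\width]$ together with Proposition~\ref{proposition:SizeTransductions}, and item (\ref{DeterminizationThree}) is the same subset-construction correctness argument, resting on the frontier bijection $\pwbijection$, the transfer of flags, and the reachable-set sequence $\sset_0,\sset_1,\ldots,\sset_{\oddlength}$. The only difference is presentational: the paper proves $\oddlang{\bodd}=\oddlang{\aodd}$ by explicitly constructing accepting sequences in both directions, whereas you package the same content as a single induction on the unique run of the deterministic, complete ODD $\bodd$; both are sound.
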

\begin{proof}
First, we note that $\domain(\determinizationtransduction{\alphabet}{\width}) = \layeralphabet{\alphabet}{\width}^{+}$.
This follows from the fact that $\lpwmappingname$ is a \emph{map from the alphabet $\layeralphabet{\alphabet}{\width}$} to the alphabet $\cdlayeralphabet{\alphabet}{2^{\width}}$. 
Thus, for each $\oddlength\in\pN$ and each string $\aodd=\alayer_{1}\cdots\alayer_{\oddlength}\in\layeralphabet{\alphabet}{\width}^{\oddlength}$, there exists  exactly one string $\bodd$ over $\cdlayeralphabet{\alphabet}{2^{\width}}$ such that $\tuple{\aodd,\bodd}\in\determinizationtransduction{\alphabet}{\width}$, namely the string $\bodd=\lpwmapping{\aodd}=\lpwmapping{\alayer_1}\cdots\lpwmapping{\alayer_{\oddlength}}$. 
Consequently, $\domain(\determinizationtransduction{\alphabet}{\width}) \supseteq \odddefiningsetstar{\alphabet}{\width}$. 
Moreover, by the uniqueness of the string $\bodd=\lpwmapping{\aodd}$ with $\tuple{\aodd,\bodd}\in\determinizationtransduction{\alphabet}{\width}$ for each $\aodd\in\layeralphabet{\alphabet}{\width}^{+}$, we obtain that $\determinizationtransduction{\alphabet}{\width}$ is a functional transduction. 

Now, let $\aodd = \alayer_1\cdots\alayer_{\oddlength}\in\odddefiningset{\alphabet}{\width}{\oddlength}$ for some $\oddlength \in \pN$.
Since $\pwbijection$ is a bijection, for each $i \in \bset{\oddlength-1}$, $\layerleftfrontier(\lpwmapping{\alayer_{i+1}}) = \layerrightfrontier(\lpwmapping{\alayer_i})$ if and only if $\layerleftfrontier\tuple{\alayer_{i+1}} = \layerrightfrontier(\alayer_i)$.
Furthermore, $\layerinitialflag(\lpwmapping{\alayer_i})=\layerinitialflag(\alayer_i)$ and $\layerfinalflag(\lpwmapping{\alayer_i})=\layerfinalflag(\alayer_i)$ for each $i \in \bset{\oddlength}$.
Thus, owing to fact that  $\aodd\in\odddefiningset{\alphabet}{\width}{\oddlength}$, $\lpwmapping{\aodd}=\lpwmapping{\alayer_1}\cdots\lpwmapping{\alayer_{\oddlength}}\in\odddefiningset{\alphabet}{2^{\width}}{\oddlength}$.
More specifically, $\lpwmapping{\aodd}$ is a deterministic, complete ODD in $\cdodddefiningset{\alphabet}{2^{\width}}{\oddlength}$.
Indeed, this follows from the fact that $\lpwmapping{\alayer_i}$ is a deterministic, complete $(\alphabet,\width)$-layer for each $i \in \bset{\oddlength}$. 
Thus, it just remains to prove that $\oddlang{\lpwmapping{\aodd}} = \oddlang{\aodd}$.
Let $\astring = \asymbol_1\cdots\asymbol_{\oddlength}$ be a string in $\alphabet^{\oddlength}$. 

First, suppose that $\astring \in \oddlang{\aodd}$. 
Then, there exists an accepting sequence 
$$\sequence{\tuple{\layerleftstate_1, \asymbol_1, \layerrightstate_1}, \ldots, \tuple{\layerleftstate_{\oddlength}, \asymbol_{\oddlength}, \layerrightstate_{\oddlength}}}$$
for $\astring$ in $\aodd$.
Let $\sset_0 = \layerinitialstates\tuple{\alayer_1}$ and, for each $i \in \dbset{\oddlength}$, let $\sset_{i+1} = \layerreachablestates{\alayer_{i+1}}{\sset_i}{\set{\asymbol_{i+1}}}$.
Note that $\sset_{i} \subseteq \layerleftfrontier(\layer_{i+1})$ for each $i \in \dbset{\oddlength}$. 
Furthermore, for each $i \in \bset{\oddlength}$, we have that $\layerrightstate_{i} \in \sset_{i}$, \ie $\layerrightstate_{i} \in \layerreachablestates{\alayer_{i}}{\sset_{i-1}}{\set{\asymbol_{i}}}$, otherwise $\tuple{\layerleftstate_{i}, \asymbol_{i}, \layerrightstate_{i}} \not\in \layertransitions(\alayer_{i})$.
Therefore,
$$\sequence{\tuple{\pwbijection(\sset_0), \asymbol_1, \pwbijection(\sset_1)}, \ldots, \tuple{\pwbijection(\sset_{\oddlength-1}), \asymbol_{\oddlength}, \pwbijection(\sset_{\oddlength})}}$$
is an accepting sequence for $\astring$ in $\lpwmapping{\aodd}$, and we obtain that $\astring \in \oddlang{\lpwmapping{\aodd}}$.

Conversely, suppose that $\astring \in \oddlang{\lpwmapping{\aodd}}$. 
Then, there exists an accepting sequence $$\sequence{\tuple{\pwbijection(\sset_0), \asymbol_1, \pwbijection(\sset_1)}, \ldots,\tuple{\pwbijection(\sset_{\oddlength-1}), \asymbol_{\oddlength}, \pwbijection(\sset_{\oddlength})}}$$ for $\astring$ in $\lpwmapping{\aodd}$, where $\sset_0 = \layerinitialstates(\alayer_1)$ and $\sset_{i+1} = \layerreachablestates{\alayer_{i+1}}{\sset_i}{\set{\asymbol_{i+1}}}$ for each $i \in \dbset{\oddlength}$. 
Thus, let $\layerleftstate_{\oddlength} \in \sset_{\oddlength-1}$ and   $\layerrightstate_{\oddlength} \in \sset_{\oddlength}$ such that $\tuple{\layerleftstate_{\oddlength}, \asymbol_{\oddlength}, \layerrightstate_{\oddlength}} \in \layertransitions(\alayer_{\oddlength})$.
Moreover, for each $i \in \bset{\oddlength-1}$, let $\layerleftstate_{i} \in \sset_{i-1}$ and $\layerrightstate_{i} \in \sset_{i}$ such that $\layerrightstate_{i} = \layerleftstate_{i+1}$ and $\tuple{\layerleftstate_{i}, \asymbol_{i}, \layerrightstate_{i}} \in \layertransitions(\alayer_{i})$.
We note that for each $i \in \dbset{k}$, there exist left states and right states $\layerleftstate_{i+1}$ and $\layerrightstate_{i+1}$ as described above, otherwise $\tuple{\pwbijection(\sset_{i}), \asymbol_{i+1}, \pwbijection(\sset_{i+1})}$ would not be a transition in $\layertransitions(\lpwmapping{\alayer_{i+1}})$. 
Therefore, 
$$\sequence{\tuple{\layerleftstate_1, \asymbol_1, \layerrightstate_1}, \ldots, \tuple{\layerleftstate_{\oddlength}, \asymbol_{\oddlength}, \layerrightstate_{\oddlength}}}$$
is an accepting sequence for $\astring$ in $\aodd$, and $\astring \in \oddlang{\aodd}$. Finally, the fact that $\determinizationtransduction{\alphabet}{\width}$ is $2$-regular follows from the fact that 
$\determinizationtransduction{\alphabet}{\width} \defeq \multimaptransduction{\lpwmappingname[\alphabet,\width]}$ is an instantiation of a multimap transduction and that 
multimap transductions are $2$-regular (Proposition \ref{proposition:SizeTransductions}.(1)). 
\end{proof}

\subsection{Reachability Transduction}
\label{subsection:ReachabilityTransduction}

In this subsection, we define the {\em reachability transduction}, which intuitively simulates the process of eliminating unreachable
states from the frontiers of each layer of an ODD. It is worth noting that unlike the determinization transduction,
that can be defined using a map that acts layerwisely, the reachability transduction will require the use of a compatibility transduction. 
The issue is that reachability of a given state $\astate$ in a given $\layer$ belonging to a given ODD $\aodd$ is a property that 
depends on which layers have been read before $\alayer$. To circumvent this issue, the action of the reachability transduction on
a ODD $\aodd$ can be described in three intuitive steps. First, we use a multimap transduction to expand each layer of the ODD into 
a set of {\em annotated} layers. Each annotation splits states of a layer into two classes: those that are deemed to be useful, and those 
that should be deleted. Subsequently, we use a compatibility transduction
to ensure that only sequences of annotated layers with compatible annotations are considered to be legal. The crucial observation is that 
each ODD $\aodd$ has a unique annotated version where each two adjacent annotated layers are compatible with each other. Finally, 
we apply a mapping that sends each annotated layer to the layer obtained by deleting the states that have been marked for deletion. The
resulting ODD is then the unique ODD obtained from $\aodd$ by eliminating unreachable states.

Let $\alphabet$ be an alphabet, $\width \in \pN$ and $\alayer\in\cdlayeralphabet{\alphabet}{\width}$.
A \emph{reachability annotation} for $\alayer$ is a pair $\tuple{\reachabilityleftname,\reachabilityrightname}$ of functions $\reachabilityleftname\colon\layerleftfrontier(\layer)\rightarrow\set{\false,\true}$ and $\reachabilityrightname\colon\layerrightfrontier(\layer)\rightarrow\set{\false,\true}$ that satisfies the following conditions: \vspace{1.0ex}
\begin{enumerate}
	\item if $\layerinitialflag(\layer)=\true$, then, for each left state $\layerleftstate\in\layerleftfrontier(\layer)$, $\reachabilityleft{\layerleftstate}=\true$ if and only if $\layerleftstate\in\layerinitialstates(\layer)$; 
	\item for each right state $\layerrightstate\in\layerrightfrontier(\layer)$, $\reachabilityright{\layerrightstate}=\true$ if and only if there exists $\layerleftstate\in\layerleftfrontier(\layer)$ and $\asymbol\in\alphabet$ 
	such that $\reachabilityleft{\layerleftstate}=\true$ and $\tuple{\layerleftstate,\asymbol,\layerrightstate}\in\layertransitions(\layer)$. 
\end{enumerate}

Let $\alphabet$ be an alphabet, $\width,\length\in\pN$, and let $\aodd = \alayer_{1}\cdots\alayer_{\oddlength}\in\cdodddefiningset{\alphabet}{\width}{\oddlength}$. 
A {\em reachability annotation} for $\aodd$ is a sequence $\sequence{\tuple{\reachabilityleftname_{1},\reachabilityrightname_{1}},\ldots,\tuple{\reachabilityleftname_{\oddlength},\reachabilityrightname_{\oddlength}}}$ that satisfies the following conditions: 
\begin{enumerate}
	\item for each $i\in \bset{\oddlength}$, $\tuple{\reachabilityleftname_i,\reachabilityrightname_i}$ is a 
		reachability annotation for $\alayer_i$; 
	\item for each $i\in \bset{\oddlength-1}$, $\reachabilityrightname_{i}=\reachabilityleftname_{i+1}$. 
\end{enumerate}

\begin{proposition}\label{proposition:UniqueReachability}
	Let $\alphabet$ be an alphabet and $\width \in \pN$. 
	Every ODD $\aodd \in \cdodddefiningsetstar{\alphabet}{\width}$
	admits a unique reachability annotation. 
\end{proposition}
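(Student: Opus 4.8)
The plan is to build the annotation deterministically from left to right, showing that at every stage the relevant function is forced by the defining conditions, so that the candidate annotation both exists and is unique. The crucial structural fact I would use first is that, since $\aodd=\alayer_1\cdots\alayer_{\oddlength}$ is an ODD, we have $\layerinitialflag(\alayer_1)=\true$ while $\layerinitialflag(\alayer_i)=\false$ for every $i\in\set{2,\ldots,\oddlength}$ (Conditions of the definition of an ODD), and moreover $\layerrightfrontier(\alayer_i)=\layerleftfrontier(\alayer_{i+1})$ for each $i\in\bset{\oddlength-1}$. This asymmetry between the first layer and the remaining layers is exactly what seeds the induction.

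For the base case I would argue that Condition~1 of a single-layer reachability annotation, applied to $\alayer_1$ where $\layerinitialflag(\alayer_1)=\true$, forces $\reachabilityleftname_1$ to be the indicator of the initial set: $\reachabilityleft[1]{\layerleftstate}=\true$ if and only if $\layerleftstate\in\layerinitialstates(\alayer_1)$. Hence $\reachabilityleftname_1$ admits no freedom. Condition~2 then determines $\reachabilityrightname_1$ uniquely from $\reachabilityleftname_1$, namely $\reachabilityright[1]{\layerrightstate}=\true$ exactly when some transition $\tuple{\layerleftstate,\asymbol,\layerrightstate}\in\layertransitions(\alayer_1)$ has $\reachabilityleft[1]{\layerleftstate}=\true$.

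The inductive step proceeds as follows. Assuming $\reachabilityleftname_i$ and $\reachabilityrightname_i$ are already forced, the frontier equality $\layerrightfrontier(\alayer_i)=\layerleftfrontier(\alayer_{i+1})$ makes the consistency requirement $\reachabilityleftname_{i+1}=\reachabilityrightname_i$ type-correct and pins down $\reachabilityleftname_{i+1}$. Since $\layerinitialflag(\alayer_{i+1})=\false$, Condition~1 is vacuous and imposes nothing further on $\reachabilityleftname_{i+1}$; Condition~2 then forces $\reachabilityrightname_{i+1}$. This produces a single candidate sequence $\sequence{\tuple{\reachabilityleftname_1,\reachabilityrightname_1},\ldots,\tuple{\reachabilityleftname_{\oddlength},\reachabilityrightname_{\oddlength}}}$. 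To finish I would verify existence by checking that this candidate is genuinely a reachability annotation---each pair satisfies Condition~1 (holding at $i=1$ by construction, vacuous for $i\geq 2$) and Condition~2 (by construction), while the inter-layer condition $\reachabilityrightname_i=\reachabilityleftname_{i+1}$ holds by construction---and uniqueness by noting that the very same induction shows any reachability annotation of $\aodd$ must agree with the constructed one coordinate by coordinate.

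I do not expect a genuine obstacle here: the whole statement reduces to the observation that the initial flag, present only on $\alayer_1$, eliminates all freedom in $\reachabilityleftname_1$, after which the frontier-consistency condition propagates this rigidity through every subsequent layer. The only point deserving care is confirming that for layers beyond the first the vacuity of Condition~1 does not reintroduce ambiguity; this is precisely what the consistency condition $\reachabilityleftname_{i+1}=\reachabilityrightname_i$ rules out. Note that neither determinism nor completeness of $\aodd$ is actually needed; the argument uses only the ODD structure, though the statement is phrased for $\cdodddefiningsetstar{\alphabet}{\width}$ since that is the domain on which it will be applied.
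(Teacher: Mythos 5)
Your proof is correct and follows essentially the same route as the paper's: both arguments observe that the initial flag forces $\reachabilityleftname_{1}$, that each left function uniquely determines the corresponding right function via Condition~2, and that the frontier-consistency requirement $\reachabilityrightname_{i}=\reachabilityleftname_{i+1}$ propagates this rigidity inductively from left to right (the paper phrases this as induction on the length $\oddlength$, you as induction on the layer index, which is the same argument). Your closing remark that determinism and completeness are never used is also accurate, though the paper does not make it explicit.
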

\begin{proof}
	First, we observe that for each layer $\alayer\in\cdlayeralphabet{\alphabet}{\width}$ and each function $\reachabilityleftname\colon\layerleftfrontier(\alayer)\rightarrow\set{\false,\true}$, there exists exactly one function $\reachabilityrightname\colon\layerrightfrontier(\layer)\rightarrow\set{\false,\true}$ such that $\tuple{\reachabilityleftname,\reachabilityrightname}$ is a 
	reachability annotation for $\alayer$. 

	Let $\oddlength\in\pN$ and $\aodd=\alayer_{1}\cdots\alayer_{\oddlength}\in\cdodddefiningset{\alphabet}{\width}{\oddlength}$, such that $\layerleftfrontier(\alayer_{i+1})=\layerrightfrontier(\alayer_{i})$ for each $i\in\bset{\oddlength-1}$, and $\layerinitialflag(\alayer_{1})=1$ and $\layerinitialflag(\alayer_{i})=0$ for each $i\in\set{2,\ldots,\oddlength}$. 
	Based on the previous observation, we prove by induction on $\oddlength$ that the following statement holds: there exists a unique sequence $\sequence{\tuple{\reachabilityleftname_{1},\reachabilityrightname_{1}},\ldots,\tuple{\reachabilityleftname_{\oddlength},\reachabilityrightname_{\oddlength}}}$ such that $\reachabilityleftname_{i}=\reachabilityrightname_{i+1}$ for each $i\in\bset{\oddlength-1}$, and $\tuple{\reachabilityleftname_{i},\reachabilityrightname_{i}}$ is a reachability annotation for $\alayer_i$ 
	for each $i\in\bset{\oddlength}$. 

	\emph{Base case.} Consider $\oddlength=1$.  
	 Since $\layerinitialflag(\alayer_{\oddlength})=\true$, the function $\reachabilityleftname_{\oddlength}\colon\layerleftfrontier(\alayer_{\oddlength})\rightarrow\set{\false,\true}$ is uniquely determined. 
	 Indeed, by definition, for each left state $\layerleftstate\in\layerleftfrontier(\alayer_{\oddlength})$, $\reachabilityleft[\oddlength]{\layerleftstate}=\true$ if $\layerleftstate\in\layerinitialstates(\alayer_{\oddlength})$, and $\reachabilityleft[\oddlength]{\layerleftstate}=\false$ otherwise. 
	 Thus, there exists a unique sequence $\sequence{\tuple{\reachabilityleftname_{\oddlength},\reachabilityrightname_{\oddlength}}}$ such that $\tuple{\reachabilityleftname_{\oddlength},\reachabilityrightname_{\oddlength}}$ is a reachability annotation for $\alayer_{\oddlength}$. 

	\emph{Inductive step.} Consider $\oddlength>1$.  
 	Let $\bodd=\alayer_{1}\cdots\alayer_{\oddlength-1}$ be the string obtained from $\aodd=\alayer_{1}\cdots\alayer_{\oddlength}$ by removing the layer $\alayer_{\oddlength}$. 
	It follows from the inductive hypothesis that there exists a unique sequence $\sequence{\tuple{\reachabilityleftname_{1},\reachabilityrightname_{1}},\ldots,\tuple{\reachabilityleftname_{\oddlength-1},\reachabilityrightname_{\oddlength-1}}}$ such that $\reachabilityleftname_{i}=\reachabilityrightname_{i+1}$ for each $i\in\bset{\oddlength-2}$, and $\tuple{\reachabilityleftname_{i},\reachabilityrightname_{i}}$ is a reachability annotation for $\alayer_i$ for each $i\in\bset{\oddlength-1}$.  	
	In particular, we note that the function $\reachabilityrightname_{\oddlength-1}$ is uniquely determined. 
	Furthermore, based on the previous observation, for each function $\reachabilityleftname_{\oddlength}\colon\layerleftfrontier(\alayer_{\oddlength})\rightarrow\set{\false,\true}$, 
	there exists a unique function $\reachabilityrightname_{\oddlength}\colon\layerrightfrontier(\alayer_{\oddlength})\rightarrow\set{\false,\true}$ 
	such that $\tuple{\reachabilityleftname_{\oddlength},\reachabilityrightname_{\oddlength}}$ is a reachability annotation for $\alayer_{\oddlength}$.  
	Therefore, since $\reachabilityleftname_{\oddlength}$ must be equal to $\reachabilityrightname_{\oddlength-1}$, there exists a unique sequence $\sequence{\tuple{\reachabilityleftname_{1},\reachabilityrightname_{1}},\ldots,\tuple{\reachabilityleftname_{\oddlength},\reachabilityrightname_{\oddlength}}}$ such that $\reachabilityleftname_{i}=\reachabilityrightname_{i+1}$ for each $i\in\bset{\oddlength-1}$ and $\tuple{\reachabilityleftname_{i},\reachabilityrightname_{i}}$ is a reachability annotation for $\alayer_i$ for each $i\in\bset{\oddlength}$. 
\end{proof}

Let $\alphabet$ be an alphabet and $\width\in \pN$. 
We denote by $\reachabilityalphabet{\alphabet}{\width}$ the set consisting of all triples $\tuple{\alayer,\reachabilityleftname,\reachabilityrightname}$ such that $\alayer$ is a layer in $\cdlayeralphabet{\alphabet}{\width}$ and $\tuple{\reachabilityleftname,\reachabilityrightname}$ is a reachability annotation for $\alayer$. 
Additionally, we denote by $\removingname{\alphabet}{\width}\colon\reachabilityalphabet{\alphabet}{\width}\rightarrow\cdlayeralphabet{\alphabet}{\width}$ the map that sends each triple $\tuple{\alayer,\reachabilityleftname,\reachabilityrightname}\in\reachabilityalphabet{\alphabet}{\width}$ to the layer $\removing{\alphabet}{\width}{\alayer,\reachabilityleftname,\reachabilityrightname}\in\cdlayeralphabet{\alphabet}{\width}$ obtained from $\alayer$ by removing the left states $\layerleftstate\in\layerleftfrontier(\layer)$ with $\reachabilityleft{\layerleftstate}=\false$, 
the right states $\layerrightstate\in\layerrightfrontier(\layer)$ with $\reachabilityright{\layerrightstate}=\false$, and the transitions incident with such left and right states. 
More formally, for each triple $\tuple{\alayer,\reachabilityleftname,\reachabilityrightname}\in\reachabilityalphabet{\alphabet}{\width}$,
we let $\removing{\alphabet}{\width}{\alayer,\reachabilityleftname,\reachabilityrightname} = \blayer$, where $\blayer$ is the layer belonging to $\cdlayeralphabet{\alphabet}{\width}$ defined as follows:
\begin{itemize}
    \item $\layerleftfrontier(\blayer) \defeq \layerleftfrontier(\alayer)\setminus\set{\layerleftstate \setst \reachabilityleft{\layerleftstate}=\false}$;
    \item $\layerrightfrontier(\blayer) \defeq \layerrightfrontier(\alayer)\setminus\set{\layerrightstate \setst \reachabilityright{\layerrightstate}=\false}$; \vspace{1.0ex}
    \item $\layertransitions(\blayer) \defeq \layertransitions(\alayer)\setminus\set{\tuple{\layerleftstate,\asymbol,\layerrightstate} \setst \reachabilityleft{\layerleftstate}=\false}$;\vspace{1.0ex} 
    \item $\layerinitialflag(\blayer) \defeq \layerinitialflag(\alayer)$; $\layerfinalflag(\blayer) \defeq \layerfinalflag(\alayer)$; \vspace{1.0ex}
    \item $\layerinitialstates(\blayer) \defeq \layerinitialstates(\alayer)$; $\layerfinalstates(\blayer) \defeq \layerrightfrontier(\blayer) \cap \layerfinalstates(\alayer)$.
\end{itemize}

We let 
$\oddremovingname{\alphabet}{\width}\colon\cdodddefiningsetstar{\alphabet}{\width}\rightarrow\cdodddefiningsetstar{\alphabet}{\width}$ be the map that for each $\oddlength\in\pN$, sends each 
ODD $\odd = \layer_{1}\cdots\layer_{\oddlength}\in\cdodddefiningset{\alphabet}{\width}{\oddlength}$ to the ODD 
$$\oddremoving{\alphabet}{\width}{\odd} \defeq \removing{\alphabet}{\width}{\layer_{1},\reachabilityleftname_{1},\reachabilityrightname_{1}}\cdots\removing{\alphabet}{\width}{\layer_{\oddlength},\reachabilityleftname_{\oddlength},\reachabilityrightname_{\oddlength}}\in\cdodddefiningset{\alphabet}{\width}{\oddlength}\text{,}$$ where $\sequence{\tuple{\reachabilityleftname_{1},\reachabilityrightname_{1}},\ldots,\tuple{\reachabilityleftname_{\oddlength},\reachabilityrightname_{\oddlength}}}$ denotes the unique reachability annotation for $\odd$ (see Proposition~\ref{proposition:UniqueReachability}).

\begin{proposition}\label{proposition:UniqueReachabilityODD}
	Let $\alphabet$ be an alphabet, $\width\in\pN$ and $\odd\in\cdodddefiningsetstar{\alphabet}{\width}$. 
	Then, $\oddremoving{\alphabet}{\width}{\odd}$ is a reachable ODD in $\cdodddefiningsetstar{\alphabet}{\width}$
 	such that $\oddlang{\oddremoving{\alphabet}{\width}{\odd}} = \oddlang{\odd}$. 
\end{proposition}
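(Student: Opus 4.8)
The plan is to establish the three claims about $\bodd \defeq \oddremoving{\alphabet}{\width}{\odd}$---that it is a deterministic, complete $(\alphabet,\width)$-ODD, that it is reachable, and that $\oddlang{\bodd}=\oddlang{\odd}$---directly from the two defining conditions of the reachability annotation $\sequence{\tuple{\reachabilityleftname_{1},\reachabilityrightname_{1}},\ldots,\tuple{\reachabilityleftname_{\oddlength},\reachabilityrightname_{\oddlength}}}$ of $\odd=\alayer_1\cdots\alayer_{\oddlength}$, which exists and is unique by Proposition~\ref{proposition:UniqueReachability}. Write $\blayer_i \defeq \removing{\alphabet}{\width}{\alayer_i,\reachabilityleftname_i,\reachabilityrightname_i}$, so that $\bodd=\blayer_1\cdots\blayer_{\oddlength}$. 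The intuition I would keep in mind throughout is that the annotation marks a state with $\true$ exactly when that state is reachable from $\layerinitialstates(\alayer_1)$; this is seen by induction on the layer index using condition~(1) of the annotation (the initial layer marks precisely its initial states), condition~(2) (a right state is marked $\true$ iff it receives a transition from a $\true$-marked left state), and the chaining $\reachabilityrightname_i=\reachabilityleftname_{i+1}$. I will not need this global statement in full, but each individual step below is an instance of these three facts.

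For the first claim I would verify the three ODD conditions together with determinism and completeness. The frontier-compatibility condition follows since $\layerrightfrontier(\blayer_i)=\layerrightfrontier(\alayer_i)\setminus\set{\layerrightstate \setst \reachabilityright[i]{\layerrightstate}=\false}$ and $\layerleftfrontier(\blayer_{i+1})=\layerleftfrontier(\alayer_{i+1})\setminus\set{\layerleftstate \setst \reachabilityleft[i+1]{\layerleftstate}=\false}$ coincide, using $\layerrightfrontier(\alayer_i)=\layerleftfrontier(\alayer_{i+1})$ and $\reachabilityrightname_i=\reachabilityleftname_{i+1}$; the initial- and final-flag conditions transfer verbatim because $\removingname{\alphabet}{\width}$ preserves $\layerinitialflag$ and $\layerfinalflag$. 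Determinism is inherited since removing states and transitions introduces no new branching. The one substantial point is completeness: for every surviving left state $\layerleftstate$ (one with $\reachabilityleft[i]{\layerleftstate}=\true$) and every $\asymbol\in\alphabet$, completeness of $\alayer_i$ gives a transition $\tuple{\layerleftstate,\asymbol,\layerrightstate}\in\layertransitions(\alayer_i)$, whereupon condition~(2) forces $\reachabilityright[i]{\layerrightstate}=\true$, so $\layerrightstate$ survives and the transition is kept. The same observation shows $\layertransitions(\blayer_i)\subseteq\layerleftfrontier(\blayer_i)\cartesianproduct\alphabet\cartesianproduct\layerrightfrontier(\blayer_i)$, so each $\blayer_i$ is a legitimate deterministic, complete layer; as frontiers only shrink, the width stays at most $\width$ and $\bodd\in\cdodddefiningsetstar{\alphabet}{\width}$.

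Reachability of $\bodd$ is then immediate from condition~(2): a surviving right state $\layerrightstate$ of $\blayer_i$ has $\reachabilityright[i]{\layerrightstate}=\true$, so some $\true$-marked $\layerleftstate$ and symbol $\asymbol$ give $\tuple{\layerleftstate,\asymbol,\layerrightstate}\in\layertransitions(\alayer_i)$, and both $\layerleftstate$ and this transition survive in $\blayer_i$. For the language equality I would prove both inclusions. The inclusion $\oddlang{\bodd}\subseteq\oddlang{\odd}$ is routine, since $\layertransitions(\blayer_i)\subseteq\layertransitions(\alayer_i)$, $\layerinitialstates(\blayer_1)=\layerinitialstates(\alayer_1)$, and $\layerfinalstates(\blayer_{\oddlength})\subseteq\layerfinalstates(\alayer_{\oddlength})$, so any accepting run of $\bodd$ is already one of $\odd$. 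For the converse I would take an accepting sequence $\sequence{\tuple{\layerleftstate_1,\asymbol_1,\layerrightstate_1},\ldots,\tuple{\layerleftstate_{\oddlength},\asymbol_{\oddlength},\layerrightstate_{\oddlength}}}$ for a string $\astring$ in $\odd$ and show by induction along the run that every $\layerleftstate_i$ and $\layerrightstate_i$ is $\true$-marked: $\layerleftstate_1\in\layerinitialstates(\alayer_1)$ gives $\reachabilityleft[1]{\layerleftstate_1}=\true$ by condition~(1), condition~(2) propagates $\true$ to each $\layerrightstate_i$, and $\reachabilityrightname_i=\reachabilityleftname_{i+1}$ carries it to $\layerleftstate_{i+1}=\layerrightstate_i$. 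Hence every transition of the run is kept, and $\layerrightstate_{\oddlength}\in\layerfinalstates(\alayer_{\oddlength})$ with $\reachabilityright[\oddlength]{\layerrightstate_{\oddlength}}=\true$ lands it in $\layerfinalstates(\blayer_{\oddlength})=\layerrightfrontier(\blayer_{\oddlength})\cap\layerfinalstates(\alayer_{\oddlength})$, so the same sequence accepts $\astring$ in $\bodd$. I expect the preservation of completeness to be the step most worth stating with care, since it is the only place where the exact form of annotation condition~(2) is indispensable.
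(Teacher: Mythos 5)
Your proof is correct and takes essentially the same approach as the paper's: it verifies reachability, determinism, completeness, and both language inclusions directly from the two defining conditions of the unique reachability annotation, using the chaining $\reachabilityrightname_i=\reachabilityleftname_{i+1}$ in exactly the same way. You in fact supply details the paper leaves implicit (the explicit induction showing every accepting run of $\odd$ has all its states marked $\true$ and hence survives in $\oddremoving{\alphabet}{\width}{\odd}$, and the check that each $\removing{\alphabet}{\width}{\alayer_i,\reachabilityleftname_i,\reachabilityrightname_i}$ is a legitimate layer with frontiers matching), but the underlying argument is identical.
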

\begin{proof}
	Assume that $\odd=\alayer_{1}\cdots\alayer_{\oddlength}$ and $\oddremoving{\alphabet}{\width}{\odd} = \blayer_{1}\cdots\blayer_{\oddlength}$, for some $\oddlength\in\pN$, where $\blayer_{i} = \removing{\alphabet}{\width}{\alayer_{i},\reachabilityleftname_{i},\reachabilityrightname_{i}}$ for each $i \in \bset{\oddlength}$ and $\sequence{\tuple{\reachabilityleftname_{1},\reachabilityrightname_{1}},\ldots,\tuple{\reachabilityleftname_{\oddlength},\reachabilityrightname_{\oddlength}}}$ is the unique reachability annotation of $\odd$. 
	First, we prove that $\oddremoving{\alphabet}{\width}{\odd}$ is reachable. 
	Note that for each $i \in \bset{\oddlength}$ and each $\layerrightstate\in\layerrightfrontier(\alayer_{i})$, 
	\begin{equation*}
		\begin{array}{lcl}
			\layerrightstate\in\layerrightfrontier(\blayer_{i}) & \Leftrightarrow & \exists\, \layerleftstate\in\layerleftfrontier(\alayer_{i}), \text{ with } \reachabilityleft[i]{\layerleftstate} = \true, \text{ and } \exists\, \asymbol\in\alphabet \text{ such that } \tuple{\layerleftstate,\asymbol,\layerrightstate}\in\layertransitions(\alayer_{i}) \\[0.5ex]
			& \Leftrightarrow & \exists\, \layerleftstate\in\layerleftfrontier(\blayer_{i}) \text{ and } \exists\, \asymbol\in\alphabet \text{ such that } \tuple{\layerleftstate,\asymbol,\layerrightstate}\in\layertransitions(\blayer_{i})\text{.}
		\end{array}
	\end{equation*}
	This implies that for each $i \in \bset{\oddlength}$, $\blayer_{i}$ is a reachable layer since $\layerrightfrontier(\blayer_{i}) \subseteq \layerrightfrontier(\alayer_{i})$. 
	Therefore, $\oddremoving{\alphabet}{\width}{\odd}$ is a reachable ODD. 
	Now, we prove that $\oddlang{\oddremoving{\alphabet}{\width}{\odd}} = \oddlang{\odd}$. 
	It is immediate from the definition of $\oddremoving{\alphabet}{\width}{\odd}$ that $\oddlang{\oddremoving{\alphabet}{\width}{\odd}} \subseteq \oddlang{\odd}$. 
	On the other hand, it is not hard to check that for each string $\astring \in \alphabet^{\oddlength}$, every accepting sequence for $\astring$ in $\odd$ is also an accepting sequence for $\astring$  in 
	$\oddremoving{\alphabet}{\width}{\odd}$. 
	Consequently, $\oddlang{\oddremoving{\alphabet}{\width}{\odd}} \supseteq \oddlang{\odd}$. 
	
	To prove that $\oddremovingname{\alphabet}{\width}$ preserves determinism, it is enough to note that
	$\layertransitions(\blayer_{i})\subseteq \layertransitions(\alayer_{i})$ for each $i \in \bset{\oddlength}$. 
	As a result, since $\odd$ is deterministic, so is $\oddremoving{\alphabet}{\width}{\odd}$. 
	Finally, since $\odd$ is complete, by definition, for each $i \in \bset{\oddlength}$ and each $\layerleftstate \in \layerleftfrontier(\alayer_{i}) \cap \layerleftfrontier(\blayer_{i})$, there exists a symbol $\asymbol$ and a right state $\layerrightstate \in \layerrightfrontier(\alayer_{i})$ such that $\tuple{\layerleftstate,\asymbol,\layerrightstate} \in \layertransitions(\alayer_{i})$. 
	This implies that for each $i \in \bset{\oddlength}$ and each $\layerleftstate \in \layerleftfrontier(\blayer_{i})$, there exists a symbol $\asymbol$ and a right state $\layerrightstate \in \layerrightfrontier(\blayer_{i})$ such that $\tuple{\layerleftstate,\asymbol,\layerrightstate} \in \layertransitions(\blayer_{i})$. 
	Therefore, $\oddremoving{\alphabet}{\width}{\odd}$ is also complete. 
\end{proof}

For each alphabet $\alphabet$ and each positive integer $\width\in \pN$, we let $\reachabilityrelation{\alphabet}{\width} \subseteq \cdlayeralphabet{\alphabet}{\width} \times \reachabilityalphabet{\alphabet}{\width}$ 
and $\reachabilitycompatibility{\alphabet}{\width} \subseteq \reachabilityalphabet{\alphabet}{\width}\times \reachabilityalphabet{\alphabet}{\width}$ be the relations defined as follows. 

$$\reachabilityrelation{\alphabet}{\width} \defeq \set{\tuple{\alayer, (\alayer,\reachabilityleftname,\reachabilityrightname)} \setst \tuple{\alayer,\reachabilityleftname,\reachabilityrightname}\in\reachabilityalphabet{\alphabet}{\width}}. 
$$ 
$$
\begin{multlined}[t][0.95\textwidth]
\reachabilitycompatibility{\alphabet}{\width} \defeq \{
	\tuple{(\alayer,\reachabilityleftname,\reachabilityrightname), (\alayer',\reachabilityleftname',\reachabilityrightname')} \setst \tuple{\alayer,\reachabilityleftname,\reachabilityrightname}, \tuple{\alayer',\reachabilityleftname',\reachabilityrightname'} \in \reachabilityalphabet{\alphabet}{\width}, \layerrightfrontier(\alayer) = \layerleftfrontier(\alayer'),\; \reachabilityrightname = \reachabilityleftname'\}\text{.}
\end{multlined}
$$

Now, for each alphabet $\alphabet$ and each positive integer $\width \in \pN$, we define $\reachabilitytransduction{\alphabet}{\width}$ as the $(\cdlayeralphabet{\alphabet}{\width},\cdlayeralphabet{\alphabet}{\width})$-transduction
$$\reachabilitytransduction{\alphabet}{\width} \defeq \multimaptransduction{\reachabilityrelation{\alphabet}{\width}}\circ\compTransduction{\reachabilitycompatibility{\alphabet}{\width}} \circ \multimaptransduction{\removingname{\alphabet}{\width}}.$$

The next lemma states that $\reachabilitytransduction{\alphabet}{\width}$ is a transduction that sends each ODD $\aodd\in\cdodddefiningsetstar{\alphabet}{\width}$ to a \emph{reachable} ODD $\bodd\in\cdodddefiningsetstar{\alphabet}{\width}$ that has the same language as $\aodd$, and that preserves the determinism and completeness properties.

\begin{lemma}[Reachability Transduction]\label{lemma:ReachabilityTransduction}
    For each alphabet $\alphabet$ and each positive integer $\width\in \pN$, the following statements hold.  
    \begin{enumerate}
        \item\label{ReachabilityOne} $\reachabilitytransduction{\alphabet}{\width}$ is functional.
        \item\label{ReachabilityTwo} $\domain(\reachabilitytransduction{\alphabet}{\width}) \supseteq \cdodddefiningsetstar{\alphabet}{\width}$.
        \item\label{ReachabilityThree} For each pair $\tuple{\aodd,\bodd}\in \reachabilitytransduction{\alphabet}{\width}$, 
	$\oddlang{\bodd} = \oddlang{\aodd}$ and $\bodd$ is reachable. 
	\item\label{ReachabilityFour} $\reachabilitytransduction{\alphabet}{\width}$ is $2^{O(|\alphabet|\cdot \width\cdot \log \width)}$-regular.
    \end{enumerate}
\end{lemma}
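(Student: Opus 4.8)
The plan is to read off all four properties from the already-established behaviour of the three elementary transductions whose composition defines $\reachabilitytransduction{\alphabet}{\width}$, using Proposition~\ref{proposition:UniqueReachability} and Proposition~\ref{proposition:UniqueReachabilityODD} as the workhorses. First I would unfold the composition on an input ODD $\aodd = \alayer_{1}\cdots\alayer_{\oddlength} \in \cdodddefiningsetstar{\alphabet}{\width}$. The multimap transduction $\multimaptransduction{\reachabilityrelation{\alphabet}{\width}}$ rewrites $\aodd$ as every annotated string $(\alayer_{1},\reachabilityleftname_{1},\reachabilityrightname_{1})\cdots(\alayer_{\oddlength},\reachabilityleftname_{\oddlength},\reachabilityrightname_{\oddlength})$ over $\reachabilityalphabet{\alphabet}{\width}$ in which each pair $(\reachabilityleftname_{i},\reachabilityrightname_{i})$ is a reachability annotation of the layer $\alayer_{i}$. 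Because $\aodd$ is an ODD we have $\layerrightfrontier(\alayer_{i}) = \layerleftfrontier(\alayer_{i+1})$ for every $i$, so the compatibility transduction $\compTransduction{\reachabilitycompatibility{\alphabet}{\width}}$ retains an annotated string precisely when $\reachabilityrightname_{i} = \reachabilityleftname_{i+1}$ for all $i \in \bset{\oddlength-1}$, which is exactly the definition of a reachability annotation of $\aodd$.

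The decisive ingredient is then Proposition~\ref{proposition:UniqueReachability}: $\aodd$ admits a \emph{unique} reachability annotation, so exactly one annotated string survives the compatibility filter. The final multimap transduction $\multimaptransduction{\removingname{\alphabet}{\width}}$ applies $\removingname{\alphabet}{\width}$ positionwise to that surviving string, whose image is by definition $\oddremoving{\alphabet}{\width}{\aodd}$. This simultaneously gives item~(\ref{ReachabilityTwo}), since the chain does not abort on any ODD, and item~(\ref{ReachabilityOne}): on the domain of ODDs the composition produces a single output, so the transduction is functional there. The only place where freedom could enter is the choice of $\reachabilityleftname_{1}$; it is Condition~\ref{condition:TwoODD}, namely $\layerinitialflag(\alayer_{1}) = \true$, that pins this down through the first clause of the reachability-annotation definition, exactly as in the proof of Proposition~\ref{proposition:UniqueReachability}.

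Having identified the output of $\reachabilitytransduction{\alphabet}{\width}$ on $\aodd$ as $\bodd = \oddremoving{\alphabet}{\width}{\aodd}$, item~(\ref{ReachabilityThree}) is immediate from Proposition~\ref{proposition:UniqueReachabilityODD}, which already asserts that $\oddremoving{\alphabet}{\width}{\aodd}$ is a reachable ODD in $\cdodddefiningsetstar{\alphabet}{\width}$ with $\oddlang{\oddremoving{\alphabet}{\width}{\aodd}} = \oddlang{\aodd}$; no additional work is needed beyond transporting that statement through the output identity. For the regularity bound in item~(\ref{ReachabilityFour}) I would invoke the size estimates of Proposition~\ref{proposition:SizeTransductions} together with the composition bound of Proposition~\ref{proposition:PropertiesTransductions}.(\ref{item:automaton_composition}). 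The two multimap factors are each $2$-regular, and the compatibility factor is $(|\reachabilityalphabet{\alphabet}{\width}|+2)$-regular. The one quantitative estimate to carry out is $|\reachabilityalphabet{\alphabet}{\width}| = 2^{\BigOh(|\alphabet|\cdot\width\cdot\log\width)}$, which follows from Observation~\ref{observation:NumberLayers} (there are $2^{\BigOh(|\alphabet|\cdot\width\cdot\log\width)}$ layers in $\cdlayeralphabet{\alphabet}{\width}$) multiplied by the $2^{\width}\cdot 2^{\width} = 2^{\BigOh(\width)}$ possible Boolean annotations of the two frontiers. Multiplying the three parameters yields the claimed $2^{\BigOh(|\alphabet|\cdot\width\cdot\log\width)}$ bound.

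The main obstacle I anticipate lies entirely in item~(\ref{ReachabilityOne}): unlike the determinization transduction of Lemma~\ref{lemma:DeterminizationTransduction}, whose functionality was free because it was built from a genuine map, here the first factor $\multimaptransduction{\reachabilityrelation{\alphabet}{\width}}$ is deliberately non-functional, offering many annotations per layer. The whole argument therefore rests on showing that composing this non-functional factor with the compatibility filter collapses back to a function. Making this precise requires spelling out how $\compTransduction{\cdot}$ acts on strings over $\reachabilityalphabet{\alphabet}{\width}$ and verifying that exactly the single annotation guaranteed by Proposition~\ref{proposition:UniqueReachability} threads the filter; the frontier-matching clause of $\reachabilitycompatibility{\alphabet}{\width}$ together with the initiality of the first layer are what force uniqueness, and getting this interplay right is the only genuinely delicate point of the proof.
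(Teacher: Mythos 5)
Your proposal is correct and follows essentially the same route as the paper's own proof: unfold the three-factor composition, invoke Proposition~\ref{proposition:UniqueReachability} to get that exactly one annotated string survives the compatibility filter (yielding functionality on ODD inputs and the domain containment), invoke Proposition~\ref{proposition:UniqueReachabilityODD} to identify the output as the reachable, language-preserving ODD $\oddremoving{\alphabet}{\width}{\aodd}$, and combine Proposition~\ref{proposition:SizeTransductions} with the composition bound of Proposition~\ref{proposition:PropertiesTransductions} for the $2^{\BigOh(|\alphabet|\cdot\width\cdot\log\width)}$-regularity. Your explicit estimate of $|\reachabilityalphabet{\alphabet}{\width}|$ (layers times Boolean frontier annotations) is a small addition of detail that the paper simply asserts, but the argument is otherwise the same.
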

\begin{proof}
	We note that $\reachabilitytransduction{\alphabet}{\width}$ consists of all pairs $\tuple{\aodd,\bodd}$ of non-empty strings over the alphabet $\cdlayeralphabet{\alphabet}{\width}$ satisfying the conditions that $\abs{\aodd}=\abs{\bodd}$ and that, if $\aodd=\alayer_{1}\cdots\alayer_{\oddlength}$ and $\bodd=\blayer_{1}\cdots\blayer_{\oddlength}$ for some $\oddlength\in\pN$, then there exists a reachability annotation $\tuple{\reachabilityleftname_{i},\reachabilityrightname_{i}}$ for the layer $\alayer_{i}$ such that $\blayer_{i}=\removing{\alphabet}{\width}{\alayer_{i},\reachabilityleftname_{i},\reachabilityrightname_{i}}$ for each $i\in\bset{\oddlength}$, and $\layerrightfrontier(\alayer_{j})=\layerleftfrontier(\alayer_{j+1})$ and $\reachabilityrightname_{j}=\reachabilityleftname_{j+1}$ for each $j\in\bset{\oddlength-1}$. 
	Additionally, based on Proposition~\ref{proposition:UniqueReachability}, each $(\alphabet,\width)$-ODD admits a
	unique reachability annotation. 
	As a result, we obtain that $\domain(\reachabilitytransduction{\alphabet}{\width}) \supseteq \cdodddefiningsetstar{\alphabet}{\width}$.
	Moreover, $\bodd = \oddremoving{\alphabet}{\width}{\aodd}$; thus, by the uniqueness of $\oddremoving{\alphabet}{\width}{\aodd}$, the transduction 
	$\reachabilitytransduction{\alphabet}{\width}$ is functional.
	Finally, it follows from Proposition~\ref{proposition:UniqueReachabilityODD} that for each pair $\tuple{\aodd,\bodd}\in \reachabilitytransduction{\alphabet}{\width}$, 
	$\bodd = \oddremoving{\alphabet}{\width}{\aodd}$ is a reachable ODD in $\cdodddefiningsetstar{\alphabet}{\width}$ that has the same language as $\aodd$.
	
	The fact that $\reachabilitytransduction{\alphabet}{\width}$ is $2^{O(|\alphabet|\cdot \width \cdot \log \width)}$-regular follows from 
	Proposition \ref{proposition:PropertiesTransductions}.(2) together with the fact that the multimap transductions 
	$\multimaptransduction{\reachabilityrelation{\alphabet}{\width}}$ and $\multimaptransduction{\removingname{\alphabet}{\width}}$ are $2$-regular 
	(Proposition \ref{proposition:SizeTransductions}.(1)), and that the transduction $\compTransduction{\reachabilitycompatibility{\alphabet}{\width}}$ is 
	$2^{O(|\alphabet|\cdot \width \cdot \log \width)}$-regular
	(Proposition \ref{proposition:SizeTransductions}.(2)), given that 
	$\reachabilitycompatibility{\alphabet}{\width}\subseteq \reachabilityalphabet{\alphabet}{\width}\times \reachabilityalphabet{\alphabet}{\width}$ and that 
	$|\reachabilityalphabet{\alphabet}{\width}| = 2^{O(|\alphabet|\cdot \width\cdot \log\width)}$.
\end{proof}

\subsection{Merging Transduction}
\label{subsection:MergingTransduction}

In this subsection, we define the {\em merging transduction}, which intuitively simulates the process 
of merging equivalent states in the frontiers of each layer of an ODD $\aodd$. As in the case of the reachability 
transduction, the merging transduction will be defined as the composition of three elementary transductions. 
First, we use a multimap transduction to expand each layer of the ODD into 
a set of {\em annotated} layers. Each annotation partitions each frontier of 
the layer into cells containing states that are deemed to be equivalent. Subsequently, we use a compatibility transduction
to ensure that only sequences of annotated layers with compatible annotations are considered to be legal.
As in the case of the reachability transduction, it is possible to show that each ODD $\aodd$ 
has a unique annotated version where each two adjacent annotated layers are compatible with each other. 
Finally, we apply a mapping that sends each annotated layer to the layer obtained by merging all states 
in each cell of each partition to the smallest state in the cell. The result is a minimized ODD with same
language as $\aodd$.

Let $\alphabet$ be an alphabet, $\width \in \pN$, $\alayer\in\cdlayeralphabet{\alphabet}{\width}$ and $\mergingright$ be a partition of $\layerrightfrontier(\alayer)$. 
Two (not necessarily distinct) left states $\layerleftstate,\layerleftstate'\in \layerleftfrontier(\alayer)$ are said to be \emph{$\mergingright$-equivalent} if,
for each symbol $\asymbol\in\alphabet$, there exists a right state $\layerrightstate\in\rightfrontier(\alayer)$ such that $\tuple{\layerleftstate,\asymbol,\layerrightstate}$
is a transition in $\layertransitions(\alayer)$ if and only if there exists a right state $\layerrightstate'\in\layerrightfrontier(\alayer)$ 
such that $\tuple{\layerleftstate',\asymbol,\layerrightstate'}$ is a transition in $\layertransitions(\alayer)$, 
and $\layerrightstate$ and $\layerrightstate'$ belong to the same cell of $\mergingright$. 
We remark that each left state $\layerleftstate$ is trivially $\mergingright$-equivalent to itself.

A \emph{merging annotation} for $\alayer$ is a pair $\tuple{\mergingleft,\mergingright}$,
where $\mergingleft$ is a partition of $\leftfrontier(\alayer)$ and $\mergingright$ is a partition of $\rightfrontier(\alayer)$, that satisfies the following two conditions: 
\begin{enumerate}
	\item if $\finalflag(\alayer) = \true$, then $\mergingright = \set{\layerrightfrontier(\alayer)\setminus\layerfinalstates(\alayer),\layerfinalstates(\alayer)}$
	whenever $\layerrightfrontier(\alayer)\setminus\layerfinalstates(\alayer)\neq\emptyset$ and $\layerfinalstates(\alayer) \neq \emptyset$, and $\mergingright = \set{\layerrightfrontier(\alayer)}$ whenever $\layerrightfrontier(\alayer)\setminus\layerfinalstates(\alayer)=\emptyset$ or $\layerfinalstates(\alayer) = \emptyset$; 
	\item for each two left states $\layerleftstate,\layerleftstate'\in\layerleftfrontier(\alayer)$, $\layerleftstate$ and $\layerleftstate'$ belong to the same cell of $\mergingleft$ if and only if $\layerleftstate$ and $\layerleftstate'$ are $\mergingright$-equivalent. 
\end{enumerate}

Let $\alphabet$ be an alphabet, $\width,\length\in\pN$, and let $\aodd = \alayer_{1}\cdots\alayer_{\oddlength}\in\cdodddefiningset{\alphabet}{\width}{\oddlength}$. 
A {\em merging annotation} for $\aodd$ is a sequence $\sequence{\tuple{\mergingleft_{1},\mergingright_{1}}\cdots\tuple{\mergingleft_{\oddlength},\mergingright_{\oddlength}}}$ that satisfies the following conditions: 
\begin{enumerate}
	\item for each $i\in \bset{\oddlength}$, $\tuple{\mergingleft_i,\mergingright_i}$ is a merging annotation for $\alayer_i$; 
	\item for each $i\in \bset{\oddlength-1}$, $\mergingright_{i}=\mergingleft_{i+1}$. 
\end{enumerate}

\begin{proposition}\label{proposition:UniqueMerging}
Let $\alphabet$ be an alphabet and $\width \in \pN$. 
Every deterministic, complete $(\alphabet,\width)$-ODD admits a unique merging annotation. 
\end{proposition}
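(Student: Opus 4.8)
The plan is to mirror the proof of Proposition~\ref{proposition:UniqueReachability}, but with the flow of information reversed: whereas a reachability annotation is pinned down at the \emph{first} layer (through the initial flag) and propagates left to right, a merging annotation is pinned down at the \emph{last} layer (through the final flag) and propagates right to left. Concretely, I would first isolate the observation that makes each layer behave ``deterministically'' with respect to the annotation, and then run a single induction on the length $\oddlength$ to stitch the per-layer annotations into a unique global one.

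The key observation I would establish first is the following: for each deterministic, complete layer $\alayer\in\cdlayeralphabet{\alphabet}{\width}$ and each partition $\mergingright$ of $\layerrightfrontier(\alayer)$, there exists exactly one partition $\mergingleft$ of $\layerleftfrontier(\alayer)$ such that $\tuple{\mergingleft,\mergingright}$ satisfies Condition~2 of the definition of a merging annotation. This holds because, since $\alayer$ is deterministic and complete, every left state $\layerleftstate$ has, for each symbol $\asymbol\in\alphabet$, a unique outgoing transition, and hence the relation ``$\layerleftstate$ and $\layerleftstate'$ are $\mergingright$-equivalent'' coincides with ``for every $\asymbol\in\alphabet$ the two $\asymbol$-successors of $\layerleftstate$ and $\layerleftstate'$ lie in the same cell of $\mergingright$''. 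I would check that this relation is reflexive, symmetric and transitive, so that it is a genuine equivalence relation on $\layerleftfrontier(\alayer)$; its set of equivalence classes is then the unique candidate for $\mergingleft$, and Condition~2 says precisely that $\mergingleft$ equals this partition. In short, $\mergingright$ determines $\mergingleft$ uniquely.

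With this in hand, I would run the induction on $\oddlength$. In the base case $\oddlength=1$ the single layer $\alayer_{1}$ satisfies $\layerfinalflag(\alayer_{1})=\true$, so Condition~1 forces $\mergingright_{1}$ to be the explicit partition $\set{\layerrightfrontier(\alayer_{1})\setminus\layerfinalstates(\alayer_{1}),\layerfinalstates(\alayer_{1})}$ (or $\set{\layerrightfrontier(\alayer_{1})}$ in the two degenerate cases), and then the key observation determines $\mergingleft_{1}$ uniquely. For the inductive step, given $\aodd=\alayer_{1}\cdots\alayer_{\oddlength}$ with $\oddlength>1$, I would remove the \emph{first} layer to obtain the suffix $\alayer_{2}\cdots\alayer_{\oddlength}$: this suffix still ends in a layer with final flag $\true$ and keeps final flag $\false$ on all internal layers, and since a merging annotation never refers to the initial flag or to the initial states, the inductive hypothesis applies and yields a unique annotation $\sequence{\tuple{\mergingleft_{2},\mergingright_{2}},\ldots,\tuple{\mergingleft_{\oddlength},\mergingright_{\oddlength}}}$. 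In particular $\mergingleft_{2}$ is uniquely determined; the cross-layer compatibility condition forces $\mergingright_{1}=\mergingleft_{2}$, and the key observation then determines $\mergingleft_{1}$. This produces a unique merging annotation for $\aodd$, completing the induction.

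The main obstacle is concentrated in the key observation of the second paragraph; everything downstream is a routine backward propagation identical in spirit to the reachability argument. The delicate point is the explicit use of both determinism and completeness to guarantee that $\mergingright$-equivalence is an equivalence relation admitting a well-defined quotient: if a left state had two $\asymbol$-successors in different cells, or no $\asymbol$-successor at all, then the ``if and only if'' in Condition~2 could fail to be transitive, or could become vacuously satisfiable in more than one way, and the partition $\mergingleft$ would no longer be uniquely pinned down.
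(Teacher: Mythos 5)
Your proposal is correct and follows essentially the same route as the paper's proof: the identical key observation that, for a deterministic, complete layer, a partition $\mergingright$ of the right frontier determines a unique partition $\mergingleft$ of the left frontier, followed by a right-to-left induction anchored at the final layer, where Condition~1 pins down $\mergingright_{\oddlength}$. The only cosmetic difference is that the paper inducts on suffixes of a fixed ODD (thereby never needing the suffix to be an ODD itself), whereas you induct on the length and apply the hypothesis to the tail $\alayer_{2}\cdots\alayer_{\oddlength}$ --- a point you correctly defuse by noting that merging annotations ignore initial flags and initial states.
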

\begin{proof}
First, we claim that for each layer $\alayer\in\cdlayeralphabet{\alphabet}{\width}$ and each partition $\mergingright$ of 
$\layerrightfrontier(\alayer)$, there exists a unique partition $\mergingleft$ of $\layerleftfrontier(\alayer)$ such that $\tuple{\mergingleft,\mergingright}$ is a merging annotation for $\alayer$. 
Indeed, any two left states $\layerleftstate,\layerleftstate'\in\layerleftfrontier(\alayer)$ belong to the same cell of $\mergingleft$ if and only if they are $\mergingright$-equivalent. 
Thus, the partition $\mergingleft$ is uniquely defined as the set of all maximal subsets $\sset\subseteq\layerleftfrontier(\alayer)$ of pairwise $\mergingright$-equivalent left states.  

Let $\oddlength\in\pN$ and $\aodd=\alayer_{1}\cdots\alayer_{\oddlength}\in\cdodddefiningset{\alphabet}{\width}{\oddlength}$, be 
such that $\layerleftfrontier(\alayer_{i+1})=\layerrightfrontier(\alayer_{i})$ for each $i\in\bset{\oddlength-1}$, 
$\layerfinalflag(\alayer_{i})=0$ for each $i\in\bset{\oddlength-1}$ and  $\layerfinalflag(\alayer_{\oddlength})=1$. 
Based on the previous claim, we prove by induction on $j$ that the following statement holds for each $j\in \{0,\dots,\oddlength-1\}$: there exists
a unique sequence $\sequence{\tuple{\mergingleft_{\oddlength-j},\mergingright_{\oddlength-j}}\cdots\tuple{\mergingleft_{\oddlength},\mergingright_{\oddlength}}}$ 
such that $\tuple{\mergingleft_{i},\mergingright_{i}}$ is a merging annotation for $\alayer_{i}$ for each $i\in \{j,\dots,\oddlength\}$, 
and  $\mergingright_{i} = \mergingleft_{i+1}$ for each $i\in \{\oddlength-j,\dots,\oddlength-1\}$. In particular, this implies that 
the ODD $\aodd$ admits a unique merging annotation $\sequence{\tuple{\mergingleft_{1},\mergingright_{1}}\cdots\tuple{\mergingleft_{\oddlength},\mergingright_{\oddlength}}}$. 

\emph{Base case.} Consider $j=0$. Then $k-j=k$. Since $\layerfinalflag(\alayer_{\oddlength})=\true$, the partition $\mergingright_{\oddlength}$ is uniquely determined. 
 Indeed, $\mergingright_{\oddlength} = \set{\layerfinalstates(\alayer_{\oddlength}),\layerrightfrontier(\alayer_{\oddlength})\setminus\layerfinalstates(\alayer_{\oddlength})}$ if both 
$\layerrightfrontier(\alayer_{\oddlength})\setminus\layerfinalstates(\alayer_{\oddlength})\neq\emptyset$ and $\layerfinalstates(\alayer_{\oddlength}) \neq \emptyset$, and 
$\mergingright_{\oddlength} = \set{\layerrightfrontier(\alayer_{\oddlength})}$ otherwise. Thus, there exists a unique sequence $\sequence{\tuple{\mergingleft_{\oddlength},\mergingright_{\oddlength}}}$ 
such that $\tuple{\mergingleft_{\oddlength},\mergingright_{\oddlength}}$ is a merging annotation for $\alayer_{\oddlength}$. 
 
\emph{Inductive step.} Consider $j\in \{1,\dots,\oddlength-1\}$. 
We show that there is a unique sequence 
$$\sequence{\tuple{\mergingleft_{\oddlength-j},\mergingright_{\oddlength-j}}\cdots\tuple{\mergingleft_{\oddlength},\mergingright_{\oddlength}}}$$
such that $\tuple{\mergingleft_{i},\mergingright_{i}}$  is a merging annotation for $\alayer_{i}$ for each $i\in \set{\oddlength-j,\ldots,\oddlength}$, 
and $\mergingright_{i} = \mergingleft_{i+1}$ for each $i\in \set{\oddlength-j,\ldots,\oddlength-1}$. 
It follows from the inductive hypothesis that there exists a unique sequence 
$\sequence{\tuple{\mergingleft_{\oddlength-(j-1)},\mergingright_{\oddlength-(j-1)}}\cdots\tuple{\mergingleft_{\oddlength},\mergingright_{\oddlength}}}$
such that $\tuple{\mergingleft_{i},\mergingright_{i}}$  is a merging annotation for $\alayer_{i}$ for each $i\in \set{\oddlength-(j-1),\ldots,\oddlength}$, 
and $\mergingright_{i} = \mergingleft_{i+1}$ for each $i\in \set{\oddlength-(j-1),\ldots,\oddlength-1}$. 
Now, let $(\mergingleft_{\oddlength-j},\mergingright_{\oddlength-j})$ be the merging annotation of $\alayer_{\oddlength-j}$ 
with the property that $\mergingright_{\oddlength-j}=\mergingleft_{\oddlength-(j-1)}$. Such a merging annotation exists 
(since $\layerrightfrontier(\alayer_{\oddlength-j}) = \layerleftfrontier(\alayer_{\oddlength-(j-1)})$) and is unique 
since $\mergingleft_{\oddlength-j}$ is uniquely determined by $\mergingright_{\oddlength-j}$. This concludes the proof of the 
inductive step, and therefore of the proposition.  
\end{proof}

Let $\alphabet$ be an alphabet, $\width,\oddlength\in \pN$ and $\odd=\layer_{1}\cdots\layer_{\oddlength}\in\cdodddefiningset{\alphabet}{\width}{\oddlength}$.
For each $i\in\bset{\oddlength}$, we say that a string $\astring=\asymbol_{1}\ldots\asymbol_{\oddlength}$ is \emph{accepted by $\odd$ from a left state $\layerleftstate\in\layerleftfrontier(\layer_{i})$} if there exists a sequence $\sequence{\tuple{\layerleftstate_{i},\asymbol_{i},\layerrightstate_{i}},\ldots,\tuple{\layerleftstate_{\oddlength},\asymbol_{\oddlength},\layerrightstate_{\oddlength}}}$ of transitions such that $\layerleftstate_{i}=\layerleftstate$, $\layerrightstate_{\oddlength}\in\layerfinalstates(\layer_{\oddlength})$ and, for each $j\in\set{i,\ldots,\oddlength}$, $\tuple{\layerleftstate_{j},\asymbol_{j},\layerrightstate_{j}}\in\layertransitions(\layer_{j})$. 
For each  $i\in\bset{\oddlength}$ and each left state $\layerleftstate\in\layerleftfrontier(\layer_{i})$, we let $$\lstateoddlang{\odd}{i}{\layerleftstate}\defeq \set{\astring\in\alphabet^{\oddlength-i+1}\setst \astring \text{ is accepted by } \odd \text{ from } \layerleftstate}\text{.}$$

\begin{proposition}
\label{prop:mergingLanguagesFromLeftState}
Let $\alphabet$ be an alphabet, $\width,\oddlength\in \pN$, $\odd=\layer_{1}\cdots\layer_{\oddlength}$ be a deterministic,
complete ODD in $\cdodddefiningset{\alphabet}{\width}{\oddlength}$, and let $\sequence{\tuple{\mergingleft_{1},\mergingright_{1}}\cdots\tuple{\mergingleft_{\oddlength},\mergingright_{\oddlength}}}$
be the unique merging annotation for $\odd$. For each  $i\in\bset{\oddlength}$ and each two left states $\layerleftstate,\layerleftstate'\in\layerleftfrontier(\layer_{i})$, $\layerleftstate$ 
and $\layerleftstate'$ belong to the same cell of $\mergingleft_{i}$ if and only if $\lstateoddlang{\odd}{i}{\layerleftstate}=\lstateoddlang{\odd}{i}{\layerleftstate'}$. 
\end{proposition}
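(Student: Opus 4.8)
The plan is to prove the statement by backward induction on the layer index $i$, running from $i=\oddlength$ down to $i=1$, which mirrors the backward dynamic program implicit in the standard layer-by-layer minimization of a deterministic ODD. It is convenient to carry along a companion invariant about the right partitions $\mergingright_i$. For a right state $\layerrightstate\in\layerrightfrontier(\layer_i)$, I define its \emph{continuation language} to be $\lstateoddlang{\odd}{i+1}{\layerrightstate}$ when $i<\oddlength$ (by Condition~\ref{condition:OneODD}, $\layerrightstate$ is then a left state of $\layer_{i+1}$), and to be $\set{\emptystring}$ if $\layerrightstate\in\layerfinalstates(\layer_{\oddlength})$ and $\emptyset$ otherwise when $i=\oddlength$. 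The two invariants I would prove simultaneously are: property $(A)$, that two right states of $\layer_i$ lie in a common cell of $\mergingright_i$ if and only if they have equal continuation languages; and property $(B)$, the assertion of the proposition itself, namely that $\layerleftstate,\layerleftstate'\in\layerleftfrontier(\layer_i)$ lie in a common cell of $\mergingleft_i$ if and only if $\lstateoddlang{\odd}{i}{\layerleftstate}=\lstateoddlang{\odd}{i}{\layerleftstate'}$.

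The technical heart is a single-layer bridge from $(A)$ to $(B)$. Since $\odd$ is deterministic and complete, for each $\layerleftstate\in\layerleftfrontier(\layer_i)$ and each $\asymbol\in\alphabet$ there is exactly one right state $\delta_i(\layerleftstate,\asymbol)$ with $\tuple{\layerleftstate,\asymbol,\delta_i(\layerleftstate,\asymbol)}\in\layertransitions(\layer_i)$. Unfolding the definition of $\lstateoddlang{\odd}{i}{\cdot}$ once shows that a string $\asymbol w$ (with $\asymbol\in\alphabet$, and $w=\emptystring$ allowed when $i=\oddlength$) is accepted from $\layerleftstate$ at layer $i$ precisely when $w$ lies in the continuation language of $\delta_i(\layerleftstate,\asymbol)$; hence $\lstateoddlang{\odd}{i}{\layerleftstate}$ is determined by and determines the tuple of continuation languages of $(\delta_i(\layerleftstate,\asymbol))_{\asymbol\in\alphabet}$. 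Consequently $\lstateoddlang{\odd}{i}{\layerleftstate}=\lstateoddlang{\odd}{i}{\layerleftstate'}$ holds if and only if, for every $\asymbol$, the states $\delta_i(\layerleftstate,\asymbol)$ and $\delta_i(\layerleftstate',\asymbol)$ have equal continuation languages, which by $(A)$ at layer $i$ is equivalent to their lying in a common cell of $\mergingright_i$. Determinism and completeness make the existence clauses in the definition of $\mergingright_i$-equivalence automatic, so this last condition is exactly $\mergingright_i$-equivalence of $\layerleftstate$ and $\layerleftstate'$, which by the definition of a merging annotation means membership in a common cell of $\mergingleft_i$. Thus $(A)$ at layer $i$ implies $(B)$ at layer $i$.

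For the induction proper, the base case $i=\oddlength$ follows directly: the first condition in the definition of a merging annotation forces $\mergingright_{\oddlength}$ to separate $\layerfinalstates(\layer_{\oddlength})$ from its complement (collapsing to a single cell in the degenerate cases), so two right states share a cell if and only if both are final or both non-final, i.e. if and only if their continuation languages $\set{\emptystring}$/$\emptyset$ coincide; this is $(A)$ at $\oddlength$, and the bridge lemma yields $(B)$ at $\oddlength$. For the inductive step with $i<\oddlength$, the adjacency condition $\mergingright_i=\mergingleft_{i+1}$ of the merging annotation (unique by Proposition~\ref{proposition:UniqueMerging}), together with the identification of the right states of $\layer_i$ with the left states of $\layer_{i+1}$ and the fact that the continuation language of such a state is exactly $\lstateoddlang{\odd}{i+1}{\cdot}$, shows that $(B)$ at layer $i+1$ is literally $(A)$ at layer $i$. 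The bridge lemma then gives $(B)$ at layer $i$, closing the induction and proving the proposition.

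The main obstacle is entirely bookkeeping rather than conceptual: I must fix the continuation-language convention at the last layer so that $(A)$ at $i=\oddlength$ matches the final-state splitting prescribed for $\mergingright_{\oddlength}$, and I must check that it is precisely the combination of determinism \emph{and} completeness that makes the transition function $\delta_i$ total and single-valued, so that both the recursive decomposition of $\lstateoddlang{\odd}{i}{\cdot}$ and the collapse of $\mergingright_i$-equivalence to "equal cells of the $\delta_i$-images" go through without side conditions. Beyond these points, every step is a one-line unfolding of definitions; the real content of the argument is recognizing that coupling $(A)$ with $(B)$ is what allows the single-layer computation to propagate across the whole ODD.
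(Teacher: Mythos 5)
Your proposal is correct and follows essentially the same route as the paper's proof: a backward induction over the layers (the paper phrases it as induction on $\oddlength-i$), with the base case extracted from the final-state condition on $\mergingright_{\oddlength}$, the inductive step driven by the adjacency condition $\mergingright_{i}=\mergingleft_{i+1}$ together with the one-layer unfolding of $\lstateoddlang{\odd}{i}{\cdot}$, and determinism plus completeness supplying the converse direction. Your explicit invariant $(A)$ is exactly what the paper uses implicitly when it applies the inductive hypothesis to the right states of $\layer_i$ viewed as left states of $\layer_{i+1}$, and your direct argument via the disjoint first-symbol decomposition replaces, without any real change in content, the paper's proof of the converse by contradiction.
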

\begin{proof}
	The proof is by induction on $\oddlength-i$. 
	\emph{Base case.} Consider $\oddlength-i=0$. Then $i=\oddlength$.  
	By definition, two left states $\layerleftstate,\layerleftstate'\in\layerleftfrontier(\layer_{\oddlength})$ belong to the same cell of $\mergingleft_{\oddlength}$ if and only if $\layerleftstate$ and $\layerleftstate'$ are $\mergingright_{\oddlength}$-equivalent. 
	In other words, $\layerleftstate$ and $\layerleftstate'$ belong to the same cell of $\mergingleft_{\oddlength}$ if and only if, for each symbol $\asymbol\in\alphabet$, there exists a final state $\layerrightstate\in\layerfinalstates(\layer_{\oddlength})$ such that $\tuple{\layerleftstate,\asymbol,\layerrightstate}\in\layertransitions(\layer_{\oddlength})$ if and only if there exists a final state $\layerrightstate'\in\layerfinalstates(\layer_{\oddlength})$ (possibly $\layerrightstate'=\layerrightstate$) such that $\tuple{\layerleftstate',\asymbol,\layerrightstate'}\in\layertransitions(\layer_{\oddlength})$. 
	Consequently, $\layerleftstate$ and $\layerleftstate'$ belong to the same cell of $\mergingleft_{\oddlength}$ if and only if $\lstateoddlang{\odd}{i}{\layerleftstate}=\lstateoddlang{\odd}{i}{\layerleftstate'}$. 

	\emph{Inductive step.} Consider $\oddlength -i > 0$. 
	Since $\layerrightfrontier(\layer_{i})=\layerleftfrontier(\layer_{i+1})$ and $\mergingright_{i}=\mergingleft_{i+1}$, it follows from the inductive hypothesis
	that any two right states $\layerrightstate,\layerrightstate'\in\layerrightfrontier(\layer_{i})$ belong to the same cell of $\mergingright_{i}$ if and only if 
	$\lstateoddlang{\odd}{i+1}{\layerrightstate}=\lstateoddlang{\odd}{i+1}{\layerrightstate'}$. Moreover, note that for each left state
	$\layerleftstate\in\layerleftfrontier(\layer_{i})$, 
	\begin{equation}\label{eq:layerODDLang}
		\lstateoddlang{\odd}{i}{\layerleftstate}=\bigcup_{\layerrightstate\in\layerrightfrontier(\layer_{i})}\set{\asymbol\bstring\in\alphabet^{\oddlength-i+1} \setst \tuple{\layerleftstate,\asymbol,\layerrightstate}\in\layertransitions(\layer_{i}), \bstring\in\lstateoddlang{\odd}{i+1}{\layerrightstate}}\text{.}
	\end{equation}
	Let $\layerleftstate,\layerleftstate'\in\layerleftfrontier(\layer_{i})$. We will prove that $\layerleftstate,\layerleftstate'$ belong to the same cell of $\mergingleft_i$ if and only if 
	 $\lstateoddlang{\odd}{i}{\layerleftstate}=\lstateoddlang{\odd}{i}{\layerleftstate'}$. The proof is split in two parts. 

	First, suppose that $\layerleftstate$ and $\layerleftstate'$ belong to the same cell of $\mergingleft_{i}$. 
	Then $\layerleftstate$ and $\layerleftstate'$ are $\mergingright_i$ equivalent. In other words, for each symbol $\asymbol\in\alphabet$, there exists 
	$\layerrightstate\in\rightfrontier(\layer_{i})$ such that $\tuple{\layerleftstate,\asymbol,\layerrightstate}\in\layertransitions(\layer_{i})$ if and only if there exists 
	$\layerrightstate'\in\layerrightfrontier(\layer_{i})$ such that $\tuple{\layerleftstate',\asymbol,\layerrightstate'}\in\layertransitions(\layer_{i})$ and $\layerrightstate_i$ and 
	$\layerrightstate_i'$ belong to the same cell of $\mergingright_i$. Using the induction hypothesis, we have that for each pair $\layerrightstate,\layerrightstate'$ belonging
	to the same cell of $\mergingright_i$, $\lstateoddlang{\odd}{i+1}{\layerrightstate}=\lstateoddlang{\odd}{i+1}{\layerrightstate'}$. Therefore, 
	using \eqref{eq:layerODDLang}, that $\lstateoddlang{\odd}{i}{\layerleftstate}=\lstateoddlang{\odd}{i}{\layerleftstate'}$. 

	Now, in order to prove the converse, suppose for contradiction that $\layerleftstate$ and $\layerleftstate'$ do not belong to the same cell of $\mergingleft_i$ and
	 that $\lstateoddlang{\odd}{i}{\layerleftstate}=\lstateoddlang{\odd}{i}{\layerleftstate'}$. 
	Since $\layer_{i}$ is a deterministic, complete layer, for each symbol $\asymbol\in\alphabet$, there exists exactly one right state $\layerrightstate\in\layerrightfrontier(\layer_{i})$
	such that $\tuple{\layerleftstate,\asymbol,\layerrightstate}\in\layertransitions(\layer_{i})$.
	Similarly, for each symbol $\asymbol\in\alphabet$, there exists exactly one right state 
	$\layerrightstate'\in\layerrightfrontier(\layer_{i})$ such that $\tuple{\layerleftstate',\asymbol,\layerrightstate'}\in\layertransitions(\layer_{i})$. 
	Consequently, for some symbol $\asymbol\in\alphabet$, the right states $\layerrightstate$ and $\layerrightstate'$ associated with $\asymbol$, and $\layerleftstate$ and $\layerleftstate'$,
	respectively, belong to distinct cells of $\mergingright_{i}$. 
	Then, it follows from the induction hypothesis that $\lstateoddlang{\odd}{i+1}{\layerrightstate}\neq\lstateoddlang{\odd}{i+1}{\layerrightstate'}$. 
	Assume without loss of generality that $\lstateoddlang{\odd}{i+1}{\layerrightstate}\setminus\lstateoddlang{\odd}{i+1}{\layerrightstate'}\neq\emptyset$, 
	and let $\bstring\in\lstateoddlang{\odd}{i+1}{\layerrightstate}\setminus\lstateoddlang{\odd}{i+1}{\layerrightstate'}$. 
	Based on~\eqref{eq:layerODDLang}, we have that $\asymbol\bstring\in\lstateoddlang{\odd}{i}{\layerleftstate}$ but $\asymbol\bstring\not\in\lstateoddlang{\odd}{i}{\layerleftstate'}$.
	This implies that $\lstateoddlang{\odd}{i}{\layerleftstate}\neq\lstateoddlang{\odd}{i}{\layerleftstate'}$, contradicting our initial supposition. 
\end{proof}

Let $\alphabet$ be an alphabet and $\width\in \pN$. 
We denote by $\mergingalphabet{\alphabet}{\width}$ the set consisting of all triples $\tuple{\alayer,\mergingleft,\mergingright}$ such that $\alayer$ is a deterministic, complete layer in $\cdlayeralphabet{\alphabet}{\width}$,  and $\tuple{\mergingleft,\mergingright}$ is a merging annotation for $\alayer$. 
Additionally, we denote by $\collapsing{\alphabet}{\width}\colon\mergingalphabet{\alphabet}{\width}\rightarrow\cdlayeralphabet{\alphabet}{\width}$
the map that sends each triple $\tuple{\alayer,\mergingleft,\mergingright}\in\mergingalphabet{\alphabet}{\width}$ to the layer $\collapsing{\alphabet}{\width}(\alayer,\mergingleft,\mergingright)\in\cdlayeralphabet{\alphabet}{\width}$ obtained from $\alayer$ by identifying, for each $\sset\in \mergingleft\cup \mergingright$, all states belonging to $\sset$ with the smallest state that belongs to~$\sset$. 
More formally, for each triple $\tuple{\alayer,\mergingleft,\mergingright}\in\mergingalphabet{\alphabet}{\width}$,
we let $\collapsing{\alphabet}{\width}(\alayer,\mergingleft,\mergingright) = \blayer$, where $\blayer$ is the deterministic, complete layer belonging to  $\cdlayeralphabet{\alphabet}{\width}$ defined as follows:
\begin{itemize}
    \item $\layerleftfrontier(\blayer) \defeq \bigcup_{\sset \in \mergingleft}\set{\min\sset}$; $\layerrightfrontier(\blayer) \defeq \bigcup_{\sset' \in \mergingright}\set{\min\sset'}$; \vspace{1.0ex}
    \item $\layertransitions(\blayer) \defeq \bigcup_{\sset\in\mergingleft,\sset'\in\mergingright}\set{\tuple{\min\sset,\asymbol,\min\sset'} \setst \exists\,\layerleftstate\in\sset, \exists\,\layerrightstate\in\sset', \tuple{\layerleftstate,\asymbol,\layerrightstate}\in\layertransitions(\alayer)}$; \vspace{1.0ex} 
    \item $\layerinitialflag(\blayer) \defeq \layerinitialflag(\alayer)$; $\layerfinalflag(\blayer) \defeq \layerfinalflag(\alayer)$; \vspace{1.0ex}
    \item $\layerinitialstates(\blayer) \defeq \layerinitialstates(\alayer)$; $\layerfinalstates(\blayer) \defeq \layerrightfrontier(\blayer)  \cap \layerfinalstates(\alayer)$.
\end{itemize} 

Let $\oddcollapsingname{\alphabet}{\width}\colon\cdodddefiningsetstar{\alphabet}{\width}\rightarrow\cdodddefiningsetstar{\alphabet}{\width}$ be the map that for each $\oddlength\in\pN$, sends each deterministic, complete ODD $\odd = \layer_{1}\cdots\layer_{\oddlength}\in\cdodddefiningset{\alphabet}{\width}{\oddlength}$ to the deterministic, complete ODD $$\oddcollapsing{\alphabet}{\width}{\odd} \defeq \collapsing{\alphabet}{\width}(\layer_{1},\mergingleft_{1},\mergingright_{1})\cdots\collapsing{\alphabet}{\width}(\layer_{\oddlength},\mergingleft_{\oddlength},\mergingright_{\oddlength})\in\cdodddefiningset{\alphabet}{\width}{\oddlength}\text{,}$$ where $\sequence{\tuple{\mergingleft_{1},\mergingright_{1}},\ldots,\tuple{\mergingleft_{\oddlength},\mergingright_{\oddlength}}}$ denotes the unique merging annotation for $\odd$ (see Proposition~\ref{proposition:UniqueMerging}).

Let $\alphabet$ be an alphabet, $\width,\oddlength\in \pN$ and $\odd\in\cdodddefiningset{\alphabet}{\width}{\oddlength}$. 
We recall that since $\odd$ is a deterministic, complete ODD, we have that for each string $\astring=\asymbol_{1}\cdots\asymbol_{\oddlength}\in\alphabet^{\oddlength}$, there is a unique valid sequence $\sequence{\tuple{\layerleftstate_1,\asymbol_{1},\layerrightstate_1},\ldots,\tuple{\layerleftstate_{\oddlength},\asymbol_{\oddlength},\layerrightstate_{\oddlength}}}$ for $\astring$ in $\odd$. 
Thus, for each string $\astring\in\alphabet^{\oddlength}$ and each $i\in\bset{\oddlength}$, we let $\slayerrightstate{\odd}{\astring}{i}\defeq\layerrightstate_{i}$ denote the unique right state $\layerrightstate_{i}\in\layerrightfrontier(\layer_{i})$ that belongs to the valid sequence for $\astring$ in $\odd$. 
Moreover, we let $$\layerequivclass{\odd}{\astring}{i}\defeq\set{\astring'\in\alphabet^{\oddlength}\setst \slayerrightstate{\odd}{\astring'}{i}=\slayerrightstate{\odd}{\astring}{i}}\text{}$$ denote the equivalence class of $\astring$ with respect to $\odd$ and $i$.

\begin{proposition}\label{prop:EquivClassConcatenation}
	Let $\alphabet$ be an alphabet, $\width,\oddlength\in\pN$, $\odd=\layer_{1}\cdots\layer_{\oddlength}\in\cdodddefiningset{\alphabet}{\width}{\oddlength}$, and let $\layerrightstate$ be a right state in $\layerrightfrontier(\layer_{i})$ such that $\layerrightstate=\slayerrightstate{\odd}{\astring}{i}$ for some string $\astring\in\alphabet^{\oddlength}$ and some $i\in\bset{\oddlength-1}$. 
	For each string $\astring'=\asymbol_{1}'\cdots\asymbol_{\oddlength}'\in\layerequivclass{\odd}{\astring}{i}$ and each string $\bstring\in\alphabet^{\oddlength-i}$, we have that $\bstring\in\lstateoddlang{\odd}{i}{\layerrightstate}$ if and only if $\asymbol_{1}'\cdots\asymbol_{i}'\bstring\in\oddlang{\odd}$. 
\end{proposition}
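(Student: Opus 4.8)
The plan is to exploit that $\odd$ is deterministic and complete, so that every string in $\alphabet^{\oddlength}$ has exactly one valid sequence in $\odd$, and to show that the unique valid sequence for the concatenated string $w \defeq \asymbol_1'\cdots\asymbol_i'\bstring$ splits cleanly at the boundary between $\layer_i$ and $\layer_{i+1}$ into a prefix part reading $\asymbol_1'\cdots\asymbol_i'$ and a suffix part reading $\bstring$ from $\layerrightstate$. Since acceptance of $w$ is decided solely by whether the last right state of the suffix part lies in $\layerfinalstates(\layer_{\oddlength})$, and this is precisely the condition defining $\bstring\in\lstateoddlang{\odd}{i+1}{\layerrightstate}$ (viewing $\layerrightstate\in\layerrightfrontier(\layer_i)=\layerleftfrontier(\layer_{i+1})$ as a left state of $\layer_{i+1}$, so that the relevant language has length $\oddlength-i$), the biconditional will follow at once, both directions simultaneously.

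First I would record the consequences of determinism and completeness that are needed. Completeness guarantees that a valid sequence for $w$ exists, and determinism (together with the fact that the first layer of a deterministic ODD has a unique left state equal to its unique initial state) guarantees it is unique, so $w\in\oddlang{\odd}$ iff this unique sequence ends in a state of $\layerfinalstates(\layer_{\oddlength})$. Next I would argue that the state reached after reading the first $i$ symbols of $w$ depends only on those $i$ symbols: reading $\asymbol_1'\cdots\asymbol_i'$ from the unique initial state, determinism forces the first $i$ transitions to coincide with the first $i$ transitions of the unique valid sequence of $\astring'$. Hence after $i$ steps the valid sequence of $w$ is at $\slayerrightstate{\odd}{\astring'}{i}$, and here is exactly where the hypothesis $\astring'\in\layerequivclass{\odd}{\astring}{i}$ enters: by the definition of the equivalence class, $\slayerrightstate{\odd}{\astring'}{i}=\slayerrightstate{\odd}{\astring}{i}=\layerrightstate$.

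Having established that the valid sequence of $w$ reaches $\layerrightstate$ after layer $i$, I would observe that its remaining $\oddlength-i$ transitions, reading $\bstring$, form a valid sequence for $\bstring$ starting at $\layerrightstate$ in $\layer_{i+1}$, and conversely that any such sequence extends the fixed prefix to a valid sequence for $w$. Because $\odd$ is deterministic and complete, this tail is the unique valid sequence for $\bstring$ from $\layerrightstate$, and $w$ is accepted iff its final right state lies in $\layerfinalstates(\layer_{\oddlength})$ iff $\bstring$ is accepted from $\layerrightstate$, i.e. $\bstring\in\lstateoddlang{\odd}{i+1}{\layerrightstate}$.

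The main obstacle, such as it is, is not conceptual but a matter of bookkeeping: one must be careful with the index shift, since $\layerrightstate$ is a right state of $\layer_i$ yet is used as a left state of $\layer_{i+1}$, and one must justify that the prefix-reached state is independent of the tail $\asymbol_{i+1}'\cdots\asymbol_{\oddlength}'$ of $\astring'$, which is precisely the determinism argument above. Once the unique valid sequence of $w$ is shown to factor through $\layerrightstate$, both directions of the equivalence are immediate.
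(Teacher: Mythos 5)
Your proposal is correct and takes essentially the same route as the paper's own proof: both split the accepting/valid sequence of $\asymbol_{1}'\cdots\asymbol_{i}'\bstring$ at the boundary after layer $i$, use determinism to force its first $i$ transitions to coincide with those of the unique valid sequence of $\astring'$ (hence to pass through $\layerrightstate$, by the hypothesis $\astring'\in\layerequivclass{\odd}{\astring}{i}$), and identify the tail with a valid sequence for $\bstring$ from $\layerrightstate$ viewed as a left state of $\layer_{i+1}$. The only differences are cosmetic: you establish the factorization of the unique valid sequence once (invoking completeness for its existence) and read off both directions simultaneously, whereas the paper proves the two implications separately, using the tail-concatenation argument for one direction and the uniqueness/prefix-agreement argument for the other.
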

\begin{proof}
	Let $\astring'=\asymbol_{1}'\cdots\asymbol_{\oddlength}'\in\layerequivclass{\odd}{\astring}{i}$ and $\bstring=\bsymbol_{i+1}\cdots\bsymbol_{\oddlength}\in\alphabet^{\oddlength-i}$. 
	Also, let 
	$$\sequence{\tuple{\layerleftstate_1,\asymbol_{1}',\layerrightstate_1},\ldots,\tuple{\layerleftstate_{\oddlength},\asymbol_{\oddlength}',\layerrightstate_{\oddlength}}}$$ 
	be the unique valid sequence for $\astring'$ in $\odd$. 
	We note that $\layerleftstate_1\in\layerinitialstates(\layer_{1})$ and $\layerleftstate_{i+1}=\layerrightstate$.  
	Suppose that $\bstring\in\lstateoddlang{\odd}{i}{\layerleftstate_{i+1}}$. 
	By definition, there is a sequence
$$\sequence{\tuple{\layerleftstate_{i+1}',\bsymbol_{i+1},\layerrightstate_{i+1}'},\ldots,\tuple{\layerleftstate_{\oddlength}',\bsymbol_{\oddlength},\layerrightstate_{\oddlength}'}}$$
of transitions such that $\layerleftstate_{i+1}'=\layerleftstate_{i+1}$, $\layerrightstate_{\oddlength}'\in\layerfinalstates(\layer_{\oddlength})$ and, for each $j\in\set{i+1,\ldots,\oddlength}$, $\tuple{\layerleftstate_{j}',\asymbol_{j}',\layerrightstate_{j}'}\in\layertransitions(\layer_{j})$. 
	Thus, $\sequence{\tuple{\layerleftstate_1,\asymbol_{1}',\layerrightstate_1},\ldots,\tuple{\layerleftstate_{i},\asymbol_{i}',\layerrightstate_{i}},\tuple{\layerleftstate_{i+1}',\bsymbol_{i+1},\layerrightstate_{i+1}'},\ldots,\tuple{\layerleftstate_{\oddlength}',\bsymbol_{\oddlength},\layerrightstate_{\oddlength}'}}$ is an accepting sequence for the string $\asymbol_{1}'\cdots\asymbol_{i}'\bstring$ in $\odd$, and therefore $\asymbol_{1}'\cdots\asymbol_{i}'\bstring\in\oddlang{\odd}$. 

	Conversely, suppose that $\asymbol_{1}'\cdots\asymbol_{i}'\bstring\in\oddlang{\odd}$.
	Then, there exists a unique accepting sequence $$\sequence{\tuple{\layerleftstate'_1,\asymbol_{1}',\layerrightstate'_1},\ldots,\tuple{\layerleftstate'_i,\asymbol_{i}',\layerrightstate'_i},\tuple{\layerleftstate'_{i+1},\bsymbol_{i+1},\layerrightstate'_{i+1}},\ldots,\tuple{\layerleftstate'_{\oddlength},\bsymbol_{\oddlength},\layerrightstate'_{\oddlength}}}$$ for $\asymbol_{1}'\cdots\asymbol_{i}'\bstring$ in $\odd$. 
	By the uniqueness of this sequence, we have that $\layerleftstate'_{1}=\layerleftstate_{1}$ and $\layerrightstate'_{j}=\layerrightstate_{j}$ for each $j \in \bset{i}$. 
	In particular, $\layerleftstate'_{i+1} = \layerrightstate'_{i} = \layerrightstate$. 
	Therefore, $\bstring\in\lstateoddlang{\odd}{i}{\layerrightstate}$. 
\end{proof}

\begin{proposition}\label{prop:EquivClass}
	Let $\alphabet$ be an alphabet, $\width,\oddlength\in\pN$, and let $\aodd$ and $\bodd$ be two deterministic, complete ODDs in $\cdodddefiningset{\alphabet}{\width}{\oddlength}$. If $\oddlang{\oddcollapsing{\alphabet}{\width}{\aodd}}=\oddlang{\oddcollapsing{\alphabet}{\width}{\bodd}}$, then $\layerequivclass{\oddcollapsing{\alphabet}{\width}{\aodd}}{\astring}{i}=\layerequivclass{\oddcollapsing{\alphabet}{\width}{\bodd}}{\astring}{i}$ for each $\astring\in\alphabet^{\oddlength}$ and each $i\in\bset{\oddlength}$. 
\end{proposition}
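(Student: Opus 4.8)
\emph{Strategy.} Set $\lang\defeq\oddlang{\oddcollapsing{\alphabet}{\width}{\aodd}}=\oddlang{\oddcollapsing{\alphabet}{\width}{\bodd}}$. The plan is to show that, for a \emph{collapsed} ODD, the equivalence class $\layerequivclass{\cdot}{\astring}{i}$ is determined entirely by the language $\lang$ and the index $i$, so that the two collapsed ODDs, which share the language $\lang$, induce identical classes. For $\astring=\asymbol_{1}\cdots\asymbol_{\oddlength}$ and $\astring'=\pasymbol_{1}\cdots\pasymbol_{\oddlength}$ in $\alphabet^{\oddlength}$, I introduce the language-intrinsic (Myhill--Nerode style) relation $\astring\approx_{i}\astring'$, which holds iff $\asymbol_{1}\cdots\asymbol_{i}\bstring\in\lang\Leftrightarrow\pasymbol_{1}\cdots\pasymbol_{i}\bstring\in\lang$ for every $\bstring\in\alphabet^{\oddlength-i}$. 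The goal then reduces to the single claim: for each $\odd\in\set{\aodd,\bodd}$, each $i\in\bset{\oddlength}$ and each $\astring\in\alphabet^{\oddlength}$, the class $\layerequivclass{\oddcollapsing{\alphabet}{\width}{\odd}}{\astring}{i}$ coincides with the $\approx_{i}$-class of $\astring$. Since $\approx_{i}$ refers only to $\lang$, instantiating this claim at $\odd=\aodd$ and $\odd=\bodd$ immediately yields $\layerequivclass{\oddcollapsing{\alphabet}{\width}{\aodd}}{\astring}{i}=\layerequivclass{\oddcollapsing{\alphabet}{\width}{\bodd}}{\astring}{i}$, as required.

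\emph{Easy inclusion.} Fix a collapsed ODD $\odd^{\star}\defeq\oddcollapsing{\alphabet}{\width}{\odd}$ and an index $i\in\bset{\oddlength-1}$, and let $\layerrightstate\defeq\slayerrightstate{\odd^{\star}}{\astring}{i}$. Every $\astring'\in\layerequivclass{\odd^{\star}}{\astring}{i}$ reaches this same state $\layerrightstate$ at layer $i$, so Proposition~\ref{prop:EquivClassConcatenation}, which applies to every member of $\layerequivclass{\odd^{\star}}{\astring}{i}$, gives that both $\asymbol_{1}\cdots\asymbol_{i}\bstring\in\lang$ and $\pasymbol_{1}\cdots\pasymbol_{i}\bstring\in\lang$ are equivalent to $\bstring\in\lstateoddlang{\odd^{\star}}{i}{\layerrightstate}$; chaining the two equivalences gives $\astring\approx_{i}\astring'$, proving $\layerequivclass{\odd^{\star}}{\astring}{i}\subseteq\set{\astring'\setst\astring\approx_{i}\astring'}$. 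The boundary index $i=\oddlength$ is treated directly: by Proposition~\ref{proposition:UniqueMerging} the final partition has at most two cells, namely the final and the non-final right states, so the last frontier of $\odd^{\star}$ contains at most one accepting and one rejecting state; hence two strings reach the same last-layer state precisely when they agree on membership in $\lang$, which is exactly $\approx_{\oddlength}$.

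\emph{Hard inclusion and the main obstacle.} For the reverse inclusion, suppose $\astring\approx_{i}\astring'$ and put $\layerrightstate\defeq\slayerrightstate{\odd^{\star}}{\astring}{i}$ and $\layerrightstate'\defeq\slayerrightstate{\odd^{\star}}{\astring'}{i}$. Proposition~\ref{prop:EquivClassConcatenation} identifies $\lstateoddlang{\odd^{\star}}{i}{\layerrightstate}$ with $\set{\bstring\setst\asymbol_{1}\cdots\asymbol_{i}\bstring\in\lang}$ and $\lstateoddlang{\odd^{\star}}{i}{\layerrightstate'}$ with $\set{\bstring\setst\pasymbol_{1}\cdots\pasymbol_{i}\bstring\in\lang}$, and $\astring\approx_{i}\astring'$ forces these two sets to be equal. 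To deduce $\layerrightstate=\layerrightstate'$, and hence $\astring'\in\layerequivclass{\odd^{\star}}{\astring}{i}$, I need that \emph{distinct} states of $\odd^{\star}$ have \emph{distinct} right languages, i.e.\ that $\oddcollapsingname{\alphabet}{\width}$ yields a reduced ODD; this is the heart of the proof. I would isolate it as a preservation lemma: writing $\odd=\layer_{1}\cdots\layer_{\oddlength}$ and $\odd^{\star}=\blayer_{1}\cdots\blayer_{\oddlength}$ with $\blayer_{j}=\collapsing{\alphabet}{\width}(\layer_{j},\mergingleft_{j},\mergingright_{j})$, the right language of $\odd^{\star}$ from the representative $\min\sset$ of a cell $\sset$ of $\mergingleft_{j}$ equals $\lstateoddlang{\odd}{j}{\layerleftstate}$ for every $\layerleftstate\in\sset$. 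This is proved by backward induction on $j$, using determinism and the recursive expansion of $\lstateoddlang{\odd}{j}{\cdot}$ in terms of $\lstateoddlang{\odd}{j+1}{\cdot}$ (cf.\ Equation~\eqref{eq:layerODDLang}), together with the fact that $\mergingright_{j}$-equivalent left states route each symbol into a common cell of $\mergingright_{j}$. Granting the lemma, two distinct states of $\blayer_{j}$ are representatives of distinct cells of $\mergingleft_{j}$, which by Proposition~\ref{prop:mergingLanguagesFromLeftState} carry distinct right languages in $\odd$, and hence, by preservation, distinct right languages in $\odd^{\star}$; this is exactly the reducedness needed above, and it closes the argument. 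The main difficulty is thus concentrated entirely in the preservation lemma, whose induction must be set up so that the collapsed transition out of $\min\sset$ on a symbol $\asymbol$ is shown to land on the representative of the $\mergingright_{j}$-cell reached by the original states of $\sset$ on $\asymbol$.
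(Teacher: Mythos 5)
Your proposal is correct, but it is organized quite differently from the paper's proof, so a comparison is in order. The paper argues by contradiction: it assumes the classes differ for some $\astring$ and $i$, picks $\astring'\in\layerequivclass{\oddcollapsing{\alphabet}{\width}{\aodd}}{\astring}{i}\setminus\layerequivclass{\oddcollapsing{\alphabet}{\width}{\bodd}}{\astring}{i}$, invokes Proposition~\ref{prop:mergingLanguagesFromLeftState} to conclude that the two distinct states reached by $\astring$ and $\astring'$ in $\oddcollapsing{\alphabet}{\width}{\bodd}$ have different right languages, and then applies Proposition~\ref{prop:EquivClassConcatenation} twice to exhibit a string $\asymbol_{1}'\cdots\asymbol_{i}'\bstring$ lying in $\oddlang{\oddcollapsing{\alphabet}{\width}{\aodd}}$ but not in $\oddlang{\oddcollapsing{\alphabet}{\width}{\bodd}}$, contradicting the hypothesis. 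You instead prove the stronger direct statement that for any collapsed ODD $\odd^{\star}$ the classes $\layerequivclass{\odd^{\star}}{\astring}{i}$ coincide with the language-intrinsic relation $\approx_{i}$, from which the proposition is immediate; in essence this is the paper's argument run in contrapositive form, assembled from the same two ingredients (Propositions~\ref{prop:EquivClassConcatenation} and~\ref{prop:mergingLanguagesFromLeftState}). The substantive difference is your preservation lemma. When the paper applies Proposition~\ref{prop:mergingLanguagesFromLeftState} to the collapsed ODD, it tacitly assumes that distinct states of $\oddcollapsing{\alphabet}{\width}{\bodd}$ lie in distinct cells of its own merging annotation --- equivalently, that collapsing yields a reduced ODD; this is true, but the paper never establishes it, so its citation of that proposition does not by itself license the step. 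You identify exactly this point as the crux and close it by showing that right languages of cell representatives are unchanged by collapsing; your backward induction, resting on determinism and on the fact that $\mergingright_{j}$-equivalent states route every symbol into a common cell of $\mergingright_{j}$, is the right argument, and it needs no reachability assumption, matching the generality of the statement. The trade-off: the paper's contradiction argument is shorter (at the price of the implicit step), while your route costs one extra lemma but is fully self-contained and yields, as a by-product, a Myhill--Nerode-style characterization of the classes that the paper only uses implicitly.
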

\begin{proof}
	For the sake of contradiction, suppose that $\oddlang{\oddcollapsing{\alphabet}{\width}{\aodd}}=\oddlang{\oddcollapsing{\alphabet}{\width}{\bodd}}$ but, for some string $\astring=\asymbol_{1}\cdots\asymbol_{\oddlength}\in\alphabet^{\oddlength}$ and some $i\in\bset{\oddlength}$, $\layerequivclass{\oddcollapsing{\alphabet}{\width}{\aodd}}{\astring}{i}\neq\layerequivclass{\oddcollapsing{\alphabet}{\width}{\bodd}}{\astring}{i}$. 

	Assume without loss of generality that $\layerequivclass{\oddcollapsing{\alphabet}{\width}{\aodd}}{\astring}{i}\setminus\layerequivclass{\oddcollapsing{\alphabet}{\width}{\bodd}}{\astring}{i}\neq\emptyset$.
	Then, let $\astring'=\asymbol_{1}'\cdots\asymbol_{\oddlength}'\in\layerequivclass{\oddcollapsing{\alphabet}{\width}{\aodd}}{\astring}{i}\setminus\layerequivclass{\oddcollapsing{\alphabet}{\width}{\bodd}}{\astring}{i}$.
	Consider $\layerleftstate_{i+1}'=\slayerrightstate{\oddcollapsing{\alphabet}{\width}{\bodd}}{\astring}{i}$ and $\layerleftstate_{i+1}''=\slayerrightstate{\oddcollapsing{\alphabet}{\width}{\bodd}}{\astring'}{i}$. 
	We note that $i < \oddlength$, otherwise $\oddlang{\oddcollapsing{\alphabet}{\width}{\aodd}}$ would be different from $\oddlang{\oddcollapsing{\alphabet}{\width}{\bodd}}$. 
	Moreover, since $\layerleftstate_{i+1}'\neq\layerleftstate_{i+1}''$, we obtain by Proposition~\ref{prop:mergingLanguagesFromLeftState} that $$\lstateoddlang{\oddcollapsing{\alphabet}{\width}{\bodd}}{i+1}{\layerleftstate_{i+1}'}\neq\lstateoddlang{\oddcollapsing{\alphabet}{\width}{\bodd}}{i+1}{\layerleftstate_{i+1}''}\text{.}$$ 
	Assume without loss of generality $\lstateoddlang{\oddcollapsing{\alphabet}{\width}{\bodd}}{i+1}{\layerleftstate_{i+1}'}\setminus\lstateoddlang{\oddcollapsing{\alphabet}{\width}{\bodd}}{i+1}{\layerleftstate_{i+1}''}\neq\emptyset$. 
	Let $\bstring\in\lstateoddlang{\oddcollapsing{\alphabet}{\width}{\bodd}}{i+1}{\layerleftstate_{i+1}'}\setminus\lstateoddlang{\oddcollapsing{\alphabet}{\width}{\bodd}}{i+1}{\layerleftstate_{i+1}''}$. 
	Since $\oddcollapsing{\alphabet}{\width}{\bodd}$ is deterministic, there exists a unique valid sequence for the string $\asymbol_{1}'\cdots\asymbol_{i}'\bstring$ in $\oddcollapsing{\alphabet}{\width}{\bodd}$, and by definition this sequence must contain the left state $\layerleftstate_{i+1}''$. 
	Consequently, it follows from Proposition~\ref{prop:EquivClassConcatenation} and from the fact that $\bstring$ is not accepted by $\oddcollapsing{\alphabet}{\width}{\bodd}$ from $\layerleftstate_{i+1}''$ that 
	\begin{equation}\label{eq:MergEquiv1}
		\asymbol_{1}'\cdots\asymbol_{i}'\bstring\not\in\oddlang{\oddcollapsing{\alphabet}{\width}{\bodd}}\text{.} 
	\end{equation} 
	On the other hand, $\bstring$ is accepted by $\oddcollapsing{\alphabet}{\width}{\bodd}$ from $\layerleftstate_{i+1}'$. 
	As a result, we obtain by Proposition~\ref{prop:EquivClassConcatenation} that $\asymbol_{1}\cdots\asymbol_{i}\bstring\in\oddlang{\oddcollapsing{\alphabet}{\width}{\bodd}}$. 
	In addition, we have that $\asymbol_{1}\cdots\asymbol_{i}\bstring\in\oddlang{\oddcollapsing{\alphabet}{\width}{\aodd}}$ since $\oddlang{\oddcollapsing{\alphabet}{\width}{\aodd}}=\oddlang{\oddcollapsing{\alphabet}{\width}{\bodd}}$. 
	This further implies that $\bstring\in\lstateoddlang{\oddcollapsing{\alphabet}{\width}{\aodd}}{i+1}{\layerleftstate_{i+1}}$, where $\layerleftstate_{i+1}$ denotes $\slayerrightstate{\oddcollapsing{\alphabet}{\width}{\aodd}}{\astring}{i}$. 
	However, since $\astring'\in\layerequivclass{\oddcollapsing{\alphabet}{\width}{\aodd}}{\astring}{i}$, it follows from Proposition~\ref{prop:EquivClassConcatenation} that 
	\begin{equation}\label{eq:MergEquiv2}
		\asymbol_{1}'\cdots\asymbol_{i}'\bstring\in\oddlang{\oddcollapsing{\alphabet}{\width}{\aodd}}\text{,} 
	\end{equation} 
	which, along with~\eqref{eq:MergEquiv1}, implies that $\oddlang{\oddcollapsing{\alphabet}{\width}{\aodd}}\neq\oddlang{\oddcollapsing{\alphabet}{\width}{\bodd}}$.
\end{proof}

\begin{proposition}\label{proposition:UniqueMergingODD}
	Let $\alphabet$ be an alphabet, $\width\in\pN$ and $\odd\in\cdodddefiningsetstar{\alphabet}{\width}$. 
	If $\odd$ is reachable, then $\oddcollapsing{\alphabet}{\width}{\odd}$ is a minimized ODD such that $\oddlang{\oddcollapsing{\alphabet}{\width}{\odd}} = \oddlang{\odd}$. 
\end{proposition}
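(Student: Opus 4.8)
The plan is to prove the two assertions in turn. Write $\codd \defeq \oddcollapsing{\alphabet}{\width}{\odd}$, and let $\odd = \layer_{1}\cdots\layer_{\oddlength}$ together with its unique merging annotation $\sequence{\tuple{\mergingleft_{1},\mergingright_{1}},\ldots,\tuple{\mergingleft_{\oddlength},\mergingright_{\oddlength}}}$, which exists by Proposition~\ref{proposition:UniqueMerging}; each state $\layerleftstate$ of $\odd$ lying in a cell $\sset$ of one of these partitions is sent by the collapse to $\min\sset$. First I would record that $\codd$ is genuinely deterministic and complete: determinism holds because the states of any cell of $\mergingright_{i}$ are $\mergingright_{i}$-equivalent, so for each symbol the merged outgoing transition of $\min\sset$ is uniquely determined; and completeness is inherited from $\odd$, since every representative retains an outgoing transition on each symbol.

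For the language equality I would argue by backward induction on $i$ that, for every cell $\sset$ of $\mergingleft_{i}$ and every $\layerleftstate \in \sset$, one has $\lstateoddlang{\odd}{i}{\layerleftstate} = \lstateoddlang{\codd}{i}{\min\sset}$. By Proposition~\ref{prop:mergingLanguagesFromLeftState} the left-hand side is already constant on each cell, so the content is that collapsing does not alter it. The base case $i = \oddlength$ uses condition~(1) in the definition of a merging annotation, which separates the final from the non-final right states of the last layer and thereby preserves one-step acceptance. The inductive step rests on the one-step decomposition~\eqref{eq:layerODDLang}: since $\blayer_{i}$ carries a transition $\tuple{\min\sset,\asymbol,\min\sset'}$ precisely when some transition of $\layer_{i}$ runs from $\sset$ into the cell $\sset'$, and since $\mergingright_{i} = \mergingleft_{i+1}$, the union over right cells of the continuation languages is unchanged. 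Specialising to $i = 1$, where determinism forces a single initial state, yields $\oddlang{\codd} = \oddlang{\odd}$.

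Minimality I would obtain by a length-restricted Myhill--Nerode count carried out frontier by frontier, so as to avoid any circular appeal to Theorem~\ref{theorem:MinimizationODDs}. For a prefix $\bstring \in \alphabet^{i}$ put $L_{\bstring} \defeq \set{\cstring \in \alphabet^{\oddlength - i} : \bstring\cstring \in \oddlang{\odd}}$. For an arbitrary deterministic, complete $\bodd \in \cdodddefiningset{\alphabet}{\width'}{\oddlength}$ with $\oddlang{\bodd} = \oddlang{\odd}$, completeness and determinism force each $\bstring$ to reach a unique state at frontier $i$ whose accepted language is exactly $L_{\bstring}$, by the concatenation characterisation of Proposition~\ref{prop:EquivClassConcatenation}; hence prefixes with distinct $L_{\bstring}$ reach distinct states, giving at least $\abs{\set{L_{\bstring} : \bstring \in \alphabet^{i}}}$ states at frontier $i$ of $\bodd$, irrespective of whether $\bodd$ is reachable. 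For $\codd$ itself the count is exact: since $\odd$ is reachable---and collapsing preserves reachability, as every retained representative still receives an incoming transition---$\codd$ is reachable and complete, so each frontier state is reached by some prefix and realises some $L_{\bstring}$, while Proposition~\ref{prop:mergingLanguagesFromLeftState} shows distinct cells realise distinct continuation languages. Thus frontier $i$ of $\codd$ has exactly $\abs{\set{L_{\bstring} : \bstring \in \alphabet^{i}}}$ states. Summing $\numberstates{\cdot}$ over the $\oddlength + 1$ frontiers gives $\numberstates{\codd} \le \numberstates{\bodd}$, so $\codd$ is minimized.

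The step I expect to be the main obstacle is the exact count for $\codd$. It requires checking simultaneously that collapsing preserves reachability and the deterministic, complete structure, and then identifying the continuation language of each reached state with the prefix-language $L_{\bstring}$; the final frontier $i = \oddlength$, where the governing partition is the final/non-final split of $\mergingright_{\oddlength}$ rather than a $\mergingright$-equivalence class, has to be treated as a separate boundary case. By contrast, the lower bound for $\bodd$ is routine, using only determinism and completeness.
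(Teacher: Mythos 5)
Your proof is correct, but it reaches the conclusion by a genuinely different route than the paper, most notably for minimality. The paper proves language equality by translating accepting sequences directly in both directions (mapping each right state to the minimum of its cell, and lifting back), whereas you prove the stronger statewise claim $\lstateoddlang{\odd}{i}{\layerleftstate} = \lstateoddlang{\oddcollapsing{\alphabet}{\width}{\odd}}{i}{\min\sset}$ by backward induction on the frontiers. For minimality, the paper fixes a minimized competitor $\bodd$ with the same language, observes that $\oddcollapsing{\alphabet}{\width}{\bodd}=\bodd$ and that $\bodd$ is reachable, and then builds an explicit isomorphism between $\oddcollapsing{\alphabet}{\width}{\odd}$ and $\bodd$ frontier by frontier, invoking Proposition~\ref{prop:EquivClass} to see that the maps are bijections; minimality follows because isomorphic ODDs have equally many states. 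You instead run a length-restricted Myhill--Nerode count: every deterministic, complete competitor has at least $\abs{\set{L_{\bstring} : \bstring\in\alphabet^{i}}}$ states at frontier $i$ (reachable or not), while $\oddcollapsing{\alphabet}{\width}{\odd}$ attains this bound exactly, by reachability, by your statewise induction, and by Proposition~\ref{prop:mergingLanguagesFromLeftState}; summing over the $\oddlength+1$ frontiers gives $\numberstates{\oddcollapsing{\alphabet}{\width}{\odd}}\le\numberstates{\bodd}$ for \emph{every} competitor, which is minimality by definition. The trade-off: the paper's isomorphism argument yields, essentially for free, that the collapsed ODD is isomorphic to any minimized ODD with the same language (the uniqueness-up-to-isomorphism content behind Theorem~\ref{theorem:MinimizationODDs}), but it must presuppose a minimized competitor and lean on Proposition~\ref{prop:EquivClass}; your counting argument is more self-contained --- it avoids Proposition~\ref{prop:EquivClass} entirely, makes the quantification over arbitrary-width, possibly non-reachable competitors explicit, and lets the statewise language-preservation lemma do double duty in both halves of the proof --- at the cost of not directly exhibiting the isomorphism. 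Your flagged boundary case at frontier $\oddlength$ (where the governing partition is the final/non-final split and continuation languages degenerate to $\set{\emptystring}$ or $\emptyset$) does work out as you anticipate.
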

\begin{proof}
	Assume that $\odd = \alayer_{1}\cdots\alayer_{\oddlength}$, for some $\oddlength \in \pN$, and let  
	$\sequence{\tuple{\mergingleft_{1},\mergingright_{1}}\cdots\tuple{\mergingleft_{\oddlength},\mergingright_{\oddlength}}}$ be the unique merging annotation for $\odd$. 
	First, we prove that $\oddlang{\oddcollapsing{\alphabet}{\width}{\odd}} = \oddlang{\odd}$. 
	Let $\astring = \asymbol_{1}\cdots\asymbol_{\oddlength} \in \alphabet^{\oddlength}$. 
	Suppose that $\astring \in \oddlang{\odd}$.
	Then, there exists an accepting sequence $\sequence{\tuple{\layerleftstate_{1},\asymbol_{1},\layerrightstate_{1}},\ldots,\tuple{\layerleftstate_{\oddlength},\asymbol_{\oddlength},\layerrightstate_{\oddlength}}}$ for $\astring$ in $\odd$. 
	For each $i \in \bset{\oddlength}$, let $\sset_{i}$ be the unique cell of $\mergingright_{i}$ that contains $\layerrightstate_{i}$. 
	Then, we have that  $$\sequence{\tuple{\layerleftstate_{1},\asymbol_{1},\min \sset_{1}},\tuple{\min \sset_{1},\asymbol_{\oddlength},\min \sset_{2}},\ldots,\tuple{\min \sset_{\oddlength-1},\asymbol_{\oddlength},\min \sset_{\oddlength}}}$$ is an accepting sequence for $\astring$ in $\oddcollapsing{\alphabet}{\width}{\odd}$.  
	As a result, we obtain that $\oddlang{\oddcollapsing{\alphabet}{\width}{\odd}} \subseteq \oddlang{\odd}$. 
	Now, suppose that $\astring \in \oddlang{\oddcollapsing{\alphabet}{\width}{\odd}}$. 
	Then, there exists an accepting sequence $\sequence{\tuple{\layerleftstate'_{1},\asymbol_{1},\layerrightstate'_{1}},\ldots,\tuple{\layerleftstate'_{\oddlength},\asymbol_{\oddlength},\layerrightstate'_{\oddlength}}}$ for $\astring$ in $\oddcollapsing{\alphabet}{\width}{\odd}$. 
	We note that for each $i \in \bset{\oddlength}$, there exists a right state $\layerrightstate_{i}\in\layerrightfrontier(\alayer_{i})$ such that $\layerrightstate_{i}$ and $\layerrightstate'_{i}$ belong to a same cell of $\mergingright_{i}$ and $\tuple{\layerleftstate_{i},\asymbol_{i},\layerrightstate_{i}}\in\layertransitions(\alayer_{i})$, where $\layerleftstate_{1} = \layerleftstate'_{1}$ and $\layerleftstate_{j}=\layerrightstate_{j-1}$ for each $j \in \set{2,\ldots,\oddlength}$. 
	Thus, there exists an accepting sequence $\sequence{\tuple{\layerleftstate_{1},\asymbol_{1},\layerrightstate_{1}},\ldots,\tuple{\layerleftstate_{\oddlength},\asymbol_{\oddlength},\layerrightstate_{\oddlength}}}$ for $\astring$ in $\odd$. 
	Therefore, $\oddlang{\oddcollapsing{\alphabet}{\width}{\odd}} \supseteq \oddlang{\odd}$. 

	Now, we prove that $\oddcollapsing{\alphabet}{\width}{\odd}$ is minimized if $\odd$ is reachable. 
	Thus, assume that $\odd$ is reachable. 
	This implies that $\oddcollapsing{\alphabet}{\width}{\odd}$ is also reachable and thus, for each $i \in \bset{\oddlength}$ and each $\layerrightstate \in \layerrightfrontier(\collapsing{\alphabet}{\width}(\alayer_{i},\mergingleft_{i},\mergingright_{i}))$, $\layerrightstate = \slayerrightstate{\oddcollapsing{\alphabet}{\width}{\odd}}{\astring}{i}$ for some $\astring \in \alphabet^{\oddlength}$. 
	Then, for each $i \in \bset{\oddlength}$, let $\width_{i} = \abs{\layerrightfrontier(\collapsing{\alphabet}{\width}(\alayer_{i},\mergingleft_{i},\mergingright_{i}))}$ and let $\astring_{1}^{i},\ldots,\astring_{\width_{i}}^{i}$ be strings such that $\layerequivclass{\oddcollapsing{\alphabet}{\width}{\odd}}{\astring_{j}^{i}}{i} \neq \layerequivclass{\oddcollapsing{\alphabet}{\width}{\odd}}{\astring_{j'}^{i}}{i}$ for each $j \in \bset{\width_{i}}$ and each $j' \in \bset{\width_{i}}$ with $j \neq j'$. 
	Also, let $\bodd=\blayer_{1}\cdots\blayer_{\oddlength}\in\cdodddefiningset{\alphabet}{\width}{\oddlength}$ be a minimized ODD such that $\oddlang{\bodd} = \oddlang{\odd}$. 
	We note that $\bodd$ is reachable.
	Thus, for each $i \in \bset{\oddlength}$ and each $\layerrightstate' \in \layerrightfrontier(\blayer_i)$, $\layerrightstate' = \slayerrightstate{\bodd}{\astring'}{i}$ for some $\astring' \in \alphabet^{\oddlength}$. 
	Moreover, we have that $\oddcollapsing{\alphabet}{\width}{\bodd}=\bodd$, otherwise $\bodd$ would not be minimized. 
	Then, for each $i \in \bset{\oddlength}$, we let $\isomorphism_{i} \colon \layerrightfrontier(\collapsing{\alphabet}{\width}(\alayer_{i},\mergingleft_{i},\mergingright_{i})) \rightarrow \layerrightfrontier(\blayer_{i})$ be the mapping such that for each $j \in \bset{\width_{i}}$, $\isomorphism_{i}(\slayerrightstate{\oddcollapsing{\alphabet}{\width}{\odd}}{\astring_{j}^{i}}{i}) = \slayerrightstate{\bodd}{\astring_{j}^{i}}{i}$. 
	It follows from Proposition~\ref{prop:EquivClass} that $\isomorphism_{i}$ is a bijection. 
	Consequently, we obtain that $\sequence{\isomorphism_{0},\ldots,\isomorphism_{\oddlength}}$ is a isomorphism between $\oddcollapsing{\alphabet}{\width}{\odd}$ and $\bodd$, where $\isomorphism_{0} \colon \layerleftfrontier(\collapsing{\alphabet}{\width}(\alayer_{1},\mergingleft_{1},\mergingright_{1})) \rightarrow \layerleftfrontier(\blayer_{1})$ is the trivial bijection that sends the unique left state in $\layerleftfrontier(\collapsing{\alphabet}{\width}(\alayer_{1},\mergingleft_{1},\mergingright_{1}))$ to the unique left state in $\layerleftfrontier(\blayer_{1})$. 
	Therefore, $\oddcollapsing{\alphabet}{\width}{\odd}$ is minimized. 
\end{proof}

For each alphabet $\alphabet$ and each positive integer $\width\in \pN$, we let
$\mergingrelation{\alphabet}{\width}\subseteq  \cdlayeralphabet{\alphabet}{\width} \times \mergingalphabet{\alphabet}{\width}$ 
and $\mergingcompatible{\alphabet}{\width}\subseteq  \mergingalphabet{\alphabet}{\width}\times \mergingalphabet{\alphabet}{\width}$ be the 
following relations. 
$$\mergingrelation{\alphabet}{\width} \defeq \set{\tuple{\alayer, \tuple{\alayer,\mergingleft,\mergingright}} \setst \tuple{\alayer,\mergingleft,\mergingright}\in\mergingalphabet{\alphabet}{\width}} \text{ and }
$$ 
$$
\begin{multlined}[t][0.95\textwidth]
\mergingcompatible{\alphabet}{\width} \defeq \{
\tuple{\tuple{\alayer,\mergingleft,\mergingright},\tuple{\alayer',\mergingleft',\mergingright'}} \setst \tuple{\alayer,\mergingleft,\mergingright}, \tuple{\alayer',\mergingleft',\mergingright'} \in \mergingalphabet{\alphabet}{\width}, \layerrightfrontier(\alayer) = \layerleftfrontier(\alayer'),\; \mergingright = \mergingleft'\}\text{.}
\end{multlined}
$$

For each alphabet $\alphabet$ and each positive integer $\width \in \pN$, we define the 
$(\cdlayeralphabet{\alphabet}{\width},\cdlayeralphabet{\alphabet}{\width})$-transduction
$\mergingtransduction{\alphabet}{\width}$ as 
$$\mergingtransduction{\alphabet}{\width} \defeq \multimaptransduction{\mergingrelation{\alphabet}{\width}}\circ\compTransduction{\mergingcompatible{\alphabet}{\width}}\circ \multimaptransduction{\collapsing{\alphabet}{\width}}.$$

The next lemma states that $\mergingtransduction{\alphabet}{\width}$ is a transduction that sends each deterministic, complete ODD $\aodd\in\cdodddefiningsetstar{\alphabet}{\width}$ to a 
\emph{minimized} deterministic, complete ODD $\bodd\in\cdodddefiningsetstar{\alphabet}{\width}$ that has the same language as $\aodd$.

\begin{lemma}[Merging Transduction]\label{lemma:MergingTransduction}
    For each alphabet $\alphabet$ and each positive integer $\width\in \pN$, the following statements hold.  
    \begin{enumerate}
	\item\label{MergingOne} $\mergingtransduction{\alphabet}{\width}$ is functional.
	\item\label{MergingTwo} $\domain(\mergingtransduction{\alphabet}{\width}) \supseteq \cdodddefiningsetstar{\alphabet}{\width}$.
	\item\label{MergingThree} For each pair $(\aodd,\bodd)\in \mergingtransduction{\alphabet}{\width}$, if $\aodd$ is a reachable ODD, 
		then $\bodd\in\cdodddefiningsetstar{\alphabet}{\width}$, $\oddlang{\bodd} = \oddlang{\aodd}$ and $\bodd$ is minimized.
	\item\label{MergingFour} $\mergingtransduction{\alphabet}{\width}$ is $2^{O(|\alphabet|\cdot \width\log \width)}$-regular. 
    \end{enumerate}
\end{lemma}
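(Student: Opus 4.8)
The plan is to follow the proof of Lemma~\ref{lemma:ReachabilityTransduction} almost verbatim, substituting merging annotations for reachability annotations and invoking Propositions~\ref{proposition:UniqueMerging} and~\ref{proposition:UniqueMergingODD} in the roles played there by Propositions~\ref{proposition:UniqueReachability} and~\ref{proposition:UniqueReachabilityODD}. First I would unfold the definition $\mergingtransduction{\alphabet}{\width} = \multimaptransduction{\mergingrelation{\alphabet}{\width}}\circ\compTransduction{\mergingcompatible{\alphabet}{\width}}\circ\multimaptransduction{\collapsing{\alphabet}{\width}}$ and give an explicit description of the relation it computes: a pair $(\aodd,\bodd)$ of equal-length strings over $\cdlayeralphabet{\alphabet}{\width}$ lies in $\mergingtransduction{\alphabet}{\width}$ exactly when, writing $\aodd=\alayer_1\cdots\alayer_{\oddlength}$ and $\bodd=\blayer_1\cdots\blayer_{\oddlength}$, there is a sequence $\langle(\mergingleft_1,\mergingright_1),\ldots,(\mergingleft_{\oddlength},\mergingright_{\oddlength})\rangle$ in which each $(\mergingleft_i,\mergingright_i)$ is a merging annotation for $\alayer_i$, consecutive layers satisfy $\layerrightfrontier(\alayer_j)=\layerleftfrontier(\alayer_{j+1})$ and $\mergingright_j=\mergingleft_{j+1}$, and $\blayer_i=\collapsing{\alphabet}{\width}(\alayer_i,\mergingleft_i,\mergingright_i)$. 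In particular, for $\aodd\in\cdodddefiningsetstar{\alphabet}{\width}$ this sequence is precisely the merging annotation of $\aodd$ and $\bodd=\oddcollapsing{\alphabet}{\width}{\aodd}$.

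Items~\ref{MergingOne} and~\ref{MergingTwo} would then drop out of Proposition~\ref{proposition:UniqueMerging}: since every deterministic, complete ODD admits a \emph{unique} merging annotation, each $\aodd\in\cdodddefiningsetstar{\alphabet}{\width}$ has exactly one compatible annotated preimage under $\multimaptransduction{\mergingrelation{\alphabet}{\width}}\circ\compTransduction{\mergingcompatible{\alphabet}{\width}}$, hence exactly one image $\bodd=\oddcollapsing{\alphabet}{\width}{\aodd}$ under the composition. Uniqueness of this image yields functionality (Item~\ref{MergingOne}), and its existence yields $\domain(\mergingtransduction{\alphabet}{\width})\supseteq\cdodddefiningsetstar{\alphabet}{\width}$ (Item~\ref{MergingTwo}). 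Item~\ref{MergingThree} is then a direct appeal to Proposition~\ref{proposition:UniqueMergingODD}: when $\aodd$ is reachable, its image $\oddcollapsing{\alphabet}{\width}{\aodd}$ lies in $\cdodddefiningsetstar{\alphabet}{\width}$, is minimized, and satisfies $\oddlang{\oddcollapsing{\alphabet}{\width}{\aodd}}=\oddlang{\aodd}$.

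For Item~\ref{MergingFour} I would combine Proposition~\ref{proposition:PropertiesTransductions}.(\ref{item:automaton_composition}) with the size estimates of Proposition~\ref{proposition:SizeTransductions}. The two multimap factors are $2$-regular by Proposition~\ref{proposition:SizeTransductions}.(\ref{SizeTransductionsTwo}), and the compatibility factor $\compTransduction{\mergingcompatible{\alphabet}{\width}}$, being a compatibility transduction over the alphabet $\mergingalphabet{\alphabet}{\width}$, is $(|\mergingalphabet{\alphabet}{\width}|+2)$-regular by Proposition~\ref{proposition:SizeTransductions}.(\ref{SizeTransductionsOne}). The one genuine computation is to bound $|\mergingalphabet{\alphabet}{\width}|$: a triple $(\alayer,\mergingleft,\mergingright)$ consists of a layer $\alayer\in\cdlayeralphabet{\alphabet}{\width}$—of which there are $2^{O(|\alphabet|\cdot\width\log\width)}$ by Observation~\ref{observation:NumberLayers}—together with two partitions of subsets of $\dbset{\width}$; since a set of size at most $\width$ has $\width^{O(\width)}=2^{O(\width\log\width)}$ partitions, this gives $|\mergingalphabet{\alphabet}{\width}|=2^{O(|\alphabet|\cdot\width\log\width)}$, and multiplying the three regularity constants leaves the bound at $2^{O(|\alphabet|\cdot\width\log\width)}$.

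Because Propositions~\ref{prop:mergingLanguagesFromLeftState} through~\ref{proposition:UniqueMergingODD} already carry all the semantic weight—that collapsing along the unique merging annotation produces a language-preserving, minimized, deterministic, complete ODD—the lemma itself is mostly assembly and I do not expect a serious obstacle inside it. The subtle point to get right is the functionality claim in Item~\ref{MergingOne}, which genuinely relies on the ODD structure through Proposition~\ref{proposition:UniqueMerging}: the final flag of the last layer pins down $\mergingright_{\oddlength}$, and the compatibility constraint $\mergingright_j=\mergingleft_{j+1}$ propagates this choice leftward uniquely. The only computation needing care is confirming $|\mergingalphabet{\alphabet}{\width}|=2^{O(|\alphabet|\cdot\width\log\width)}$, so that the final bound matches the deterministic-complete layer count rather than the larger general-layer count $2^{O(|\alphabet|\cdot\width^2)}$.
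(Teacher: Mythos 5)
Your proposal matches the paper's proof essentially step for step: the same unfolding of the composed transduction, the same appeal to Proposition~\ref{proposition:UniqueMerging} for functionality and domain, the same appeal to Proposition~\ref{proposition:UniqueMergingODD} for Item~\ref{MergingThree}, and the same composition-of-regular-transductions argument (Propositions~\ref{proposition:PropertiesTransductions} and~\ref{proposition:SizeTransductions}) for Item~\ref{MergingFour}. Your explicit bound $|\mergingalphabet{\alphabet}{\width}|=2^{O(|\alphabet|\cdot\width\log\width)}$ via the layer count and the $2^{O(\width\log\width)}$ bound on partitions of $\dbset{\width}$ is a correct filling-in of a detail the paper merely asserts.
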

\begin{proof}
	We note that $\mergingtransduction{\alphabet}{\width}$ consists of all pairs $\tuple{\aodd,\bodd}$ of non-empty strings over the alphabet $\cdlayeralphabet{\alphabet}{\width}$ satisfying the conditions that $\abs{\aodd}=\abs{\bodd}$ and that, if $\aodd=\alayer_{1}\cdots\alayer_{\oddlength}$ and $\bodd=\blayer_{1}\cdots\blayer_{\oddlength}$ for some $\oddlength\in\pN$, then there exists a merging annotation $\tuple{\mergingleft_{i},\mergingright_{i}}$ for the layer $\alayer_{i}$ such that  $\blayer_{i}=\collapsing{\alphabet}{\width}(\alayer_{i},\mergingleft_{i},\mergingright_{i})$ for each $i\in\bset{\oddlength}$, and $\layerrightfrontier(\alayer_{j})=\layerleftfrontier(\alayer_{j+1})$ and $\mergingright_{j}=\mergingleft_{j+1}$ for each $j\in\bset{\oddlength-1}$. 
	Additionally, based on Proposition~\ref{proposition:UniqueMerging}, each $(\alphabet,\width)$-ODD admits a unique merging annotation. 
	As a result, we obtain that $\domain(\mergingtransduction{\alphabet}{\width}) \supseteq \cdodddefiningsetstar{\alphabet}{\width}$. 
	Moreover, if $\tuple{\aodd,\bodd}\in \mergingtransduction{\alphabet}{\width}$, then $\bodd = \oddcollapsing{\alphabet}{\width}{\aodd}$; 
thus, by the uniqueness of $\oddcollapsing{\alphabet}{\width}{\aodd}$, the transduction $\mergingtransduction{\alphabet}{\width}$ is functional. 
	Finally, it follows from Proposition~\ref{proposition:UniqueMergingODD} that for each pair
	$\tuple{\aodd,\bodd}\in \mergingtransduction{\alphabet}{\width}$ such that $\aodd$ is a reachable ODD in 
	$\cdodddefiningsetstar{\alphabet}{\width}$, we have that $\bodd = \oddcollapsing{\alphabet}{\width}{\aodd}$ is 
	a minimized ODD in $\cdodddefiningsetstar{\alphabet}{\width}$ that has the same language as $\aodd$.

	The fact that $\mergingtransduction{\alphabet}{\width}$ is $2^{O(|\alphabet|\cdot \width\cdot \log \width)}$-regular 
	follows from Proposition \ref{proposition:PropertiesTransductions}.(2) together with the fact that the multimap transductions 
	$\multimaptransduction{\mergingrelation{\alphabet}{\width}}$ and $\multimaptransduction{\collapsing{\alphabet}{\width}}$ are 
	$2$-regular (Proposition \ref{proposition:SizeTransductions}.(1)), and that the transduction 
	$\compTransduction{\mergingcompatible{\alphabet}{\width}}$ is $2^{O(|\alphabet|\cdot \width\cdot \log\width)}$-regular (Proposition \ref{proposition:SizeTransductions}.(2)),
	given that $\mergingcompatible{\alphabet}{\width} \subseteq \mergingalphabet{\alphabet}{\width}\times \mergingalphabet{\alphabet}{\width}$, and that 
	$|\mergingalphabet{\alphabet}{\width}|=2^{O(|\alphabet|\cdot \width\cdot \log \width)}$. 
\end{proof}

\subsection{Normalization Transduction.}
\label{subsection:NormalizationTransduction}

In this subsection, we define the {\em normalization transduction}, which intuitively simulates the process 
of numbering the states in each frontier of each layer of an ODD $\aodd$ according to their lexicographical order. 
This transduction can be defined as the composition of three elementary transductions. 
First, we use a multimap transduction to expand each layer of the ODD into 
a set of {\em annotated} layers. Each annotation relabels the left and right frontier 
vertices of the layer in such a way that the layer itself is normalized. 
Subsequently, we use a compatibility transduction that defines two consecutive annotated 
layers to be compatible if and only if the relabeling of the right-frontier of the first 
is equal to the relabeling of the left frontier of the second. It is possible to show that 
each reachable ODD $\aodd$ gives rise to a unique sequence of annotated layers where each two consecutive layers are compatible. 
Finally, we apply a mapping that sends each annotated layer to the layer obtained sending the numbers
in the frontiers to their relabeled versions. The resulting ODD is isomorphic to the original one, and therefore
besides preserving the language, it also preserves reachability and minimality.

Let $\alphabet$ be an alphabet, $\width \in \pN$, and let $\alayer\in \cdlayeralphabet{\alphabet}{\width}$.
For each two bijections $\isomorphism\colon\layerleftfrontier(\alayer)\rightarrow\dbset{\abs{\layerleftfrontier(\alayer)}}$ and $\isomorphism'\colon \layerrightfrontier(\alayer)\rightarrow\dbset{\abs{\layerrightfrontier(\alayer)}}$, we denote by $\permlayer{\alayer}{\isomorphism}{\isomorphism'}$ the $(\alphabet,\width)$-layer obtained from $\alayer$ by applying the bijection $\isomorphism$ to the left frontier of $\alayer$ and by applying the bijection $\isomorphism'$ to the right frontier of $\alayer$. 
More formally, $\permlayer{\alayer}{\isomorphism}{\isomorphism'}=\blayer$ is the $(\alphabet,\width)$-layer defined as follows:
\begin{itemize}
    \item $\layerleftfrontier(\blayer) \defeq \set{\isomorphism(\layerleftstate) \setst \layerleftstate \in \layerleftfrontier(\alayer)}$; $\layerrightfrontier(\blayer) \defeq \set{\isomorphism'(\layerrightstate) \setst \layerrightstate \in \layerrightfrontier(\alayer)}$; 
    \vspace{1.0ex}
    \item $\layerinitialstates(\blayer) \defeq \set{\isomorphism(\layerleftstate) \setst \layerleftstate \in \layerinitialstates(\alayer)}$;  $\layerfinalstates(\blayer) \defeq   \set{\isomorphism'(\layerrightstate) \setst \layerrightstate \in \layerfinalstates(\alayer)}$; 
    \vspace{1.0ex}
    \item $\layertransitions(\blayer) \defeq \set{\tuple{\isomorphism(\layerleftstate), \asymbol, \isomorphism'(\layerrightstate)} \setst \tuple{\layerleftstate, \asymbol, \layerrightstate} \in \layertransitions(\alayer)}$;
    \vspace{1.0ex}
    \item $\layerinitialflag(\blayer) \defeq \layerinitialflag(\alayer)$; $\layerfinalflag(\blayer) \defeq \layerfinalflag(\alayer)$. 
\end{itemize}

We note that since $\alayer\in \cdlayeralphabet{\alphabet}{\width}$, $\permlayer{\alayer}{\isomorphism}{\isomorphism'}$ also belongs to $\cdlayeralphabet{\alphabet}{\width}$. 

Let $\alphabet$ be an alphabet, $\width \in \pN$. 
A \emph{normalizing isomorphism} for a reachable layer $\layer\in \cdlayeralphabet{\alphabet}{\width}$ 
is a pair $\tuple{\isomorphism,\isomorphism'}$ of bijections $\isomorphism\colon\layerleftfrontier(\alayer)\rightarrow\dbset{\abs{\layerleftfrontier(\alayer)}}$
and $\isomorphism'\colon \layerrightfrontier(\alayer)\rightarrow\dbset{\abs{\layerrightfrontier(\alayer)}}$ such that the layer $\permlayer{\alayer}{\isomorphism}{\isomorphism'}$ is normalized. 
Let $\oddlength\in\pN$ and $\odd = \layer_{1}\cdots\layer_{\oddlength}$ be a reachable ODD in $\cdodddefiningset{\alphabet}{\width}{\oddlength}$.
A \emph{normalizing isomorphism} for $\odd$ is a sequence $\isomorphismsequence = \sequence{\isomorphism_{0},\isomorphism_{1},\ldots,\isomorphism_{\oddlength}}$
such that for each $i\in\bset{\oddlength}$, $\tuple{\isomorphism_{i-1},\isomorphism_{i}}$ is a normalizing isomorphism for $\layer_{i}$.

\begin{proposition}\label{proposition:UniqueIsomorphism}
Let $\alphabet$ be an alphabet and $\width$. 
Every reachable ODD in $\cdodddefiningsetstar{\alphabet}{\width}$ admits a unique normalizing isomorphism. 
\end{proposition}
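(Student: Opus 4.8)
The plan is to prove the proposition by a single left-to-right induction along the layers of $\odd$, after first establishing a purely local statement: for a reachable, deterministic, complete layer, any fixed labeling of the left frontier forces a unique labeling of the right frontier.

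First I would isolate the following single-layer claim. Let $\layer\in\cdlayeralphabet{\alphabet}{\width}$ be reachable, and fix an arbitrary bijection $\isomorphism\colon\layerleftfrontier(\layer)\rightarrow\dbset{\abs{\layerleftfrontier(\layer)}}$. Then there is exactly one bijection $\isomorphism'\colon\layerrightfrontier(\layer)\rightarrow\dbset{\abs{\layerrightfrontier(\layer)}}$ such that $\permlayer{\layer}{\isomorphism}{\isomorphism'}$ is normalized. Contiguity of $\permlayer{\layer}{\isomorphism}{\isomorphism'}$ is automatic, since the image of any bijection onto $\dbset{\abs{\layerleftfrontier(\layer)}}$ (resp.\ $\dbset{\abs{\layerrightfrontier(\layer)}}$) is exactly that initial segment; hence only well-orderedness constrains $\isomorphism'$. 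Writing $\blayer=\permlayer{\layer}{\isomorphism}{\isomorphism'}$, the transitions of $\blayer$ are the images under $(\isomorphism,\isomorphism')$ of the transitions of $\layer$, so the incoming transitions of $\isomorphism'(\layerrightstate)$ in $\blayer$ are exactly the $\isomorphism$-relabeled incoming transitions of $\layerrightstate$. Consequently $\porderfrontier{\blayer}{\isomorphism'(\layerrightstate)}$ equals the \emph{key} $\kappa(\layerrightstate)\defeq\min\set{\tuple{\isomorphism(\layerleftstate),\asymbol}\setst\tuple{\layerleftstate,\asymbol,\layerrightstate}\in\layertransitions(\layer)}$ (lexicographic minimum), which depends only on $\isomorphism$.

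The hard part, and the crux of the argument, is verifying two properties of $\kappa$. It is \emph{well-defined} because $\layer$ is reachable, so every right state has at least one incoming transition; and it is \emph{injective} because $\layer$ is deterministic: if $\kappa(\layerrightstate)=\kappa(\layerrightstate')=\tuple{\isomorphism(\layerleftstate),\asymbol}$, then $\tuple{\layerleftstate,\asymbol,\layerrightstate}$ and $\tuple{\layerleftstate,\asymbol,\layerrightstate'}$ are both transitions of $\layer$, whence $\layerrightstate=\layerrightstate'$. By the definition of well-orderedness, $\blayer$ is well-ordered iff $\isomorphism'(\layerrightstate)<\isomorphism'(\layerrightstate')$ exactly when $\kappa(\layerrightstate)<\kappa(\layerrightstate')$; since the keys are pairwise distinct, there is a unique bijection onto $\dbset{\abs{\layerrightfrontier(\layer)}}$ realizing this order, namely the one ranking the right states by their keys. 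This proves the single-layer claim.

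Finally I would run the induction on $i=0,1,\dots,\oddlength$. In the base case, since $\odd$ is an ODD we have $\layerinitialflag(\layer_1)=\true$, and determinism forces $\layerinitialstates(\layer_1)=\layerleftfrontier(\layer_1)$ with $\abs{\layerleftfrontier(\layer_1)}=1$; thus $\isomorphism_0$, being the unique bijection from a singleton onto $\dbset{1}$, is determined. For the inductive step, assuming $\isomorphism_{i-1}$ is already determined, I apply the single-layer claim to $\layer_i$ (reachable, since $\odd$ is) with fixed left labeling $\isomorphism_{i-1}$ to obtain a unique $\isomorphism_i$ making $\permlayer{\layer_i}{\isomorphism_{i-1}}{\isomorphism_i}$ normalized; because $\layerrightfrontier(\layer_i)=\layerleftfrontier(\layer_{i+1})$, this same $\isomorphism_i$ is the left labeling consumed by the next layer, so the recursion is consistent. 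This forces every $\isomorphism_i$ and hence simultaneously yields existence and uniqueness of $\isomorphismsequence=\sequence{\isomorphism_0,\dots,\isomorphism_{\oddlength}}$.
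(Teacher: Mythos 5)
Your proposal is correct and takes essentially the same route as the paper: the identical single-layer claim (a fixed left labeling forces a unique right labeling, with reachability guaranteeing the lexicographic-minimum keys exist and determinism guaranteeing they are pairwise distinct), followed by the same left-to-right induction anchored at the singleton left frontier of the first layer. Your write-up merely makes explicit the key-function argument that the paper compresses into ``one can readily verify.''
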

\begin{proof}
First, we claim that for each reachable, layer $\alayer\in\cdlayeralphabet{\alphabet}{\width}$ and each bijection $\isomorphism\colon\layerleftfrontier(\alayer)\rightarrow\dbset{\abs{\layerleftfrontier(\alayer)}}$, there exists a unique bijection $\isomorphism'\colon\layerrightfrontier(\alayer)\rightarrow\dbset{\abs{\layerrightfrontier(\alayer)}}$ such that $\permlayer{\alayer}{\isomorphism}{\isomorphism
'}$ is normalized. 
Indeed, consider $\blayer=\permlayer{\alayer}{\permutation}{\permutation''}$, where $\permutation''\colon\layerrightfrontier(\alayer)\rightarrow\dbset{\abs{\layerrightfrontier(\alayer)}}$ denotes the identity function. 
Then, let $\permutation'\colon\layerrightfrontier(\alayer)\rightarrow\dbset{\abs{\layerrightfrontier(\alayer)}}$ be a bijection such that for each two right states  $\layerrightstate,\layerrightstate'\in\layerrightfrontier(\alayer)$, we have that $\permutation'(\layerrightstate)\leq\permutation'(\layerrightstate')$ if and only if $\porderfrontier{\blayer}{\layerrightstate} \leq \porderfrontier{\blayer}{\layerrightstate'}$. 
One can readily verify that $\permlayer{\alayer}{\isomorphism}{\isomorphism
'}$ is normalized. 
Furthermore, since $\blayer$ is deterministic, $\porderfrontierName{\blayer}$ is an injection from $\layerrightfrontier(\blayer)$ to $\layerleftfrontier(\blayer)\cartesianproduct\alphabet$, \ie, for each two distinct right states $\layerrightstate,\layerrightstate'\in\layerrightfrontier(\blayer)$, we have that either $\porderfrontier{\blayer}{\layerrightstate}<\porderfrontier{\blayer}{\layerrightstate'}$ or $\porderfrontier{\blayer}{\layerrightstate'}<\porderfrontier{\blayer}{\layerrightstate}$. 
In other words, $\porderfrontierName{\blayer}$ describes a total order on $\layerrightfrontier(\blayer)$. 
Therefore, $\isomorphism'$ is the unique bijection from $\layerrightfrontier(\alayer)$ to $\dbset{\abs{\layerrightfrontier(\alayer)}}$ such that $\permlayer{\alayer}{\isomorphism}{\isomorphism
'}$ is normalized. 

Let $\oddlength\in\pN$ and $\aodd=\alayer_{1}\cdots\alayer_{\oddlength}\in\cdodddefiningset{\alphabet}{\width}{\oddlength}$ be an ODD such that 
$\alayer_{i}$ is a reachable layer for each $i\in\bset{\oddlength}$, $\layerleftfrontier(\alayer_{i+1})=\layerrightfrontier(\alayer_{i})$ for each 
$i\in\bset{\oddlength-1}$, $\layerinitialflag(\alayer_{1})=1$ and $\layerinitialflag(\alayer_{i})=0$ for each $i\in\set{2,\ldots,\oddlength}$. 
Based on the previous claim, we prove by induction on $\oddlength$ that the following statement holds: there exists a unique sequence $\sequence{\isomorphism_{0},\isomorphism_{1},\ldots,\isomorphism_{\oddlength}}$ such that (1) $\isomorphism_{0}\colon\layerleftfrontier(\alayer_{1})\rightarrow\dbset{\abs{\layerleftfrontier(\alayer_{1})}}$ is a bijection, (2) $\isomorphism_{i}\colon\layerrightfrontier(\alayer_{i})\rightarrow\dbset{\abs{\layerrightfrontier(\alayer_{i})}}$ is a bijection for each $i\in \bset{\oddlength}$, and (3) $\permlayer{\alayer_{i}}{\isomorphism_{i-1}}{\isomorphism_{i}}$ is a normalized layer for each $i\in \bset{\oddlength}$.

\emph{Base case.} Consider $\oddlength=1$. 
Since $\alayer_{1}$ is deterministic, $\abs{\layerleftfrontier(\alayer_{1})} = 1$. 
Thus, the bijection $\isomorphism_{0}\colon\layerleftfrontier(\alayer_{1})\rightarrow\dbset{\abs{\layerleftfrontier(\alayer_{1})}}$ is trivially uniquely determined. 
As a result, there exists a unique sequence $\sequence{\isomorphism_{0},\isomorphism_{1}}$ satisfying the required conditions (1)--(3). 

\emph{Inductive step.} Consider $\oddlength>1$. 
 Let $\bodd=\alayer_{1}\cdots\alayer_{\oddlength-1}$ be the string obtained from $\aodd=\alayer_{1}\cdots\alayer_{\oddlength}$ by removing the layer $\alayer_{\oddlength}$. 
 It follows from the inductive hypothesis that there exists a unique sequence $\sequence{\isomorphism_{0},\isomorphism_{1},\ldots,\isomorphism_{\oddlength-1}}$ such that $\isomorphism_{0}\colon\layerleftfrontier(\alayer_{1})\rightarrow\dbset{\abs{\layerleftfrontier(\alayer_{1})}}$ is a bijection, $\isomorphism_{i}\colon\layerrightfrontier(\alayer_{i})\rightarrow\dbset{\abs{\layerrightfrontier(\alayer_{i})}}$ is a bijection for each $i\in \bset{\oddlength-1}$, and $\permlayer{\alayer_{i}}{\isomorphism_{i-1}}{\isomorphism_{i}}$ is a normalized layer for each $i\in \bset{\oddlength-1}$. 
 In particular, we note that the bijection $\isomorphism_{\oddlength-1}$ is uniquely determined. 
 Furthermore, based on the previous claim, there exists a unique bijection $\isomorphism_{\oddlength}\colon\layerrightfrontier(\alayer_{\oddlength})\rightarrow\dbset{\abs{\layerrightfrontier(\alayer_{\oddlength})}}$ such that $\permlayer{\alayer_{\oddlength}}{\isomorphism_{\oddlength-1}}{\isomorphism_{\oddlength}}$ is normalized. 
 Therefore, there exists a unique sequence $\sequence{\isomorphism_{0},\isomorphism_{1},\ldots,\isomorphism_{\oddlength}}$ satisfying the required conditions (1)--(3). 
\end{proof}

\begin{proposition}\label{prop:UniqueNormalizedODD}
	Let $\alphabet$ be an alphabet, $\width,\oddlength\in\pN$ and $\aodd\in\odddefiningset{\alphabet}{\width}{\oddlength}$.
	If $\aodd$ is a reachable, deterministic ODD and $\isomorphismsequence = \sequence{\isomorphism_{0},\isomorphism_{1},\ldots,\isomorphism_{\oddlength}}$ is the unique normalizing isomorphism for $\aodd$, then $\bodd = \permlayer{\alayer_{1}}{\isomorphism_{0}}{\isomorphism_{1}}\cdots\permlayer{\alayer_{\oddlength}}{\isomorphism_{\oddlength-1}}{\isomorphism_{\oddlength}}$ is a normalized ODD such that $\oddlang{\bodd} = \oddlang{\aodd}$. 
\end{proposition}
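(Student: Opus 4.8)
The plan is to establish the two assertions—that $\bodd$ is a normalized ODD and that $\oddlang{\bodd}=\oddlang{\aodd}$—by showing that the sequence $\isomorphismsequence$ is \emph{itself} an isomorphism from $\aodd$ to $\bodd$, while normalization of $\bodd$ is essentially immediate from the definition of a normalizing isomorphism. Writing $\blayer_i = \permlayer{\alayer_i}{\isomorphism_{i-1}}{\isomorphism_i}$, the first thing I would do is check that $\bodd = \blayer_1\cdots\blayer_{\oddlength}$ is a genuine $(\alphabet,\width)$-ODD of length $\oddlength$. The relabeling operation keeps each frontier inside $\dbset{\width}$, so each $\blayer_i$ is an $(\alphabet,\width)$-layer; moreover it preserves flags, i.e. $\layerinitialflag(\blayer_i)=\layerinitialflag(\alayer_i)$ and $\layerfinalflag(\blayer_i)=\layerfinalflag(\alayer_i)$, so Conditions~\ref{condition:TwoODD} and~\ref{condition:ThreeODD} transfer from $\aodd$ to $\bodd$. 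For Condition~\ref{condition:OneODD} I would compute, for each $i\in\bset{\oddlength-1}$,
\[
\layerleftfrontier(\blayer_{i+1}) = \isomorphism_i(\layerleftfrontier(\alayer_{i+1})) = \isomorphism_i(\layerrightfrontier(\alayer_i)) = \layerrightfrontier(\blayer_i),
\]
where the middle equality uses that $\aodd$ is an ODD and the outer two are the definition of $\permlayer{\cdot}{\cdot}{\cdot}$. The crucial point is that one and the same bijection $\isomorphism_i$ relabels both $\layerrightfrontier(\alayer_i)$ and $\layerleftfrontier(\alayer_{i+1})$, which coincide; this is exactly why gluing the relabeled layers yields a legitimate ODD.

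For normalization I would invoke the hypothesis directly. Since $\isomorphismsequence$ is a normalizing isomorphism for $\aodd$, each pair $\tuple{\isomorphism_{i-1},\isomorphism_i}$ is, by definition, a normalizing isomorphism for $\alayer_i$, which means precisely that the layer $\blayer_i=\permlayer{\alayer_i}{\isomorphism_{i-1}}{\isomorphism_i}$ is normalized. Because an ODD is normalized if and only if each of its layers is, it follows that $\bodd$ is normalized. Here the reachability and determinism of $\aodd$ enter only insofar as they are what make each such normalizing isomorphism of a layer well-defined; they are not needed again in this step.

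For the language equality I would verify the five conditions in the definition of an isomorphism from $\aodd$ to $\bodd$, each of which reads off directly from the definition of $\permlayer{\cdot}{\cdot}{\cdot}$: each $\isomorphism_i$ is a bijection, hence restricts to a bijection between $\layerrightfrontier(\alayer_i)$ and $\layerrightfrontier(\blayer_i)=\isomorphism_i(\layerrightfrontier(\alayer_i))$ (and $\isomorphism_0$ between the left frontiers of $\alayer_1$ and $\blayer_1$); the restrictions to initial and final states are bijections because $\layerinitialstates(\blayer_1)=\isomorphism_0(\layerinitialstates(\alayer_1))$ and $\layerfinalstates(\blayer_{\oddlength})=\isomorphism_{\oddlength}(\layerfinalstates(\alayer_{\oddlength}))$; and transitions are preserved since $\layertransitions(\blayer_i)$ is by construction exactly the set of triples $\tuple{\isomorphism_{i-1}(\layerleftstate),\asymbol,\isomorphism_i(\layerrightstate)}$ for $\tuple{\layerleftstate,\asymbol,\layerrightstate}\in\layertransitions(\alayer_i)$. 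Thus $\aodd$ and $\bodd$ are isomorphic, and Proposition~\ref{proposition:IsomorphicEquivalenceLanguages} immediately yields $\oddlang{\bodd}=\oddlang{\aodd}$. The whole argument is bookkeeping against the definitions, and I expect the only step requiring genuine care to be the frontier-matching identity above, since it is the one place where the compatibility of adjacent relabelings—rather than a layerwise property—must be used.
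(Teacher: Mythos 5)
Your proposal is correct and follows exactly the same route as the paper's own proof: normalization of $\bodd$ is read off from the definition of a normalizing isomorphism, and the language equality is obtained by observing that $\isomorphismsequence$ is an isomorphism from $\aodd$ to $\bodd$ and invoking Proposition~\ref{proposition:IsomorphicEquivalenceLanguages}. The only difference is that you spell out the bookkeeping (the frontier-matching identity and the five isomorphism conditions) that the paper leaves implicit, which is a harmless elaboration.
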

\begin{proof}
	It immediately follows from the definition of normalizing isomorphism that $\bodd$ is normalized. 
	Finally, we note that $\isomorphismsequence$ is an isomorphism from $\aodd$ to $\bodd$. 
	Therefore, by Proposition~\ref{proposition:IsomorphicEquivalenceLanguages}, $\oddlang{\bodd} = \oddlang{\aodd}$. 
\end{proof}

For each finite set $\xset$, we denote by $\symmetricgroup{\xset} \defeq \set{\isomorphism\colon\xset\rightarrow\dbset{\abs{\xset}} \setst \isomorphism \text{ is a bijection}}$ the set
of all bijections from $\xset$ to $\dbset{\abs{\xset}}$. 
For each alphabet $\alphabet$ and each $\width \in \pN$ we define the following set. 
$$
\layerssymmericgroup{\alphabet}{\width} = \set{\tuple{\isomorphism,\alayer,\isomorphism'} \setst \alayer\in \cdlayeralphabet{\alphabet}{\width}, 
\isomorphism\in\symmetricgroup{\layerleftfrontier(\alayer)},\; 
\isomorphism'\in \symmetricgroup{\layerrightfrontier(\alayer)}, \permlayer{\alayer}{\isomorphism}{\isomorphism'} \text{ is normalized}}. 
$$
We let $\relabelingmap{\alphabet}{\width}:\layerssymmericgroup{\alphabet}{\width} \rightarrow \cdlayeralphabet{\alphabet}{\width}$ be the map that 
sends each triple $\tuple{\isomorphism,\alayer,\isomorphism'} \in \layerssymmericgroup{\alpahbet}{\width}$ to the layer $\permlayer{\alayer}{\isomorphism}{\isomorphism'}$. 
Moreover, we let $\normalizingrelation{\alphabet}{\width}\subseteq \cdlayeralphabet{\alphabet}{\width}\times \layerssymmericgroup{\alphabet}{\width}$
and $\normalizingcompatibility{\alphabet}{\width}\subseteq \layerssymmericgroup{\alphabet}{\width}\times  \layerssymmericgroup{\alphabet}{\width}$ 
be the following relations. 
$$\normalizingrelation{\alphabet}{\width} \defeq  \{\tuple{\alayer,\tuple{\isomorphism,\alayer,\isomorphism'}} \setst
\tuple{\isomorphism,\alayer,\isomorphism'}\in \layerssymmericgroup{\alphabet}{\width}\}.
$$ 
$$
\normalizingcompatibility{\alphabet}{\width} \defeq \{ \tuple{\tuple{\isomorphism,\alayer,\isomorphism'},\tuple{\isomorphism',\blayer,\isomorphism''}}\in 
\layerssymmericgroup{\alphabet}{\width}\times  \layerssymmericgroup{\alphabet}{\width} \setst
\layerrightfrontier(\alayer) = \layerleftfrontier(\blayer),  
\}\text{.}
$$

Finally, for each alphabet $\alphabet$, and each positive integer $\width \in \pN$, 
we let $\normalizationtransduction{\alphabet}{\width}$ be the 
$(\cdlayeralphabet{\alphabet}{\width},\cdlayeralphabet{\alphabet}{\width})$-transduction
$$\normalizationtransduction{\alphabet}{\width} \defeq
\multimaptransduction{\normalizingrelation{\alphabet}{\width}} \circ \compTransduction{\normalizingcompatibility{\alphabet}{\width}}\circ \multimaptransduction{\relabelingmap{\alphabet}{\width}}.$$

The next lemma states that $\normalizationtransductionname$ is a 
transduction that sends each reachable, deterministic, complete ODD $\aodd\in\cdodddefiningsetstar{\alphabet}{\width}$ to as \emph{normalized}, deterministic, complete ODD $\bodd\in\cdodddefiningsetstar{\alphabet}{\width}$ that has the same language as $\aodd$.

\begin{lemma}[Normalization Transduction] \label{lemma:NormalizationTransduction}
For each alphabet $\alphabet$ and each positive integer $\width\in\pN$, the following statements hold. 
\begin{enumerate}
\item \label{NormalizationOne} $\normalizationtransduction{\alphabet}{\width}$ is functional.
\item \label{NormalizationTwo} $\domain(\normalizationtransduction{\alphabet}{\width}) \supseteq \set{\odd \in \cdodddefiningsetstar{\alphabet}{\width} \setst \odd \text{ is reachable}}$. 
\item \label{NormalizationThree} For each pair $(\aodd,\bodd)\in \normalizationtransduction{\alphabet}{\width}$, if $\aodd$ is reachable 
then $\bodd\in\cdodddefiningsetstar{\alphabet}{\width}$, $\oddlang{\bodd} = \oddlang{\aodd}$ and $\bodd$ is normalized.  
\end{enumerate}
\end{lemma}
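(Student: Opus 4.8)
The plan is to follow the template already used for the reachability and merging transductions (Lemmas~\ref{lemma:ReachabilityTransduction} and~\ref{lemma:MergingTransduction}): first unfold the three-fold composition defining $\normalizationtransduction{\alphabet}{\width}$ into an explicit description of its pairs, and then read the three desired properties off the two structural results already established, namely Proposition~\ref{proposition:UniqueIsomorphism} (existence and uniqueness of the normalizing isomorphism) and Proposition~\ref{prop:UniqueNormalizedODD} (the relabeled ODD is normalized and language-preserving). Concretely, I would first verify that a pair $(\aodd,\bodd)$ with $\aodd=\alayer_1\cdots\alayer_\oddlength$ and $\bodd=\blayer_1\cdots\blayer_\oddlength$ lies in $\normalizationtransduction{\alphabet}{\width}$ if and only if there is a sequence $\isomorphismsequence=\sequence{\isomorphism_0,\isomorphism_1,\ldots,\isomorphism_\oddlength}$ of bijections with $\tuple{\isomorphism_{i-1},\alayer_i,\isomorphism_i}\in\layerssymmericgroup{\alphabet}{\width}$ and $\blayer_i=\permlayer{\alayer_i}{\isomorphism_{i-1}}{\isomorphism_i}$ for each $i\in\bset{\oddlength}$, together with $\layerrightfrontier(\alayer_j)=\layerleftfrontier(\alayer_{j+1})$ for each $j\in\bset{\oddlength-1}$. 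This is immediate once one observes that $\multimaptransduction{\normalizingrelation{\alphabet}{\width}}$ annotates each layer $\alayer_i$ with an arbitrary pair of bijections making $\permlayer{\alayer_i}{\isomorphism_{i-1}}{\isomorphism_i}$ normalized, that $\compTransduction{\normalizingcompatibility{\alphabet}{\width}}$ retains exactly those annotated strings in which the right bijection of layer $i$ coincides with the left bijection of layer $i+1$, and that $\multimaptransduction{\relabelingmap{\alphabet}{\width}}$ outputs $\permlayer{\alayer_i}{\isomorphism_{i-1}}{\isomorphism_i}$ in each coordinate. In short, $(\aodd,\bodd)$ belongs to the transduction precisely when $\isomorphismsequence$ is a normalizing isomorphism for $\aodd$ and $\bodd$ is its associated relabeled ODD.

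With this characterization in hand, items~\ref{NormalizationTwo} and~\ref{NormalizationOne} follow from Proposition~\ref{proposition:UniqueIsomorphism}. For the domain inclusion, every reachable $\odd\in\cdodddefiningsetstar{\alphabet}{\width}$ admits a normalizing isomorphism, so the composition produces at least one output and $\odd\in\domain(\normalizationtransduction{\alphabet}{\width})$. For functionality, the same proposition supplies uniqueness: compatibility forces the right bijection of each layer to be the left bijection of the next, and the layer-level claim inside the proof of Proposition~\ref{proposition:UniqueIsomorphism} shows that once the left bijection $\isomorphism_{i-1}$ of a reachable layer is fixed, the right bijection $\isomorphism_i$ is uniquely determined; since the first layer of a valid ODD has a singleton left frontier (by determinism together with $\layerinitialflag=\true$), $\isomorphism_0$ is forced and the whole sequence $\isomorphismsequence$, hence $\bodd$, is uniquely determined by $\aodd$. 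I would phrase this exactly as in the merging lemma: any $(\aodd,\bodd)$ in the transduction has $\bodd$ equal to the relabeling of $\aodd$ by its unique normalizing isomorphism, so the transduction is single-valued.

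For item~\ref{NormalizationThree}, take a pair $(\aodd,\bodd)$ with $\aodd$ reachable. By the characterization, $\bodd=\permlayer{\alayer_1}{\isomorphism_0}{\isomorphism_1}\cdots\permlayer{\alayer_\oddlength}{\isomorphism_{\oddlength-1}}{\isomorphism_\oddlength}$ for the unique normalizing isomorphism $\isomorphismsequence$ of $\aodd$, so Proposition~\ref{prop:UniqueNormalizedODD} gives directly that $\bodd$ is normalized and $\oddlang{\bodd}=\oddlang{\aodd}$ (the latter because $\isomorphismsequence$ is an ODD isomorphism, via Proposition~\ref{proposition:IsomorphicEquivalenceLanguages}). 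Finally, $\bodd\in\cdodddefiningsetstar{\alphabet}{\width}$ because each relabeled layer $\permlayer{\alayer_i}{\isomorphism_{i-1}}{\isomorphism_i}$ arises from a layer in $\cdlayeralphabet{\alphabet}{\width}$ by applying bijections to its frontiers, an operation already observed to keep the layer inside $\cdlayeralphabet{\alphabet}{\width}$ (determinism and completeness being invariant under relabeling).

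I expect the only delicate point to be the functionality argument, where one must check that the shared-bijection pattern built into $\normalizingcompatibility{\alphabet}{\width}$ genuinely transports the per-ODD uniqueness of Proposition~\ref{proposition:UniqueIsomorphism} to the per-input single-valuedness of the transduction; everything else is a mechanical assembly of already-proved facts. Should a regularity bound be needed for the downstream composition in~\eqref{equation:TransductionSequence}, it would be obtained exactly as in Lemmas~\ref{lemma:ReachabilityTransduction} and~\ref{lemma:MergingTransduction}: the two multimap transductions $\multimaptransduction{\normalizingrelation{\alphabet}{\width}}$ and $\multimaptransduction{\relabelingmap{\alphabet}{\width}}$ are $2$-regular and $\compTransduction{\normalizingcompatibility{\alphabet}{\width}}$ is $(\abs{\layerssymmericgroup{\alphabet}{\width}}+2)$-regular by Proposition~\ref{proposition:SizeTransductions}, so, since $\abs{\layerssymmericgroup{\alphabet}{\width}}=2^{O(|\alphabet|\cdot\width\cdot\log\width)}$, Proposition~\ref{proposition:PropertiesTransductions} yields that $\normalizationtransduction{\alphabet}{\width}$ is $2^{O(|\alphabet|\cdot\width\cdot\log\width)}$-regular.
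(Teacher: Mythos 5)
Your proposal is correct and follows essentially the same route as the paper's proof: unfold the three-fold composition into the existence of a compatible sequence of bijections, then invoke Proposition~\ref{proposition:UniqueIsomorphism} for the domain and functionality claims, Proposition~\ref{prop:UniqueNormalizedODD} (via Proposition~\ref{proposition:IsomorphicEquivalenceLanguages}) for item~\ref{NormalizationThree}, and Propositions~\ref{proposition:SizeTransductions} and~\ref{proposition:PropertiesTransductions} for the $2^{O(|\alphabet|\cdot\width\cdot\log\width)}$-regularity bound. The delicate point you flag about functionality is handled in the paper exactly as you suggest: single-valuedness is really established on ODD inputs, where the singleton left frontier of the first layer forces $\isomorphism_0$ (indeed the paper states it for $\duplicate{\cdodddefiningsetstar{\alphabet}{\width}}\circ\normalizationtransduction{\alphabet}{\width}$), so your treatment matches the paper's.
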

\begin{proof}
We note that $\normalizationtransduction{\alphabet}{\width}$ consists of all pairs $\tuple{\aodd,\bodd}$ of non-empty strings over the alphabet $\cdlayeralphabet{\alphabet}{\width}$ satisfying the conditions 
that $\abs{\aodd}=\abs{\bodd}$ and that, if $\aodd=\alayer_{1}\cdots\alayer_{\oddlength}$ and 
$\bodd=\blayer_{1}\cdots\blayer_{\oddlength}$ for some $\oddlength\in\pN$, then there exists a sequence 
of permutations $\isomorphismsequence = \sequence{\isomorphism_{0},\isomorphism_{1},\ldots,\isomorphism_{\oddlength}}$ such that 
for each $i\in [\oddlength]$, $(\isomorphism_{i-1},\alayer_i,\isomorphism_{i})\in \layerssymmericgroup{\alphabet}{\width}$ 
and $\alayer_{i}' = \permlayer{\alayer_{i}}{\isomorphism_{i-1}}{\isomorphism_i}$.  
Additionally, based on Proposition~\ref{proposition:UniqueIsomorphism}, each reachable, deterministic $(\alphabet,\width)$-ODD admits a unique normalizing isomorphism. 
As a result, we obtain that $\domain(\normalizationtransduction{\alphabet}{\width}) \supseteq \set{\odd \in \cdodddefiningsetstar{\alphabet}{\width} \setst \odd \text{ is reachable}}$. 
Moreover, if $\tuple{\aodd,\bodd}\in \duplicate{\cdodddefiningsetstar{\alphabet}{\width}}\circ\normalizationtransduction{\alphabet}{\width}$, then $\bodd = \permlayer{\alayer_{1}}{\isomorphism_{0}}{\isomorphism_{1}}\cdots\permlayer{\alayer_{\oddlength}}{\isomorphism_{\oddlength-1}}{\isomorphism_{\oddlength}}$, where $\isomorphismsequence = \sequence{\isomorphism_{0},\isomorphism_{1},\ldots,\isomorphism_{\oddlength}}$ denotes the unique normalizing isomorphism of $\aodd$; thus, by the uniqueness of $\isomorphismsequence$, the transduction $\duplicate{\cdodddefiningsetstar{\alphabet}{\width}}\circ\normalizationtransduction{\alphabet}{\width}$ is functional. 
Finally, it follows from Proposition~\ref{prop:UniqueNormalizedODD} that for each pair $\tuple{\aodd,\bodd}\in \normalizationtransduction{\alphabet}{\width}$ such that $\aodd$ is a reachable ODD in $\cdodddefiningsetstar{\alphabet}{\width}$, we have that $\bodd$ is a normalized ODD in $\cdodddefiningsetstar{\alphabet}{\width}$ that has the same language as $\aodd$. 

The fact that $\normalizationtransduction{\alphabet}{\width}$ is $2^{O(|\alphabet|\cdot \width\cdot \log \width)}$-regular
follows from Proposition \ref{proposition:PropertiesTransductions}.(2) together with the fact that the multimap transductions 
$\multimaptransduction{\normalizingrelation{\alphabet}{\width}}$ and $\multimaptransduction{\relabelingmap{\alphabet}{\width}}$ are
$2$-regular (Proposition \ref{proposition:SizeTransductions}.(1)), and that the transduction 
$\compTransduction{\normalizingcompatibility{\alphabet}{\width}}$ is $2^{O(|\alphabet|\cdot \width\cdot \log\width)}$-regular (Proposition \ref{proposition:SizeTransductions}.(2)),
given that $\normalizingcompatibility{\alphabet}{\width} \subseteq \layerssymmericgroup{\alphabet}{\width}\times \layerssymmericgroup{\alphabet}{\width}$, and that 
$|\layerssymmericgroup{\alphabet}{\width}|=2^{O(|\alphabet|\cdot \width\cdot \log \width)}$. 
\end{proof}

\subsection{Putting All Steps Together}\label{sec:proofMainTheorem}
\label{subsection:FinalProof}

In this subsection we combine Observation \ref{observation:CopyRegular} with Lemma \ref{lemma:DeterminizationTransduction}, Lemma \ref{lemma:ReachabilityTransduction}, 
Lemma \ref{lemma:MergingTransduction} and Lemma \ref{lemma:NormalizationTransduction} to prove our Canonization as Transduction Theorem (Theorem \ref{theorem:CanonizationTransductionTheorem}).
Consider the transduction 
$$\cdcanonizationtransduction{\alphabet}{\width}
\defeq \duplicate{\cdodddefiningsetstar{\alphabet}{\width}} \circ \reachabilitytransduction{\alphabet}{\width}\circ\mergingtransduction{\alphabet}{\width}\circ\normalizationtransduction{\alphabet}{\width}.$$  
Since each of the four transductions in the composition is at most $2^{O(|\alphabet|\cdot \width \cdot \log \width)}$-regular, we have that 
$\cdcanonizationtransduction{\alphabet}{\width}$ is $2^{O(|\alphabet|\cdot \width\cdot \log \width)}$-regular. Since each these four
transductions is functional, the transduction $\cdcanonizationtransduction{\alphabet}{\width}$
is functional. Since $\domain(\duplicate{\cdodddefiningsetstar{\alphabet}{\width}}) = \cdodddefiningsetstar{\alphabet}{\width}$ and the image of each of the three first 
transductions is contained in the domain of the next transduction (from left to right), we have that $\domain(\cdcanonizationtransduction{\alphabet}{\width}) = \cdodddefiningsetstar{\alphabet}{\width}$.  
Now, let $(\aodd,\aodd')$ be a pair of ODDs in $\cdcanonizationtransduction{\alphabet}{\width}$. Then there exist ODDs $\aodd_1$ and $\aodd_2$ such that
$(\aodd,\aodd_1)\in \reachabilitytransduction{\alphabet}{\width}$, $(\aodd_1,\aodd_2)\in \mergingtransduction{\alphabet}{\width}$,
and $(\aodd_2,\aodd')\in \normalizationtransduction{\alphabet}{\width}$. Since each of these transductions is language preserving, we have that 
$\oddlang{\aodd}=\oddlang{\aodd_1}=\oddlang{\aodd_2}=\oddlang{\aodd'}$. 
Since $\aodd\in \cdodddefiningsetstar{\alphabet}{\width}$, we have that $\aodd$ is by definition deterministic and complete. 
By Lemma \ref{lemma:ReachabilityTransduction}, $\aodd_1$ is deterministic, complete and reachable. 
By Lemma \ref{lemma:MergingTransduction}, $\aodd_2$ is deterministic, complete and minimized. Finally, by Lemma \ref{lemma:NormalizationTransduction}, 
$\aodd'$ is deterministic, complete, minimized and normalized. Since for each ODD $\aodd$, there is a unique deterministic, complete, minimized and normalized
ODD $\canonizationfunction(\aodd)$ with the same language as $\aodd$, we have that $\aodd'=\canonizationfunction(\aodd)$. This shows that 
$\cdcanonizationtransduction{\alphabet}{\width} = \{(\aodd,\canonizationfunction(\aodd))\;:\; \aodd\in \odddefiningsetstar{\alphabet}{\width}\}$.

Now, consider the transduction $$\canonizationtransduction{\alphabet}{\width} \defeq 
\duplicate{\odddefiningsetstar{\alphabet}{\width}} \circ \determinizationtransduction{\alphabet}{\width}\circ \cdcanonizationtransduction{\alphabet}{2^{\width}}.$$
Since $\cdcanonizationtransduction{\alphabet}{\width}$ is $2^{O(|\alphabet|\cdot \width \cdot \log \width)}$-regular, we have that
$\cdcanonizationtransduction{\alphabet}{2^{\width}}$ is $2^{O(|\alphabet|\cdot \width \cdot 2^{\width})}$-regular. 
This implies that $\canonizationtransduction{\alphabet}{\width}$ is also $2^{O(|\alphabet|\cdot \width \cdot 2^{\width})}$-regular.  
Since $\odddefiningsetstar{\alphabet}{\width}= \domain(\determinizationtransduction{\alphabet}{\width})$ and
$\image(\determinizationtransduction{\alphabet}{\width})$ is included in $\domain(\cdcanonizationtransduction{\alphabet}{2^{\width}})$,
we have that $\domain(\canonizationtransduction{\alphabet}{\width}) = \odddefiningsetstar{\alphabet}{\width}$. Now, let $(\aodd,\aodd')$ be 
a pair of ODDs in $\canonizationtransduction{\alphabet}{\width}$. Then there is an ODD $\aodd_1\in \cdodddefiningsetstar{\alphabet}{2^{\width}}$
such that $(\aodd,\aodd_1)\in \determinizationtransduction{\alphabet}{\width}$ and $(\aodd_1,\aodd')\in \cdcanonizationtransduction{\alphabet}{2^\width}$. 
By Lemma \ref{lemma:DeterminizationTransduction}, we have that $\aodd_1$ is complete, deterministic and $\oddlang{\aodd}=\oddlang{\aodd_1}$. 
Additionally, $\aodd' = \canonizationfunction(\aodd_1)$. Since $\oddlang{\aodd} = \oddlang{\aodd_1} = \oddlang{\canonizationfunction(\aodd_1)} = \oddlang{\aodd'}$, we have that 
$\aodd' = \canonizationfunction(\aodd_1) = \canonizationfunction(\aodd)$.
This shows that $\canonizationtransduction{\alphabet}{\width} = \{(\aodd,\canonizationfunction(\aodd))\;:\; \aodd\in \odddefiningsetstar{\alphabet}{\width}\}$. $\square$

\section{Conclusion}
\label{section:Conclusion}
In this work, we have introduced the notion of second-order finite automata, 
a formalism that combines traditional finite automata with ODDs of bounded
width in order to represent possibly infinite {\em classes of languages}.
Our main result (Theorem \ref{theorem:CanonizationSecondOrder})
is a {\em canonical form of canonical forms theorem}. It states for each second-order
finite automaton $\finiteautomaton$, one can construct a canonical form $\canonizationfunction_2(\finiteautomaton)$
whose language $\automatonlang{\canonizationfunction_2(\finiteautomaton)} = \{\canonizationfunction(\aodd)\;:\;\aodd\in \automatonlang{\finiteautomaton}\}$ is precisely the set of canonical forms of ODDs in $\automatonlang{\finiteautomaton}$. 
Here, the canonical form $\canonizationfunction(\aodd)$ of an ODD $\aodd$ is the usual
deterministic, complete, normalized ODD with {\em minimum number of states} having the 
same language as $\aodd$. In this sense, the ODDs in $\automatonlang{\canonizationfunction_2(\finiteautomaton)}$
carry useful complexity theoretic information about the languages they represent in the class 
$\automatonlanghigher{\finiteautomaton}{2} = \automatonlanghigher{\canonizationfunction_2(\finiteautomaton)}{2}$. 

Our canonization result immediately implies that the collection of regular-decisional classes of 
languages is closed under union, intersection, set difference, and a suitable notion of bounded-width
complementation. This result also implies that inclusion and non-emptiness of intersection 
for regular-decisional classes of languages are decidable. Furthermore, non-emptiness of intersection
for the second languages of second-order finite automata $\finiteautomaton_1$ and $\finiteautomaton_2$
can be solved in fixed parameter tractable time when the parameter is the maximum width of 
an ODD accepted by $\finiteautomaton_1$ or $\finiteautomaton_2$. 

We also provided two algorithmic applications of second-order automata to the theory 
of ODDs. First, we have shown that several width/size minimization problems for ODDs can 
be solved in fixed-parameter tractable time when parameterized by the width of 
the input ODD. This implies corresponding FPT algorithms for width/size
minimization of ordered binary decision diagrams (OBDDs) with a fixed ordering. 
Previous to our work, only exponential algorithms were known. 
Finally, we have shown that second-order finite automata can be used to
count the exact number of distinct functions computable by $(\alphabet,\width)$-ODDs of a
given width $\width$ and a given length $\oddlength$ in time 
$2^{O(|\alphabet|\cdot \width \cdot 2^{\width})}\cdot \oddlength^{O(1)}$,
and in time $2^{O(|\alphabet|\cdot \width \cdot \log \width)}\cdot \oddlength^{O(1)}$ if 
only deterministic, complete ODDs are considered. It is worth noting that the 
naive process of enumerating functions while eliminating repetitions takes time 
(and space) exponential in both $\width$ and in $\oddlength$. 

\paragraph{Regular Canonizing Relations.}
Most results in this work are obtained as a consequence of Theorem \ref{theorem:CanonizationTransductionTheorem},
which states that the relation $\canonizationtransduction{\alphabet}{\width} = \{(\aodd,\canonizationfunction(\aodd))\;:\; 
\aodd\in \odddefiningsetstar{\alphabet}{\width}\}$ is a regular relation. It is worth noting that aside from 
complexity theoretic considerations, Theorem \ref{theorem:CanonizationSecondOrder} and Theorem \ref{theorem:ClosureProperties}
have identical proofs if we replace $\canonizationtransduction{\alphabet}{\width}$ with any {\em regular} canonizing relation 
$R(\alphabet,\width)$ for $\odddefiningsetstar{\alphabet}{\width}$ in the sense we will define below. Nevertheless, 
when taking complexity considerations into account, and also when considering our applications in Section \ref{section:Applications}, 
the fact that the transductions $\canonizationtransduction{\alphabet}{\width}$ and 
$\cdcanonizationtransduction{\alphabet}{\width}$ are $2^{O(|\alphabet|\cdot \width \cdot 2^{\width})}$-regular and 
$2^{O(|\alphabet|\cdot \width \cdot \log \width)}$-regular respectively play an important role. Additionally, some 
of our results use explicitly the fact the canonical form $\canonizationfunction(\aodd)$ has minimum number
of states among all deterministic, complete ODDs with the same language as $\aodd$. 

Say that a relation $R(\alphabet,\width) \subseteq \odddefiningsetstar{\alphabet}{\width}\times \odddefiningsetstar{\alphabet}{\width}$
is canonizing for $\odddefiningsetstar{\alphabet}{\width}$ if the following three conditions are verified. 
\begin{enumerate}
\item $R(\alphabet,\width)$ is functional and the domain of $R(\alphabet,\width)$ is equal to $\odddefiningsetstar{\alphabet}{\width}$.
\item For each $(\aodd,\aodd')\in R(\alphabet,\width)$, $\oddlang{\aodd} = \oddlang{\aodd'}$. 
\item $(\aodd,\aodd')\in R(\alphabet,\width)$ implies that $(\aodd'',\aodd')\in R(\alphabet,\width)$ for each $\aodd''\in \odddefiningsetstar{\alphabet}{\width}$ with $\oddlang{\aodd}=\oddlang{\aodd''}$. 
\end{enumerate}

The notion of a relation $\widehat{R}(\alphabet,\width)$ that is canonizing for $\cdodddefiningsetstar{\alphabet}{\width}$ can be defined analogously. An interesting question is whether there are canonizing relations with significantly better complexity 
than the ones of $\canonizationtransduction{\alphabet}{\width}$ and $\cdcanonizationtransduction{\alphabet}{\width}$. 
More specifically, is there some canonizing relation for $\odddefiningsetstar{\alphabet}{\width}$ that is
$\alpha$-regular for $\alpha = 2^{f(|\alphabet|)\cdot 2^{o(\width)}}$ where $f$ is a function depending only on the size of the alphabet? 
Similarly, is there some canonizing relation for $\cdodddefiningsetstar{\alphabet}{\width}$ that is $\alpha$-regular for some
$\alpha = 2^{o(f(|\alphabet|)\cdot \width \cdot \log \width)}$? In view of Observation \ref{observation:FPTIntersection},
a canonizing relation of complexity $\alpha = 2^{O(f(|\alphabet|)\cdot \width)}$ would imply that emptiness of intersection 
regular-decisional classes of languages can be realized in polynomial time even when $\width$ is 
logarithmic in the size of the input second-order finite automata representing these classes of languages. 

\paragraph{Connections with the Theory of Automatic Structures.}
Finite automata operating with ODDs and tuples of ODDs were 
first considered in \cite{deOliveiraOliveira2020symbolic} as a formalism to
provide a uniform representation of classes of finite relational 
structures of bounded ODD-width. The technical results from \cite{deOliveiraOliveira2019width}
rely on two observations. First, that the relation 
$\mathcal{R}_{\in}(\alphabet,\width) = \{(\aodd,s)\;:\; \aodd\in \odddefiningsetstar{\alphabet}{\width},\; s\in \oddlang{\aodd}\}$ is regular (Proposition 6.3  of \cite{deOliveiraOliveira2019width}). 
Second, that the relation $\mathcal{R}_{\subseteq}(\alphabet,\width) = 
\{(\aodd,\aodd')\;:\; \aodd,\aodd'\in\odddefiningsetstar{\alphabet}{\width}, \oddlang{\aodd}\subseteq \oddlang{\aodd'}\}$ is 
regular (Proposition 6.6 of \cite{deOliveiraOliveira2019width}). Similar observations
have been used in \cite{Kuske21} to study second-order finite automata using the framework
of the theory of automatic structures \cite{blumensath2000automatic,baranyi2011automata,khoussainov1994automatic}.
In particular, some of our decidability and closure results have been rederived in \cite{Kuske21} using this framework,
and some new applications of second-order finite automata to partial-order theory have been obtained. 

Jain, Luo and Stephan have introduced the notion of automatic indexed classes of languages as a tool to address some
problems in computational learning theory \cite{jain2012learnability}. An indexed class of languages 
$\{L_{\alpha}\;:\; \alpha\in I\}$ is said to be {\em automatic} if the relation 
$E=\{(\alpha,x)\;:\; x\in L_{\alpha},\; \alpha\in I\}$ is automatic. The fact that $\mathcal{R}_{\in}(\alphabet,\width)$
is regular immediately implies that any regular-decisional class of languages corresponds to an 
automatic class of languages. Indeed, given a second-order finite automaton $\finiteautomaton$, 
the second language of $\finiteautomaton$, $\automatonlanghigher{\finiteautomaton}{2} = \{\oddlang{\aodd}\;:\;\aodd \in \automatonlang{\finiteautomaton}\}$, is an automatic class of languages where each $\aodd\in \automatonlang{\finiteautomaton}$ 
is regarded as an index, and $\oddlang{\aodd}$ is regarded as the language indexed by $\aodd$. 
Henning Fernau conjectured that if $\{L_{\alpha}\;:\; \alpha\in I\}$ is an automatic class of languages 
where each index $\alpha\in I$ is a finite string and all strings in $L_{\alpha}$ have the same length, 
then this class is regular-decisional (i.e. is equal to the second language of some second-order finite automaton).
This conjecture has recently been confirmed by Kuske in \cite{Kuske21}. Similar connections can be established 
with the framework of uniform classes of automatic structures \cite{abu2017advice}, which are defined 
with basis on the notion of automatic structures with advice. In this context, an ODD $\aodd$ may be regarded
as an advice string, while the language $\oddlang{\aodd}$ may be regarded as the set of strings associated 
with the advice $\aodd$. This point of view is particularly relevant when ODDs are used to represent relations, 
as done for instance in \cite{deOliveiraOliveira2019width}. 

In view of the connections discussed above, our framework provides
a suitable parameterization for problems arising in the realm of the theory of automatic classes of languages 
\cite{jain2012learnability} and in the realm of the theory of uniformly automatic classes of structures \cite{abu2017advice}.
The intuition is that the size of the representation for the whole class of languages/structures 
(i.e, the size of the second-order finite automaton given at the input) is completely dissociated from the 
complexity of the languages/structures being represented in the class (i.e. the ODD-width $\width$ necessary to 
represent languages/structures in the class). Since the concepts in \cite{jain2012learnability,abu2017advice} have 
applications in the fields of learning theory \cite{holzl2018learning,jain2012automatic,case2014automatic,howar2018active}
and algebra \cite{phdthesisAutomatic,abu2017advice,colcombet2007transforming,kartzow2013structures}, an interesting line 
of research would be the investigation of potential applications of our fixed-parameter tractable algorithms to 
problems in these fields.

{\small
\paragraph{\bf Acknowledgements.} We thank Henning Fernau and Dietrich Kuske for interesting discussions at CSR 2020.
Alexsander A. de Melo acknowledges support 
from the Brazilian agencies CNPq/GD 140399/2017-8 and CAPES/PDSE 88881.187636/2018-01. 
Mateus de O. Oliveira acknowledges support from the Trond Mohn Foundation and from the Research Council of Norway (Grant Nr. 288761).
}

\bibliographystyle{abbrv}
\bibliography{secondOrderFAs}

\end{document}